\documentclass{article}[11pt]

\usepackage{fullpage}
\usepackage{cite,color,xspace}
\usepackage{todonotes}
\usepackage{amsmath}
\usepackage{amsfonts}
\usepackage{enumitem}

\usepackage[T1]{fontenc}
\usepackage{fourier}

\usepackage{bibunits}

\usepackage{subfig}

\usepackage{tikz,tikz-qtree,tikz-qtree-compat,tikz-3dplot}

\usepackage{xcolor}
\usepackage{tabu}

\usepackage{amsthm}
\usepackage{cite}
\usepackage{url}
\usepackage{graphicx}
\usepackage{algorithm}
\usepackage[noend]{algpseudocode}

\usepackage{appendix}

\newcommand{\yell}[1]{{\color{red} \textbf{#1}}}
\newcommand{\floor}[1]{\left\lfloor #1 \right\rfloor}
\newcommand{\norm}[1]{\left\| #1 \right\|}
\newcommand{\idim}{\mathsf{idim}}
\newcommand{\spw}{\mathsf{spw}}
\newcommand{\sylv}[2]{\nabla_{#1,#2}}
\newcommand{\stein}[2]{\Delta_{#1,#2}}

\newcommand{\mvec}[1]{\mathbf{#1}}
\newcommand{\va}{\mvec{a}}
\newcommand{\vb}{\mvec{b}}
\newcommand{\vc}{\mvec{c}}
\newcommand{\vd}{\mvec{d}}
\newcommand{\ve}{\mvec{e}}
\newcommand{\vf}{\mvec{f}}
\newcommand{\vg}{\mvec{g}}
\newcommand{\vh}{\mvec{h}}
\newcommand{\vi}{\mvec{i}}
\newcommand{\vim}{\mvec{\imath}}
\newcommand{\vj}{\mvec{j}}
\newcommand{\vjm}{\mvec{\jmath}}
\newcommand{\vq}{\mvec{q}}
\newcommand{\vp}{\mvec{p}}

\newcommand{\vu}{\mvec{u}}
\newcommand{\vv}{\mvec{v}}
\newcommand{\vw}{\mvec{w}}
\newcommand{\vx}{\mvec{x}}
\newcommand{\vy}{\mvec{y}}
\newcommand{\vz}{\mvec{z}}
\newcommand{\vM}{\mvec{M}}
\newcommand{\vA}{\mvec{A}}
\newcommand{\vB}{\mvec{B}}
\newcommand{\vC}{\mvec{C}}
\newcommand{\vD}{\mvec{D}}
\newcommand{\vE}{\mvec{E}}
\newcommand{\vF}{\mvec{F}}
\newcommand{\cF}{\mvec{\mathcal{F}}}
\newcommand{\vG}{\mvec{G}}
\newcommand{\vH}{\mvec{H}}
\newcommand{\vcH}{\mvec{\mathcal{H}}}
\newcommand{\vI}{\mvec{I}}
\newcommand{\vJ}{\mvec{J}}
\newcommand{\vK}{\mvec{K}}
\newcommand{\vP}{\mvec{P}}
\newcommand{\vQ}{\mvec{Q}}
\newcommand{\vL}{\mvec{L}}
\newcommand{\vR}{\mvec{R}}
\newcommand{\vS}{\mvec{S}}
\newcommand{\vT}{\mvec{T}}
\newcommand{\vV}{\mvec{V}}
\newcommand{\vU}{\mvec{U}}
\newcommand{\vW}{\mvec{W}}
\newcommand{\vX}{\mvec{X}}
\newcommand{\vY}{\mvec{Y}}
\newcommand{\vZ}{\mvec{Z}}
\newcommand{\vzero}{\mvec{0}}
\newcommand{\vone}{\mvec{1}}

\newcommand{\ar}[1]{}
\newcommand{\cmr}[1]{}
\newcommand{\rpu}[1]{}
\newcommand{\agu}[1]{}

\newcommand{\calH}{\mathcal H}

\newcommand{\calR}{\mathcal R}

\newcommand{\calC}{\mathcal C}

\newcommand{\eat}[1]{}

\newcommand{\poly}{\mathrm{poly}}

\newcommand{\eps}{\epsilon}

\newcommand{\cA}{\mathcal{A}}
\newcommand{\cM}{\mathcal{M}}
\newcommand{\cR}{\mathcal{R}}

\newcommand{\F}{\mathbb{F}}
\newcommand{\N}{\mathbb{N}}

\newcommand{\R}{\mathbb{R}}
\newcommand{\C}{\mathbb{C}}
\newcommand{\Z}{\mathbb{Z}}

\newcommand{\be}{\begin{enumerate}}
\newcommand{\ee}{\end{enumerate}}
\newcommand{\bi}{\begin{itemize}}
\newcommand{\ei}{\end{itemize}}
\newcommand{\beq}{\begin{equation}}
\newcommand{\eeq}{\end{equation}}

\newcommand{\bp}{\begin{proof}}
\newcommand{\ep}{\end{proof}}
\newcommand{\bcor}{\begin{cor}}
\newcommand{\ecor}{\end{cor}}
\newcommand{\bthm}{\begin{thm}}
\newcommand{\ethm}{\end{thm}}
\newcommand{\blmm}{\begin{lmm}}
\newcommand{\elmm}{\end{lmm}}
\newcommand{\bdefn}{\begin{defn}}
\newcommand{\edefn}{\end{defn}}
\newcommand{\bprop}{\begin{prop}}
\newcommand{\eprop}{\end{prop}}
\newcommand{\bconj}{\begin{conj}}
\newcommand{\econj}{\end{conj}}
\newcommand{\bopm}{\begin{opm}}
\newcommand{\eopm}{\end{opm}}
\newcommand{\brmk}{\begin{rmk}}
\newcommand{\ermk}{\end{rmk}}

\newcommand{\diag}[1]{\mathop{\textnormal{diag}} #1}
\newcommand{\rank}[1]{\mathop{\textnormal{rank}} #1}

\theoremstyle{plain}                   
\newtheorem{thm}{Theorem}[section]
\newtheorem{lmm}[thm]{Lemma}
\newtheorem{prop}[thm]{Proposition}
\newtheorem{cor}[thm]{Corollary}

\theoremstyle{definition}              

\newtheorem{opm}[thm]{Open Problem}
\newtheorem{conj}[thm]{Conjecture}

\newtheorem{defn}[thm]{Definition}

\newtheorem{rmk}[thm]{Remark}
\newtheorem{claim}{Claim}

\newcommand{\bbox}{
\begin{center}
\begin{tabular}{|c|}
\hline
}
\newcommand{\ebox}{
\\
\hline
\end{tabular}
\end{center}
}

\newlength{\toppush}
\setlength{\toppush}{2\headheight}



\newcommand{\defeq}{\stackrel{\mathrm{def}}{=}}

\algrenewcommand\algorithmicrequire{\textbf{Input:}}
\algrenewcommand\algorithmicensure{\textbf{Output:}}
\algrenewcommand\algorithmicwhile{\textbf{While}}
\algrenewcommand\algorithmicfor{\textbf{For}}
\algrenewcommand\algorithmicreturn{\textbf{Return}}
\algrenewcommand\algorithmicif{\textbf{If}}

\newcommand{\tO}{\tilde{O}}

\definecolor{ao}{rgb}{0,0.5,0}

\newcommand{\marked}[1]{\textcolor{red}{#1}}

\DeclareMathOperator{\coef}{Coeff}

\setcounter{MaxMatrixCols}{20}

\newcommand{\eqdef}{\stackrel{\text{def}}{=}}

\newcommand{\ind}[1]{\mathbb{1}_{{#1}}}

\newcommand{\ATmult}{\textsc{TransposeMult}}
\newcommand{\Amult}{\textsc{MatrixVectMult}}
\newcommand{\Amultrat}{\textsc{MatrixVectMultRational}}

\newcommand{\kry}{\mathcal{K}}
\newcommand{\K}{\mathsf{K}}

\newcommand{\ip}[2]{\left\langle #1,#2\right\rangle}

\newcommand{\width}{recurrence width}
\newcommand{\ewidth}{recurrence error width}

\newcommand{\barN}{\overline{N}}
\newcommand{\bard}{\bar{d}}

\usepackage{arydshln}
\usepackage{comment}
\usepackage[colorlinks, linkcolor=blue, citecolor=red]{hyperref}
\usepackage{multirow}
\usepackage{tablefootnote}

\begin{document}
\title{\textbf{A two-pronged progress in structured dense matrix vector multiplication}}
\author{\textsc{Christopher De Sa}\footnotemark[3] \and \textsc{Albert Gu}\footnotemark[1] \and \textsc{Rohan Puttagunta}\footnotemark[1] \and \textsc{Christopher R\'{e}}\footnotemark[1] \and \textsc{Atri Rudra}\footnotemark[2]}
\date{\footnotemark[1]~~Department of Computer Science\\
Stanford University\\
\texttt{\{albertgu,rohanp,chrismre\}@stanford.edu}\\
\vspace*{2mm}
\footnotemark[2]~~Department of Computer Science and Engineering\\
University at Buffalo, SUNY\\
\texttt{atri@buffalo.edu}\\
\vspace*{2mm}
\footnotemark[3]~~Department of Computer Science\\
Cornell University\\
\texttt{cdesa@cs.cornell.edu}
}

\maketitle

\setcounter{page}{0}
\thispagestyle{empty}

\begin{abstract}
Matrix-vector multiplication is one of the most fundamental computing primitives. Given a matrix $\vA\in\F^{N\times N}$ and a vector $\vb\in\F^N$, it is known that in the worst case $\Theta(N^2)$ operations over $\F$ are needed to compute $\vA\vb$. Many types of structured matrices do admit faster multiplication. However, even given a matrix $\vA$ that is known to have this property, it is hard in general to recover a representation of $\vA$ exposing the actual fast multiplication algorithm. Additionally, it is not known in general whether the inverses of such structured matrices can be computed or multiplied quickly. A broad question is thus to identify classes of {\em structured dense} matrices that can be represented with $O(N)$ parameters, and for which matrix-vector multiplication (and ideally other operations such as solvers) can be performed in a sub-quadratic number of operations. 

One such class of structured matrices that admit near-linear matrix-vector multiplication are the {\em orthogonal polynomial transforms} whose rows correspond to a family of orthogonal polynomials. Other well known classes include the Toeplitz, Hankel, Vandermonde, Cauchy matrices and their extensions (e.g.\ confluent Cauchy-like matrices) that are all special cases of a low {\em displacement rank} property.

In this paper, we make progress on two fronts:
\begin{enumerate}
\item We introduce the notion of {\em \width} of matrices. For matrices $\vA$ with constant \width, we design algorithms to compute both $\vA\vb$ and $\vA^T\vb$ with a near-linear number of operations. This notion of width is finer than all the above classes of structured matrices and thus we can compute near-linear matrix-vector multiplication for all of them using the same core algorithm.
Furthermore, we show that it is possible to solve the harder problems of {\em recovering} the structured parameterization of a matrix with low recurrence width, and computing matrix-vector product with its {\em inverse} in near-linear time.
\item We additionally adapt our algorithm to a matrix-vector multiplication algorithm for a much more general class of matrices with displacement structure: those with low displacement rank with respect to quasiseparable matrices. This result is a novel connection between matrices with displacement structure and those with rank structure, two large but previously separate classes of structured matrices. This class includes Toeplitz-plus-Hankel-like matrices, the Discrete Trigonometric Transforms, and more, and captures all previously known matrices with displacement structure under a unified parametrization and algorithm.
\end{enumerate}
Our work unifies, generalizes, and simplifies existing state-of-the-art results in structured matrix-vector multiplication. Finally, we show how applications in areas such as multipoint evaluations of multivariate polynomials can be reduced to problems involving low \width\ matrices. 
\end{abstract}

\newpage

\section{Introduction}
\label{sec:intro}


Given a generic matrix $\vA \in \F^{N \times N}$ over any field $\F$, the problem of matrix-vector multiplication by $\vA$ has a clear $\Omega(N^2)$ lower bound in general.\footnote{This is in terms of operations over $\F$ in exact arithmetic, which will be our primary focus throughout this paper. Also we will focus exclusively on computing the matrix vector product exactly as opposed to approximately. We leave the study of approximation and numerical stability (which are important for computation over real/complex numbers) for future work.} Many special classes of matrices, however, admit multiplication algorithms that only require near linear (in $N$) operations. In general, any matrix $\vA$ can be identified with the smallest linear circuit that computes the linear function induced by $\vA$.
This is a tight characterization of the best possible arithmetic complexity of any matrix-vector multiplication algorithm for $\vA$ that uses linear operations\footnote{Furthermore over any infinite field $\F$, non-linear operations do not help, i.e.\ the smallest linear circuit is within a constant factor size of the smallest arithmetic circuit~\cite{burgisser2013algebraic}.} and captures all known structured matrix vector multiplication algorithms. Additionally, it implies the classical \emph{transposition principle}~\cite{bordewijk1956, burgisser2013algebraic}, which states that the number of linear operations needed for matrix-vector multiplication by $\vA$ and $\vA^T$ are within constant factors of each other. (Appendix~\ref{sec:spw} presents an overview of other known results.) Thus, this quantity is a very general characterization of the complexity of a matrix. However, it has several shortcomings. 
Most importantly, given a matrix, the problem of finding the minimum circuit size is APX-hard~\cite{boyar2013logic}, and the best known upper bound on the approximation ratio is only $O(N/\log{N})$~\cite{lupanov}. 
Finally, this characterization does not say anything about the inverse of a structured matrix, even though $\vA^{-1}$ is often also structured if $\vA$ is. Thus, much work in the structured matrix vector multiplication literature has focused on the following problem:
\begin{quote}
\textit{Identify the most general class of structured matrices $\vA$ so that one can in near-linear operations compute both $\vA\vb$ and $\vA^{-1}\vb$. In addition given an arbitrary matrix $\vA$, we would like to efficiently recover the representation of the matrix in the chosen class.}
\end{quote}

One of most general classes studied so far is the structure of low {\em displacement rank}.  The notion of displacement rank, which was introduced in the seminal work of Kailath et al.~\cite{KKM79}, is defined as follows. Given any pair of matrices $(\vL,\vR)$, the displacement rank of $\vA$ with respect to $(\vL,\vR)$ is the rank of the {\em error matrix}\footnote{This defines the Sylvester type displacement operator. Our algorithms work equally well for the Stein type displacement operator $\vA-\vL\vA\vR$. We also note that the terminology of error matrix to talk about the displacement operator is non-standard: we make this change to be consistent with our general framework where this terminology makes sense.}:
\begin{equation}
  \label{eq:intro-dr}
  \vE=\vL\vA-\vA\vR.
\end{equation}
Until very recently, the most general structure in this framework was studied by Olshevsky and Shokrollahi~\cite{OS00}, who show that any matrix with a displacement rank of $r$ with respect to {\em Jordan form} matrices (see Section~\ref{subsec:known}) $\vL$ and $\vR$ supports near-linear operations matrix-vector multiplication. A very recent pre-print extended this result to the case when both $\vL$ and $\vR$ are {\em block companion matrices}~\cite{bostan2017}. In our first main result, 
\begin{quote}
\emph{we substantially generalize these results to the case when $\vL$ and $\vR$ are {\em quasiseparable} matrices.}
\end{quote}
Quasiseparable matrices are a type of \emph{rank-structured} matrix, defined by imposing low-rank constraints on certain submatrices, which have become widely used in efficient numerical computations~\cite{VVM08}. This result represents a new unification of two large, important, and previously separate classes of structured matrices, namely those with displacement structure and those with rank structure.

Another general class of matrices are {\em orthogonal polynomial transforms}~\cite{chihara,cheb-approx,jacobi,zernike}. We first observe that known results on such polynomials~\cite{driscoll,potts1998fast, bostan2010op} can be easily extended to polynomial recurrences of bounded width. However, these results and those for displacement rank matrices tend to be proved using seemingly disparate algorithms and techniques.
In our second main result, 
\begin{quote}
{\em we introduce the notion of \emph{recurrence width}, which captures the class of orthogonal polynomial transforms as well as matrices of low displacement rank with respect to Jordan form matrices.}
\end{quote}
We design a simple and general near-linear-operation matrix-vector multiplication algorithm for low recurrence width matrices, hence capturing these previously disparate classes. Moreover, we observe that we can solve the harder problems of recovery (i.e.\ recovering the recurrence width parameterization given a matrix) and inverse for the polynomials recurrences of bounded width in polynomial time. Figure~\ref{fig:res} shows the relationship between the various classes of matrices that we have discussed, and collects the relevant known results and our new results. (All the algorithms result in linear circuits and hence by the transposition principle, we get algorithms for $\vA^T\vb$ and $\vA^{-T}\vb$ with the same complexity as $\vA\vb$ and $\vA^{-1}\vb$ respectively.)

\taburulecolor{lightgray}
\begin{figure*}[ht]
\begin{center}
\begin{tikzpicture}[scale = 3]

\draw (2,-1.1) rectangle (3.9,-1.6);
\node[below] at (3,-1.1) {{\small \textcolor{gray}{Operation count legend}}};
\node[below right] at (2,-1.25) {%
{\footnotesize
\begin{tabu}{|c|c|}
\hline
$\vA\vb$ compute & $\vA^{-1}\vb$ compute\\
\hline
\textcolor{gray}{
$\vA\vb$ pre-processing} & \textcolor{gray}{$\vA^{-1}\vb$ pre-processing}\\
\hline
\end{tabu}
}
};


\node at (4,-4.5) {Vandermonde};


\path [fill=blue, opacity=.04] (2.5,-4.75) -- (4.5,-4.75) -- (3.625,-2.5) -- (1.625, -2.5) -- (2.5, -4.75);
\node[below right] at (2, -2.5) {\textbf{Recurrence width (poly)}};
\node[below right] at (2.25, -2.6) {{\small Theorem~\ref{thm:mainATb},\ref{thm:recur-inverse}}};
\node[below right] at (2.125,-2.8) {%
{\footnotesize
\begin{tabu}{|c|c|}
\hline
$\tO(t^2N)$ & $\tO(t^2N)$\\
\hline
\textcolor{gray}{
$\tO(t^{\omega}N)$} & \textcolor{gray}{$\tO(t^{\omega}N)$}\\
\hline
\end{tabu}
}
};

\path [fill=red, opacity=0.06] (2.5, -4.75) -- (5.5, -4.75) -- (6.55,-2) -- (1.425, -2) -- (2.5, -4.75);
\node[below right] at (3.25, -2) {\textbf{Recurrence width (matrix)}};
\node[below] at (4,-2.1) {{\small Theorem~\ref{cor:disp-rank},\ref{cor:disp-inverse}}};
\node[below right] at (3.35,-2.25) {%
{\footnotesize
\begin{tabu}{|c|c|}
\hline
$\tO(rt^2N)$ & $\tO(rt^2N)$\\
\hline
\textcolor{gray}{
$\tO(t^{\omega}N)$} & \textcolor{gray}{$\tO((t^{\omega}+r^2t^2)N)$}\\
\hline
\end{tabu}
}
};

\path [fill=blue, opacity=.08] (2.5,-4.75) -- (4.5,-4.75) -- (3.9,-3.25) -- (1.92, -3.25) -- (2.5, -4.75);
\node [below right] at (2.2, -3.5) {Orthogonal polynomials};
\node [below right] at (2.7,-3.62) {{\small \cite{driscoll,bostan2010op}}};
\node[below right] at (2.5,-3.75) {%
{\footnotesize
\begin{tabu}{|c|c|}
\hline
$\tO(N)$ & $\tO(N)$\\
\hline
\textcolor{gray}{
--} & \textcolor{gray}{--}\\
\hline
\end{tabu}
}
};


\path [fill=green, opacity=0.04] (3.5,-4.75) -- (5.5, -4.75) -- (7.125, -0.5) -- (5.125,-0.5) -- (3.5, -4.75);
\node [below right] at (5.25, -.5) {\textbf{Quasiseparable}};
\node [below right] at (5.25, -.62) {{\small \textbf{(Displacement rank)}~Theorem~\ref{thm:intro-dr-quasi}}};
\node[below right] at (5.35,-.75) {%
{\footnotesize
\begin{tabu}{|c|c|}
\hline
$\tO(rt^{\omega}N)$ & $\tO(rt^{\omega}N)$\\
\hline
\textcolor{gray}{
--} & \textcolor{gray}{$\tO((r+t)^2t^{\omega}N)$}\\
\hline
\end{tabu}
}
};

\path [fill=green, opacity=0.08] (3.5,-4.75) -- (5.5, -4.75) -- (6.85, -1.25) -- (4.85,-1.25) -- (3.5, -4.75);
\node [below right] at (4.4, -3.5) {Confluent Cauchy-like};
\node [below] at (5,-3.6) {{\small \cite{OS00,bostan2017}}};
\node[below right] at (4.5,-3.75) {%
{\footnotesize
\begin{tabu}{|c|c|}
\hline
$\tO(rN)$ & $\tO(rN)$\\
\hline
\textcolor{gray}{
--} & \textcolor{gray}{$\tO(r^{\omega-1}N)$}\\
\hline
\end{tabu}
}
};

\path [fill=green, opacity=0.12] (3.5,-4.75) -- (5.5, -4.75) -- (6.08, -3.25) -- (4.08,-3.25) -- (3.5, -4.75);
\node [below right] at (5, -1.25) {Block companion matrix};
\node [below right] at (5, -1.37) {{\small (Displacement rank)~\cite{bostan2017}}};
\node[below right] at (5.25,-1.49) {%
{\footnotesize
\begin{tabu}{|c|c|}
\hline
$\tO(rN)$ & $\tO(rN)$ \\
\hline
\textcolor{gray}{
--} & \textcolor{gray}{$\tO(r^{\omega-1}N)$}\\
\hline
\end{tabu}
}
};

\end{tikzpicture}
\end{center}
\caption{Overview of results and hierarchy of matrix classes. Operation count includes pre-processing (on $\vA$ only) and computation (on $\vA$ and $\vb$) where $\tO(\cdot)$ hides polylog factors in $N$. Each class has a parameter controlling the degree of structure: $t$ refers to the recurrence width or quasiseparability degree, and $r$ indicates the displacement rank.}
\label{fig:res}
\end{figure*}
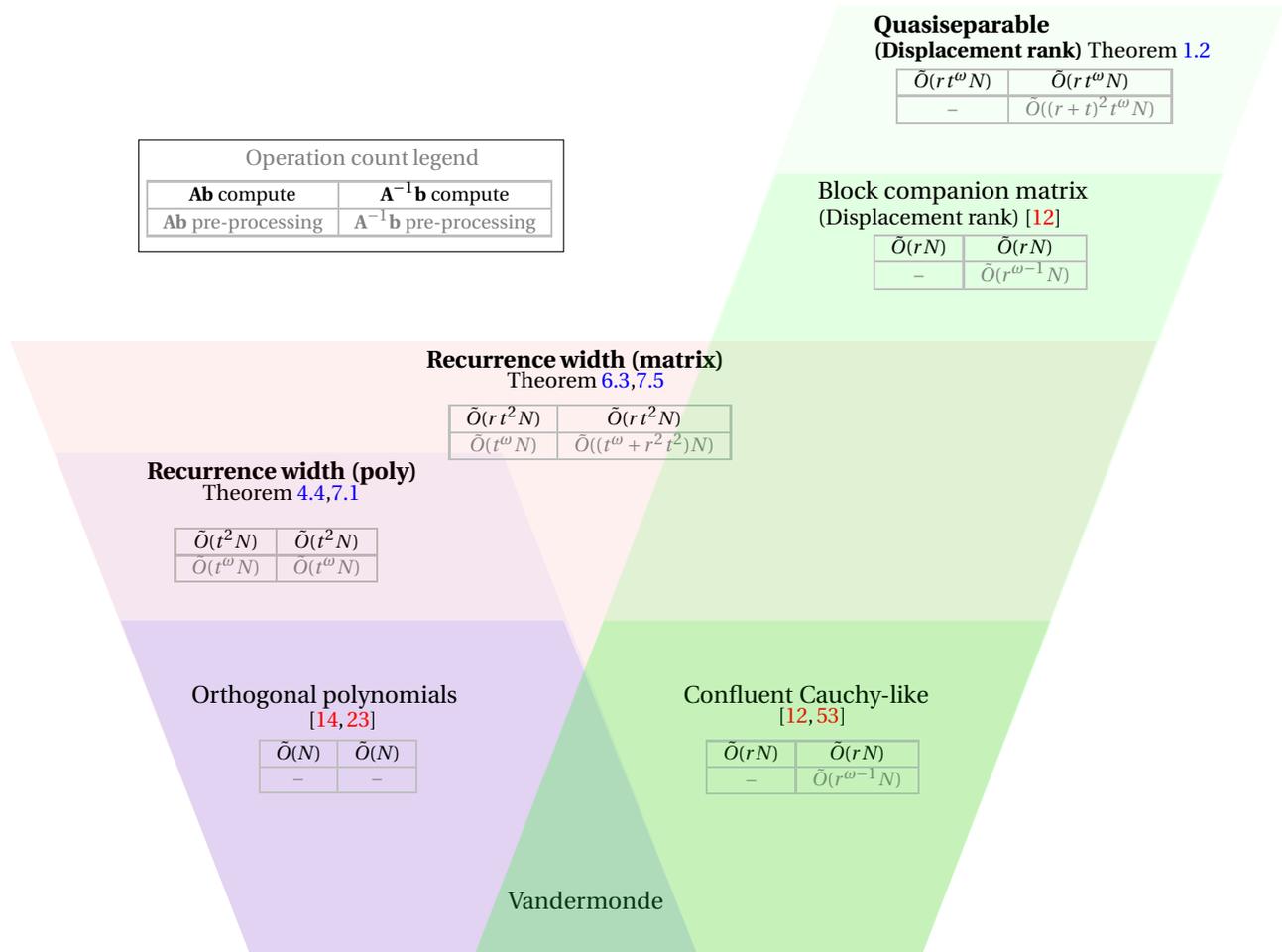

We now focus on the two strands of work that inspired much of our work, and describe how we capture (and generalize) previous results in these areas. These strands have been well-studied and have rich and varied applications, which we summarize at the end of the section. Throughout this section we describe square matrices for simplicity, but we emphasize that the recurrence width concept applies to general matrices (see Definition~\ref{defn:recurrence-width}).

\paragraph{Displacement rank.} The approach of expressing structured matrices via their displacements and using this to define unified algorithms for multiplication and inversion has traditionally been limited to the four most popular classes of Toeplitz-like, Hankel-like, Vandermonde-like, and Cauchy-like matrices\cite{GO94,OP98}. These classic results on displacement rank constrain $\vL,\vR$ to diagonal or shift matrices. Until very recently, the most powerful results on matrix-vector multiplication for matrices with low displacement rank are in the work of Olshevsky and Shokrollahi~\cite{OS00}, who show that any matrix with a displacement rank of $r$ with respect to Jordan form matrices $\vL$ and $\vR$ can be multiplied by an arbitrary vector with $\tO(rN)$ operations. (They use these results and matrices to solve the Nevalinna-Pick problem as well as solve the interpolation step in some list decoding algorithms in a previous work~\cite{OS99}.) Recall that Jordan normal form matrices are a special case of $2$-{\em band upper triangular matrices}. In this work, by applying the recurrence width concept (Definition~\ref{defn:intro-width}), we show that when both $\vL$ and $\vR$ are any triangular $t$-band matrices, matrices with displacement rank of $r$ with respect to them admit fast multiplication. Furthermore, in the restricted case of $t=2$ our result matches the previous bound~\cite{OS00}.

In our following theorem and for the rest of the paper, let $\cM(N)$ denote the cost of multiplying two polynomials of degree $N$ over $\F$ (ranging from $N\log N$ to $N\log{N}\log\log{N}$) and $\omega$ be the matrix-matrix multiplication exponent.
\begin{thm}
  \label{thm:intro-dr}
  Let $\vL$ and $\vR$ be triangular $t$-band matrices sharing no eigenvalues, and let $\vA$ be a matrix such that $\vL\vA - \vA\vR$ has rank $r$. Then with $O(t^{\omega}\cM(N)\log{N})$ pre-processing operations, $\vA$ and $\vA^T$ can be multiplied by any vector $\vb$ in $O(rt^2\cM(N)\log{N})$ operations. With $O(t^{\omega}\cM(N)\log{N} + r^2t^2\cM(N)\log^2{N})$ pre-processing operations, $\vA^{-1}$ and $\vA^{-T}$ can be multiplied by any vector in $O(rt^2\cM(N)\log{N})$ operations.
\end{thm}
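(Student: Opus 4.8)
The strategy is to reduce the displacement-rank problem to the ``recurrence width'' framework that is the core contribution of the paper, and then invoke the near-linear matrix-vector multiplication and inversion algorithms already established there (Theorems~\ref{thm:mainATb} and~\ref{thm:recur-inverse}). Concretely, given that $\vL\vA - \vA\vR = \vE$ has rank $r$, write $\vE = \vG\vH^T$ with $\vG, \vH \in \F^{N\times r}$. The first step is to show that a triangular $t$-band matrix $\vL$ (and likewise $\vR$) encodes exactly a polynomial recurrence of width $t$: the rows of a matrix $\vA$ satisfying the displacement equation can be expressed through a recurrence whose coefficients come from the bands of $\vL$ and $\vR$, with the rank-$r$ error term $\vG\vH^T$ contributing an inhomogeneous part. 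So I would first make precise the dictionary ``triangular $t$-band matrix $\leftrightarrow$ width-$t$ recurrence,'' then argue that $\vA$ decomposes as a sum of $O(r)$ (or a single, with an $r$-fold inflation of the recurrence data) matrices each of recurrence width $O(t)$, so that Theorem~\ref{thm:mainATb} applies with the stated $\tO(rt^2 N)$ bound for $\vA\vb$ and $\vA^T\vb$, after $\tO(t^\omega N)$ preprocessing to put the recurrence in the normalized form that algorithm expects.

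For the inverse, the key algebraic observation is that the displacement operator behaves well under inversion: if $\vL\vA - \vA\vR = \vG\vH^T$ then, multiplying on both sides by $\vA^{-1}$, one gets $\vR\vA^{-1} - \vA^{-1}\vL = -\vA^{-1}\vG\vH^T\vA^{-1}$, i.e. $\vA^{-1}$ has displacement rank at most $r$ with respect to the swapped pair $(\vR, \vL)$. Since $\vL$ and $\vR$ are both triangular $t$-band, so is the swapped pair, and the eigenvalue-disjointness hypothesis is symmetric; hence $\vA^{-1}$ again falls into the same class. To actually \emph{compute} a representation of $\vA^{-1}$ (the generators $\vA^{-1}\vG$ and $\vA^{-T}\vH$ of its displacement), I would run the recovery procedure from Theorem~\ref{thm:recur-inverse}: recover the width-$O(t)$ recurrence parameterization of $\vA$ and use it to invert. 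The extra $r^2 t^2 \cM(N)\log^2 N$ preprocessing term should come from needing to apply an $\tO(rt^2 N)$ multiplication routine $O(r)$ times (once per column of a generator), plus a $\log N$ blowup from the divide-and-conquer recovery.

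Two technical points need care. First, the hypothesis that $\vL$ and $\vR$ share no eigenvalues is what makes the displacement operator $\vX \mapsto \vL\vX - \vX\vR$ invertible (Sylvester's theorem), so that $\vA$ is uniquely and stably determined by its generators $(\vG,\vH)$; without it the reconstruction is not well-defined. I would invoke this to go back and forth between $\vA$ and its compressed $O(rN)$-parameter form. Second, a triangular $t$-band matrix need not be in the exact normal form the recurrence-width algorithm wants (e.g. the paper's recurrences may be stated with a specific leading structure); massaging $\vL$, $\vR$ into that form, and tracking how the band structure of $\vL\vA - \vA\vR$ translates into the recurrence's error/inhomogeneous terms, is the fiddly part.

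\textbf{Main obstacle.} I expect the crux is establishing the equivalence ``$\vA$ has displacement rank $r$ w.r.t.\ triangular $t$-band $(\vL,\vR)$ $\implies$ $\vA$ has recurrence width $O(t)$ (with an $O(r)$ factor somewhere)'' cleanly enough that the complexity bounds come out exactly as stated, and in particular recovering the precise interplay between the band parameter $t$, the rank $r$, and where each appears (the $t^\omega$ in preprocessing versus $rt^2$ in the per-query cost, and the $r^2t^2$ in the inverse preprocessing). Getting the recurrence-width bound to be $O(t)$ rather than, say, $O(rt)$ — so that $r$ enters only multiplicatively in the query cost via the $r$ generator columns, not inside the width — is the delicate accounting that the whole theorem hinges on.
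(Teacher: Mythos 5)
Your reduction of the displacement equation to a recurrence with a rank-$r$ inhomogeneity is the right first move and matches the paper's Section~\ref{sec:disp-rank}: isolating row $i$ of $\vL\vA-\vA\vR=\vC\vD$ gives $\vA[i,:](\vL[i,i]\vI-\vR)=\sum_{j=1}^{t}-\vL[i,i-j]\vA[i-j,:]+\sum_k c_{i,k}\vd_k$, and eigenvalue disjointness makes $\vL[i,i]\vI-\vR$ invertible. But this is an \emph{$\vR$-matrix} recurrence (Definition~\ref{defn:recurrence-width} with general $\vR$), not a polynomial/modular one, so Theorem~\ref{thm:mainATb} does not apply as you claim; nor does $\vA$ decompose as a sum of $O(r)$ matrices of recurrence width $O(t)$ in the sense of Definition~\ref{defn:intro-width}. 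The correct decomposition (Section~\ref{subsec:recur-matrix}) is $\vA=\sum_{k} \vA_k\,\kry(\vR,\vd_k)$, a sum of \emph{products} of a width-$t$ recurrence matrix with a Krylov matrix on $\vR$, and the whole bound hinges on a second, separate use of the band structure: one must show that $\vR$ triangular $t$-band is Krylov efficient, which the paper does by proving that $\kry(\vR,\vy)$ itself satisfies a width-$t$ modular recurrence via $(\vI-\vR X)\vF\equiv\vy\pmod{X^N}$ (Theorem~\ref{thm:ut-band-ke}), and then invoking the general Theorem~\ref{thm:recur-general}. Your plan omits this step entirely, and without it the claimed complexities for $\vA\vb$ and $\vA^T\vb$ do not follow.

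The inverse part has a more serious gap. The identity $\vR\vA^{-1}-\vA^{-1}\vL=-\vA^{-1}\vC\vD\vA^{-1}$ is correct, but your proposal to compute the generators $\vA^{-1}\vC$, $\vA^{-T}\vD^T$ by ``running the recovery procedure from Theorem~\ref{thm:recur-inverse}'' does not work: that theorem is a solver for triangular matrices satisfying the basic polynomial recurrence of Definition~\ref{defn:intro-width} (its proof uses $\deg(f_i)=i$ and block-triangularity), whereas the displacement-structured $\vA$ here is dense and non-triangular, so it is outside that theorem's scope; moreover, producing $\vA^{-1}\vC$ \emph{is} a family of linear solves with $\vA$, so as stated your plan is circular. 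The paper closes this loop differently: it invokes the Jeannerod--Mouilleron framework (an MBA-type recursion exploiting that principal and trailing submatrices of triangular band $\vL,\vR$ are again triangular band, together with strong regularity handled by randomized preconditioning), which reduces computing the inverse's generators to $O(\log N)$ multiplications of half-size displacement-structured matrices by $N\times r$ blocks; this is exactly what produces the $O(r^2t^2\cM(N)\log^2 N)$ preprocessing term in Corollary~\ref{cor:disp-inverse}. Your cost accounting happens to land near the right expression, but the mechanism justifying it is missing.
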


In March 2017, a pre-print~\cite{bostan2017} generalized the result of~\cite{OS00} in a different direction, to the case where $\vL$ and $\vR$ are block companion matrices.
We show an alternative technique that is slower than Theorem~\ref{thm:intro-dr} by polylog factors, but works for even more general $\vL$ and $\vR$ that subsumes the classes of both Theorem~\ref{thm:intro-dr} and~\cite{bostan2017}:

\begin{thm}
  \label{thm:intro-dr-quasi}
  Let $\vL$ and $\vR$ be $t$-quasiseparable matrices and $\vE$ be rank $r$ such that $\vA$ is uniquely defined by $\vL\vA-\vA\vR = \vE$. Then $\vA$ and $\vA^T$ admit matrix-vector multiplication in $O( rt^\omega \cM(N)\log^2{N} + rt^2 \cM(N) \log^3{N})$ operations. Furthermore, letting this cost be denoted $M_t(N,r)$, then with $O((r+t\log{N}) M_t(N, r+t\log{N})\log{N})$ operations, $\vA^{-1}$ and $\vA^{-T}$ also admit matrix-vector multiplication in $M_t(N,r)$ operations.
\end{thm}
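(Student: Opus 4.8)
My plan is to reduce the quasiseparable case to the banded case of Theorem~\ref{thm:intro-dr} (equivalently, the matrix recurrence-width setting of Theorem~\ref{cor:disp-rank}) by a divide-and-conquer over the index set $[N]$ that exploits the hierarchical low-rank (HODLR) structure of quasiseparable matrices. First I would split $[N] = [1,N/2]\cup(N/2,N]$ and write $\vL = \begin{pmatrix}\vL_{11} & \vL_{12}\\ \vL_{21} & \vL_{22}\end{pmatrix}$, $\vR = \begin{pmatrix}\vR_{11} & \vR_{12}\\ \vR_{21} & \vR_{22}\end{pmatrix}$, where $\vL_{ii},\vR_{ii}$ are $t$-quasiseparable of size $N/2$ (with disjoint spectra, inherited from $\vL,\vR$) and the off-diagonal blocks have rank $\le t$; write $\vA=(\vA_{ij})$ conformally. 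Expanding $\vL\vA-\vA\vR=\vE$ blockwise gives, for each $(i,j)$, an equation $\vL_{ii}\vA_{ij} - \vA_{ij}\vR_{jj} = \vE_{ij} - \vL_{i,3-i}\vA_{3-i,j} + \vA_{i,3-j}\vR_{3-j,j}$ whose right-hand side has rank at most $r+2t$. Thus every sub-block of $\vA$ is again displacement-structured with respect to half-size quasiseparable matrices, at the price of an additive $O(t)$ increase in displacement rank per recursion level; iterating $\Theta(\log N)$ levels keeps the rank bounded by $r+O(t\log N)=\tO(r+t)$ throughout.

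To compute $\vA\vb$ (and, via the transposition principle, $\vA^T\vc$) I would run this recursion, observing that because we only ever need the action of these matrices on vectors rather than the matrices themselves, the coupling right-hand sides above are accessed only through their action on $O(t)$ fixed vectors (the columns of the rank-$\le t$ generators of $\vL_{i,3-i}$, $\vR_{3-j,j}$). A single recursive call is therefore charged with multiplying a half-size displacement-structured matrix by $O(t)$ vectors, and the mutual dependence among the blocks closes because each coupling is transmitted through a rank-$\le t$ block, i.e.\ through an $O(t)$-dimensional "state" — exactly the forward/backward sweep one uses to multiply a single quasiseparable matrix by a vector. At the leaves (blocks of size $O(t)$) the multiplication is direct; at an internal node the two half-size results are combined, and this is where the banded machinery enters: once the quasiseparable coupling is peeled off, the recurrence relating the columns of a block of $\vA$ has width $\tO(r+t)$, and is unrolled using $O(\cM(N)\log N)$-type polynomial multiplications together with the fast polynomial-evaluation-at-matrix subroutine of Theorem~\ref{thm:intro-dr}. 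Summing over the $\log N$ levels — each contributing a $\cM(N)\log N$ factor and using displacement rank $\tO(r+t)$ — yields the claimed $O(rt^\omega\cM(N)\log^2 N + rt^2\cM(N)\log^3 N)$ bound.

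For the inverse I would start from the identity $\vR\vA^{-1} - \vA^{-1}\vL = -\vA^{-1}\vE\vA^{-1}$, which shows $\vA^{-1}$ has displacement rank at most $r$ with respect to the pair of $t$-quasiseparable matrices $(\vR,\vL)$; hence, once its generators are known, $\vA^{-1}\vb$ and $\vA^{-T}\vb$ cost $M_t(N,r)$ by the forward result. The generators of $\vA^{-1}$ (equivalently the vectors $\vA^{-1}\vg_\ell$, where $\vE=\sum_\ell\vg_\ell\vh_\ell^T$) I would obtain by a Schur-complement divide-and-conquer over $[N]$: for the $2\times2$ block decomposition of $\vA$, both $\vA_{11}$ and the Schur complement $\vA_{22}-\vA_{21}\vA_{11}^{-1}\vA_{12}$ are displacement-structured with respect to the corresponding half-size quasiseparable blocks of $\vL,\vR$, with displacement rank increased by $O(t)$ per level (the same block computation as in the first paragraph), so all intermediate objects have displacement rank at most $r+O(t\log N)$. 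Performing $O(r+t\log N)$ matrix-vector multiplications per level, each costing $M_t(N,r+t\log N)$, over the $\log N$ levels gives the stated $O((r+t\log N)\,M_t(N,r+t\log N)\log N)$ preprocessing bound.

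The main obstacle will be making this divide-and-conquer actually close with only polylogarithmic overhead over the banded case. One must argue that the low-rank off-diagonal coupling blocks of $\vL$ and $\vR$ can be absorbed into $O(t)$ extra inhomogeneous generators without the recursion degenerating into four full-cost recursive calls — i.e.\ that the off-diagonal blocks of $\vA$ need not be recomputed as independent displacement-structured matrices but are handled through the same $O(t)$-dimensional coupling sweep; that the accumulated displacement rank really stays $r+O(t\log N)$ and the per-level work is governed by the current (not the final) block size; and that after the coupling is removed the residual recurrence is genuinely of bounded width, so that the polynomial-multiplication and polynomial-evaluation subroutines of Theorem~\ref{thm:intro-dr} apply verbatim. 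Getting the bookkeeping of ranks, block sizes, and the forward/backward coupling sweeps to fit together is the technical heart of the argument.
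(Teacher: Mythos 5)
The heart of your forward-multiplication argument does not close. After splitting into blocks, the displacement representation of each block $\vA_{ij}$ (via $\vL_{ii}\vA_{ij}-\vA_{ij}\vR_{jj}=\vE_{ij}-\vL_{i,3-i}\vA_{3-i,j}+\vA_{i,3-j}\vR_{3-j,j}$) requires, as part of its generators, the products $\vA_{3-i,j}^T\vV$ and $\vA_{i,3-j}\vU$ with the rank-$t$ factors of the off-diagonal blocks of $\vL,\vR$. These are $N/2\times t$ \emph{unknowns}: they can only be obtained from the displacement representations of the other blocks, whose generators in turn need products with the first blocks. The dependence graph among the four blocks is cyclic ($\vA_{11}$ needs $\vA_{21},\vA_{12}$; $\vA_{21}$ needs $\vA_{11},\vA_{22}$; etc.), so there is no forward/backward sweep that resolves it: the analogy with quasiseparable matrix-vector multiplication fails because there the matrix is known and only an $O(t)$-dimensional vector state propagates, whereas here the blocks of $\vA$ themselves are defined only implicitly, and the coupling quantities satisfy a fixed-point linear system of size $\Theta(tN)$ whose solution is essentially the problem you set out to solve. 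Two further claims are also unsupported: principal sub-blocks of a (non-triangular) quasiseparable matrix do \emph{not} inherit the spectrum of the full matrix, so disjointness of the spectra of $\vL_{ii}$ and $\vR_{jj}$ — hence unique solvability of the block Sylvester equations — is not guaranteed; and $\vL_{ii}$ is quasiseparable, not triangular banded, so after peeling one level of off-diagonal coupling you cannot invoke the machinery of Theorem~\ref{thm:intro-dr} ``verbatim'' — you would have to recurse on the coupling at every level, compounding the circularity above.

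The paper avoids block-decomposing $\vA$ altogether. Via the structure lemma and the Krylov factorization, $\vA^T\vb$ (and $\vA\vb$) is reduced to $O(r)$ \emph{symbolic resolvent} computations $\vb^T(X\vI-\vL)^{-1}\vc \pmod{M(X)}$ and $\vb^T(\vI-\vR X)^{-1}\vc \pmod{X^N}$ on the \emph{known} operators $\vL$ and $\vR$; the divide-and-conquer (with the Woodbury identity) is then applied to $X-\vR$ itself, where quasiseparability makes $X-\vR$ a rank-$2t$ perturbation of a block-diagonal matrix and the generator width grows only to $O(t\log N)$ down the recursion — this is where the $\log^2 N$ and $\log^3 N$ factors in $M_t(N,r)$ arise, with no circular dependence. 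Your inverse argument, by contrast, is essentially the paper's: $\vR\vA^{-1}-\vA^{-1}\vL=-\vA^{-1}\vE\vA^{-1}$, a Schur-complement recursion in which the displacement rank grows by $O(t)$ per level and is re-compressed, with $O(r+t\log N)$ calls to the multiplication routine per level; this part is sound precisely because it uses the already-established forward multiplication as a black box, which is the piece your proposal has not actually supplied.
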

The formal definition of quasiseparability is deferred to Section~\ref{subsec:disp-rank-quasi}, but essentially refers to a type of rank-structured matrix where every submatrix not intersecting the diagonal has rank at most $t$~\cite{eidelman1999}.
$t$-quasiseparability refers to a type of rank-structured matrix where every submatrix not intersecting the diagonal has low rank~\cite{eidelman1999} (see Definition~\ref{defn:quasi}).
This class includes both the triangular banded matrices in Theorem~\ref{thm:intro-dr} which are $t$-quasiseparable, and the block companion matrices of~\cite{bostan2017} which are $1$-quasiseparable. Other well-known classes of matrices with displacement structure include Toeplitz-plus-Hankel-like matrices, all variants of Discrete Cosine/Sine Transforms, Chebyshev-Vandermonde-like matrices, and so on~\cite{olshevsky2003}, and the displacement operators for all of these are tridiagonal which are $1$-quasiseparable.\footnote{Some variants allow the top right and bottom left corner elements to be non-zero; these are still $2$-quasiseparable.} To the best of our knowledge, Theorem~\ref{thm:intro-dr-quasi} is the most general result on near-linear time matrix vector multiplication for matrices with displacement rank structure, and in particular captures all currently known matrices with such structure.

Aside from strongly extending these previous important classes of structured matrices, this class of matrices is well-behaved and nice in its own right. They naturally inherit traits of displacement rank such as certain multiplicative and inversion closure properties~\cite{pan2002structured}. Furthermore, the class of quasiseparable matrices, which was introduced relatively recently and are still under active study~\cite{eidelman1999}, generalize band matrices and their inverses and satisfy properties such as gradation under addition and multiplication\footnote{For example, the sum of a $p$-quasiseparable matrix and $q$-quasiseparable matrix is $(p+q)$-quasiseparable.} and closure under inversion. As one consequence, the Sylvester and Stein displacement classes with respect to quasiseparable operators are essentially identical.\footnote{A matrix has low Sylvester displacement with respect to $\vL,\vR$ if and only if it has low Stein displacement with respect to $\vL^{-1},\vR$.}

\paragraph{Orthogonal polynomial transforms.} The second strand of work that inspired our results relates to \emph{orthogonal polynomial transforms}~\cite{cheb-approx, jacobi, zernike, chihara}. Any matrix $\vA$ can be represented as a polynomial transform, defined as follows. 
\begin{defn}
  \label{defn:polynomial-transform}
  Let $a_0(X), \dots, a_{N-1}(X)$ be a collection of polynomials over a field $\F$ and $z_0, \dots, z_{N-1}$ be a set of points. The discrete polynomial transform matrix $\vA$ is defined by $\vA[i,j] = a_i(z_j)$. The discrete polynomial transform of a vector $\vb$ with respect to the $a_i$ and $z_j$ is given by the product $\vA\vb$.
\end{defn}
When the $a_i$ are a family of orthogonal polynomials~\cite{chihara}, we are left with an orthogonal polynomial transform. Orthogonal polynomials can be characterized by the following two term recurrence\footnote{This is often called a three-term recurrence, but we will consider the number of terms on the RHS to be the recurrence length.}:
\begin{equation}
\label{eq:intro-recur-op}
a_{i}(X)= (\alpha_iX+\beta_i)a_{i-1}(X)+\gamma_ia_{i-2}(X),
\end{equation}
where $\alpha_i,\beta_i,\gamma_i\in\F$. Driscoll, Healy and Rockmore present an algorithm to perform the orthogonal polynomial transform over $\F=\R$ in $O(N\log^2{N})$ operations~\cite{driscoll}. Later it was shown how to perform the transposed transform and inverse transform (i.e.\ multiplication by $\vA^{T}$ and $\vA^{-1}$) with the same complexity~\cite{potts1998fast, bostan2010op}. We observe that this class of transforms can be extended to polynomials that satisfy a more general recurrence. We introduce the notion of recurrence width which describes the degree of this type of structure in a matrix:
\begin{defn}
  \label{defn:intro-width}
  An $N\times N$ matrix $\vA$ has \width\ $t$ if the polynomials $a_i(X) = \sum_{j=0}^{N-1} \vA[i,j]X^j$ satisfy $\deg(a_i) \leq i$ for $i < t$, and
  \begin{equation}
    \label{eq:intro-recur-poly}
    a_{i}(X)=\sum_{j=1}^t g_{i,j}(X)a_{i-j}(X)
  \end{equation}
  for $i \geq t$, where the polynomials $g_{i,j}\in\F[X]$ have degree at most $j$.
\end{defn}

Note that under this definition, an orthogonal polynomial transform matrix has recurrence width $2$.\footnote{There is a subtlety in that the orthogonal polynomial transform in Definition~\ref{defn:polynomial-transform} can be factored as the product of the matrix in Definition~\ref{defn:intro-width} times a Vandermonde matrix on the $z_j$. This is covered by a generalization of Definition~\ref{defn:intro-width}, see Definition~\ref{defn:recurrence-width} and Remark~\ref{rmk:rw-examples}.}
The recurrence structure allows us to generate matrix vector multiplication algorithms for matrices $\vA^T$ and $\vA^{-1}$ (and for $\vA$ and $\vA^{-T}$ by the transposition principle) in a simple and general way. 

\begin{thm}
  \label{thm:intro-width}
  Let $\vA \in \F^{N \times N}$ be a matrix with recurrence width $t$. With $O(t^\omega \cM(N)\log{N})$ pre-processing operations, 
the products $\vA^T\vb$ and $\vA\vb$ can be computed in $O(t^2\cM(N)\log{N})$ operations for any vector $\vb$, and the products $\vA^{-1}\vb$ and $\vA^{-T}\vb$ can be computed in $O(t^2\cM(N)\log^2{N})$ operations for any vector $\vb$.
\end{thm}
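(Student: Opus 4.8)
The plan is to exploit the recurrence~\eqref{eq:intro-recur-poly} to build a divide-and-conquer transfer-matrix scheme on the polynomial coefficients, and then invoke the transposition principle for the $\vA$ and $\vA^{-T}$ statements. First, observe that by unrolling the recurrence, each $a_i(X)$ can be written as a fixed $\F[X]$-linear combination of the ``seed'' polynomials $a_0,\dots,a_{t-1}$; concretely, for any $i \ge j$ there is a $t \times t$ matrix $\vP_{i \la j}(X)$ of polynomials such that $(a_i, \dots, a_{i-t+1})^T = \vP_{i\la j}(X)\,(a_j,\dots,a_{j-t+1})^T$, obtained by composing the one-step companion-type transfer matrices whose entries are the $g_{i,j}(X)$. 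The key structural fact is a degree bound: since $\deg(g_{i,j}) \le j$, a product of $k$ consecutive transfer matrices has entries of degree $O(k)$. This is exactly the setup for a balanced binary computation tree over the index range $[0,N)$: at a node covering a dyadic interval of length $\ell$, we store the $t\times t$ transfer polynomial matrix across that interval, which has entries of degree $O(\ell)$ and hence $O(t^2\ell)$ total size; merging two children costs one $t\times t$ polynomial matrix multiplication in degree $O(\ell)$, i.e.\ $O(t^\omega \cM(\ell))$ operations, and summing over all $O(N/\ell)$ nodes at each of $O(\log N)$ levels gives the claimed $O(t^\omega \cM(N)\log N)$ pre-processing bound.

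For the multiplication $\vA^T\vb$: note $(\vA^T\vb)_j = \sum_i \vA[i,j] b_i$, i.e.\ we want the coefficient vector of $\sum_i b_i a_i(X)$. Using the tree, this is a standard ``combine partial sums up the tree'' pass — at each node we maintain the polynomial $\sum_{i \in \text{node}} b_i a_i(X)$ expressed in terms of the two seed-coordinates at the left endpoint of the node's interval (a pair of polynomials of degree $O(\ell)$), and merging children requires applying a stored transfer matrix, costing $O(t^2\cM(\ell))$ per node and thus $O(t^2 \cM(N)\log N)$ overall. The product $\vA\vb$ then follows by the transposition principle: the whole algorithm is a linear circuit in $\vb$, so its transpose computes $\vA\vb$ with the same asymptotic operation count (after the same pre-processing).

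For the inverse products $\vA^{-1}\vb$ and $\vA^{-T}\vb$ in $O(t^2\cM(N)\log^2 N)$: I would express $\vA$ as a product of structured factors that are each cheaply invertible. The cleanest route is to write $\vA = \vG \cdot \vV$ or similar, where the recurrence~\eqref{eq:intro-recur-poly} exhibits $\vA$ as (the inverse of) a banded-in-blocks lower-triangular matrix acting on coefficient space — more precisely, the coefficient vectors of $a_i - \sum_{j=1}^t g_{i,j} a_{i-j}$ vanish, which says a certain block-banded matrix $\vB$ (built from the $g_{i,j}$ coefficients, with $t\times t$-ish bandwidth) satisfies $\vB \vA = (\text{trivial boundary part})$; solving a block-banded triangular system against $\vb$ is a back-substitution that, done in divide-and-conquer fashion over the tree, costs one extra $\log N$ factor, yielding $O(t^2 \cM(N)\log^2 N)$, and transposition handles $\vA^{-T}$. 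The main obstacle I expect is making the inverse direction fully rigorous: controlling that the boundary/seed block is genuinely invertible (this is where the degree hypotheses $\deg a_i \le i$ for $i<t$ and $\deg g_{i,j}\le j$ matter, ensuring leading coefficients are nonzero so the triangular system is nonsingular) and bookkeeping the degree growth so the back-substitution recursion telescopes to a single extra logarithmic factor rather than two. Getting the exponent on $t$ to be $2$ rather than $\omega$ in the per-vector cost (while $\omega$ appears only in pre-processing) requires care: one must reuse the pre-computed transfer matrices rather than recomputing products, so that each per-vector merge is a matrix-times-vector-of-polynomials, not a matrix-times-matrix.
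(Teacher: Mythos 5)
Your treatment of the forward products is correct and is essentially the paper's own argument: the dyadic tree of $t\times t$ transfer matrices with degree control is exactly the pre-processing of Lemma~\ref{lmm:tree-compute}, the bottom-up partial-sum pass computing $\sum_{i\in\text{node}}\vb[i]a_i(X)$ in terms of the node's seed coordinates is the computation of $\vP_{\ell,r}$ in Algorithm~\ref{algo:transpose-mult} (Theorem~\ref{thm:mainATb}), and $\vA\vb$ via the transposition principle is also how the paper proceeds. (One cosmetic slip: the per-node state is a length-$t$ vector of polynomials, not a pair, unless $t=2$.)

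The inverse part, however, has a genuine gap. The identity you propose to exploit --- ``a block-banded lower-triangular operator $\vB$ built from the $g_{i,j}$ satisfies $\vB\vA=(\text{trivial boundary part})$'' --- is, in the paper's notation, $\vG$ (with entries evaluated at the shift matrix $\vS$) applied to the stacked rows of $\vA$ yielding the matrix $\vF$ whose only nonzero rows are the $t$ seeds. This identity is precisely what drives the \emph{forward} algorithms: applying $\vB^{-1}$ to the boundary reproduces multiplication by $\vA$. It cannot be back-substituted to obtain $\vA^{-1}\vb$, because going the other way would require inverting the boundary factor $\vF$, which has rank at most $t$; equivalently, from $\vA\vx=\vb$ the relation only yields $t$ scalar constraints on $\vx$. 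The system one actually has to solve is the dense triangular system $\vA^T\vx=\vy$ itself, and the missing ingredient is the self-similarity statement of Lemma~\ref{lmm:inv-recursive}: both diagonal blocks $\vA[0{:}N/2,0{:}N/2]$ and $\vA[N/2{:}N,N/2{:}N]$ again satisfy a width-$t$ recurrence, after re-seeding at the midpoint via the pre-computed transition matrix $\vT_{[0:N/2]}$ and retaining only the appropriate high-order coefficients (the $\bar\vA$ construction). With that in hand, Theorem~\ref{thm:recur-inverse} performs a $2\times 2$ block back-substitution on $\vA^T$: recurse on $\vA_{22}^T$, apply the off-diagonal block with the forward $O(t^2\cM(N)\log N)$ algorithm, recurse on $\vA_{11}^T$, giving $T(N)=2T(N/2)+O(t^2\cM(N)\log N)=O(t^2\cM(N)\log^2 N)$, with $\vA^{-1}\vb$ then following by transposition. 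Your invertibility observation ($\deg a_i=i$, nonzero leading coefficients) is the right hypothesis, but it is what makes $\vA^T$ itself triangular and nonsingular; it does not rescue back-substitution on the banded coefficient system.
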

In Section~\ref{sec:transpose} we provide an algorithm for $\vA^T\vb$, and bound its complexity in Theorem~\ref{thm:mainATb}. The statement for $\vA\vb$ then follows from the transposition principle (and we provide more detail in Appendix~\ref{sec:Ab}). The statement for $\vA^{-1}\vb$ (hence $\vA^{-T}\vb$) is proved in Theorem~\ref{thm:recur-inverse}. 
Our algorithms are optimal in the sense that their complexity is equal to the worst-case input size of $\Theta(t^2N)$ for a matrix with \width\ $t$, up to log factors (and if $\omega=2$, so is the pre-processing). In particular, we recover the bounds of Driscoll et al. of $O(\cM(N)\log{N})$ in the orthogonal polynomial case~\cite{driscoll}.

\paragraph{The connection.} Theorem~\ref{thm:intro-dr} and~\ref{thm:intro-width} are special cases of our results on the most general notion of recurrence width. This connection is compelling because the set of matrices with low \width\ and those with low displacement rank seem to be widely different. Indeed the existing algorithms for the class of orthogonal polynomials~\cite{driscoll} and low displacement rank~\cite{OS00} look very different. Specifically, the algorithm of Driscoll et al.~\cite{driscoll} is a divide and conquer algorithm, while that of Olshevsky and Shokrollahi~\cite{OS00} (and other works) heavily exploits structural algebraic properties on matrices with low displacement rank. Despite this, we show that both of these classes of matrices can be captured by a more abstract recurrence and handled with a single class of superfast algorithms.
Our definition is arguably more natural since it centers around polynomials, while existing structured matrices are often defined without polynomials yet use them heavily in the algorithms. Thus, our definition makes this connection more explicit.

Similarly, both classic matrices with displacement structure~\cite{P01} and those with rank structure~\cite{EGH13} have large but separate bodies of literature, and Theorem~\ref{thm:intro-dr-quasi} shows that their structure can be combined in a way that still admits efficient algorithms. We believe that unifying these existing threads of disparate work is interesting in its own right and our most important contribution. In Appendix~\ref{sec:related}, we provide a more detailed history of these threads.

\paragraph{Some Applications.}
Orthogonal polynomials and low displacement rank matrices have applications in numerous areas from signal processing to machine learning. 
Indeed orthogonal polynomials have their own dedicated conference~\cite{OP-conf}. For matrices with low displacement rank, the survey by Kailath and Sayed~\cite{disp-rank-survey} covers more details and applications, which our matrices naturally inherit.
The matrix structure we discuss is also related to notions in control theory arising from different motivations. The notion of displacement rank is a special case of the Sylvester equation which arises in the stability analysis of linear dynamical systems~\cite{simoncini2016computational}. Additionally, the more general type of recurrence width matrices we study can be viewed as satisfying a type of higher-order Sylvester equation which has become increasingly more important in this area. Such equations are very difficult to analyze in full generality~\cite{simoncini2016computational}, and our notions represent one line of attack for efficient solutions to these problems.

As a more concrete application, there has been recent interest in the use of structured matrices as components of neural networks, intended to replace the expensive fully-connected layers (which are essentially linear transformations represented as matrix-vector products) with fast and compressible structures. Many ad-hoc paradigms exist for representing structured matrices in neural networks~\cite{moczulski2015acdc}, and it turns out several of them are instances of a displacement rank structure. More explicitly, a recent paper by Sindhwani et al. found that Toeplitz-like matrices (which have low displacement rank with respect to shift matrices) were much more effective than standard low-rank encodings for mobile speech recognition~\cite{sindhwani2015structured}. Additionally, the theoretical guarantees of using displacement rank to compress neural networks are beginning to be understood~\cite{zhao2017theoretical}. For the same number of parameters, our generalized classes of dense, full-rank structured matrices can be even more expressive than the standard structures explored so far, so they may be suitable for these applications.

\paragraph{Other results and paper organization.} Finally, we collect some other results that support our main results above. In Section~\ref{sec:hierarchy}, we show that recurrence width forms a hierarchy, i.e.\ the matrices of width $t$ cannot describe those of width $t+1$. Next, we show that our new generalizations capture matrices from coding theory that were not captured by the earlier classes. In particular, we show that the matrix corresponding to multipoint multivariate polynomial evaluation has small recurrence width. In Section~\ref{sec:multi}, we use this connection to show a {\emph barrier} result: if one could design algorithms that are efficient enough in terms of sparsity of the input, then we would improve the state-of-the-art results in multipoint multivariate polynomial evaluation. We also note that our matrices capture certain types of sequences, and show how to use the general purpose algorithm to compute them; in Appendix~\ref{sec:bernoulli}, we show how to compute the first $N$ Bernoulli numbers in $O(N\log^2 N)$ operations, which recovers the same algorithmic bit complexity (potentially with a log factor loss) as the best algorithms that are specific to Bernoulli numbers~\cite{bernoulli}. We show some preliminary experimental results in Appendix~\ref{sec:experiments}, which support the claim that our algorithm is simple and the constant factors are small.

Our paper is organized as follows. In Section~\ref{sec:technical-overview}, we provide an overview of our techniques.
In Section~\ref{sec:prelim}, we formally define the most general notion of recurrence width and describe the main structural properties needed for our algorithms.
In Section~\ref{sec:transpose}, we describe the full algorithm for $\vA^T\vb$ for a restricted class of recurrence width; the algorithm for this subclass contains all the core ideas and in particular is already more general than Definition~\ref{defn:intro-width}.
In Section~\ref{sec:general-mult}, we describe the modifications to the $\vA^T\vb$ algorithm needed to handle the fully general definition of recurrence width. We also elaborate on related multiplication operations, including multiplication by $\vA$, multiplication by Krylov matrices, and an optimization for matrix-matrix multiplication.
In Section~\ref{sec:disp-rank} we provide details on the multiplication algorithms for matrices with low displacement rank. We first prove Theorem~\ref{thm:intro-dr} by showing that those matrices have low recurrence width, and then adapt the techniques to cover the more general displacement structure of Theorem~\ref{thm:intro-dr-quasi}.
In Section~\ref{sec:inverse}, we provide details on computing inverses or solvers for several types of matrices of low recurrence width.
In Section~\ref{sec:other}, we include other results and applications of recurrence width. These include how to recover the parameterization of recurrence width $t$ matrices (Section~\ref{subsec:fitting}), that recurrence width forms a hierarchy (Section~\ref{sec:hierarchy}), and applications to coding theory (Section~\ref{sec:multi}).

\section{Technical Overview}
\label{sec:technical-overview}


For ease of exposition, we start off with an overview of our techniques for low recurrence width matrices. We describe the basic recurrence width and a natural generalization that captures displacement rank with respect to triangular band matrices. Finally we show how to adapt these techniques to handle matrices with more general displacement structure.

As in all algorithms for structured matrices, the main challenge is in manipulating the alternate compact representation of the matrix directly. Although the recurrence~\eqref{eq:intro-recur-poly} is represented with only $O(N)$ total values (for a fixed $t$), the polynomials $a_i(X)$ it produces are large: unlike conventional linear recurrences on scalars, the total output size of this recurrence is quadratic because the polynomial degrees grow linearly (hence why the matrix produced is dense). The first hurdle we clear is compressing this output using the recurrence structure~\eqref{eq:intro-recur-poly}, which lets us characterize the polynomials $a_i(X)$ in a simple way and leads to an intuitive algorithm for transpose multiplication. At its core, the algorithms exploit the self-similar structure of the recurrence~\eqref{eq:intro-recur-poly}. By definition, all the polynomials $a_i(X)$ are generated from $a_0(X), \dots, a_{t-1}(X)$. But due to the recurrence structure, the higher-degree polynomials $a_{N/2}(X)$ through $a_{N-1}(X)$ can be thought of as being generated from $a_{N/2}(X), \dots, a_{N/2+t-1}(X)$ (and the dependence on these generators is now lower-degree). This forms the basis of the divide-and-conquer algorithms.

We then consider a natural generalization consisting of recurrences on vectors, where the transitions involve multiplying by polynomial functions of a fixed matrix. We show a reduction to the standard polynomial case; this allows us to interpret structures such as the displacement rank equation~\eqref{eq:intro-dr}, which does not seem to involve polynomials, as defining a polynomial recurrence. This reduction is represented through {\em Krylov matrices}, which appear in independently interesting settings on numerical methods such as the Lanczos eigenvalue algorithm and Wiedemann's kernel vector algorithm~\cite{trefethen, wiedemann}.
For the cases we are interested in, these Krylov matrices themselves turn out to have low recurrence width.



\paragraph{Transpose multiplication.}


We will consider a more general form of the recurrence~\eqref{eq:intro-recur-poly} that allows some generators to be added at every step. It turns out that this generalization does not increase the asymptotic cost of the algorithm but will be necessary to capture structure such as displacement rank~\eqref{eq:intro-dr}. Suppose that the following recurrence holds for all $i$
\begin{equation}
  \label{eq:intro-recur-error}
  a_{i}(X)=\sum_{j=1}^{\min(i,t)} g_{i,j}(X)a_{i-j}(X) + \sum_{k=0}^{r-1} c_{i,k} d_k(X),
\end{equation}
for some fixed \emph{generators} $d_0(X), \dots, d_{r-1}(X)$.
Notice that equation~\eqref{eq:intro-recur-poly} is a special case with $r=t$, $d_i(X)=a_i(X)$ for $0 \le i < t$, and $c_{i,k}=\delta_{ik}$ for $0 \le i < n, 0 \le k < r$ (i.e.\ the $t$ initial polynomials are generators and are never re-added).

Consider the simplified case when $r=1$ and $d_0(X) = 1$, so that the recurrence becomes $a_{i}(X)=\sum_{j=1}^{\min(i,t)} g_{i,j}(X)a_{i-j}(X) + c_i$; this simplification captures the main difficulties. This can be written as
\[
  \begin{bmatrix}
  1 & 0 & 0 & \cdots & 0 \\
  -g_{1,1}(X) & 1 & 0 & \cdots & 0 \\
  -g_{2,2}(X) & -g_{2,1}(X) & 1 & \cdots & 0 \\
  \vdots & \vdots & \vdots & \ddots & \vdots \\
  0 & 0 & 0 & \cdots & 1
  \end{bmatrix}
  \begin{bmatrix} a_0(X) \\ a_1(X) \\ a_2(X) \\ \vdots \\ a_{N-1}(X) \end{bmatrix}
  =
  \begin{bmatrix} c_0 \\ c_1 \\ c_2 \\ \vdots \\ c_{N-1} \end{bmatrix}.
\]
Let this matrix of recurrence coefficients be $\vG$. Notice that computing $\vA^T\vb$ is equivalent to computing the coefficient vector of $\sum \vb[i] a_i(X)$. By the equation above, it suffices to compute $\vb^T \vG^{-1} \vc$.

Thus, we convert the main challenge of understanding the structure of the matrix $\vA$ into that of understanding the structure of $\vG^{-1}$, which is easier to grasp. Notice that $\vG$ is triangular and $t$-banded. The inverse of it is also structured\footnote{Inverses of banded matrices are called semiseparable~\cite{vandebril2005note}.}
and has the property that every submatrix below the diagonal has rank $t$. Thus we can partition $\vG^{-1}$ into $O(\log N)$ structured submatrices; for any such submatrix $\vG'$, we only need to be able to compute ${\vb'}^T \vG' \vc'$ for vectors $\vb',\vc'$. We provide a more explicit formula for the entries of $\vG^{-1}$ that enables us to do this. The pre-computation step of our algorithms, as mentioned in Theorem~\ref{thm:intro-width}, essentially corresponds to computing a representation of $\vG^{-1}$ via generators of its low-rank submatrices. This is formalized in Section~\ref{subsec:structure}.
We note that this algorithm automatically induces an algorithm for $\vA\vb$ with the same time complexity, by the transposition principle. 

\paragraph{Matrix Recurrences, Krylov Efficiency and Displacement Rank.}
Equation~\eqref{eq:intro-recur-poly} can be thought of as equipping the vector space $\F^N$ with a $\F[X]$-module structure and defining a recurrence on vectors $\va_i = \sum g_{i,j}(X)\va_{i-j}$. In general, a $\F[X]$-module structure is defined by the action of $X$, which can be an arbitrary linear map. This motivates our most general recurrence:
\begin{equation}
  \label{eq:intro-recur-matrix}
  \va_{i}=\sum_{j=1}^{\min(i,t)} g_{i,j}(\vR)\va_{i-j} + \sum_{k=0}^{r-1} c_{i,k} \vd_k,
\end{equation}
where the row vectors are governed by a \emph{matrix recurrence}.
This structure naturally captures polynomial recurrence structure~\eqref{eq:intro-recur-poly} and~\eqref{eq:intro-recur-error} (when $\vR$ is the shift matrix, i.e.\ $1$ on the main subdiagonal), low rank matrices (when the recurrence is degenerate, i.e.\ $t=0$), and displacement rank, which we show next.

Consider a matrix $\vA$ satisfying equation~\eqref{eq:intro-dr} for lower triangular $(t+1)$-banded $\vL$ and $\vR$ and $\rank(\vE) = r$. By the rank condition, each row of $\vE$ can be expanded as a linear combination of some $r$ fixed vectors $\vd_k$, so the $i$th row of this equation can be rewritten as
\begin{align}
    &\sum_{j=0}^t \vL[i,i-j]\vA[i-j,:] - \vA[i,:]\vR = \vE[i,:], \nonumber \\
  &\vA[i,:](\vL[i,i]\vI-\vR) = \sum_{j=1}^t -\vL[i,i-j]\vA[i-j,:] + \sum_{k=0}^{r-1} c_{i,k} \vd_k.
  \label{eq:intro-recur-dp}
\end{align}
Notice the similarity to equation~\eqref{eq:intro-recur-error} if $\vA[i,:]$ is replaced with $a_i(X)$ and $\vR$ is $X$. In fact, we show that the matrix $\vA$ can be decomposed as $\sum_1^r \vA_i \vK_i$, where $\vA_i$ are matrices of recurrence width $t$ and $\vK_i$ are \emph{Krylov matrices on $\vR$}.\footnote{The $i$th column of the Krylov matrix on $\vR$ and $\vx$ is $\vR^i \vx$.} We remark that this form generalizes the well-known \emph{$\mathit{\Sigma}LU$ representation} of Toeplitz-like matrices, which is used in multiplication and inversion algorithms for them~\cite{KKM79}.

Recurrence~\eqref{eq:intro-recur-matrix} involves evaluating functions at a matrix $\vR$.
Classical ways of computing these matrix functions use natural decompositions of the matrix, such as its singular value/eigendecomposition, or Jordan normal form $\vR = \vA\vJ\vA^{-1}$. 
In Appendix~\ref{subsec:jordan} we show that it is possible to compute the Jordan decomposition quickly for several special subclasses of the matrices we are interested in, using techniques involving low-width recurrences.

However,~\eqref{eq:intro-recur-matrix} has more structure and only requires multiplication by the aforementioned Krylov matrices.
%
In Section~\ref{sec:Krylov}, we show that the Krylov matrix itself satisfies a recurrence of type~\eqref{eq:intro-recur-error} of width $t$.
Using our established results, this gives a single $O(t^2 \cM(N)\log{N})$ algorithm that unifies the aforementioned subclasses with Jordan decompositions, and implies all triangular banded matrices are \emph{Krylov efficient} (Definition~\ref{defn:krylov}).


When $\vR$ is not banded, the Krylov matrices do not have low recurrence width, but they can still be structured. The techniques of Section~\ref{sec:Krylov} show that in general, multiplying by a Krylov matrix on $\vR$ is can be reduced to a computation on the \emph{resolvent} of $\vR$. Specifically, it is enough to be able to compute the rational function $\vb^T (\vI-X\vR)^{-1}\vc$ for any $\vb,\vc$. This reduction bears similarity to results from control theory about the Sylvester equation~\eqref{eq:intro-dr} implying that manipulating $\vA$ can be done through operating on the resolvents of $\vL$ and $\vR$~\cite{simoncini2016computational}. In the case of multiplication by $\vA$, it suffices to be able to solve the above resolvent multiplication problem. In Section~\ref{subsec:disp-rank-quasi}, we show how to solve it when $\vR$ is quasiseparable, by using a recursive low-rank decomposition of $\vI-X\vR$. 

\section{Problem Definition}
\label{sec:prelim}

\subsection{Notation}

We will use $\F$ to denote a field and use $\R$ and $\C$ to denote the field of real and complex numbers respectively.

\noindent
\textbf{Polynomials.}
For polynomials $p(X),q(X),s(X) \in \F[X]$, we use the notation $p(X) \equiv q(X) \pmod{s(X)}$ to indicate equivalence modulo $s(X)$, i.e.\ $s(X) | (p(X)-q(X))$, and $p(X) = q(X) \pmod{s(X)}$ to specify $q(X)$ as the unique element of $p(X)$'s equivalence class with degree less than $\deg{s(X)}$. 
We use $\cM(N)$ to denote the time complexity of multiplying two polynomials of degree $N$ over the field $\F$ (and is known to be $O(N\log{N}\log\log{N})$ in worst-case). We will use $\tO(T(N))$ to denote $O\left(T(N)\cdot (\log{T(N)})^{O(1)} \right)$. 

\noindent
\textbf{Vectors and Indexing.}
For any integer $m\ge 1$, we will use $[m]$ to denote the set $\{0,\dots,m-1\}$. Unless specified otherwise, indices in the paper start from $0$. Vectors are boldset like $\vx$ and are column vectors unless specified otherwise. We will denote the $i$th element in $\vx$ by $\vx[i]$ and the vector between the positions $[\ell,r):\ell\le r$ by $\vx[\ell:r]$. For any subset $T\subseteq [N]$, $\ve_T$ denotes the characteristic vector of $T$. We will shorten $\ve_{\{i\}}$ by $\ve_i$. Sometimes a vector $\vb$ is associated with a polynomial $b(X)$ and this mapping is always $b(X) = \sum_{i \ge 0} \vb[i] X^i$ unless specified otherwise.

\noindent
\textbf{Matrices.}
Matrices will be boldset like $\vM$ and by default $\vM\in \F^{N\times N}$. We will denote the element in the $i$th row and $j$th column by $\vM[i,j]$ ($\vM[0,0]$ denotes the `top-left' element of $\vM$). $\vM[\ell_1:r_1,\ell_2:r_2]$ denotes the sub-matrix $\{\vM[i,j]\}_{\ell_1\le i < r_1, \ell_2\le j < r_2}$. In particular we will use $\vM[i,:]$ and $\vM[:,j]$ to denote the $i$th row and $j$th column of $\vM$ respectively. 
We will use $\vS$ to denote the shift matrix (i.e.\ $\vS[i,j]=1$ if $i=j+1$ and $0$ otherwise) and $\vI$ to denote the identity matrix. We let $c_{\vM}(X) = \det(X\vI-\vM)$ denote the characteristic polynomial of $\vM$.
Given a matrix $\vA$, we denote its transpose and inverse (assuming it exists) by $\vA^T$ (so that $\vA^T[i,j]=\vA[j,i]$) and $\vA^{-1}$ (so that $\vA\cdot\vA^{-1}=\vI$). We will denote $(\vA^T)^{-1}$ by $\vA^{-T}$.


Finally, we define the notion of Krylov efficiency which will be used throughout and addressed in Section~\ref{sec:Krylov}.
\begin{defn}
  \label{defn:krylov}
  Given a matrix $\vM\in \F^{N\times N}$ and a vector $\vy\in\F^{N}$, the {\em Krylov matrix of $\vM$ generated by $\vy$} (denoted by $\kry(\vM,\vy)$) is the $N\times N$ matrix whose $i$th column for $0\le i<N$ is $\vM^i\cdot \vy$. We say that $\vM$ is $(\alpha,\beta)$-{\em Krylov efficient} if for every $\vy\in\F^N$, we have that $\vK=\kry(\vM,\vy)$ admits the operations $\vK\vx$ and $\vK^T\vx$ (for any $\vx\in\F^N$) with $O(\beta)$ many operations (with $O(\alpha)$ pre-processing operations on $\vM$).
\end{defn}

Throughout this paper, we assume some well-known facts that are summarized in Appendix~\ref{subsec:known}.

\subsection{Our Problem}
\label{sec:our-problem}
We address structured matrices satisfying the following property.
\begin{defn}
  \label{defn:recurrence-width}
  A matrix $\vA \in \F^{M\times N}$ satisfies a \emph{$\vR$-recurrence} of \emph{width $(t,r)$} if its row vectors $\va_i = \vA[i,:]^T$ satisfy
  \begin{equation}
    \label{eq:recur}
    g_{i,0}(\vR)\va_{i} = \sum_{j=1}^{\min(t,i)} g_{i,j}(\vR)\va_{i-j} + \vf_i
  \end{equation}
  for some polynomials $g_{i,j}(X) \in \F[X]$, and the matrix $\vF\in \F^{M \times N}$ formed by stacking the $\vf_i$ (as row vectors) has rank $r$. We sometimes call the $\vf_i$ \emph{error terms} and $\vF$ the \emph{error matrix}, reflecting how they modify the base recurrence.

  Note that $\vR \in \F^{N \times N}$ and we assume $g_{i,0}(\vR)$ is invertible for all $i$.

  The recurrence has \emph{degree $(d, \bar{d})$} if $\deg(g_{i,j}) \leq dj + \bard$.

\end{defn}

For convenience in describing the algorithms, we assume that $M,N$ and $t$ are powers of $2$ throughout this paper, since it does not affect the asymptotics.

\begin{rmk}
    \label{rmk:rw-examples}
We point out some specific cases of importance, and typical assumptions.
\begin{enumerate}
  \item For shorthand, we sometimes say the width is $t$ instead of $(t,r)$ if $r \leq t$. In particular, the basic polynomial recurrence~\eqref{eq:intro-recur-poly} has width $t$. In Section~\ref{subsec:recur-error}, we show that every rank $r$ up to $t$ has the same complexity for our multiplication algorithm.

  \item When $\bard = 0$, we assume that $g_{i,0} = 1$ for all $i$ and omit it from the recurrence equation. 

  \item The orthogonal polynomial transform in Definition~\ref{defn:polynomial-transform} has width $(2,1)$ and degree $(1,0)$. More specifically, $\vR=\diag(z_0,\dots,z_{N-1})$ and $\vF = \left[\begin{array}{c|c|c} \ve_1 & \cdots & \ve_1 \end{array}\right]$. In Section~\ref{subsec:recur-matrix} we show that a Krylov matrix $\kry(\vR,\vone)$ can be factored out, leaving behind a matrix satisfying a polynomial recurrence~\eqref{eq:intro-recur-poly}. Thus we think of Definition~\ref{defn:intro-width} as the prototypical example of recurrence width.

  \item When $\vR$ is a companion matrix
    \[
      \begin{bmatrix} 0 & 0 & \cdots & 0 & m_0 \\ 1 & 0 & \cdots & 0 & m_1 \\ 0 & 1 & \cdots & 0 & m_2 \\ \vdots & \vdots & \ddots & \vdots & \vdots \\ 0 & 0 & \cdots & 1 & m_{N-1} \end{bmatrix}
    \]
    corresponding to the {\em characteristic polynomial} $c_\vR(X) = X^N - m_{N-1} X^{N-1} - \dots - m_0$, the recurrence is equivalent to interpreting $\va_i,\vf_i$ as the coefficient vector of a polynomial and following a polynomial recurrence $\pmod{c_\vR(X)}$. That is, if $a_i(X) = \sum_{j=0}^{N-1} \va_i[j]X^j = \sum_{j=0}^{N-1} \vA[i,j] X^j$ (and analogously for $f_i(X)$), then the $a_i(X)$ satisfy the recurrence
    \begin{equation}
      \label{eq:recur-mod}
      g_{i,0}(X)a_i(X) = \sum_{j=1}^{\min(t,i)} g_{i,j}(X)a_{i-j}(X) + f_i(X) \pmod{c_\vR(X)}
    \end{equation}
    We sometimes call this a \emph{modular recurrence}.

    As an even more special case, the basic polynomial recurrence~\eqref{eq:intro-recur-poly} can be considered an instance of~\eqref{eq:recur} with $\vR = \vS$, the shift matrix.

  \item If $t=0$, then the recurrence is degenerate and $\vA=\vF$. In other words, low rank matrices are degenerate recurrence width matrices.

\end{enumerate}
\end{rmk}

A matrix defined by Definition~\ref{defn:recurrence-width} can be compactly represented as follows.
Let $\vG \in \F[X]^{M\times M}$ be given by $\vG_{ii} = g_{i,0}(X)$ and $\vG_{ij} = -g_{i,i-j}(X)$ (thus, $\vG$ is zero outside of the main diagonal and $t$ subdiagonals). Let $\vF \in \F^{M \times N}$ be the matrix formed by stacking the $\vf_i$ (as row vectors). Then $\vG,\vF$, and $\vR$ fully specify the recurrence and we sometimes refer to $\vA$ as a $(\vG,\vF,\vR)$-recurrence matrix.

\subsection{The Structure Lemma}
\label{subsec:structure}

In this section, we provide a characterization of the elements $\va_i$ generated by the recurrence in terms of its parameterization $\vG,\vF,\vR$. We further provide a characterization of the entries of this matrix.
We assume for now that we are working with a recurrence of degree $(1,0)$.

\begin{lmm}[Structure Lemma, (i)]
  \label{lmm:structure1}
  Let $\vH = (h_{i,j}(X))_{i,j} = \vG^{-1} \pmod{c_\vR(X)} \in \F[X]^{M \times M}$. Then
  \[
    \va_i = \sum_{j=0}^{M-1} h_{i,j}(\vR) \vf_j
  \]
\end{lmm}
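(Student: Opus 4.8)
The plan is to express the recurrence as a single matrix equation over the polynomial ring $\F[X]$ and then invert the coefficient matrix $\vG$, which is unit lower triangular in the degree-$(1,0)$ setting.

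First I would regard $\F^N$ as an $\F[X]$-module in which $X$ acts as $\vR$ (so a polynomial $p(X)$ sends $\vv \mapsto p(\vR)\vv$), and package the row vectors into $\vec{\va} = (\va_0,\dots,\va_{M-1})$ and $\vec{\vf} = (\vf_0,\dots,\vf_{M-1})$, each viewed as an $M$-tuple of elements of this module. Since we are in the degree-$(1,0)$ case we have $g_{i,0}\equiv 1$, so Definition~\ref{defn:recurrence-width} says $\va_i = \sum_{j=1}^{\min(t,i)} g_{i,j}(\vR)\va_{i-j} + \vf_i$, which is exactly the assertion that $\sum_{k=0}^{M-1} \vG_{ik}(\vR)\,\va_k = \vf_i$ for all $i$, where the $\vG_{ik}$ are the entries of $\vG$ (namely $\vG_{ii}=1$, $\vG_{ik} = -g_{i,i-k}$ for $i-t \le k < i$, and $0$ otherwise). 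In other words $\vG \cdot \vec{\va} = \vec{\vf}$ under the natural action of $\F[X]^{M\times M}$ on $(\F^N)^M$.

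Next I would invert $\vG$. Because $\vI - \vG$ is strictly lower triangular it is nilpotent with $(\vI-\vG)^M = 0$, so $\vG$ has a two-sided inverse $\tilde{\vH} := \sum_{k=0}^{M-1}(\vI-\vG)^k \in \F[X]^{M\times M}$. The action of matrices over $\F[X]$ on $(\F^N)^M$ is associative — entrywise this is just the statement $p(\vR)\bigl(q(\vR)\vv\bigr) = (pq)(\vR)\vv$ — so applying $\tilde{\vH}$ to both sides of $\vG \cdot \vec{\va} = \vec{\vf}$ yields $\vec{\va} = (\tilde{\vH}\vG)\cdot\vec{\va} = \tilde{\vH}\cdot\vec{\vf}$, i.e.\ $\va_i = \sum_{j=0}^{M-1} \tilde h_{i,j}(\vR)\,\vf_j$. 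Finally, by Cayley--Hamilton $c_{\vR}(\vR)=0$, hence $p(\vR) = \bigl(p \bmod c_{\vR}(X)\bigr)(\vR)$ for every polynomial $p$; reducing each $\tilde h_{i,j}$ modulo $c_{\vR}(X)$ therefore leaves the operator $\tilde h_{i,j}(\vR)$ unchanged, and the reduced matrix is precisely $\vH = \vG^{-1} \bmod c_{\vR}(X)$. This gives $\va_i = \sum_{j} h_{i,j}(\vR)\vf_j$, as claimed.

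There is no serious obstacle: the two ingredients — inverting a unit lower-triangular matrix over $\F[X]$, and Cayley--Hamilton — are routine. The one point to handle carefully is the bookkeeping that translates the indexed recurrence into the matrix identity and, conversely, transports the identity $\tilde{\vH}\vG = \vI$ of matrices over $\F[X]$ through evaluation at $\vR$; this is legitimate because $p \mapsto p(\vR)$ is an $\F$-algebra homomorphism, and I would state this explicitly as the justification for the step $(\tilde{\vH}\vG)\cdot\vec{\va} = \vec{\va}$.
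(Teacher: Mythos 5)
Your proposal is correct and follows essentially the same route as the paper: encode the recurrence as $\vG(\vR)\,\vec{\va}=\vec{\vf}$, invert $\vG$, and use Cayley--Hamilton to pass to the reduction modulo $c_{\vR}(X)$. The only cosmetic difference is that you invert $\vG$ over $\F[X]$ first (via nilpotency of $\vI-\vG$, using $g_{i,0}\equiv 1$ in the degree-$(1,0)$ convention) and then reduce, whereas the paper works directly with $\vH=\vG^{-1}\pmod{c_{\vR}(X)}$, which also covers non-unit diagonal entries $g_{i,0}$; in the stated setting the two are interchangeable.
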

\begin{proof}
  Recall that $\vg_{i,0}(\vR)$ is invertible, so $\vg_{i,0}(X)$ shares no roots with $c_\vR(X)$. Thus $\vG^{-1} \pmod{c_\vR(X)}$ is well-defined, since $\vG$ is triangular and its diagonal elements are invertible $\pmod{c_\vR(X)}$.

  Given a matrix $\vM = (m_{ij}(X))_{i,j}$, let $\vM(\vR)$ denote element-wise evaluation: $\vM(\vR) = (m_{ij}(\vR))$. Note that Definition~\ref{defn:recurrence-width} is equivalent to the equation
  \[
    \vG(\vR) \begin{bmatrix} \va_0 \\ \vdots \\ \va_{M-1} \end{bmatrix} = \begin{bmatrix} \vf_0 \\ \vdots \\ \vf_{M-1} \end{bmatrix}.
  \]
  However, $\vH(\vR)\vG(\vR) = (\vH\vG)(\vR) = \vI$ by the Cayley-Hamilton Theorem. This implies
  \[
    \begin{bmatrix} \va_0 \\ \vdots \\ \va_{M-1} \end{bmatrix} = \vH(\vR) \begin{bmatrix} \vf_0 \\ \vdots \\ \vf_{M-1} \end{bmatrix}.
  \]
\end{proof}

Our algorithms involve a pre-processing step that computes a compact representation of $\vH$. We use the following characterization of the elements of $\vH$.

First, for any $0 \le i < M$, we will define the $t\times t$ transition matrix:
\[\vT_i=
\begin{pmatrix}
0 & 1& \cdots &0  &0 \\
\vdots & \vdots & \cdots &\vdots &\vdots\\
0 & 0 &\cdots &1 &0\\
0 & 0 &\cdots &0 &1\\
g_{i+1,t}(X) & g_{i+1,t-1}(X) & \cdots &g_{i+1,2}(X) &g_{i+1,1}(X)\\
\end{pmatrix}
\]
(let $g_{i,j}(X) = 0$ for $j > i$).
And for any $\ell\le r$, define $\vT_{[\ell:r]}=\vT_{r-1}\times \cdots\times \vT_\ell$ (note that $\vT_{[\ell:\ell]} = \vI_t$).

\begin{lmm}[Structure Lemma, (ii)]
  \label{lmm:structure2}
  $h_{i,j}(X)$ is the bottom right element of \, $\vT_{[j:i]} \pmod{c_\vR(X)}$.
\end{lmm}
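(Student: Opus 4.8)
The plan is to identify, over $\F[X]$, the recurrence satisfied by the entries of $\vG^{-1}$, and to show independently that the bottom-right entry of $\vT_{[j:i]}$ satisfies the same recurrence with the same boundary data; uniqueness then forces equality, and reducing modulo $c_\vR(X)$ gives the statement. Since we are in degree $(1,0)$ we have $g_{i,0}=1$ (Remark, item~2), so $\vG$ is unit lower triangular (nonzero only on the main diagonal and the $t$ subdiagonals), hence $\widetilde\vH := \vG^{-1} \in \F[X]^{M\times M}$ is itself unit lower triangular, and $h_{i,j} = \widetilde h_{i,j} \bmod c_\vR(X)$. Expanding $(\vG\widetilde\vH)_{ij} = \delta_{ij}$ using $\vG_{ii}=1$ and $\vG_{i,i-m} = -g_{i,m}$ for $1 \le m \le t$ gives $\widetilde h_{i,j} = \delta_{ij} + \sum_{m=1}^{\min(t,i)} g_{i,m}\widetilde h_{i-m,j}$; together with lower-triangularity ($\widetilde h_{k,j} = 0$ for $k<j$) this says $\widetilde h_{j,j} = 1$, $\widetilde h_{i,j}=0$ for $i<j$, and
\[
  \widetilde h_{i,j} = \sum_{m=1}^{\min(t,\,i-j)} g_{i,m}\,\widetilde h_{i-m,j} \qquad (i > j).
\]

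Next I would analyze $\vP_{i,j} := \vT_{[j:i]} = \vT_{i-1}\cdots\vT_j$ for $i \ge j$, noting $\vP_{j,j} = \vI_t$ and $\vP_{i+1,j} = \vT_i\vP_{i,j}$. Because rows $0,\dots,t-2$ of $\vT_i$ merely shift coordinates ($\vT_i[a,a+1]=1$) while only the last row carries the coefficients ($\vT_i[t-1,c] = g_{i+1,t-c}$), the last column of $\vP_{i,j}$ is the natural object to track. I would prove, by induction on $i-j$, the invariant
\[
  \vP_{i,j}[a,\,t-1] = q_{\,i-(t-1)+a,\ j} \qquad (0 \le a \le t-1),
\]
where $q_{\cdot,j}$ is defined by $q_{j,j}=1$, $q_{k,j}=0$ for $k<j$, and $q_{i,j} = \sum_{m=1}^{\min(t,\,i-j)} g_{i,m}\,q_{i-m,j}$ for $i>j$. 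The base case $i=j$ is $\vI_t[a,t-1] = \delta_{a,t-1}$, which matches the right-hand side. For the step, left-multiplying $\vP_{i,j}$ by $\vT_i$ gives $\vP_{i+1,j}[a,t-1] = \vP_{i,j}[a+1,t-1]$ for $a\le t-2$ (equal to $q_{(i+1)-(t-1)+a,j}$ by hypothesis), while $\vP_{i+1,j}[t-1,t-1] = \sum_{c=0}^{t-1} g_{i+1,t-c}\,\vP_{i,j}[c,t-1] = \sum_{m=1}^{t} g_{i+1,m}\,q_{i+1-m,j}$; since $g_{i+1,m}=0$ for $m>i+1$ and $q_{i+1-m,j}=0$ for $i+1-m<j$, this collapses to $\sum_{m=1}^{\min(t,\,i+1-j)} g_{i+1,m}\,q_{i+1-m,j} = q_{i+1,j}$, as required. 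Taking $a=t-1$ yields $\vP_{i,j}[t-1,t-1] = q_{i,j}$.

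Finally, $(\widetilde h_{i,j})_{i\ge j}$ and $(q_{i,j})_{i \ge j}$ obey the same recurrence and the same initial condition $\widetilde h_{j,j} = q_{j,j} = 1$, so strong induction on $i$ gives $q_{i,j} = \widetilde h_{i,j}$ in $\F[X]$; hence the bottom-right entry of $\vT_{[j:i]}$ is $\widetilde h_{i,j}$, and reducing every entry modulo $c_\vR(X)$ shows the bottom-right entry of $\vT_{[j:i]} \bmod c_\vR(X)$ is $h_{i,j}$, which is the claim. I expect the main nuisance to be exactly the index bookkeeping: the off-by-one in $\vT_i$ (whose last row holds $g_{i+1,\cdot}$ rather than $g_{i,\cdot}$) and the truncation to $\min(t,i-j)$ terms near the diagonal; phrasing the invariant in terms of the whole last column, rather than chasing a single entry, keeps these under control.
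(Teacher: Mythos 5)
Your proof is correct and is essentially the paper's own argument: the paper likewise unrolls the band system $\vG\vx=\vy$ through the companion matrices $\vT_i$ and reads off $\vG^{-1}[i,j]$ as the bottom-right entry of $\vT_{[j:i]}$ (its Lemma in Appendix~C, stated for general $g_{i,0}$ and then reduced $\bmod\ c_\vR(X)$). Your version differs only in bookkeeping — you match two explicitly defined recurrences (the columns of $\vG^{-1}$ from $\vG\vG^{-1}=\vI$ and the last column of $\vT_{[j:i]}$) and invoke uniqueness, rather than solving $\vG\vx=\vy$ symbolically and matching coefficients — which is the same idea by linearity.
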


Intuitively, $\vT_i$ describes, using the recurrence, how to compute $\va_{i+1}$ in terms of $\va_{i-t+1}, \dots, \va_{i}$, and the ranged transition $\vT_{[j:i]}$ describes the dependence of $\va_{i}$ on $\va_{j-t+1}, \dots, \va_{j}$. The following lemma about the sizes of entries in $\vT_{[\ell:r]}$ will help bound the cost of computing them.
\begin{lmm}
  \label{lmm:T-range-sizes}
  Let the recurrence in~\eqref{eq:recur} have degree $(1,0)$. Then
  for any $0 \leq \ell \leq r < N$ and $0 \leq i,j < t$, 
  \[\deg(\vT_{[\ell:r]}[i,j]) \le \max((r-\ell+i-j),0).\]
\end{lmm}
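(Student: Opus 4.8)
The plan is to prove the degree bound by induction on the range length $r-\ell$, after first extracting the purely combinatorial \emph{staircase} shape of the transition matrices. Each single-step matrix $\vT_i$ is supported on its super-diagonal together with its last row, i.e.\ $\vT_i[a,b] \neq 0$ implies $b \le a+1$. Writing the $(i,j)$ entry of the product $\vT_{[\ell:r]} = \vT_{r-1}\cdots\vT_\ell$ as a sum over index paths $i = k_0,k_1,\dots,k_{r-\ell} = j$, each step increases the index by at most $1$, so any nonzero contribution forces $j \le i + (r-\ell)$. Hence $\vT_{[\ell:r]}[i,j] = 0$ whenever $r-\ell+i-j<0$; in particular it suffices to prove the degree bound in the regime $r-\ell+i-j \ge 0$, where $\max(r-\ell+i-j,0) = r-\ell+i-j$. (Throughout I use the convention $\deg 0 = -\infty$, relevant since the convention $g_{i,j}=0$ for $j>i$ can make some entries of $\vT_i$ vanish.)

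For the induction, the base case $r=\ell$ is $\vT_{[\ell:\ell]} = \vI_t$, which is immediate. For the step I would write $\vT_{[\ell:r+1]} = \vT_r\,\vT_{[\ell:r]}$ and split on the row index $i$. If $i \le t-2$, then row $i$ of $\vT_r$ is $\ve_{i+1}^T$, so $\vT_{[\ell:r+1]}[i,j] = \vT_{[\ell:r]}[i+1,j]$ and the inductive hypothesis gives $\deg \le \max(r-\ell+(i+1)-j,0) = \max((r+1)-\ell+i-j,0)$ --- the $i\mapsto i+1$ shift exactly cancels the $r\mapsto r+1$ shift. If $i = t-1$, then row $t-1$ of $\vT_r$ consists of the coefficients $g_{r+1,t-k}(X)$, so $\vT_{[\ell:r+1]}[t-1,j] = \sum_{k=0}^{t-1} g_{r+1,t-k}(X)\,\vT_{[\ell:r]}[k,j]$; since the recurrence has degree $(1,0)$ we have $\deg g_{r+1,t-k} \le t-k$, and combining this with the inductive degree bound on $\vT_{[\ell:r]}[k,j]$, the $k$-th summand has degree at most $(t-k)+(r-\ell+k-j) = t+r-\ell-j = \max((r+1)-\ell+(t-1)-j,0)$ (the last equality using $j \le t-1$ and $r \ge \ell$).

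The delicate point --- and the reason the staircase observation must come first --- is that in the $i=t-1$ case the naive bound on the $k$-th summand is $(t-k)+\max(r-\ell+k-j,0)$, which exceeds $t+r-\ell-j$ exactly for the small indices $k < j-(r-\ell)$. But those are precisely the indices for which $\vT_{[\ell:r]}[k,j]$ vanishes by the staircase fact, so those summands contribute nothing to the degree. Thus the only real care needed is to interleave the vanishing pattern with the degree estimate; I would either establish the vanishing claim as a one-line preliminary or carry both claims simultaneously through the induction (namely: $\vT_{[\ell:r]}[i,j]=0$ if $r-\ell+i-j<0$, and otherwise $\deg(\vT_{[\ell:r]}[i,j]) \le r-\ell+i-j$). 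Everything else is routine polynomial-degree bookkeeping.
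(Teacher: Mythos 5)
Your proof is correct and follows essentially the same route as the paper's: induction on $r-\ell$ using $\vT_{[\ell:r+1]}=\vT_r\vT_{[\ell:r]}$, with the top $t-1$ rows handled by the shift action and the last row by expanding against the recurrence coefficients $g_{r+1,\cdot}$. The only difference is cosmetic: you dispose of the $\max(\cdot,0)$ subtlety with a preliminary staircase-vanishing observation ($\vT_{[\ell:r]}[i,j]=0$ when $r-\ell+i-j<0$, via the path argument), whereas the paper folds the same fact into an extended base case for $r-\ell\le t$ (where the top block is a shift/delta pattern) and then runs the inductive step only for $r-\ell\ge t+1$, so the max is never active there; your version is, if anything, slightly more explicit about this point.
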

In particular, it is helpful to keep in mind that $\vH$ satisfies the same degree condition as $\vG$: $\deg(\vH[i,j]) \leq \max(i-j,0)$. Statements of Lemmas~\ref{lmm:structure2} and~\ref{lmm:structure1} for general $(d,\bard)$-degree recurrences, and their proofs (as well as the proof of Lemma~\ref{lmm:T-range-sizes}), are presented in Appendix~\ref{sec:structure}.

\section{Core Multiplication Algorithm}
\label{sec:transpose}
We provide an algorithm for computing $\vA^T\vb$ for a simplified setting that contains all the core ideas. Assume $\vA \in \F^{N \times N}$ is a modular recurrence of width $(t,1)$ and degree $(1,0)$. In other words, $\vA$ is square with three independent simplifications on its generators: $\vG$ has degree $(1,0)$, $\vF$ has rank $1$, and $\vR$ is a companion matrix.\footnote{These assumptions are already more general than the setting in Driscoll et al.~\cite{driscoll}; their setting corresponds to $t=2$, $\vR=\vS$, $\vc = \ve_0$, $d(X)=1$ and $c_{\vR}(X)=X^N$.}
We can express this as 

\begin{equation}
  \label{eq:recur-core}
  a_i(X) = \sum_{j=1}^{\min(t,i)} g_{i,j}(X)a_{i-j}(X) + \vc[i] d(X) \pmod{c_\vR(X)}
\end{equation}

for some $\vc \in \F^N$, $d(X) \in \F[X]$, $\deg(g_{i,j}) \leq j$ and recall $c_{\vR}(X)$ is the characteristic polynomial of $\vR$. 

In this context, Lemma~\ref{lmm:structure1} states that $a_i(X) = \sum_{j=0}^{N-1} h_{i,j}(X) \cdot \vc[j]d(X) \pmod{c_\vR(X)}$.
Therefore the desired vector $\vA^T\vb$ is the coefficient vector of
\begin{equation}
  \label{eq:core-answer}
  \begin{aligned}
    \sum_{i=0}^{N-1} \vb[i] a_i(X) &= \sum_{i=0}^{N-1} \sum_{j=0}^{N-1} \vb[i] h_{i,j}(X) \cdot \vc[j]d(X) \pmod{c_\vR(X)} \\
    &= \vb^T \vH \vc \cdot d(X) \pmod{c_\vR(X)}
  \end{aligned}
\end{equation}

So it suffices to compute $\vb^T \vH \vc$, and perform the multiplication by $d(X) \pmod{c_\vR(X)}$ at the end.
By Lemma~\ref{lmm:structure2}, $\vb^T\vH\vc$ is the bottom right element of the following $t\times t$ matrix:
\begin{equation}
  \label{eq:triangular-recursion}
  \begin{aligned}
    \sum_{0 \leq j \leq i < N} \vb[i] \vT_{[j:i]}\vc[j] &= \sum_{0 \leq j \leq i < N/2} \vb[i] \vT_{[j:i]}\vc[j] + \sum_{j<N/2, i\ge N/2} \vb[i] \vT_{[j:i]}\vc[j] + \sum_{N/2 \leq j \leq i < N} \vb[i] \vT_{[j:i]}\vc[j] \\
    &= \sum_{0 \leq j \leq i < N/2} \vb[i] \vT_{[j:i]}\vc[j] + \left( \sum_{i\ge N/2} \vb[i]\vT_{[N/2:i]} \right)\left( \sum_{j<N/2} \vT_{[j:N/2]}\vc[j]\right) + \sum_{N/2 \leq j \leq i < N} \vb[i] \vT_{[j:i]}\vc[j]
  \end{aligned}
\end{equation}

The first and third sums have the same form as the original. Recursively applying this decomposition, we see that it suffices to compute the last row of $ \sum_{i \in [\ell:r]} \vb[i] \vT_{[\ell:i]}$ and the last column of $\sum_{j \in [\ell:r]} \vc[j] \vT_{[j:r]}$
for all dyadic intervals $[\ell:r] = \left[ \frac{b}{2^d}N : \frac{b+1}{2^d}N \right]$. By symmetry, we can focus only on the former. Denoting this row vector (i.e.\ $1 \times t$ matrix) $\vP_{\ell,r} = \sum_{i \in [\ell:r]} \vB_i \vT_{[\ell:i]}$ (where $\vB_i = \vb[i]\ve_{t-1}^T = \begin{bmatrix} 0 & \cdots & \vb[i] \end{bmatrix}$), it satisfies a simple relation
\begin{equation}
  \label{eq:row-recursion}
  \vP_{\ell,r} = \vP_{\ell,m} + \vP_{m,r}\vT_{[\ell:m]}
\end{equation}
for any $\ell \leq m < r$, and thus can be computed with two recursive calls and a matrix-vector multiply over $\F[X]^{t}$.

Also, note that the $\vT_{[\ell:r]}$ are independent of $\vb$, so the relevant products will be pre-computed.



Symmetrically, the $t \times 1$ matrices $\vQ_{\ell,r} = \sum_{j \in [\ell:r]} \vT_{[j:r]}\vC_j$ (where $\vC_j = \vc[j]\ve_{t-1}$) satisfy $\vQ_{\ell,r} = \vT_{[m:r]}\vQ_{\ell,m} + \vQ_{m,r}$ and can be computed in the same way.

We present the full details in Algorithm~\ref{algo:transpose-mult}. The main subroutine $\textsc{P}(\ell,r)$ computes $\vP_{\ell,r}$. Subroutine $\textsc{H}(\ell,r)$ computes $\vb[R]^T \vH[R,R]\vc[R]$ for index range $R=[\ell:r]$. We omit the procedure $\textsc{Q}$ computing $\vQ_{\ell,r}$, which is identical to procedure $\textsc{P}$ up to a transformation of the indexing.

\begin{algorithm}
  \caption{\ATmult}
  \label{algo:transpose-mult}
  \begin{algorithmic}[1]
    \renewcommand{\algorithmicrequire}{\textbf{Input:}}
    \Require{$\vG,\vb,\vc$,$\vT_{[\ell:r]}$ for all dyadic intervals: $[\ell:r] = \left[ \frac{b}{2^d}N: \frac{b+1}{2^d}N \right]$, $d \in [m], b \in \left[2^d\right]$}
    \renewcommand{\algorithmicensure}{\textbf{Output:}}
    \Ensure{$\vb^T\vH\vc$}
    \Function{P}{$\vb,\ell,r$} \Comment{Computes $\vP_{\ell,r} = \sum_{i=\ell}^{r}\vB_i\vT_{[\ell:i]}$}
    \If{$r-\ell \leq t$}
    \State $\vP_{\ell,r} \gets \sum_{i \in [\ell:r]} \vB_i\vT_{[\ell:i]}$
    \label{step:P:base}
    \Else
    \State $m \gets (\ell+r)/2$
    \State $\vP_{\ell,r} \gets \Call{P}{\vb,\ell,m} + \Call{P}{\vb,m,r} \vT_{[\ell:m]}$
    \label{step:P:recurse}
    \State \Return{$\vP_{\ell,r}$}
    \EndIf
    \EndFunction

    \Function{Q}{$\vc,\ell,r$} \Comment{Computes $\vQ_{\ell,r} = \sum_{j \in [\ell:r]} \vT_{[j:r]}\vC_j$}
    \State (Identical to \textsc{P} up to indexing)
    \EndFunction

    \Function{H}{$\ell,r$} \Comment{Computes $\sum_{\ell \le j \le i < r} \vB[i] \vT_{[j:i]}\vC[j]$}
    \If{$r-\ell \leq t$}
    \State \Return{$\sum_{\ell \le j \le i < r} \vB[i] \vT_{[j:i]}\vC[j]$}
    \label{step:H:base}
    \Else
    \State $m \gets (\ell+r)/2$
    \State \Return{\Call{H}{$\ell,m$} + $\vP_{\ell,m}\vQ_{m,r}$ + \Call{H}{$m,r$}}
    \label{step:H:recurse}
    \EndIf
    \EndFunction

    \Function{TransposeMult}{$\vb,\vc$}
    \State \Call{P}{$\vb,0,N$}
    \State \Call{Q}{$\vc,0,N$}
    \State \Return{\Call{H}{$0,N$}}
    \EndFunction
  \end{algorithmic}
\end{algorithm}


\subsection{Pre-processing time}
\label{sec:pre-process}

The pre-processing step is computing a compact representation of $\vH$, which we represent through the matrices $\vT_{\left[bN/2^d:bN/2^d+N/2^{d+1}\right]} \in \F[X]^{t \times t}$ for $0\le d<m$, $0\le b<2^d$. Since we have assumed that $N$ is a power of 2, these ranges can be expressed in terms of dyadic strings; we need to pre-compute $\vT_{[s]}$ for all strings $s \in \{0,1\}^*, |s| \le \lg N$ (where we interpret $[s]$ as the corresponding dyadic interval in $[0,N-1]$). All the required matrices can be computed in a natural bottom-up fashion: 

\begin{lmm}
\label{lmm:tree-compute}
We can pre-compute $\vT_{[s]}$ for all strings $s \in \{0,1\}^*, |s| \le \lg N$ with $O(t^{\omega} \cM(N)\log N)$ operations.
\end{lmm}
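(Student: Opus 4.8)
The plan is to organize the precomputation as a standard bottom-up tree contraction. For each dyadic string $s$ of length $\ell = |s| \le \lg N$, the interval $[s] \subseteq [0,N-1]$ has length $N/2^\ell$, and we want $\vT_{[s]} \in \F[X]^{t \times t}$. The recursion is simply $\vT_{[s]} = \vT_{[s1]} \cdot \vT_{[s0]}$: since $\vT_{[\ell:r]} = \vT_{r-1}\cdots\vT_\ell$, splitting the interval $[s]$ at its midpoint into $[s0]$ (the left/lower half) and $[s1]$ (the right/upper half) gives $\vT_{[s]} = \vT_{[s1]}\vT_{[s0]}$. At the leaves, where $|s|$ is maximal (length $\lg N$, intervals of size $1$) or more conveniently where $N/2^\ell \le t$, we can read the entries of $\vT_{[s]}$ directly off the generators $\vG$ (or just take the base companion matrices $\vT_i$). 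So the work to be bounded is the cost of all the polynomial matrix products $\vT_{[s1]} \cdot \vT_{[s0]}$.

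\textbf{Costing one product.} I would bound the product at a node $s$ of length $\ell$ using Lemma~\ref{lmm:T-range-sizes}: the entries of $\vT_{[s0]}$ and $\vT_{[s1]}$ are polynomials of degree $O(N/2^\ell)$ (since each of those intervals has length $N/2^{\ell+1}$ and the degree bound is at most the interval length plus $t$). Multiplying two $t \times t$ matrices whose entries are polynomials of degree $D$ costs $O(t^\omega \cM(D))$ operations — use the $O(t^\omega)$ scalar multiplications of the fast matrix-multiply schedule, each now a polynomial multiplication of cost $\cM(D)$ (this is valid because $\cM$ is superadditive enough that combining degree-$D$ pieces is fine; one can also pad all entries to degree exactly $D$). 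So the node at level $\ell$ costs $O(t^\omega \cM(N/2^\ell))$.

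\textbf{Summing over the tree.} There are $2^\ell$ nodes at level $\ell$, so level $\ell$ contributes $O(2^\ell \cdot t^\omega \cM(N/2^\ell))$. Using $\cM(n) = O(n \log n \log\log n) = \tilde O(n)$, and more simply $\cM(n)/n$ is nondecreasing, we get $2^\ell \cM(N/2^\ell) \le \cM(N)$ for each $\ell$ (since $\cM$ is superlinear: $\cM(ab) \ge a\,\cM(b)$ type inequality, equivalently $\cM(N/2^\ell) \le \cM(N)/2^\ell$). Hence each of the $\le \lg N$ levels contributes $O(t^\omega \cM(N))$, for a total of $O(t^\omega \cM(N)\log N)$, as claimed. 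I would also note, as the excerpt's earlier footnote does, that once $N/2^\ell \le t$ the ``polynomials'' are constants and there is nothing to multiply as polynomials, so the recursion bottoms out at level $\ell = \lg(N/t)$ and one does not accumulate an extra factor of $t$; this only helps the bound.

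\textbf{Main obstacle.} The only subtlety — and the step I would be most careful about — is the interaction between the polynomial degree growth and the matrix-multiplication schedule: one must make sure that the $O(t^\omega)$ scalar multiplications in a fast matrix product really can each be charged as a single $\cM(N/2^\ell)$ polynomial multiplication, i.e.\ that intermediate additions don't blow up degrees (they don't, degrees only grow under multiplication) and that Strassen-type recursion, which introduces scalar subtractions, is harmless over $\F[X]$ (it is, since $\F[X]$ is a commutative ring). Everything else is the routine geometric-series bookkeeping over the $\lg N$ levels.
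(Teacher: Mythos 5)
Your proposal is correct and follows essentially the same route as the paper: a bottom-up computation over dyadic intervals, with each node costing $O(t^\omega)$ polynomial multiplications of degree $O(N/2^\ell + t)$ by Lemma~\ref{lmm:T-range-sizes}, summed over the $\lg N$ levels to give $O(t^\omega \cM(N)\log N)$, with the same observation that the recursion bottoms out once interval lengths drop below $t$. The only (immaterial) discrepancy is the order of the two factors in $\vT_{[s]} = \vT_{[s1]}\vT_{[s0]}$ versus the paper's $\vT_{[s0]}\cdot\vT_{[s1]}$, which does not affect the operation count.
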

\begin{proof}
Fix an arbitrary $s^*$ of length $\ell < \lg N$. We can compute $\vT_{[s^*]} = \vT_{[s^*0]}\cdot \vT_{[s^*1]}$. Using the matrix multiplication algorithm, we have $O(t^{\omega})$ polynomial multiplications to compute, where the polynomials are of degree at most $\frac{N}{2^\ell}+t$ by Lemma~\ref{lmm:T-range-sizes}. So computing $\vT_{[s^*]}$ takes $O(t^\omega \cM(N/2^{\ell}+t) )$ operations. Thus computing $\vT_{[s]}$ for all $|s| = \ell$ is $O(t^\omega\cM(N+t2^{\ell}))$, and computing all $\vT_{[s]}$ is $O(t^{\omega} \cM(N)\log N)$, as desired.\footnote{In the last estimate we note that $|s^*|\le \log_2(t+1)$, so the $t^{\omega+1} 2^{\ell}$ term only gets added up to $\ell=\log(N/t)$.}
\end{proof}

\begin{cor}
\label{cor:transpose-mult-pre}
Pre-processing for Algorithm~\ref{algo:transpose-mult} requires $O(t^{\omega} \cM(N)\log N)$ operations over \,$\F$.
\end{cor}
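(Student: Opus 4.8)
The plan is to derive this corollary directly from Lemma~\ref{lmm:tree-compute}, since the pre-processing phase of Algorithm~\ref{algo:transpose-mult} is nothing more than the computation of the family of matrices $\vT_{[\ell:r]}$ that appear in its \textbf{Input} list. First I would observe that the intervals $[\ell:r] = \left[ \frac{b}{2^d}N : \frac{b+1}{2^d}N \right]$, ranging over $d \in [m]$ and $b \in [2^d]$, are exactly the dyadic subintervals of $[0,N)$, hence are in bijection with dyadic binary strings $s$ of length $|s| = d \le \lg N$. Thus ``pre-process for Algorithm~\ref{algo:transpose-mult}'' and ``compute $\vT_{[s]}$ for all $s$ with $|s| \le \lg N$'' name the same task, and Lemma~\ref{lmm:tree-compute} immediately yields the bound $O(t^{\omega}\cM(N)\log N)$ operations over $\F$.

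The second thing I would verify is that no further work is hidden in the pre-processing. The remaining places the $\vT$'s occur are Steps~\ref{step:P:base}, \ref{step:P:recurse} and~\ref{step:H:base}, \ref{step:H:recurse}: these use either the already-tabulated matrices $\vT_{[\ell:m]}$ over dyadic halves, or ranged transitions $\vT_{[\ell:i]}$ over ``short'' intervals of length at most $t$ in the base cases. The short intervals cause no trouble because the product tree built in Lemma~\ref{lmm:tree-compute} is constructed down to its leaves, so each $\vT_{[\ell:i]}$ with $i - \ell \le t$ is a constant-cost combination of stored leaf matrices (operations purely over $\F$, with polynomial degrees $O(t)$ by Lemma~\ref{lmm:T-range-sizes}), whose total cost is dominated by the $O(t^{\omega}\cM(N)\log N)$ already accounted for. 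Every other quantity the algorithm forms (the partial sums $\vP_{\ell,r}$, $\vQ_{\ell,r}$ and the matrices $\vB_i,\vC_j$) depends on $\vb$ or $\vc$ and therefore belongs to the on-line computation, not the pre-processing.

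I do not anticipate a genuine obstacle: the only point needing a moment of care is the bookkeeping showing that the set of $\vT$-products the algorithm requires for pre-processing coincides with (rather than properly exceeds) the set produced by Lemma~\ref{lmm:tree-compute}. Once that correspondence is pinned down, the corollary is an immediate restatement of the lemma in the language of Algorithm~\ref{algo:transpose-mult}.
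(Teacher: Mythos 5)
Your proposal is correct and matches the paper's reasoning: the corollary is stated there as an immediate consequence of Lemma~\ref{lmm:tree-compute}, since the pre-processing is exactly the computation of the dyadic-interval matrices $\vT_{[s]}$. Your additional check that the short base-case products (handled in the paper inside Lemma~\ref{lmm:transpose-mult-comp} at cost $O(t^{\omega}\cM(N))$) are dominated by this bound is consistent with the paper's accounting.
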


\subsection{Runtime analysis}

Assuming that the pre-processing step is already done, the runtime of Algorithm~\ref{algo:transpose-mult} can be bounded with a straightforward recursive analysis.

\begin{lmm}
\label{lmm:transpose-mult-comp}
After pre-processing, Algorithm~\ref{algo:transpose-mult} needs $O(t^2\cM(N)\log{N})$ operations over $\F$.
\end{lmm}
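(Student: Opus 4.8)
The plan is to set up a recurrence for the operation count of each of the three subroutines \textsc{P}, \textsc{Q}, and \textsc{H}, and then combine them. First I would analyze \textsc{P}: let $T_P(n)$ denote the cost of a call $\textsc{P}(\vb,\ell,r)$ with $n = r-\ell$. Each such call makes two recursive calls of half the size, and then performs one matrix-vector multiply in Step~\ref{step:P:recurse} of the row vector $\textsc{P}(\vb,m,r)$ (a $1 \times t$ vector over $\F[X]$) against the precomputed $t \times t$ polynomial matrix $\vT_{[\ell:m]}$. The key quantitative input is Lemma~\ref{lmm:T-range-sizes}, which bounds the degrees of the entries of $\vT_{[\ell:m]}$ by $O(n)$, and an easy induction shows the entries of $\vP_{\ell,r}$ are also polynomials of degree $O(n)$. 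Hence the vector-matrix multiply costs $O(t^2)$ polynomial multiplications of degree $O(n)$, i.e.\ $O(t^2 \cM(n))$ operations. This gives the recurrence
\[
  T_P(n) = 2 T_P(n/2) + O(t^2 \cM(n)),
\]
with base case $T_P(t) = O(t^2)$ from Step~\ref{step:P:base} (summing $t$ shifted scalar vectors). Since $\cM$ is superadditive (so $\sum_{\text{level } d} 2^d \cM(n/2^d) = O(\cM(n)\log n)$ — using $\cM(n) = \Theta(n\log n)$ up to $\log\log$ factors, the standard assumption), this solves to $T_P(N) = O(t^2 \cM(N)\log N)$; the base-case contribution is $O((N/t)\cdot t^2) = O(tN)$, which is dominated. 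The analysis of \textsc{Q} is identical by symmetry.

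Next I would analyze \textsc{H}: let $T_H(n)$ be the cost of $\textsc{H}(\ell,r)$ with $n=r-\ell$. By Step~\ref{step:H:recurse}, it makes two recursive calls of half size plus one multiply $\vP_{\ell,m}\vQ_{m,r}$, which is a $(1\times t)$ times $(t \times 1)$ product over $\F[X]$, i.e.\ $t$ polynomial multiplications of degree $O(n)$, costing $O(t\cM(n))$. Crucially, the matrices $\vP_{\ell,m}$ and $\vQ_{m,r}$ needed here have already been computed by the single top-level calls $\textsc{P}(\vb,0,N)$ and $\textsc{Q}(\vc,0,N)$ — one should remark that the recursion trees of \textsc{P} and \textsc{Q} produce exactly the $\vP_{\ell,r}$, $\vQ_{\ell,r}$ over all dyadic intervals, so \textsc{H} incurs no recomputation cost for them. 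Thus $T_H(n) = 2T_H(n/2) + O(t\cM(n))$ with base case $T_H(t) = O(t^2\cM(t))$ or simpler, yielding $T_H(N) = O(t\cM(N)\log N) = O(t^2\cM(N)\log N)$.

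Finally I would combine: the total cost of \textsc{TransposeMult} is $T_P(N) + T_Q(N) + T_H(N) = O(t^2\cM(N)\log N)$, plus lower-order bookkeeping, which proves the lemma. The main subtlety — and the one place care is needed rather than routine bookkeeping — is the degree bookkeeping: verifying via Lemma~\ref{lmm:T-range-sizes} that throughout the recursion the polynomials manipulated at a subproblem of size $n$ genuinely have degree $O(n)$ rather than $O(N)$, so that the per-level work telescopes geometrically and the $\log N$ factor (rather than $\log^2 N$) is all that is lost. I would also note the mild point that the base cases, summed over the $\Theta(N/t)$ leaves, contribute only $O(tN)$ or $O(tN\cM(t)/t) $ total, which is absorbed; this is why the $t^3$-type leaf terms seen in earlier draft analyses do not worsen the bound.
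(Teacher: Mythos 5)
Your proposal is correct and follows essentially the same route as the paper's proof: bound the per-call work of \textsc{P}/\textsc{Q} by $O(t^2\cM(n))$ and of \textsc{H} by $O(t\cM(n))$ using the degree bounds from Lemma~\ref{lmm:T-range-sizes}, note that \textsc{H} reuses the $\vP_{\ell,r},\vQ_{\ell,r}$ already produced on dyadic intervals, and solve the standard divide-and-conquer recurrence $T(N)=2T(N/2)+O(t^2\cM(N))$, with the leaf contributions absorbed. The only cosmetic difference is the base-case accounting (the paper charges $O(t^2\cM(t))$ per \textsc{P}-leaf via a Horner-style evaluation and folds the $\vT_{[j:i]}$ needed in \textsc{H}'s base case into the pre-processing), which does not affect the stated bound.
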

\begin{proof}
  We analyze the complexity of each of the main steps of Algorithm~\ref{algo:transpose-mult}.

  \begin{description}
    \item[Step~\ref{step:P:base}]
      This can be computed as
      \[
        \sum_{i=\ell}^{r-1} \vB_i\vT_{[\ell:i]} = \vB_\ell + \left( \vB_{\ell+1} + \dots + (\vB_{r-2} + \vB_{r-1}\vT_{r-2})\vT_{r-3} \dots \right)\vT_\ell,
      \]
      from the inside out, which consists of $r-\ell=t$ multiplications over polynomials of degree $O(t)$ and has $O(t^2\cM(t))$ complexity.\footnote{Multiplication by a companion matrix $\vT_i$ only requires $t$, not $t^2$, multiplications.} Over all $N/t$ such base case ranges, the work is $O(Nt\cM(t)) \le O(t^2\cM(N))$.
    \item[Step~\ref{step:P:recurse}]
      We analyze the number of operations ignoring recursive calls. Note that by Lemma~\ref{lmm:T-range-sizes}, each element of vector $\vP_{\ell,r}$ is a polynomial of size at most $r-\ell+t$. Thus the matrix-vector multiplication $\vP_{m,r}\vT_{[\ell:m]}$ performs $O(t^2)$ multiplications of polynomials of degree at most $r-\ell+t$. This simplifies to $O(t^2\cM(r-\ell+t)) = O(t^2\cM(r-\ell))$, since the recursion stops at size $t$. 
    \item[Step~\ref{step:H:base}]
      If the $\vT_{[j:i]}$ are already known, this can be trivially computed in $O(t^3)$ time by summing up the polynomials. As with Step~\ref{step:P:base}, all the base cases perform at most $O(t^2\cM(N))$ total operations.

      Computing the $\vT_{[j:i]}$ can be done as part of pre-processing by explicitly inverting the submatrix $\vG[\ell:r,\ell:r]$, which is done in $O(t^\omega \cM(t))$ time. Over all base cases, this contributes $O(t^\omega \cM(N))$ operations, less than the other pre-processing steps.
    \item[Step~\ref{step:H:recurse}]
      Multiplication of $\vP_{\ell,m} \in \F[X]^{1 \times t}$ and $\vQ_{m,r} \in \F[X]^{t \times 1}$ requires $t$ multiplications of polynomials of degree $r-\ell+t$, requiring $O(t\cM(r-\ell))$ operations.
  \end{description}

  The total complexity follows from a standard divide-and-conquer analysis. If $T(N)$ is the runtime of the call $\textsc{P}(\vb,0,N)$, then the above shows $T(N) = 2T(N/2) + O(t^2\cM(N))$. This solves to $T(N) = O(t^2 \cM(N)\log{N})$ assuming the base cases are efficient enough, which they are. The call $\textsc{H}(0,N)$ is similar.
\end{proof}

We remark that the input vectors $\vb$ and the error coefficients $\vc$ are duals in a sense: they affect Algorithm~\ref{algo:transpose-mult} symmetrically and independently (through the computations of $\vP_{\ell,r}$ and $\vQ_{\ell,r}$ respectively). The main difference is that $\vc$ affects the pre-processing steps and $\vb$ affects the main runtime. We will make use of this observation later in Sections~\ref{subsec:recur-error} and~\ref{subsec:matrix-matrix}, which respectively generalize $\vc$ and $\vb$ from vectors to matrices.

Corollary~\ref{cor:transpose-mult-pre} and Lemma~\ref{lmm:transpose-mult-comp} imply the following result:
\begin{thm}
\label{thm:mainATb}
For any matrix $\vA$ satisfying a modular recurrence~\eqref{eq:recur-mod} of width $(t,1)$ and degree $(1,0)$,
with $O(t^{\omega} \cM(N) \log N)$ pre-processing operations, the product $\vA^T\vb$ can be computed for any $\vb$ with $O(t^2\cM(N)\log{N})$ operations over $\F$.
\end{thm}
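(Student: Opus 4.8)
The plan is to assemble the theorem as a direct consequence of the two Structure Lemmas and the divide-and-conquer recursion already set up before Algorithm~\ref{algo:transpose-mult}, using the complexity accounting of Corollary~\ref{cor:transpose-mult-pre} and Lemma~\ref{lmm:transpose-mult-comp}. First I would use the three simplifications in force (degree $(1,0)$, rank $1$, $\vR$ a companion matrix) to reduce $\vA^T\vb$ to a scalar computation. By Lemma~\ref{lmm:structure1}, $\va_i = \sum_j h_{i,j}(\vR)\vf_j$, and here $\vf_j = \vc[j]\vd$ with $\vd$ the coefficient vector of $d(X)$; hence $\sum_i \vb[i]\va_i$ is the coefficient vector of $\vb^T\vH\vc\cdot d(X) \pmod{c_\vR(X)}$, exactly~\eqref{eq:core-answer}. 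This isolates the only genuinely structured quantity, the scalar $\vb^T\vH\vc$; the trailing multiplication by $d(X)$ and reduction modulo $c_\vR(X)$ is one polynomial multiplication costing $O(\cM(N))$, which is negligible in the final bound.

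Next I would invoke Lemma~\ref{lmm:structure2} to write $\vb^T\vH\vc$ as the bottom-right entry of $\sum_{0\le j\le i<N}\vb[i]\vT_{[j:i]}\vc[j]$, and apply the splitting identity~\eqref{eq:triangular-recursion}: partitioning the index square at $N/2$ into two triangles and an off-diagonal block, the off-diagonal block factors as $\bigl(\sum_{i\ge N/2}\vb[i]\vT_{[N/2:i]}\bigr)\bigl(\sum_{j<N/2}\vT_{[j:N/2]}\vc[j]\bigr)$, since $\vT_{[j:i]} = \vT_{[N/2:i]}\vT_{[j:N/2]}$ whenever $j<N/2\le i$. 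The two triangular pieces are smaller instances of the same problem, so unrolling the recursion reduces everything to computing the row vectors $\vP_{\ell,r} = \sum_{i\in[\ell:r]}\vB_i\vT_{[\ell:i]}$ and the column vectors $\vQ_{\ell,r} = \sum_{j\in[\ell:r]}\vT_{[j:r]}\vC_j$ over all dyadic intervals. Each satisfies a two-way recursion (\eqref{eq:row-recursion} and its transpose) solvable with two recursive calls plus one $1\times t$ by $t\times t$ (resp.\ $t\times t$ by $t\times 1$) polynomial-matrix multiply, which is precisely Algorithm~\ref{algo:transpose-mult}; correctness is immediate from unwinding these identities, and the $\vT_{[\ell:r]}$, being independent of $\vb,\vc$, are all pre-computed.

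For the complexity I would simply cite the two lemmas already established. Corollary~\ref{cor:transpose-mult-pre} bounds the pre-processing by $O(t^\omega\cM(N)\log N)$: all dyadic $\vT_{[s]}$ are built bottom-up via $\vT_{[s]} = \vT_{[s0]}\vT_{[s1]}$, with polynomial degrees controlled by Lemma~\ref{lmm:T-range-sizes}, and the $O(t^\omega)$-per-level inversions for the base blocks fold in at lower order. Lemma~\ref{lmm:transpose-mult-comp} bounds the query cost by $O(t^2\cM(N)\log N)$ via the recurrence $T(N) = 2T(N/2) + O(t^2\cM(N))$, whose base cases (ranges of width $\le t$, where the $\vT_i$ act as shifts) contribute only $O(t^2\cM(N))$ in total over the $N/t$ leaves. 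Adding the $O(\cM(N))$ post-processing of the previous paragraph does not change the asymptotics, and the theorem follows.

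The theorem statement is therefore essentially a corollary of Lemmas~\ref{lmm:structure1}, \ref{lmm:structure2}, \ref{lmm:T-range-sizes}, and~\ref{lmm:transpose-mult-comp}, so the genuine difficulty lies inside those lemmas rather than in the final assembly. The main obstacle is the degree bookkeeping: one must verify through Lemma~\ref{lmm:T-range-sizes} that every entry of $\vT_{[\ell:r]}$, and hence of $\vP_{\ell,r}$ and $\vQ_{\ell,r}$, has degree $O(r-\ell+t)$, so that a node at recursion level with subproblem size $m$ does $O(t^2\cM(m))$ work rather than work that scales with both $m$ and the recursion depth — the latter would make the degrees, and the cost, blow up. A secondary subtlety is the leaf computation of $\vb^T\vH\vc$ restricted to a width-$t$ diagonal block (Step~\ref{step:H:base}), which requires the explicit inverse of the banded $t\times t$ block $\vG[\ell:r,\ell:r]$; this is handled in pre-processing at cost $O(t^\omega\cM(t))$ per leaf, summing to a lower-order $O(t^\omega\cM(N))$ term.
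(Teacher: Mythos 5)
Your proposal is correct and follows essentially the same route as the paper: reduce $\vA^T\vb$ to the scalar $\vb^T\vH\vc$ via the Structure Lemmas, split the triangular index sum at $N/2$ into the $\vP_{\ell,r}$/$\vQ_{\ell,r}$ dyadic recursions of Algorithm~\ref{algo:transpose-mult}, and charge the costs to Corollary~\ref{cor:transpose-mult-pre} and Lemma~\ref{lmm:transpose-mult-comp}, with the degree control of Lemma~\ref{lmm:T-range-sizes} and the leaf-block inversions handled in pre-processing exactly as in the paper. No gaps to flag.
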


\subsection{Space Complexity}
\label{subsec:space}
The space used by Algorithm~\ref{algo:transpose-mult} is dominated by the pre-computations. The matrix $\vT_{[\ell:r]}$ contains $t^2$ polynomials of degree $O(r-\ell)$ which has space requirement $O(t^2(r-\ell))$ total. For a fixed size $2^a$, all the matrices of the form $\vT_{[\ell:\ell+2^a]}$ thus require $O(t^2N)$ space. The total space complexity is $O(t^2N \log{N})$.

Note that we can always run the algorithm without the explicit pre-processing and the coefficient of runtime complexity becomes $t^\omega$ instead of $t^2$. This is asymptotically the same if $t$ is held constant (which is true for most applications, e.g.\ $t\le 2$ for the previous results of Driscoll~\cite{driscoll} and Olshevsky and Shokrollahi~\cite{OS00}). Furthermore, in this case, a factor of $\log{N}$ can be saved in the space complexity: In Algorithm~\ref{algo:transpose-mult}, the computations can be performed bottom-up layer by layer on the recursion tree, instead of depth-first. The pre-computations were also performed bottom-up, and can be discarded after each layer of the recursion tree is finished.

In the operation $\vA\vb$, the transpose of this algorithm does operations in the opposite order and requires the top level transitions $\vT_{[0:N/2]}$ and $\vT_{[N/2:N]}$ first (see Section~\ref{sec:Ab}). Space can still be saved, but each level of the pre-computation tree will have to be re-computed and discarded every time. This incurs a $\log{N}$ runtime factor, so that the algorithm would have a $O(t^\omega \cM(N)\log^2{N})$ operation complexity to get a $O(t^2N)$ space bound.



\section{The General Recurrence and Related Operations}
\label{sec:general-mult}
In Section~\ref{sec:transpose}, we introduced a multiplication algorithm for matrices $\vA$ satisfying a $(\vG,\vF,\vR)$-recurrence (see Definition~\ref{defn:recurrence-width}) with simplified assumptions on $\vG,\vF,\vR$. In this section, we first finish the details of the matrix-vector multiplication algorithm for matrices satisfying the general recurrence. Then we cover several operations that are simple consequences or extensions of the algorithm, such as the case when $\vA$ is rectangular and multiplication by a matrix $\vB$ instead of vector $\vb$.

There are three modifications to recurrence~\eqref{eq:recur-core} needed to recover the general case~\eqref{eq:recur}. First, we generalize from degree-$(1,0)$ to degree-$(d,\bard)$ recurrences. Second, we generalize the error matrix $\vF$ from rank $1$ to rank $r$. Finally, we relax the constraint that $\vR$ is a companion matrix.

\subsection{General $\vG$}
\label{subsec:recur-degree}
First, it is easy to see that if $\vG$ satisfies a degree $(d,0)$ recurrence, the companion matrices $\vT_i$ can still be defined and the degree bounds in Lemma~\ref{lmm:T-range-sizes} are scaled by $d$. Since polynomials involved have degrees scaled by $d$, the cost of Algorithm~\ref{algo:transpose-mult} increases by a factor of $d$.\footnote{The runtime is actually slightly better because the degrees don't grow beyond $N$ (due to the modulus). The $\log{N}$ factor can be replaced by $\log{N}-\log{d}$, but the asymptotics are the same since we usually assume $d$ is fixed.}

Next, consider a matrix $\vG$ for a $(d,\bard)$-recurrence (i.e.\ $\vG$ lower triangular banded, $\deg(\vG[i,j]) \leq d(i-j)+\bar{d}$). By multiplying $\vG$ on the left and right by suitable diagonal matrices (which depend on the $g_{i,0}(X)$'s), it can be converted into a matrix $\vG'$ satisfying $\deg(\vG'[i,j]) \leq (d+\bard)(i-j)$, i.e.\ corresponding to a $(d+\bard, 0)$-recurrence. The details of this transformation are shown in Appendix~\ref{sec:structure}.

Therefore algorithms for a $(d,\bard)$-recurrence have runtimes scaled by a factor of $(d+\bard)$. This generalization is also independent of the other generalizations - the algorithm does not change, only the operation count. We henceforth assume again that $(d,\bard) = (1,0)$ for simplicity.

\subsection{General $\vF$}
\label{subsec:recur-error}
Now consider a modular recurrence of width $(t,r)$. As usual convert the vectors to polynomials; write $\vF = \vC\vD$ and let $d_i(X) = \sum_{j=0}^{N-1} \vD[i,j]X^j$. By a small modification of the derivation in Section~\ref{sec:transpose}, we deduce that the desired quantity $\vA^T\vb$ is the coefficient vector of the polynomial
\[
  \vb^T \vH \vC \cdot \begin{bmatrix}d_0(X) \\ \vdots \\ d_{r-1}(X) \end{bmatrix} \pmod{c_\vR(X)}
\]
(compare to equation~\eqref{eq:core-answer} in the case $r=1$). Again the multiplications by $d_0(X), \dots, d_{r-1}(X)$ can be postponed to the end and incur a cost of $r\cM(N)$ operations, so we focus on computing $\vb^T\vH\vC \pmod{c_\vR(X)}$. Recall that $\vb \in \F^{N}, \vH \in \F[X]^{N \times N}, \vC \in \F^{N \times r}$.
We can still apply Algorithm~\ref{algo:transpose-mult} directly. Procedure $\textsc{P}$ still computes $\vP_{\ell,s} = \sum_{\ell}^{s-1} \vB_i \vT_{[\ell:i]}$, but procedure \textsc{Q} now computes
\[
  \quad \vQ_{\ell,s} = \sum_{\ell}^{s-1} \vT_{[i:s]}\vC_i \qquad \text{where} \qquad 
  \vC_i = \begin{bmatrix} 0 & \cdots & 0 \\ \vdots & \ddots & \vdots \\ \vC[i,0] & \cdots & \vC[i,s-1] \end{bmatrix},
\]
with the only difference being that it now has dimensions $t \times r$ instead of $t \times 1$.

We analyze the change in runtime compared to Lemma~\ref{lmm:transpose-mult-comp}. The call to \textsc{P} does not change; its complexity is still $O(t^2 \cM(N) \log{N})$.

In procedure \textsc{Q}, the change is that the recursive computation $\vQ_{\ell,r} = \vT_{[m:r]}\vQ_{\ell,m} + \vQ_{m,r}$ 
is now a multiplication of a $t\times t$ and $t\times r$ matrix instead of $t \times t$ by $t \times 1$. The runtime coefficient changes from $t^2$ to $\alpha_{t,r}$, where $\alpha_{t,r}$ is the cost of performing a $t\times t$ by $t\times r$ matrix multiplication. Also note that since this does not depend on $\vb$, this can be counted as part of the pre-processing step. The total pre-processing step is now $O\left( (t^\omega + \alpha_{t,r}) \cM(N) \log{N} \right)$. 

In procedure \textsc{H}, we are now performing a $1 \times t$ by $t \times r$ multiplication. The runtime of $\textsc{H}(0,N)$ is now $O(tr\cM(N)\log{N})$. The total runtime (the calls to $\textsc{P}$ and $\textsc{H}$) is $O( t(t+r)\cM(N)\log{N})$.

Finally, we note that $\alpha_{t,t} = t^\omega$, and in general $\alpha_{t,r} \leq \min(rt^2, (1+r/t)t^\omega)$. For large enough $r$ we use the latter bound, whence the pre-processing coefficient $O(t^\omega + \alpha_{t,r})$ above becomes $O((t+r)t^{\omega-1})$.

In particular, the pre-processing time is still $O(t^\omega \cM(N) \log{N})$ and the runtime is still $O(t^2 \cM(N)\log{N})$ when $r = O(t)$. This fully captures the original polynomial recurrence~\eqref{eq:intro-recur-poly}, and in particular the orthogonal polynomial transforms~\cite{driscoll}.

\subsection{General $\vR$}
\label{subsec:recur-matrix}

Aside from the reason that it is necessary to handle the displacement rank recurrence~\eqref{eq:intro-recur-dp}, we provide an intrinsic reason for considering the general matrix-recurrence~\eqref{eq:recur}, as a natural continuation of polynomial recurrences~\eqref{eq:intro-recur-poly}. Equation~\eqref{eq:intro-recur-poly} is written in terms of polynomials, but it is defining vectors. Another way of writing it is as follows: we are actually defining a recurrence on vectors $\va_i$ (the rows of $\vA$) satisfying $\va_i = \sum g_{ij}(X)\cdot\va_{i-j}$, where the bilinear operator $(\cdot: \F[X]\times\F^N \to \F^N)$ is defined for $g(X)\cdot\va$ by converting $\va$ to a polynomial, multiplying by $g(X) \pmod{c_\vR(X)}$, and converting back to a vector. This is just an instance of equipping the vector space $\F^N$ with a $\F[X]$-module structure. Thus it is natural to consider what happens in general when we define $\va_i = \sum g_{ij}(X)\cdot\va_{i-j}$ for any $\F[X]$-module structure on $V=\F^N$. In general, this is uniquely defined by the action of $X$; this is a linear map on $V$, hence equivalent to multiplication by a matrix $\vR$. This leads to the matrix recurrence~\eqref{eq:recur}.

By Lemma~\ref{lmm:structure1},
\[
  \va_i = \sum_{j=0}^{N-1} h_{i,j}(\vR) \vf_j.
\]
We can simplify this expression:
\begin{align*}
  \va_i &= \sum_{j=0}^{N-1} h_{i,j}(\vR) \vf_j \\
        & = \sum_{j=0}^{N-1} h_{i,j}(\vR) \left( \sum_{k=0}^{r-1} c_{jk}\vd_k \right) \\
        & = \sum_{j=0}^{N-1} \sum_{k=0}^{r-1} c_{jk}h_{i,j}(\vR)\vd_k \\
        &= \sum_{j=0}^{N-1} \sum_{k=0}^{r-1} c_{jk}\left( \sum_{\ell=0}^{N-1} h_{i,j}[\ell]\vR^\ell \right) \vd_k \\
        & = \sum_{j=0}^{N-1} \sum_{k=0}^{r-1} c_{jk}\left( \sum_{\ell=0}^{N-1} h_{i,j}[\ell]\vR^\ell\vd_k \right) \\
        &= \sum_{j=0}^{N-1} \sum_{k=0}^{r-1} c_{jk} (\kry(\vR,\vd_k)\vh_{ij}),
\end{align*}
where $\vh_{ij}$ is the coefficient vector of $h_{i,j}(X)$.

Thus the desired answer to $\vA^T\vb$ is
\begin{align}
  \sum_{i=0}^{N-1} \vb[i] \va_i &= \sum_{i=0}^{N-1} \sum_{j=0}^{N-1} \sum_{k=0}^{r-1} \vb[i] c_{jk} \kry(\vR,\vd_k) \vh_{ij} \nonumber \\
  &= \sum_{k=0}^{r-1} \kry(\vR,\vd_k) \sum_{i=0}^{N-1} \sum_{j=0}^{N-1} \vb[i] c_{jk} \vh_{ij} \label{eq:recur-matrix-reduction}
\end{align}
Finally, recall that $\vb^T\vH\vC$ is a $1 \times r$ vector of polynomials, and $\sum_{i=0}^{N-1} \sum_{j=0}^{N-1} \vb[i] c_{jk} \vh_{ij}$ is the coefficient array of its $k$th entry. Thus we compute $\vb^T\vH\vC$ as before, and then perform $r$ matrix-vector multiplications by the Krylov matrices $\kry(\vR,\vd_k)$.

With fully general $\vG,\vF,\vR$, we get the following matrix-vector multiplication runtime.
\begin{thm}
  \label{thm:recur-general}
  If $\vA \in \F^{N \times N}$ is a matrix satisfying a $\vR$-matrix recurrence~\eqref{eq:recur} (with known characteristic polynomial $c_\vR(X)$) of width $(t,r)$ and degree $(d,\bard)$, and $\vR$ is $(\alpha,\beta)$-Krylov efficient, then with $O( (d+\bar{d})t^{\omega-1}(t+r) \cM(N)\log{N} + \alpha)$ pre-processing, the products $\vA^T\vb$ and $\vA\vb$ for any vector $\vb$ can be computed with $O( (d+\bar{d})t(t+r) \cM(N)\log{N} + r\beta )$ operations.
\end{thm}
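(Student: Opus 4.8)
The plan is to build Theorem~\ref{thm:recur-general} out of the base algorithm of Section~\ref{sec:transpose} (Theorem~\ref{thm:mainATb}) by composing the three independent relaxations of Sections~\ref{subsec:recur-degree}, \ref{subsec:recur-error}, and~\ref{subsec:recur-matrix} — moving from degree $(1,0)$ to degree $(d,\bard)$, from a rank-$1$ error matrix to a rank-$r$ one, and from a companion matrix $\vR$ to an arbitrary $(\alpha,\beta)$-Krylov-efficient $\vR$ — and then invoking the transposition principle for the $\vA\vb$ direction. Since these three changes act on essentially disjoint parts of Algorithm~\ref{algo:transpose-mult}, the work is mostly a matter of tracking how each one rescales or adds to the counts in Corollary~\ref{cor:transpose-mult-pre} and Lemma~\ref{lmm:transpose-mult-comp}.

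First I would handle the degree and the error rank, which together give the polynomial-recurrence part of the bound. By the diagonal-rescaling argument of Section~\ref{subsec:recur-degree} (detailed in Appendix~\ref{sec:structure}), a $(d,\bard)$-recurrence is converted to a $(d+\bard,0)$-recurrence, for which the generalized Lemma~\ref{lmm:T-range-sizes} shows every transition-matrix entry has degree scaled by $d+\bard$; hence every polynomial multiplication in Algorithm~\ref{algo:transpose-mult} is on polynomials a factor $(d+\bard)$ larger, multiplying both the pre-processing count $t^\omega\cM(N)\log N$ and the runtime $t^2\cM(N)\log N$ by $(d+\bard)$ without changing the algorithm. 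Next, to handle a rank-$r$ error matrix, factor $\vF=\vC\vD$ with $\vC\in\F^{N\times r}$ and let $d_0(X),\dots,d_{r-1}(X)$ be the polynomials encoded by the rows of $\vD$; by Lemma~\ref{lmm:structure1} and the derivation around~\eqref{eq:core-answer}, the answer $\vA^T\vb$ is the coefficient vector of $\vb^T\vH\vC\cdot[d_0(X),\dots,d_{r-1}(X)]^T\pmod{c_\vR(X)}$, so it suffices to compute $\vb^T\vH\vC\pmod{c_\vR(X)}$ and then do $r$ multiplications by the $d_k(X)$ at cost $O(r\cM(N))$. Running Algorithm~\ref{algo:transpose-mult}, the only change is that procedure \textsc{Q} accumulates a $t\times r$ matrix $\vQ_{\ell,s}$ rather than a $t\times 1$ one: \textsc{P} is unchanged (coefficient $t^2$), the recursive step of \textsc{Q} becomes a $t\times t$ by $t\times r$ multiply that folds into pre-processing (coefficient $\alpha_{t,r}$), and the merge in \textsc{H} becomes $1\times t$ by $t\times r$ (coefficient $tr$). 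Using $\alpha_{t,r}\le\min(rt^2,(1+r/t)t^\omega)=O(t^{\omega-1}(t+r))$ gives pre-processing $O((d+\bard)t^{\omega-1}(t+r)\cM(N)\log N)$ and runtime $O((d+\bard)t(t+r)\cM(N)\log N)$, matching the two main terms.

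Finally I would remove the restriction that $\vR$ be a companion matrix. The key observation is that the divide-and-conquer that computes $\vb^T\vH\vC$ runs entirely over $\F[X]$ — the transition matrices $\vT_i,\vT_{[\ell:r]}$ depend only on $\vG$ — so nothing there changes. By Lemma~\ref{lmm:structure1} together with the unrolling in~\eqref{eq:recur-matrix-reduction}, $\sum_i\vb[i]\va_i=\sum_{k=0}^{r-1}\kry(\vR,\vd_k)\,\vw_k$, where $\vw_k\in\F^N$ is the coefficient vector of the $k$th entry of $\vb^T\vH\vC\pmod{c_\vR(X)}$ and $\vd_k$ is the $k$th row of $\vD$. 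Thus after the polynomial computation we perform $r$ Krylov matrix-vector products $\kry(\vR,\vd_k)\vw_k$, each of which, by $(\alpha,\beta)$-Krylov efficiency (Definition~\ref{defn:krylov}), costs $O(\beta)$ operations after a single $O(\alpha)$ pre-processing step on $\vR$; this contributes the additive terms $O(\alpha)$ and $O(r\beta)$ to pre-processing and runtime respectively. The $\vA\vb$ bound then follows because every step above is realized by a linear arithmetic circuit, so the transposition principle produces a circuit of the same size computing $\vA\vb$ (an explicit direct algorithm being given in Appendix~\ref{sec:Ab}).

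The main obstacle I anticipate is not any one step but verifying that the three relaxations genuinely commute and that their costs combine exactly as claimed: in particular, checking that (i) the $(d+\bard)$ degree factor multiplies the polynomial-recurrence cost but leaves the additive $O(\alpha)$ and $O(r\beta)$ Krylov terms untouched; (ii) the reduction $\pmod{c_\vR(X)}$ keeps all degrees within $\Theta((d+\bard)N)$ so that Lemma~\ref{lmm:T-range-sizes} still governs every polynomial multiplication; and (iii) the dual roles of $\vb$ (affecting the runtime through \textsc{P}) and $\vC$ (affecting the pre-processing through \textsc{Q}), noted after Lemma~\ref{lmm:transpose-mult-comp}, are preserved once $\vR$ is also generalized, so that the rank-$r$ contribution correctly splits between the $t^{\omega-1}(t+r)$ pre-processing coefficient and the $t(t+r)$ runtime coefficient.
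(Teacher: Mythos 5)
Your proposal is correct and follows the paper's own route essentially verbatim: it composes the three generalizations of Section~\ref{sec:general-mult} (degree rescaling as in Section~\ref{subsec:recur-degree}, the $t\times r$ modification of procedure \textsc{Q} with the $\alpha_{t,r}$ accounting as in Section~\ref{subsec:recur-error}, and the Krylov factorization of equation~\eqref{eq:recur-matrix-reduction} as in Section~\ref{subsec:recur-matrix}) on top of Algorithm~\ref{algo:transpose-mult}, then invokes the transposition principle for $\vA\vb$. The cost bookkeeping matches the paper's, so no further changes are needed.
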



\subsection{Rectangular Matrices}
\label{subsec:rectangular}
When $\vA \in \F^{M \times N}$, Algorithm~\ref{algo:transpose-mult} and the modifications in this section still apply. The runtime is easy to analyze; the sizes of the pre-processing and recursion parameters (e.g.\ degree of polynomials in $\vT_{[\ell:r]}$) depend on $M$, and the sizes of the post-recursion steps (e.g.\ multiplication by Krylov matrices) depend on $N$.
\begin{cor}
  \label{cor:recur-general}
  If $\vA \in \F^{M \times N}$ is a matrix satisfying a $\vR$-matrix recurrence~\eqref{eq:recur} (with known characteristic polynomial $c_\vR(X)$) of width $(t,r)$ and degree $(d,\bard)$, and $\vR$ is $(\alpha,\beta)$-Krylov efficient, then with $O( (d+\bar{d})t^{\omega-1}(t+r) \cM(M)\log{M} + \alpha)$ pre-processing, the products $\vA^T\vb$ and $\vA\vb$ for any vector $\vb$ can be computed with $O( (d+\bar{d})t(t+r) \cM(M)\log{M} + r\beta )$ operations.
\end{cor}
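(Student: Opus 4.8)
The plan is to re-run the pipeline of Sections~\ref{sec:transpose}--\ref{subsec:recur-matrix}, carrying out the divide-and-conquer over the $M$ row indices rather than over $N$, and to audit exactly where $M$ versus $N$ enters each cost; the conclusion is essentially Theorem~\ref{thm:recur-general} with $M$ in place of $N$ in the recursive part. No new idea is needed: the recurrence structure -- hence the recursion depth and the pre-processed transition matrices $\vT_{[\ell:r]}$ -- is governed by the number of recurrence steps $M$, while the ambient module $\F^N$ appears only in the post-recursion steps, where it is absorbed into the Krylov-efficiency parameters $\alpha,\beta$.

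First I would check that the Structure Lemma (Lemmas~\ref{lmm:structure1} and~\ref{lmm:structure2}) holds verbatim with $M$ replacing the row bound: $\vH = \vG^{-1} \pmod{c_\vR(X)} \in \F[X]^{M\times M}$ is well defined because $\vG$ is lower triangular $t$-banded of size $M$ with each $g_{i,0}(X)$ invertible $\pmod{c_\vR(X)}$; $h_{i,j}(X)$ is the bottom-right entry of $\vT_{[j:i]} \pmod{c_\vR(X)}$; and $\va_i = \sum_{j=0}^{M-1} h_{i,j}(\vR)\vf_j$. After reducing modulo $c_\vR(X)$ (which is given), every polynomial that arises has degree $< N$, and Lemma~\ref{lmm:T-range-sizes} -- a statement purely about the band structure of $\vG$ -- gives $\deg\vT_{[\ell:r]}[i,j] \le (d+\bard)\max(r-\ell+i-j,0)$ for dyadic subintervals $[\ell:r] \subseteq [0:M]$, with the extra factor $d+\bard$ coming from Section~\ref{subsec:recur-degree}.

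Next I would run Algorithm~\ref{algo:transpose-mult} with the dyadic intervals now partitioning $[0:M]$ (recursion depth $\lceil\log M\rceil$), after writing $\vF = \vC\vD$ with $\vC \in \F^{M\times r}$ and $d_k(X) = \sum_j \vD[k,j]X^j$ as in Section~\ref{subsec:recur-matrix}. This reduces $\vA^T\vb$ (with $\vb\in\F^M$) to: (i) computing the $1\times r$ polynomial vector $\vb^T\vH\vC \pmod{c_\vR(X)}$; (ii) post-multiplying by $[d_0(X),\dots,d_{r-1}(X)]^T$ modulo $c_\vR(X)$; and (iii) applying the $r$ Krylov multiplications $\kry(\vR,\vd_k)$ per equation~\eqref{eq:recur-matrix-reduction}. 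For (i), the analyses of Lemma~\ref{lmm:tree-compute}, Lemma~\ref{lmm:transpose-mult-comp}, and Section~\ref{subsec:recur-error} go through with $N$ replaced by $M$ and degrees scaled by $d+\bard$: pre-processing (the tree of $\vT_{[s]}$ plus the $\vQ$-side products, which are $t\times t$ by $t\times r$ matrix multiplications) costs $O((d+\bard)t^{\omega-1}(t+r)\cM(M)\log M)$, and the main recursion (procedures $\textsc{P}$ and $\textsc{H}$) costs $O((d+\bard)t(t+r)\cM(M)\log M)$. For (ii), $r$ multiplications of degree-$<N$ polynomials cost $O(r\cM(N))$ -- absorbed exactly as it is in Theorem~\ref{thm:recur-general}; for (iii), the $r$ Krylov products cost $O(r\beta)$ with $O(\alpha)$ pre-processing, by $(\alpha,\beta)$-Krylov efficiency of $\vR\in\F^{N\times N}$. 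Summing the three pieces yields the stated bounds. The $\vA\vb$ case then follows either by the transposition principle (Algorithm~\ref{algo:transpose-mult} is a linear circuit) or by transposing Algorithm~\ref{algo:transpose-mult} directly as in Appendix~\ref{sec:Ab}.

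The step I expect to be the real (if modest) obstacle is precisely this auditing: one must confirm that no $N$-dependence sneaks into the recursion or into the pre-computed transition matrices. This comes down to the two observations already flagged -- Lemma~\ref{lmm:T-range-sizes} is indifferent to $N$, and reducing all arithmetic modulo $c_\vR(X)$ keeps every polynomial of degree $< N$ (so in fact $\cM(\min(M,N))$ would suffice in (i) and $\cM(M)$ is merely a convenient upper bound) -- together with the fact that the genuinely $N$-dimensional data, namely the rows $d_k(X)$ of $\vD$ and the Krylov matrices on $\vR$, are touched only once, after the recursion, and are paid for by the $\alpha,\beta$ terms.
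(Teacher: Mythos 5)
Your proposal is correct and follows the paper's own route: the paper proves Corollary~\ref{cor:recur-general} exactly by observing that Algorithm~\ref{algo:transpose-mult} and the Section~\ref{sec:general-mult} modifications apply unchanged, with the pre-processing and recursion costs (the $\vT_{[\ell:r]}$ tree and the $\textsc{P}$/$\textsc{Q}$/$\textsc{H}$ recursions) governed by $M$ and the post-recursion work (modular multiplications by $d_k(X)$ and the $r$ Krylov products) governed by $N$ and absorbed into the $\alpha,\beta$ terms. Your more explicit audit of where $M$ versus $N$ enters is just a fleshed-out version of the same argument.
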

Note in particular that the dependence on $N$ is folded into the Krylov multiplication step.
\ar{Maybe to emphasize this point just for the theorem above use $\alpha(N)$ and $\beta(N)$ instead of just $\alpha$ and $\beta$?}
\agu{Good idea but I feel it would be confusing to change notation just in one location.}
For our applications, the Krylov efficiency constants $(\alpha,\beta)$ of an $N \times N$ matrix will be of order $\tO(N)$ (see Section~\ref{sec:Krylov}).

\subsection{Krylov Efficiency}
\label{sec:Krylov}

In Section~\ref{subsec:recur-matrix}, we showed how to factor matrix recurrences into the product of a polynomial/modular recurrence and a Krylov matrix, thus reducing the runtime of a $\vR$-matrix recurrence to the Krylov efficiency (Definition~\ref{defn:krylov}) of $\vR$.
In this section we show Krylov efficiency for a particular class of matrices that is necessary to prove Theorem~\ref{thm:intro-dr}.

We note that a natural approach to Krylov efficiency is utilizing the Jordan normal form. In Appendix~\ref{subsec:jordan}, we show how knowing the Jordan form $\vM = \vA\vJ\vA^{-1}$ implies a particularly simple algorithm for Krylov efficiency, and provide cases for which we can compute this Jordan decomposition efficiently.

However, this reduction is clearly one way-- finding a Jordan decomposition is stronger than Krylov efficiency, but the latter problem has more structure that we can take advantage of. Now we will show that the class of banded triangular matrices are Krylov efficient by showing that the Krylov matrix itself has low recurrence width (equal to the bandwidth).

We remark that the Krylov efficiency concept does not apply only to our problem. If $\vK$ is the Krylov matrix on $\vA$ and $\vb$, then $\vK\vb = \sum \vb[i] \vA^i \vx$ is naturally related to contexts involving Krylov subspaces, matrix polynomials, and so on. The product $\vK^T\vb = [\vb \cdot \vx, \vb \cdot \vA\vx, \vb \cdot \vA^2\vx, \dots]$ is also useful; it is the first step in the Wiedemann algorithm for computing the minimal polynomial or kernel vectors of a matrix $\vA$~\cite{wiedemann}.

\subsubsection{Krylov Efficiency of triangular banded matrices}
Let $\vM$ be a lower triangular $(\Delta+1)$-band matrix, i.e.\ all values other than $\vM[i,\ell]$ for $i-\Delta \le \ell \le i$ are zero. Let $\vy$ be an arbitrary vector and let $\vK$ denote the Krylov matrix of $\vM$ with respect to $\vy$.

We will show that $\vK$ satisfies a modular recurrence of width $(\Delta,1)$. The same results also hold for upper triangular matrices.

Define polynomials $f_i(X) = \sum_{j=0}^{N-1} \vK[i,j] \cdot X^j$ and let $\vF = \begin{bmatrix} f_0(X) \\ \vdots \\ f_{N-1}(X) \end{bmatrix}$. We can alternatively express it as
\[ \vF = \sum_{j=0}^{N-1} \vK[:,j] X^j = \sum_{j=0}^{N-1} (\vM^j \vy) X^j = \left( \sum_{j=0}^{N-1} (\vM X)^j \right)\vy. \]
Multiplying by $\vI-\vM X$, we get the equation
\begin{equation}
\label{eq:krylov-system1}
(\vI-\vM X)\vF = \vy - (\vM X)^N \vy
\end{equation}
Therefore it is true that
\begin{equation}
  \label{eq:krylov-mod}
  (\vI-\vM X)\vF \equiv \vy \pmod{X^N}
\end{equation}
and furthermore, $\vF$ can be defined as the unique solution of equation~(\ref{eq:krylov-mod}) because $\vI-\vM X$ is invertible in $\mathbb F[X]/(X^N)$ (since it is triangular and its diagonal is comprised of invertible elements $(1-\vM[i,i]X)$).

%
But equation~\eqref{eq:krylov-mod} exactly defines a modular recurrence~\eqref{eq:recur-mod} of degree $(0,1)$ and width $(\Delta,1)$. Theorem~\ref{thm:recur-general} implies
\begin{thm}
  \label{thm:ut-band-ke}
  Any triangular $\Delta$-band matrix is $(\Delta^\omega \cM(N)\log{N}, \Delta^2 \cM(N)\log{N})$-Krylov efficient.
\end{thm}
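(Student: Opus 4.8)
The plan is to reduce Krylov efficiency of a triangular $\Delta$-band matrix $\vM$ directly to the matrix-vector multiplication result already established in Theorem~\ref{thm:recur-general}, via the observation that the Krylov matrix $\vK = \kry(\vM,\vy)$ is itself a low-recurrence-width matrix. First I would set up the polynomial encoding: let $f_i(X) = \sum_j \vK[i,j]X^j$ be the polynomial for the $i$th row of $\vK$, stack these into $\vF$, and use the geometric-series identity $\vF = \left(\sum_{j=0}^{N-1}(\vM X)^j\right)\vy$ to derive equation~\eqref{eq:krylov-system1}, namely $(\vI - \vM X)\vF = \vy - (\vM X)^N\vy$. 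Reducing modulo $X^N$ kills the second term on the right, giving $(\vI - \vM X)\vF \equiv \vy \pmod{X^N}$ as in~\eqref{eq:krylov-mod}. The key point is that since $\vM$ is triangular with diagonal entries $\vM[i,i]$, the matrix $\vI - \vM X$ is triangular over $\F[X]/(X^N)$ with diagonal entries $1 - \vM[i,i]X$, each a unit in that ring; hence $\vF$ is the \emph{unique} solution, and this is precisely a modular recurrence in the sense of Remark~\ref{rmk:rw-examples}(4) with modulus $c(X) = X^N$.

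Next I would identify the parameters of this recurrence against Definition~\ref{defn:recurrence-width}. Because $\vM$ is $(\Delta+1)$-banded and lower triangular, the $i$th row of $\vI - \vM X$ has nonzero entries only in columns $i-\Delta, \dots, i$: the diagonal entry $g_{i,0}(X) = 1 - \vM[i,i]X$ and the subdiagonal entries $g_{i,j}(X) = \vM[i,i-j]X$ for $1 \le j \le \Delta$. So $\deg(g_{i,j}) \le 1$ for all $j$, which matches degree $(0,1)$ (i.e.\ $d = 0$, $\bard = 1$), and the band width is $\Delta$; the error matrix is just $\vy$ as a single row-vector times $\ve_0^T$-style structure, so $r = 1$. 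Thus $\vK$ satisfies an $\vR$-matrix recurrence of width $(\Delta, 1)$ and degree $(0,1)$ with $\vR$ the companion matrix of $X^N$, i.e.\ the shift matrix $\vS$; in particular $\vR = \vS$ is trivially Krylov efficient with $\alpha = \beta = 0$ (powers of the shift matrix are shifts, and multiplication by $\kry(\vS, \cdot)$ is just a triangular Toeplitz product computable in $\tO(N)$).

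Then I would simply invoke Theorem~\ref{thm:recur-general} with $t = \Delta$, $r = 1$, $d = 0$, $\bard = 1$: the pre-processing cost is $O((d+\bard)t^{\omega-1}(t+r)\cM(N)\log N) = O(\Delta^{\omega}\cM(N)\log N)$ and the cost of $\vK\vb$ or $\vK^T\vb$ for any $\vb$ is $O((d+\bard)t(t+r)\cM(N)\log N) = O(\Delta^2\cM(N)\log N)$, which gives exactly the claimed $(\Delta^\omega\cM(N)\log N,\ \Delta^2\cM(N)\log N)$-Krylov efficiency. For the upper-triangular case the identical argument applies with the transpose of the band structure, or one can reduce to the lower-triangular case by conjugating with the reversal matrix $\vJ$.

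The main obstacle — really the only subtle point — is justifying that the modular recurrence genuinely defines $\vK$, i.e.\ that passing from the exact identity~\eqref{eq:krylov-system1} to the congruence~\eqref{eq:krylov-mod} loses no information. This rests on the invertibility of $\vI - \vM X$ in $\F[X]/(X^N)$, which follows from triangularity plus the fact that each diagonal entry $1 - \vM[i,i]X$ has constant term $1$ and is therefore a unit modulo $X^N$; one should also confirm that the degrees of the $f_i(X)$ stay below $N$ so that the polynomial-to-vector correspondence is faithful, which is immediate since $\vK \in \F^{N \times N}$. Everything else is bookkeeping: matching the band width to $t$, the linear entries to degree $(0,1)$, and the single generator $\vy$ to $r = 1$, after which Theorem~\ref{thm:recur-general} does all the work.
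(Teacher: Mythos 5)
Your proposal is correct and follows essentially the same route as the paper: encode the rows of $\kry(\vM,\vy)$ as polynomials, derive $(\vI-\vM X)\vF \equiv \vy \pmod{X^N}$ with uniqueness from invertibility of the triangular matrix $\vI-\vM X$ over $\F[X]/(X^N)$, recognize this as a modular recurrence of width $(\Delta,1)$ and degree $(0,1)$, and invoke Theorem~\ref{thm:recur-general}. The parameter bookkeeping (including the trivial Krylov efficiency of the shift matrix) matches the paper's argument, so there is nothing to fix.
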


\subsection{Matrix-matrix multiplication}
\label{subsec:matrix-matrix}
Consider the multiplication $\vA^T\vB$ for a matrix $\vB \in \F^{N \times P}$. An obvious way to compute this is by computing the product for each of the $P$ column vectors of $\vB$ individually. However, it can be more efficient to compute the matrix-matrix product as one (similar observations have been made in the context of finding inverses of matrices with small displacement rank~\cite{bostan2007}). Algorithm~\ref{algo:transpose-mult} can still be applied; the only change is that the sums
\[
  \vP_{\ell,r} = \sum_{\ell}^{r-1}\vB_i\vT_{[\ell:i]}
\]
(see equation~\eqref{eq:row-recursion})
have dimensions $P \times t$ instead of $1 \times t$. Thus the cost of computing $\vA^T\vB$ is the same as for $\vA\vb$ but with the $t^2$ constant replaced with $\alpha_{t,P}$ (recall from Section~\ref{subsec:recur-error} that $\alpha_{t,r}$ is the cost of performing a $t\times t$ by $t\times r$ matrix multiplication).

Note this is exactly the same as the change when going from rank $1$ to rank $r$ error in Section~\ref{subsec:recur-error}, but in this case it affects the main runtime instead of pre-processing step.

For $P$ large (for example, suppose $\vB$ has the same dimension $N \times N$ as $\vA$), an interesting way to view this is that the matrix-vector multiplication algorithm has \emph{amortized complexity} $1/P \cdot \tO(\alpha_{t,P}N) \le 1/P \cdot \tO( (1+P/t)t^\omega N) = \tO( (1/P + 1/t) t^\omega N) \approx \tO(t^{\omega-1} N)$ per vector.

\subsection{$\vA\vb$ multiplication}
The transposition principle gives a matrix-vector multiplication algorithm for $\vA\vb$ that can be derived from our algorithm for $\vA^T\vb$ (which only uses linear operations). This algorithm has the same pre-processing step and has the same time and space complexity as Algorithm~\ref{algo:transpose-mult} (see discussion in Section~\ref{subsec:space}).

In Appendix~\ref{sec:Ab}, we provide details of the general $\vA\vb$ algorithm. We provide an intuitive algorithm under the simplified assumptions of equation~\eqref{eq:intro-recur-poly}, and then generalize to the full case of equation~\eqref{eq:recur}. Interestingly, the restriction of this algorithm to the setting of Driscoll et al.~\cite{driscoll} turns out to be the same algorithm.

The approach of using the transposition principle to convert an algorithm for $\vA^T\vb$ to one for $\vA\vb$, through finding a sparse factorization of $\vA$ (see Section~\ref{sec:spw}), was previously used in the case of orthogonal polynomial transforms~\cite{bostan2010op}.

\section{Displacement Rank}
\label{sec:disp-rank}
Recall that the \emph{displacement rank} of a matrix $A$ with respect to matrices $\vL, \vR$ is defined as the rank of the {\em error matrix} 
\[\vE=\vL\vA - \vA\vR.\]
The concept of displacement rank has been used to generalize and unify common structured matrices such as Hankel, Toeplitz, Vandermonde, and Cauchy matrices; these matrices all have low displacement ranks with respect to diagonal or shift matrices being $\vL$ and $\vR$. Olshevsky and Shokrollahi~\cite{OS00} defined the confluent Cauchy-like matrices to be the class of matrices with low displacement rank with respect to Jordan form matrices; this class of matrices generalized and unified the previously mentioned common structured matrices. Our class of structured matrices extends the results of~\cite{OS00} to a more general form for $\vL$ and $\vR$ while matching the complexity bound in their setting.

As usual in the displacement rank approach, we wish to work with a matrix $\vA$ defined by a compressed displacement representation. We consider square matrices for simplicity, although as noted in Section~\ref{subsec:rectangular} the techniques work for rectangular matrices as well. Definition~\ref{defn:displacement} formally lays out the representation of $\vA$.
\begin{defn}
  \label{defn:displacement}
  Suppose we are given the following:
  \begin{itemize}
    \item $\vR \in \F^{N \times N}$ that is $(\alpha,\beta)$-Krylov efficient and such that we know its characteristic polynomial $c_\vR(X)$,
    \item $\vL \in \F^{N \times N}$ that is triangular (throughout this section, we will assume it is lower triangular) and $(\Delta+1)$-band,
    \item $\vC \in \F^{N \times r}$ and $\vD \in \F^{r \times N}$, generators for a low rank matrix
    \item a displacement operator $D_{\vL,\vR} \in \{\sylv{\vL}{\vR}, \stein{\vL}{\vR}\}$. The Sylvester operator is defined as $\sylv{\vL}{\vR} : \vA \mapsto \vL\vA-\vA\vR$, and the Stein operator is $\stein{\vL}{\vR} : \vA \mapsto \vA-\vL\vA\vR$.
  \end{itemize}
  Assume that $\vA \in \F^{N \times N}$ is implicitly and uniquely defined by the equation $D_{\vL,\vR}(\vA) = \vC\vD$.
  (For the Sylvester displacement operator, the last condition is equivalent to $\vL$ and $\vR$ not sharing eigenvalues. For the Stein displacement operator, the last condition is equivalent to the set of $\vR$'s eigenvalues being disjoint from the reciprocals of $\vL$'s eigenvalues~\cite{simoncini2016computational}.)
\end{defn}

\paragraph{Sylvester displacement.}
First suppose that $\vA$ is defined according to a Sylvester displacement equation.
We will show that the rows of $\vA$ satisfy a standard $\vR$-matrix recurrence as in~\eqref{eq:recur}. Let $\vd_0, \dots, \vd_{r-1} \in \F^N$ be the rows of $\vD$ (vectorized as a column vector), so that every row of $D_{\vL,\vR}(\vA)$ (vectorized) can be written as a linear combination of the basis $\vd_0, \dots, \vd_{r-1}$.

Expanding and rearranging the $i$th row of $\vL\vA - \vA\vR = \vE$ yields
\begin{equation}
\label{eq:recur-dp}
\begin{aligned}
  &\sum_{j=0}^t \vL[i,i-j]\vA[i-j,:] - \vA[i,:]\vR = \vE[i,:] \\
  &\vA[i,:](\vL[i,i]\vI-\vR) = \sum_{j=1}^t -\vL[i,i-j]\vA[i-j,:] + \sum_{k=0}^{r-1} c_{i,k} \vd_k
\end{aligned}
\end{equation}

By Definition~\ref{defn:recurrence-width}, this exactly defines a $\vR^T$-matrix recurrence of width $(t,r)$ and degree $(0,1)$. Note that the disjoint eigenvalue assumption means $\vL[i,i]\vI-\vR$ is invertible for all $i$. In this case $g_{i,0}(X) = \vL[i,i]-X$ and $g_{i,j}(X) = -\vL[i,i-j]$.

\paragraph{Stein displacement.}
A similar reduction can be applied for the Stein-type displacement operator. If $\vA-\vL\vA\vR = \vC\vD$, then
\[
  \vA[i,:](\vI-\vL[i,i]\vR) = \sum_{j=1}^t -\vL[i,i-j]\vA[i-j,:]\vR + \sum_{k=0}^{r-1} c_{i,k} \vd_k
\]
with the only difference being $g_{i,0}(X)$ becomes $1-\vL[i,i]X$ and $g_{i,j}(X)$ becomes $\vL[i,i-j]X$. This again defines a $\vR^T$-matrix recurrence of degree $(0,1)$ and width $(t,r)$.

Theorem~\ref{thm:recur-general} gives the complexity of superfast matrix-vector multiplication by $\vA$ and $\vA^T$ in terms of the Krylov efficiency of $\vR$. Furthermore, when $\vR$ is also triangular and $\Delta$-band, its characteristic polynomial can be computed in $O(\cM(N)\log{N})$ time and it is $(\Delta^{\omega}\cM(N)\log{N}, \Delta^2\cM(N)\log{N})$-Krylov efficient by Theorem~\ref{thm:ut-band-ke}.

Again, we stated the reduction for lower triangular matrices, but upper triangular $\vL$ defines a similar recurrence. Finally, it is known that $\vA^T$ also has low displacement rank, with respect to $\vR^T$ and $\vL^T$, so the same reduction and algorithm works for $\vA^T$.

The above discussion and applying Theorem~\ref{thm:recur-general} implies
\begin{thm}
  \label{thm:disp-rank}
  Suppose we are given $\vL,\vR,\vC,\vD$ that define a matrix $\vA$ according to Definition~\ref{defn:displacement}. Then we can compute $\vA \vb$ and $\vA^T \vb$ for any vector $\vb$ in $O((\Delta + r)\Delta \cM(N)\log{N} + r\beta)$ operations with $O(\alpha_{\Delta,r}\cM(N)\log{N} + \alpha)$ preprocessing.
\ar{$\alpha$ is being used for two different things in the last bound above.}
\agu{I will go through and change the $\alpha_{t,r}$ notation at some point. Suggestions are welcome.}
\end{thm}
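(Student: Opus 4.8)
The plan is to reduce Theorem~\ref{thm:disp-rank} to Theorem~\ref{thm:recur-general} by showing that the rows of $\vA$ satisfy an $\vR^T$-matrix recurrence of the appropriate width and degree. First I would handle the Sylvester case: write the $i$th row of the displacement equation $\vL\vA-\vA\vR=\vC\vD$ and rearrange it into the form~\eqref{eq:recur-dp}, using that $\vL$ is lower triangular and $(\Delta+1)$-banded so only $\vA[i,:],\dots,\vA[i-\Delta,:]$ appear on the left, and that each row of $\vC\vD$ is a combination of the $r$ rows $\vd_0,\dots,\vd_{r-1}$ of $\vD$. This exhibits $g_{i,0}(X)=\vL[i,i]-X$, $g_{i,j}(X)=-\vL[i,i-j]$, and error terms of rank $r$, hence a $\vR^T$-matrix recurrence of width $(\Delta,r)$ and degree $(0,1)$. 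The Stein case is identical up to replacing $g_{i,0}(X)$ by $1-\vL[i,i]X$ and $g_{i,j}(X)$ by $\vL[i,i-j]X$; in both cases the disjoint-eigenvalue (resp.\ reciprocal-eigenvalue) hypothesis guarantees $g_{i,0}(\vR^T)$ is invertible, which is exactly the requirement in Definition~\ref{defn:recurrence-width}.

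Next I would feed this recurrence into Theorem~\ref{thm:recur-general}. The theorem requires that $\vR^T$ be $(\alpha,\beta)$-Krylov efficient with known characteristic polynomial; since $\vR$ is $(\alpha,\beta)$-Krylov efficient with known $c_\vR(X)$ by hypothesis, and $c_{\vR^T}(X)=c_\vR(X)$ while Krylov efficiency is preserved under transpose (by Definition~\ref{defn:krylov}, which asks for both $\vK\vx$ and $\vK^T\vx$ on all Krylov matrices), this is satisfied. Plugging $(t,r)=(\Delta,r)$ and $(d,\bard)=(0,1)$ into the bounds of Theorem~\ref{thm:recur-general} gives pre-processing $O((0+1)\Delta^{\omega-1}(\Delta+r)\cM(N)\log N + \alpha)$, which I would simplify to $O(\alpha_{\Delta,r}\cM(N)\log N + \alpha)$ using $\Delta^{\omega-1}(\Delta+r)=\Theta(\alpha_{\Delta,r})$ (the cost of a $\Delta\times\Delta$ by $\Delta\times r$ multiply, as recorded in Section~\ref{subsec:recur-error}), and computation $O((0+1)\Delta(\Delta+r)\cM(N)\log N + r\beta) = O((\Delta+r)\Delta\cM(N)\log N + r\beta)$. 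Finally, since $\vA^T$ has displacement rank $r$ with respect to $(\vR^T,\vL^T)$ by the standard transpose identity for displacement operators, the same argument applies verbatim to $\vA^T$, so both $\vA\vb$ and $\vA^T\vb$ are covered.

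I should also remark on the specialization when $\vR$ itself is triangular $\Delta$-band, which is the case that recovers Theorem~\ref{thm:intro-dr}: then $c_\vR(X)$ is read off the diagonal in $O(\cM(N)\log N)$ time, and Theorem~\ref{thm:ut-band-ke} gives $\alpha=\Delta^\omega\cM(N)\log N$ and $\beta=\Delta^2\cM(N)\log N$, so the bounds collapse to $O(\Delta^{\omega}\cM(N)\log N)$ pre-processing and $O(\Delta^2\cM(N)\log N)$ per multiplication for $r=O(\Delta)$, matching the theorem statement in the introduction. The main obstacle, such as it is, is bookkeeping rather than conceptual: carefully tracking the degree-$(0,1)$ versus degree-$(1,0)$ normalization when invoking Theorem~\ref{thm:recur-general} (the degree-$(0,1)$ recurrence has $g_{i,0}\neq 1$, so the $(d+\bard)$ factor in that theorem is what absorbs the denominators via the diagonal-scaling reduction of Section~\ref{subsec:recur-degree}), and confirming that the Krylov post-processing step contributes the additive $r\beta$ and $\alpha$ terms rather than multiplying into the main cost. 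Everything else is a direct substitution into already-proved results.
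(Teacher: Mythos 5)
Your proposal follows essentially the same route as the paper: the paper's proof of Theorem~\ref{thm:disp-rank} likewise rearranges the $i$th row of the Sylvester (resp.\ Stein) displacement equation into the $\vR^T$-matrix recurrence~\eqref{eq:recur-dp} of width $(\Delta,r)$ and degree $(0,1)$, with $g_{i,0}(X)=\vL[i,i]-X$ (resp.\ $1-\vL[i,i]X$), and then invokes Theorem~\ref{thm:recur-general} together with the transpose displacement identity for $\vA^T$. Your substitution of the parameters into the bounds and the remark about the triangular-band specialization via Theorem~\ref{thm:ut-band-ke} match the paper's treatment, so there is nothing substantively different to compare.
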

\begin{cor}
  \label{cor:disp-rank}
  Suppose we are given $\vL,\vR,\vC,\vD$ that define a matrix $\vA$ according to Definition~\ref{defn:displacement}, and additionally suppose that $\vL$ and $\vR$ are both $\Delta$-band. Then we can compute $\vA \vb$ and $\vA^T \vb$ for any vector $\vb$ in $O(\Delta^2r\cM(N)\log{N})$ operations with $O( (\Delta^{\omega-1}(\Delta+r))\cM(N)\log{N})$ preprocessing.
\end{cor}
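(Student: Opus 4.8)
The plan is to derive Corollary~\ref{cor:disp-rank} as an immediate specialization of Theorem~\ref{thm:disp-rank}, so almost all of the work is in tracking how the Krylov-efficiency parameters of $\vR$ collapse when $\vR$ is itself triangular and $\Delta$-band. First I would recall the setup: by Definition~\ref{defn:displacement} we are given $\vL,\vR$ triangular $\Delta$-band, generators $\vC,\vD$, and a displacement operator, and $\vA$ is uniquely defined by $D_{\vL,\vR}(\vA)=\vC\vD$. As shown in the discussion preceding Theorem~\ref{thm:disp-rank}, expanding the $i$th row of the (Sylvester or Stein) displacement equation exhibits the rows of $\vA$ as a $\vR^T$-matrix recurrence of width $(\Delta,r)$ and degree $(0,1)$, with $g_{i,0}(X)$ invertible because $\vL$ and $\vR$ share no eigenvalues (resp.\ the Stein condition holds). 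Theorem~\ref{thm:recur-general} then bounds the cost of $\vA\vb$ and $\vA^T\vb$ in terms of $\alpha,\beta$, the Krylov-efficiency constants of $\vR^T$.

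Next I would invoke Theorem~\ref{thm:ut-band-ke}: since $\vR$ (hence $\vR^T$) is triangular $\Delta$-band, it is $(\Delta^\omega\cM(N)\log N,\ \Delta^2\cM(N)\log N)$-Krylov efficient, so we may take $\alpha = O(\Delta^\omega \cM(N)\log N)$ and $\beta = O(\Delta^2 \cM(N)\log N)$. I would also note that the characteristic polynomial $c_\vR(X)$ of a triangular $\Delta$-band matrix can be computed in $O(\cM(N)\log N)$ operations (it is the product of the diagonal factors $X - \vR[i,i]$, or can be obtained via the subresultant/recurrence machinery already in the paper), so the hypothesis of Theorem~\ref{thm:recur-general} that $c_\vR(X)$ is known is met within the claimed budget. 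Substituting $t\mapsto\Delta$, $(d,\bard)=(0,1)$, $\alpha,\beta$ as above into the bounds of Theorem~\ref{thm:recur-general} gives pre-processing $O((0+1)\Delta^{\omega-1}(\Delta+r)\cM(N)\log N + \Delta^\omega\cM(N)\log N) = O(\Delta^{\omega-1}(\Delta+r)\cM(N)\log N)$ (absorbing the $\alpha$ term, which is $O(\Delta^\omega\cM(N)\log N)\le O(\Delta^{\omega-1}(\Delta+r)\cM(N)\log N)$), and computation $O((0+1)\Delta(\Delta+r)\cM(N)\log N + r\beta) = O(\Delta(\Delta+r)\cM(N)\log N + r\Delta^2\cM(N)\log N) = O(\Delta^2 r\cM(N)\log N)$ after bounding $\Delta(\Delta+r)\le \Delta^2 + \Delta r \le O(\Delta^2 r)$ for $r\ge 1$. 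This is exactly the claimed statement, and the same reduction applies verbatim to $\vA^T$ since $\vA^T$ has displacement rank $r$ with respect to $(\vR^T,\vL^T)$, which are again triangular $\Delta$-band.

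The one genuine subtlety — the part I would flag as the main obstacle — is making sure the $\alpha_{\Delta,r}$ pre-processing term from Theorem~\ref{thm:disp-rank} (the cost of a $\Delta\times\Delta$ by $\Delta\times r$ matrix product, per the notation of Section~\ref{subsec:recur-error}) is correctly folded in: using $\alpha_{\Delta,r}\le (1+r/\Delta)\Delta^\omega = O(\Delta^{\omega-1}(\Delta+r))$ gives the stated $O(\Delta^{\omega-1}(\Delta+r)\cM(N)\log N)$ pre-processing, and one must check this dominates both the $\alpha = O(\Delta^\omega\cM(N)\log N)$ term coming from Krylov pre-processing on $\vR$ and the cost of computing $c_\vR(X)$. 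The rest is bookkeeping: verifying the degree-$(0,1)$ bookkeeping in Section~\ref{subsec:recur-degree} contributes only the harmless factor $(d+\bard)=1$, and confirming that the uniqueness/invertibility hypotheses in Definition~\ref{defn:displacement} translate into $g_{i,0}(\vR^T)$ being invertible as required by Definition~\ref{defn:recurrence-width} and Lemma~\ref{lmm:structure1}. No new ideas are needed beyond assembling Theorems~\ref{thm:recur-general}, \ref{thm:disp-rank}, and~\ref{thm:ut-band-ke}.
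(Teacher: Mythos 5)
Your proposal is correct and follows essentially the same route as the paper: the corollary is obtained by specializing the displacement-to-recurrence reduction (Theorem~\ref{thm:disp-rank}/Theorem~\ref{thm:recur-general}) with the Krylov-efficiency constants of a triangular $\Delta$-band $\vR$ from Theorem~\ref{thm:ut-band-ke}, exactly as you do. Your bookkeeping of the $(d,\bard)=(0,1)$ degree factor, the $\alpha_{\Delta,r}$ pre-processing term, and the absorption of the $O(\Delta^\omega\cM(N)\log N)$ Krylov pre-processing and characteristic-polynomial cost matches the paper's own derivation.
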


This finishes the proof of the first part of Theorem~\ref{thm:intro-dr}.

We remark that this captures the previous displacement results in the literature before the very recent results of Bostan et al.~\cite{bostan2017}. For the four classic types, Toeplitz- and Hankel-like matrices are defined with the Stein operator and $\vL=\vS$, $\vR=\vS^T$; Vandermonde-like matrices are defined with the Sylvester operator and $\vL$ diagonal, $\vR=\vS^T$; and Cauchy-like matrices are defined with the Sylvester operator and $\vL,\vR$ diagonal. Until recently, the most general previous displacement rank results in literature had $\vL$ and $\vR$ in Jordan normal form, which were handled by Olshovsky and Shokrollahi~\cite{OS00}. The results in this section cover all $\vL$ and $\vR$ in Jordan normal form that have distinct eigenvalues. We note that it is not possible to have a single efficient algorithm for matrices with low displacement rank with respect to arbitrary $\vL, \vR$ in Jordan normal form. In particular, every matrix has low displacement rank with respect to $\vL = \vR = \vI$. In general, when $\vL$ and $\vR$ share eigenvalues, the equation $\vL\vA-\vA\vR=\vE$ does not uniquely specify $\vA$, and we hypothesize that a general algorithm will incur an extra factor roughly corresponding to the complexity of fully specifying $\vA$.

\subsection{Quasiseparable $\vL$ and $\vR$}
\label{subsec:disp-rank-quasi}
We now show how to adapt the above to more general $\vL$ and $\vR$, which in particular includes both the triangular band matrices of Corollary~\ref{cor:disp-rank} and the block companion matrices of Bostan et al.~\cite{bostan2017}. For concreteness, we focus on the Sylvester displacement $\vL\vA-\vA\vR = \vC\vD^T$, but the Stein displacement case is similar. Let us trace through the execution of the full algorithm, paying special attention to equation~\eqref{eq:recur-matrix-reduction} which we re-write here for convenience.
\[
  \vA^T\vb = \sum_{k=0}^{r-1} \kry(\vR,\vd_k) \sum_{i=0}^{N-1} \sum_{j=0}^{N-1} \vb[i] c_{jk} \vh_{ij}
\]
(where $\vH$ is from the Structure Lemma~\ref{lmm:structure1} and $\vh_{ij}$ is the coefficient vector of $\vH_{ij} \in \F[X]$).

The full algorithm for computing $\vA^T\vb$ can be summarized as follows.
Let $\vH = (\vL-X\vI)^{-1} \pmod{c_\vR(X)} \in \F[X]^{N \times N}$.
Compute $\vF = \vb^T\vH\vC \in \F[X]^{1 \times r}$.
Let $\vf_k \in \F^N$ be the coefficient vector of the $k$th element of $\vF$. The answer is $\sum_{k=0}^{r-1} \kry(\vR,\vd_k) \vf_k$.
Finally, to perform the Krylov multiplications, recall that in Section~\ref{sec:Krylov} we showed that $\kry(\vR,\vd)^T \vf$ is the coefficient vector of the polynomial $\vf^T (\vI-\vR X)^{-1} \vd \pmod{X^N}$ (analogous to the $\vb^T\vH\vc$ step of Algorithm~\ref{algo:transpose-mult}). The multiplication $\kry(\vR,\vd)\vf$ has the same complexity by the transposition principle, and an explicit algorithm can be found by using the same techniques to convert the recurrence width transpose multiplication algorithm $\vA^T\vb$ to the algorithm for $\vA\vb$ (Appendix~\ref{sec:Ab}).

Thus the multiplication $\vA^T\vb$ can be reduced to performing $O(r)$ computations of the form $\vb^T (X\vI-\vR)^{-1} \vc \pmod{M(X)}$ (or $\vb^T (\vI-\vR X)^{-1} \vc \pmod{M(X)}$, but these are algorithmically equivalent, so we focus on the former) for some $M(X)$ of degree $N$ (note that $M(X)$ will be equal to either $c_\vR(X)$ or $X^N$).\footnote{The above reduction from the Sylvester equation to resolvents is similar to various known formulae for $\vA$ based on the Sylvester equation~\cite{simoncini2016computational,lancaster1984factored}.}
It is enough to find $\vb^T (X\vI-\vR)^{-1} \vc$, which is a rational function of the form $f(X)/g(X)$ - then reducing it $\pmod{M(X)}$ requires $\cM(N)\log{N}$ steps for inverting $g(X) \pmod{M(X)}$~\cite{yap2000fundamental} and $\cM(N)$ for multiplying by $f(X) \pmod{M(X)}$. We call computing $\vb^T (X\vI-\vR)^{-1} \vc$ the resolvent problem on $\vR$. Henceforth we also let $X-\vR$ denote $X\vI-\vR$.

\subsubsection{Resolvent computation}
The most general useful class of matrices for which we know how to solve the resolvent problem in soft-linear time are the \emph{quasiseparable} matrices, introduced by Eidelman and Gohberg~\cite{eidelman1999}.
\begin{defn}
  \label{defn:quasi}
  A matrix $\vR \in \F^{N \times N}$ is $(p,q)$-quasiseparable if
  \begin{itemize}
    \item Every submatrix contained strictly below the diagonal has rank at most $p$.
    \item Every submatrix contained strictly above the diagonal has rank at most $q$.
  \end{itemize}
  A $(q,q)$-quasiseparable matrix is also called $q$-quasiseparable.\footnote{Given a $q$-quasiseparable matrix $\vR$ satisfying Definition~\ref{defn:quasi}, we will assume that we have access to a factorization of any rank-$q$ sub-matrix (or can compute one in time equal to the size of this factorization, i.e.\ $q(k+\ell)$ for a $k \times \ell$ sub-matrix). There are many efficient representations of quasiseparable matrices that allow this~\cite{eidelman1999, delvaux2007givens}, and even without one, simple randomized approaches still allow efficient computation of the generators (see Lemma~\ref{lmm:dr-compress}).}
\end{defn}

The problem we now address is given $t$-quasiseparable $\vR$, to compute the rational function $\vb^T (X-\vR)^{-1} \vc$ for any vectors $\vb,\vc$.

The idea here is that quasiseparable matrices are recursively ``self-similar'', in that the leading and trailing principal submatrices are also quasiseparable, which leads to a divide-and-conquer algorithm. Consider a quasiseparable matrix $\vR$ for which we want to compute the resolvent $(X-\vR)^{-1}$. The top left and bottom right blocks of $X-\vR$ are self-similar to $X-\vR$ itself by definition of quasiseparability. Suppose through recursion we can invert each of them, in other words compute $\diag\{X-\vR_{11}, X-\vR_{22}\}$. But by quasiseparability, $X-\vR$ is simply a low-rank perturbation of $\diag\{X-\vR_{11}, X-\vR_{22}\}$, and so by standard techniques we can compute $(X-\vR)^{-1}$:

\begin{prop}[Binomial Inverse Theorem / Woodbury matrix identity~\cite{woodbury1950}]
\label{prop:woodbury}
  Over a commutative ring $\cR$, let $\vA \in \cR^{N \times N}$ and $\vU,\vV \in \cR^{N \times p}$. Suppose $\vA$ and $\vA+\vU\vV^T$ are invertible. Then $\vI_p + \vV^T\vA^{-1}\vU$ is invertible and
  \[
    (\vA + \vU\vV^T)^{-1} = \vA^{-1} - \vA^{-1}\vU(\vI_p + \vV^T\vA^{-1}\vU)^{-1}\vV^T\vA^{-1}
  \]
\end{prop}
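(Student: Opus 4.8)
The plan is to prove this by the standard Schur-complement / auxiliary block-matrix argument, which has the virtue of working verbatim over any commutative ring $\cR$: we cannot argue via ranks, determinants being nonzero, or division, so we must produce the needed inverses explicitly through block-triangular factorizations.

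First I would introduce the $(N+p)\times(N+p)$ matrix
\[
  \vM = \begin{pmatrix} \vA & \vU \\ \vV^T & -\vI_p \end{pmatrix} \in \cR^{(N+p)\times(N+p)},
\]
and observe that it admits two block-triangular factorizations. Eliminating the $(2,1)$ block using invertibility of $\vA$ gives, with $\vS := \vI_p + \vV^T\vA^{-1}\vU$,
\[
  \vM = \begin{pmatrix} \vI_N & 0 \\ \vV^T\vA^{-1} & \vI_p \end{pmatrix}\begin{pmatrix} \vA & 0 \\ 0 & -\vS \end{pmatrix}\begin{pmatrix} \vI_N & \vA^{-1}\vU \\ 0 & \vI_p \end{pmatrix},
\]
while eliminating toward the $(2,2)$ block gives
\[
  \vM = \begin{pmatrix} \vI_N & -\vU \\ 0 & \vI_p \end{pmatrix}\begin{pmatrix} \vA+\vU\vV^T & 0 \\ 0 & -\vI_p \end{pmatrix}\begin{pmatrix} \vI_N & 0 \\ -\vV^T & \vI_p \end{pmatrix}.
\]
Each identity is verified by a one-line block multiplication.

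Since in both factorizations the outer factors are unipotent block-triangular (hence invertible over $\cR$ with explicit inverse obtained by negating the off-diagonal block), $\vM$ is invertible if and only if the middle block-diagonal factor is, which happens if and only if each of its diagonal blocks is invertible. The second factorization has diagonal blocks $\vA+\vU\vV^T$ (invertible by hypothesis) and $-\vI_p$ (always invertible), so $\vM$ is invertible. Feeding this back into the first factorization: $\vM$ invertible together with $\vA$ invertible forces $-\vS$, and hence $\vS = \vI_p + \vV^T\vA^{-1}\vU$, to be invertible. This establishes the first assertion.

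For the formula, set $\vZ = \vA^{-1} - \vA^{-1}\vU\vS^{-1}\vV^T\vA^{-1}$ and expand $(\vA+\vU\vV^T)\vZ$; grouping the middle terms and using $\vV^T\vA^{-1}\vU = \vS - \vI_p$ to rewrite $\vU(\vI_p + \vV^T\vA^{-1}\vU)\vS^{-1} = \vU$, all terms cancel to leave $\vI_N$. Because $\vA+\vU\vV^T$ is already known to be invertible, any right inverse equals its two-sided inverse, so $(\vA+\vU\vV^T)^{-1} = \vZ$, as claimed. (Alternatively, one can read off the $(1,1)$ block of $\vM^{-1}$ from either factorization and match the two expressions.) The only step that is more than bookkeeping is the invertibility of $\vS$ over a commutative ring, and the block factorization of $\vM$ is exactly the device that reduces it to the given invertibility of $\vA+\vU\vV^T$; the rest is routine block algebra.
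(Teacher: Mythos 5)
Your proof is correct. The paper does not prove Proposition~\ref{prop:woodbury} at all---it is imported as a classical result with a citation to Woodbury---so there is no in-paper argument to compare against, and your write-up would serve as a legitimate self-contained proof. Your route is the standard one: the two block-triangular factorizations of the bordered matrix $\begin{pmatrix} \vA & \vU \\ \vV^T & -\vI_p \end{pmatrix}$ show that invertibility of $\vA+\vU\vV^T$ forces invertibility of the Schur complement $\vI_p+\vV^T\vA^{-1}\vU$, which is the only point needing care over a ring where one cannot appeal to determinants or rank, and the displayed formula then follows by direct multiplication together with the observation that a right inverse of an already-invertible element is the two-sided inverse. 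Both factorizations and the cancellation check out. Two minor remarks: at the step where you conclude invertibility of the Schur complement, you do not need to invoke invertibility of $\vA$ again---the diagonal blocks of an invertible block-diagonal matrix are automatically invertible, since the corresponding blocks of its two-sided inverse furnish left and right inverses; and your argument nowhere uses commutativity of $\cR$, so the statement holds over noncommutative rings as well (the paper only applies it with $\cR$ the commutative ring of rational functions over $\F$).
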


For our purposes, $\cR$ will be the ring of rational functions over $\F$. Now we can prove the following.
\begin{lmm}
\label{lem:quasi-transpose} 
  Let $\vR$ be a $t$-quasiseparable matrix. Then $\vb^T (X-\vR)^{-1}\vc$ for any scalar vectors $\vb,\vc$ can be computed in $O(t^\omega \cM(N) \log^2 N + t^2\cM(N) \log^3{N})$ operations.
\end{lmm}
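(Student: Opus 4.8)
The plan is to follow the recursive self-similar structure of quasiseparable matrices: at each level express $X-\vR'$ (for a contiguous principal submatrix $\vR'$) as a rank-$\le 2t$ perturbation of a block-diagonal matrix whose blocks are again quasiseparable, invert the blocks recursively, and stitch the pieces together with the Woodbury identity (Proposition~\ref{prop:woodbury}), applied over the commutative ring $\F(X)$. Concretely I would set up a slightly generalized subproblem that is closed under recursion: given a $t$-quasiseparable principal submatrix $\vR'$ of size $n$ together with a left block $\vB$ and a right block $\vC$ of $O(t)$ columns each, output the $O(t)\times O(t)$ matrix of rational functions $\vB^T(X-\vR')^{-1}\vC$. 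The target quantity is the case $\vR' = \vR$, $\vB=\vb$, $\vC=\vc$. Split $\vR'$ into four $\tfrac n2\times\tfrac n2$ blocks; the diagonal blocks $\vR'_{11},\vR'_{22}$ are $t$-quasiseparable, and each off-diagonal block lies strictly above or below the diagonal of $\vR'$, hence has rank $\le t$ and (using the generator representation of Definition~\ref{defn:quasi}, or computing factorizations on the fly as in Lemma~\ref{lmm:dr-compress}) admits a factorization $\vR'_{12}=\vX_1\vZ_1^T$, $\vR'_{21}=\vX_2\vZ_2^T$ with $t$-column factors. Writing $\vA=\diag\{X-\vR'_{11},\,X-\vR'_{22}\}$ and the remaining anti-block-diagonal part as $\vU\vV^T$ with $\vU=\diag\{\vX_1,\vX_2\}$ and $\vV$ the anti-block-diagonal assembly of $\vZ_1,\vZ_2$ (so $\vU,\vV\in\F^{n\times 2t}$), Woodbury gives
\[
\vB^T(X-\vR')^{-1}\vC = \vB^T\vA^{-1}\vC + \bigl(\vB^T\vA^{-1}\vU\bigr)\,\bigl(\vI_{2t}-\vV^T\vA^{-1}\vU\bigr)^{-1}\,\bigl(\vV^T\vA^{-1}\vC\bigr).
\]

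The next step is to observe that every bilinear form $\vE^T\vA^{-1}\vF$ appearing on the right decomposes block-wise as $\vE_1^T(X-\vR'_{11})^{-1}\vF_1+\vE_2^T(X-\vR'_{22})^{-1}\vF_2$, so that all four matrices $\vB^T\vA^{-1}\vC,\ \vB^T\vA^{-1}\vU,\ \vV^T\vA^{-1}\vC,\ \vV^T\vA^{-1}\vU$ are assembled from two recursive calls — on $\vR'_{11}$ with left columns $\{\vB_1,\vZ_2\}$ and right columns $\{\vC_1,\vX_1\}$, and symmetrically on $\vR'_{22}$ — each with $O(t)$ columns, keeping the recursion width bounded (this is where the quasiseparable generator structure is used, so that the correction columns fed downward stay $O(t)$-dimensional rather than accumulating; if one allows mild growth the cost is only extra logarithmic factors, consistent with the $\log^3 N$ term in the statement). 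The key to degree control is that, by Cramer's rule, every entry of $(X-\vR')^{-1}$ shares the common denominator $c_{\vR'}(X)$ of degree $n$; hence each recursive output is a degree-$n$ denominator together with an $O(t)\times O(t)$ numerator matrix of degree $<n$, and the Sylvester determinant identity $c_{\vR'}(X)=c_{\vR'_{11}}(X)\,c_{\vR'_{22}}(X)\,\det(\vI_{2t}-\vV^T\vA^{-1}\vU)$ reconciles the denominators at the recombination step. The combine step at a node of size $n$ therefore amounts to a few $O(t)\times O(t)$ matrix multiplications and one $O(t)\times O(t)$ inversion over $\F[X]$ (after clearing denominators), i.e.\ $O(t^\omega)$ arithmetic operations on polynomials of degree $O(n)$ plus the polynomial gcd/inversion-modulo-$c_{\vR'}$ costs of $O(\cM(n)\log n)$ each; the base case $n=O(t)$ inverts $X-\vR'$ directly. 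Unrolling $T(n)=2T(n/2)+O\bigl(t^\omega\cM(n)\polylog n\bigr)$ over the $\log N$ levels (using $\sum_d 2^d\cM(N/2^d)=O(\cM(N)\log N)$) yields the claimed $O(t^\omega\cM(N)\log^2 N+t^2\cM(N)\log^3 N)$ bound, the lower-order $t^2\cM(N)\log^3 N$ term absorbing the bilinear recombinations and the per-level polynomial division steps. Finally, as noted in Section~\ref{subsec:disp-rank-quasi}, reducing the resulting rational function modulo the relevant degree-$N$ polynomial is a $\cM(N)\log N$ postprocessing step, within budget.

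The step I expect to be the main obstacle is the bookkeeping around the polynomial/rational matrix arithmetic in the Woodbury correction: one must (i) argue that the number of correction vectors threaded through the recursion can be kept at $O(t)$ by exploiting the quasiseparable generator representation, and (ii) perform the induced $O(t)\times O(t)$ inversions over $\F(X)$ with entries of degree $\Theta(n)$ within $\tilde O(t^\omega\cM(n))$ operations while certifying — via the common-denominator / Sylvester-identity observations above — that numerator and denominator degrees do not blow up across the $\log N$ recursion levels. Everything else (the recursive split, invoking Proposition~\ref{prop:woodbury}, extracting the rank-$\le t$ off-diagonal factors, and the final modular reduction) is routine given the results already established.
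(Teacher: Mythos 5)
Your proposal is correct and is essentially the paper's own argument: the same split of $X-\vR$ into a block-diagonal part plus a rank-$2t$ anti-block-diagonal perturbation, the same application of the Woodbury identity over rational functions, and the same pair of recursive calls on $\vR_{11},\vR_{22}$ with the generator columns appended to $\vB,\vC$. The only divergence is your unproven suggestion that the correction columns can be compressed to stay $O(t)$ per level; the paper does not attempt this and simply lets the block width grow by $t$ per level to $O(t\log N)$, which is exactly your hedged fallback and is what accounts for the $\log^2 N$ and $\log^3 N$ factors in the stated bound.
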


\begin{proof}
  More generally, we will consider computing $\vB^T(X-\vR)^{-1}\vC$ for matrices $\vB \in \F^{N \times k}$ and $\vC \in \F^{N \times k}$. Note that the result is a $k\times k$ matrix of rational functions of degree at most $N$ on top and bottom.

  Let $\vR$ be partitioned into submatrices $\vR_{11},\vR_{12},\vR_{21},\vR_{22} \in \left(\F(X)\right)^{N/2 \times N/2}$ in the usual way. Since $\vR$ is $t$-quasiseparable, we can write $\vR_{21} = \vU_L \vV_L^T$ and $\vR_{12} = \vU_U \vV_U^T$ where $\vU_\cdot, \vV_\cdot \in \F^{N \times t}$. Notice that we can write $X-\vR$ as
  \[
    X-\vR =
    \begin{bmatrix} X-\vR_{11} & \vzero \\ \vzero & X-\vR_{22} \end{bmatrix}
    +
    \begin{bmatrix} \vzero & \vU_U \\ \vU_L & \vzero \end{bmatrix}
    \begin{bmatrix} \vV_L & \vzero \\ \vzero & \vV_U \end{bmatrix}^T.
  \]

  Suppose we know the expansions of each of
  \begin{align}
    & \vM_1 = \vB^T \begin{bmatrix} X-\vR_{11} & \vzero \\ \vzero & X-\vR_{22} \end{bmatrix}^{-1} \vC \\
    & \vM_2 = \vB^T
    \begin{bmatrix} X-\vR_{11} & \vzero \\ \vzero & X-\vR_{22} \end{bmatrix}^{-1}
    \begin{bmatrix} \vzero & \vU_U \\ \vU_L & \vzero \end{bmatrix} \\
    & \vM_3 = \begin{bmatrix} \vV_L & \vzero \\ \vzero & \vV_U \end{bmatrix}^T
    \begin{bmatrix} X-\vR_{11} & \vzero \\ \vzero & X-\vR_{22} \end{bmatrix}^{-1}
    \begin{bmatrix} \vzero & \vU_U \\ \vU_L & \vzero \end{bmatrix} \\
    & \vM_4 = \begin{bmatrix} \vV_L & \vzero \\ \vzero & \vV_U \end{bmatrix}^T
    \begin{bmatrix} X-\vR_{11} & \vzero \\ \vzero & X-\vR_{22} \end{bmatrix}^{-1}
    \vC.
  \end{align}
  These have dimensions $k \times k, k \times 2t, 2t \times 2t, 2t \times k$ respectively (and entries bounded by degree $N/2$ on the top and bottom). Also note that all the above inverses exist because a matrix of the form $\vX-\vR$ for $\vR \in \F$ has non-zero determinant.

  By Proposition~\ref{prop:woodbury}, the desired answer is
  \[
    \vB^T (X-\vR)^{-1} \vC = \vM_1 - \vM_2(\vI_{2t}+\vM_3)^{-1}\vM_4.
  \]
  
Then the final result can be computed by inverting $\vI_{2t}+\vM_3$ ($O(t^\omega \cM(N))$ operations), multiplying by $\vM_2,\vM_4$ ($O(k/t \cdot t^\omega \cM(N))$ operations each\footnote{When $k < t$ this term is subsumed by the previous one anyways}), and subtracting from $\vM_1$ ($O(k^2 \cM(N))$ operations). This is a total of $O( (t^\omega + kt^{\omega-1} + k^2) \cM(N) )$ operations. Note that when $k = O(t\log N)$, this becomes $O(t^\omega \cM(N)\log{N} + t^2 \cM(N) \log^2{N})$; we will use this in the analysis shortly.

  To compute $\vM_1,\vM_2,\vM_3,\vM_4$, it suffices to compute the following:
    \begin{align*}
      \vB_1^T (X-\vR_{11})^{-1}\vC_1 \qquad& \vB_2^T (X-\vR_{22})^{-1}\vC_2 \\
      \vB_1^T (X-\vR_{11})^{-1}\vU_U  \qquad&\vB_2^T (X-\vR_{22})^{-1}\vU_L \\
      \vV_L^T (X-\vR_{11})^{-1}\vU_U \qquad& \vV_U^T (X-\vR_{22})^{-1}\vU_L \\
      \vV_L^T (X-\vR_{11})^{-1}\vC_1 \qquad& \vV_U^T (X-\vR_{22})^{-1}\vC_2.
    \end{align*}
  But to compute those, it suffices to compute the following $(k+t) \times (k+t)$ matrices:
  \begin{align*}
    & \begin{bmatrix} \vB_1 & \vV_L \end{bmatrix}^T
    (X-\vR_{11})^{-1} 
    \begin{bmatrix} \vC_1 & \vU_U \end{bmatrix} \\
    & \begin{bmatrix} \vB_2 & \vV_U \end{bmatrix}^T
    (X-\vR_{22})^{-1} 
    \begin{bmatrix} \vC_2 & \vU_L \end{bmatrix} \\
  \end{align*}

  Since $\vR_{11}$ and $\vR_{22}$ have the same form as $\vR$, this is two recursive calls of half the size. Notice that the size of the other input (dimensions of $\vB,\vC$) is growing, but when the initial input is $k=1$, it never exceeds $1+t\log{N}$ (since they increase by $t$ every time we go down a level). Earlier, we noticed that when $k = O(t\log N)$, the reduction step has complexity $O( t^\omega \cM(N)\log{N} + t^2 \cM(N) \log^2{N})$ for any recursive call. As usual, the complete runtime is a $\log{N}$ multiplicative factor on top of this.
\end{proof}

Combining the aforementioned reduction from the Sylvester equation to resolvents with this algorithm proves the multiplication part of Theorem~\ref{thm:intro-dr-quasi}. The full algorithm is detailed in Algorithm~\ref{algo:resolvent}.

\begin{algorithm}
  \caption{\textsc{Resolvent}}
  \label{algo:resolvent}
  \begin{algorithmic}[1]
    \renewcommand{\algorithmicrequire}{\textbf{Input:}}
    \renewcommand{\algorithmicensure}{\textbf{Output:}}
    \Require{ $\vB = \begin{bmatrix} \vB_1 & \vB_2 \end{bmatrix}$}
    \Require{$\vR$ quasiseparable such that $\vR_{12} = \vU_U\vV_U^T$, $\vR_{21} = \vU_L\vV_L^T$}
    \Require{ $\vC = \begin{bmatrix} \vC_1 & \vC_2 \end{bmatrix}$}
    \Ensure{$\vB^T (X-\vR)^{-1} \vC$}
    \If{$\mathsf{dim}(\vR) = 1$}
    \State \Return{$\vB^T\vC / (X-\vR[0,0])$}
    \Else
    \State $\begin{bmatrix} \vM_{11} & \vM_{22} \\ \vM_{41} & \vM_{31} \end{bmatrix} \gets \Call{Resolvent}{\left[ \vB_1 \quad \vV_L \right], \vR_{11}, \left[ \vC_1 \quad \vU_U \right]}$
    \State $\begin{bmatrix} \vM_{12} & \vM_{21} \\ \vM_{42} & \vM_{32} \end{bmatrix} \gets \Call{Resolvent}{\left[ \vB_2 \quad \vV_U \right], \vR_{11}, \left[ \vC_C \quad \vU_L \right]}$
    \State $\vM_1 = \vM_{11} + \vM_{12}$
    \State $\vM_2 = \begin{bmatrix} \vM_{21} & \vM_{22} \end{bmatrix}$
    \State $\vM_3 = \begin{bmatrix} \vzero & \vM_{31} \\ \vM_{32} & \vzero \end{bmatrix}$
    \State $\vM_4 = \begin{bmatrix} \vM_{41} \\ \vM_{42} \end{bmatrix}$
    \State \Return{$\vM_1 - \vM_2(\vI_{2t}+\vM_3)^{-1}\vM_4$}
    \EndIf
  \end{algorithmic}
\end{algorithm}

We note that the bounds in Lemma~\ref{lem:quasi-transpose} are slightly worse in the exponent of $t$ and the number of $\log{N}$ factors, compared to the bounds derived from the recurrence width algorithm as in Corollary~\ref{cor:disp-rank}. A more detailed analysis that isolates operations independent of $\vb$ as pre-computations should be possible to bridge the gap between these bounds, and is left for future work.

\section{Inverses and Solvers}
\label{sec:inverse}

In this section, we address the inverses of the structured matrices we consider. The two important cases are for matrices of low recurrence width from Definition~\ref{defn:intro-width}, and all types of matrices of low displacement rank. These will be addressed slightly differently. For the standard matrices of low recurrence width, we find a \emph{solver}. That is given invertible $\vA$ and input $\vy$, we compute $\vx$ such that $\vA\vx = \vy$.

For matrices $\vA$ of low displacement rank, we find their inverse. More specifically, it is known that the inverse of a matrix with low displacement rank also has low displacement rank. Thus the natural problem of inverting displacement structured matrices is: given the compressed displacement parameterization of $\vA$, find the compressed displacement rank parameterization of $\vA^{-1}$. We show that the classic techniques~\cite{kaltofen1994} can be extended to our class of more generalized displacement structure.

\subsection{Solvers for Matrices of low Recurrence Width}
\label{subsec:further:inverse}

Consider a matrix $\vA$ in the setting of Definition~\ref{defn:intro-width}. In this section we show how to compute $\vA^{-T}\vb$ for any vector $\vb$. In other words, we find $\vx$ satisfying $\vA^T\vx=\vy$ given $\vy$. The main idea we use is that sub-matrices of $\vA$ have the same structure as $\vA$, and to solve the equation $\vA^T\vx = \vy$ it suffices to perform forward multiplication on sub-matrices of $\vA^T$ using Algorithm~\ref{algo:transpose-mult}. This algorithm provides a solver for $\vA$ as well, where the invocations of Algorithm~\ref{algo:transpose-mult} are replaced with a $\vA\vb$ multiplication algorithm (Appendix~\ref{sec:Ab}) or directly via the transposition principle.

Consider a matrix $\vA$ satisfying Definition~\ref{defn:intro-width}, and suppose that $\vA$ is invertible, i.e.\ $\deg(f_i) = i$ for $i = 0, \dots, N-1$. Note in particular that $\vA^T$ is upper triangular. As usual, let $\vA$ be blocked as $\vA_{11},\vA_{12},\vA_{21},\vA_{22}$ in the usual way.

\begin{thm}
  \label{thm:recur-inverse}
  For any invertible matrix $\vA$ satisfying Definition~\ref{defn:intro-width},
  with $O(t^{\omega} \cM(N)\log{N})$ pre-processing operations, a solution $\vx$ to $\vA^T\vx = \vy$ can be found for any $\vy$ with $O(t^2\cM(N)\log^2{N})$ operations over $\F$.
\end{thm}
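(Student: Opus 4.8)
The plan is to exploit the self-similar block structure of $\vA^T$ together with the forward-multiplication algorithm (Algorithm~\ref{algo:transpose-mult}) to build a divide-and-conquer solver. Since $\vA$ satisfies Definition~\ref{defn:intro-width} with $\deg(f_i)=i$, the matrix $\vA^T$ is upper triangular and invertible, and each leading/trailing principal block is again a recurrence-width-$t$ matrix (the rows $f_0,\dots,f_{N/2-1}$ give $\vA_{11}$, and $f_{N/2},\dots,f_{N-1}$ re-initialized via the work-horse/structure lemmas give $\vA_{22}$, up to the already-computed transition matrices $\vT_{[\ell:r]}$). Block $\vA^T$ as
\[
  \vA^T = \begin{bmatrix} \vA_{11}^T & \vA_{21}^T \\ \vzero & \vA_{22}^T \end{bmatrix},
\]
so that solving $\vA^T\vx=\vy$ with $\vx=(\vx_1,\vx_2)$, $\vy=(\vy_1,\vy_2)$ reduces to: (i) solve $\vA_{22}^T\vx_2 = \vy_2$ recursively; (ii) form $\vy_1' = \vy_1 - \vA_{21}^T\vx_2$; (iii) solve $\vA_{11}^T\vx_1 = \vy_1'$ recursively. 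The only non-recursive work is step (ii), a forward multiplication by the off-diagonal block $\vA_{21}^T$.

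First I would show that $\vA_{21}^T$ (equivalently $\vA_{21} = \vA[N/2:N, 0:N/2]$) admits fast multiplication with the pre-computed data. The rows of $\vA_{21}$ are the low-order $N/2$ coefficients of $f_{N/2},\dots,f_{N-1}$; by the Structure Lemma (Lemma~\ref{lmm:structure1}, Lemma~\ref{lmm:structure2}) these are governed by the transition matrices $\vT_{[\ell:r]}$, which are exactly what the pre-processing of Section~\ref{sec:pre-process} computes. Concretely, using the work-horse relation $f_{N/2+i}(X) = \sum_j h^{(N/2)}_{i,j}(X) f_{N/2+j}(X)$ and the fact that the $f_{N/2+j}(X)$ can be split into a low-degree part and a high-degree part (with the high part contributing only to $\vA_{22}$), the product $\vA_{21}^T \vz$ becomes a computation identical in form to Algorithm~\ref{algo:transpose-mult} restricted to the index range $[N/2:N]$, plus a correction multiplication by polynomials of degree $<N/2$. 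Thus step (ii) costs $O(t^2\cM(N)\log N)$ by Lemma~\ref{lmm:transpose-mult-comp}, reusing the already-computed $\vT_{[s]}$ with no extra pre-processing.

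The recursion then satisfies $S(N) = 2S(N/2) + O(t^2\cM(N)\log N)$, which solves to $S(N) = O(t^2\cM(N)\log^2 N)$, with base case $N\le t$ handled by explicitly inverting the $t\times t$ triangular block in $O(t^\omega\cM(t))$ time (absorbed into pre-processing over all $N/t$ base blocks). Correctness is immediate from block back-substitution once each sub-block is certified to be a recurrence-width-$t$ matrix with the inherited transitions; this certification, and the precise bookkeeping that the off-diagonal multiplication only needs the $\vT_{[s]}$ for $|s|\le\lg N$, is the main obstacle, since one must check that splitting $f_{N/2+j}(X) = (\text{deg} < N/2\text{ part}) + X^{N/2}\cdot(\text{rest})$ interacts correctly with the recursive decomposition and does not inflate degrees beyond the claimed bounds (this is exactly the behavior of the $\operatorname{Reduce}$/$\operatorname{Extract}$ operation used in the $\vA\vb$ algorithm). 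Everything else is routine once this structural compatibility is in place, and the pre-processing bound $O(t^\omega\cM(N)\log N)$ is inherited directly from Corollary~\ref{cor:transpose-mult-pre} since we reuse the same $\vT_{[s]}$ tree. The statement for a solver of $\vA$ itself (rather than $\vA^T$) follows by the transposition principle, or by running the same recursion with the roles of lower/upper triangular swapped and invoking the $\vA\vb$ algorithm of Appendix~\ref{sec:Ab} for the off-diagonal blocks.
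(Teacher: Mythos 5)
Your proposal follows essentially the same route as the paper's own proof (Lemma~\ref{lmm:inv}, Lemma~\ref{lmm:inv-recursive}, and Algorithm~\ref{algo:inv}): recursive block back-substitution on the triangular $\vA^T$, with the off-diagonal block multiplied via Algorithm~\ref{algo:transpose-mult} using the same pre-computed $\vT_{[s]}$, and the recursion $S(N)=2S(N/2)+O(t^2\cM(N)\log N)$ yielding the stated bounds. The one step you flag as the main obstacle---certifying that the trailing principal block inherits the recurrence structure---is exactly what the paper resolves by generalizing Definition~\ref{defn:intro-width} to a construction $\bar\vA$ that extracts high-order coefficients (Lemma~\ref{lmm:inv-recursive}), which is the formal version of your low/high coefficient-splitting observation.
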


The basic idea is to use a divide-and-conquer algorithm to reduce the problem of multiplying by the inverse of $\vA$ into multiplying by the inverses of $\vA_{11}$ and $\vA_{22}$. One issue is that $\vA_{22}$ does not have the exact same structure as $\vA$ so we cannot directly use a recursive call. To get around this, we will consider a slightly more general setting that includes the structure of both $\vA$ and $\vA_{22}$.

Instead of running a recurrence and extracting the low order $X^0 \dots X^{n-1}$ coefficients into a matrix, we extract the high order $X^n \dots X^{2n-1}$ coefficients. Formally, let $f_0(X), \dots, f_{t-1}(X)$ and $g_{i,j}(X), i \in [N], j \in [t]$ be given such that
\begin{equation}
  \label{eq:fg}
  \begin{aligned}
    &\deg(f_i) \le N+i \\
    &\deg(g_{i,j}) \le j
  \end{aligned}
\end{equation}
and define $f_t(X), \dots, f_{N-1}(X)$ according to recurrence~\eqref{eq:intro-recur-poly}. Note that now $\deg(f_i) \leq N+i$ for all $i$. Define matrix $\vA \in \F^{N \times N}$ such that $\vA[j,i]$ is the coefficient of degree $N+j$ in $f_i(X)$. For shorthand, let this construction be denoted $\vA = \bar\vA(\{f_i\},\{g_{i,j}\})$.

We shall prove
\begin{lmm}
  \label{lmm:inv}
  Given $f_0(X), \dots, f_{t-1}(X)$ and $g_{i,j}(X)$ satisfying~\eqref{eq:fg} and let $\vA = \bar\vA(\{f_i\},\{g_{i,j}\})$ be invertible. With $O(t^{\omega} \cM(N)\log{N})$ pre-processing operations, a solution $\vx$ to $\vA^T\vx = \vy$ can be found for any $\vy$ with $O(t^2\cM(N)\log^2{N})$ operations over $\F$.
\end{lmm}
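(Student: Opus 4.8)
\textbf{Plan for Lemma~\ref{lmm:inv}.}
The plan is to give a divide-and-conquer algorithm that reduces solving $\vA^T\vx = \vy$ to solving two half-size systems of the same form, plus one forward multiplication that is handled by Algorithm~\ref{algo:transpose-mult}. Block $\vA$ (and hence $\vA^T$) into $2\times2$ blocks of size $n = N/2$. Because $\vA = \bar\vA(\{f_i\},\{g_{i,j}\})$ extracts coefficients of degree $N+j$ from $f_i(X)$, and $\deg(f_i) \le N+i$, I first want to check that $\vA^T$ is upper triangular: the coefficient of degree $N+j$ in $f_i(X)$ vanishes when $j > i$ exactly because $\deg(f_i)\le N+i$. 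So $\vA$ is lower triangular, $\vA^T$ is upper triangular, and invertibility of $\vA$ is equivalent to all the degree-$(N+i)$ coefficients of $f_i$ being nonzero. Consequently $\vA_{21} = \vzero$, and the system $\vA^T\vx = \vy$ decomposes as $\vA_{11}^T\vx_1 + \vA_{21}^T\vx_2 = \vy_1$ and $\vA_{22}^T\vx_2 = \vy_2$ — i.e. first solve $\vA_{22}^T\vx_2 = \vy_2$, then solve $\vA_{11}^T\vx_1 = \vy_1 - \vA_{21}^T\vx_2$.

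Next I would verify that $\vA_{11}$ and $\vA_{22}$ are again matrices of the form $\bar\vA(\cdot,\cdot)$ on size $n$. For $\vA_{11}$: it is built from $f_0(X),\dots,f_{n-1}(X)$ with the recurrence coefficients $g_{i,j}$, $i\in[n]$; the degree bounds $\deg(f_i)\le N+i$ become $\deg(f_i)\le n + i$ after we only keep the coefficients of degree in $[N, N+n)$ — more precisely, we replace each $f_i(X)$ by the slice $\operatorname{Reduce}$-style extraction of its top $n$ relevant coefficients, which is a polynomial of degree $\le n-1 + i$, matching~\eqref{eq:fg} at size $n$. For $\vA_{22}$: by the work-horse/structure idea (Lemma~\ref{lmm:structure2}), the higher-index polynomials $f_n(X),\dots,f_{N-1}(X)$ satisfy the same recurrence re-initialized from $f_n(X),\dots,f_{n+t-1}(X)$, and the columns of $\vA_{22}$ are the degree-$(N+n+j)$ coefficients of these, so $\vA_{22} = \bar\vA(\{f_{n+i}\},\{g_{n+i,j}\})$ on size $n$, provided the re-initialized starting polynomials $f_n,\dots,f_{n+t-1}$ (truncated appropriately) satisfy the degree bound $\deg \le n + i$ — which holds since $\deg(f_{n+i}) \le N + n + i$ and we are extracting around degree $N+n$. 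These re-initialized starting polynomials are exactly the quantities computed during the pre-processing of Algorithm~\ref{algo:transpose-mult} (they are read off $\vT_{[0:n]}$), so no extra pre-processing is incurred.

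The remaining ingredient is the off-diagonal multiplication $\vA_{21}^T\vx_2$. Here $\vA_{21}$ is the $n\times n$ matrix whose $(j,i)$ entry is the degree-$(N+n+j)$ coefficient of $f_i(X)$ for $i\in[n]$. Since $f_i(X)$ for $i < n$ depends (via the recurrence) on $f_0,\dots,f_{t-1}$ with transition matrix $\vT_{[0:i]}$, and the relevant high-degree coefficients can be picked out by a bounded-degree convolution, $\vA_{21}$ satisfies a recurrence of the type handled in Section~\ref{sec:transpose} (essentially it is a forward-multiplication sub-block, analogous to the $\vP_{\ell,m}\vT_{[\ell:r]}$ structure in~\eqref{eq:triangular-recursion}). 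So $\vA_{21}^T\vx_2$ — equivalently $\vx_2^T\vA_{21}$ — is computed by one invocation of the core transpose-multiplication algorithm (Theorem~\ref{thm:mainATb}, suitably rephrased for the ``high-order coefficient'' variant, which changes nothing asymptotically), costing $O(t^2\cM(n)\log n)$. Writing $T(N)$ for the cost of the solver, this yields $T(N) = 2T(N/2) + O(t^2\cM(N)\log N)$, which solves to $T(N) = O(t^2\cM(N)\log^2 N)$; the pre-processing is dominated by one run of the Algorithm~\ref{algo:transpose-mult} pre-processing, $O(t^\omega\cM(N)\log N)$, shared across the recursion by computing all the $\vT_{[s]}$ once up front. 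The base case ($N \le t$ or constant) is solved directly by inverting a small triangular matrix. I expect the main obstacle to be bookkeeping: carefully tracking the degree windows so that each recursive sub-block genuinely lands in the $\bar\vA(\cdot,\cdot)$ family with the right degree bounds — in particular confirming that the ``Reduce''-type extraction of the correct $n$ coefficients of each $f_i$ at each level is consistent, and that the re-initialized starting polynomials $f_n,\dots,f_{n+t-1}$ (which Algorithm~\ref{algo:transpose-mult} pre-computes) have degrees compatible with~\eqref{eq:fg}. Once that is set up, correctness is immediate from the block-triangular solve and the runtime is a routine divide-and-conquer recurrence. Finally, Theorem~\ref{thm:recur-inverse} follows as the special case where the $f_i$ in~\eqref{eq:fg} are taken with $\deg(f_i)\le i$ (shift by $X^N$ is vacuous), so $\bar\vA(\{f_i\},\{g_{i,j}\})$ is exactly the matrix of Definition~\ref{defn:intro-width}.
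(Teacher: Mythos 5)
Your overall strategy is exactly the paper's: block the triangular system, observe (Lemma~\ref{lmm:inv-recursive} in the paper) that both diagonal sub-blocks are again of the form $\bar\vA(\cdot,\cdot)$ at size $N/2$ (the top one via the high-order truncations $\floor{f_i}_{N/2}$, the bottom one via the re-initialized polynomials $f_{N/2},\dots,f_{N/2+t-1}$ obtained from $\vT_{[0:N/2]}$), handle the coupling term with one call to Algorithm~\ref{algo:transpose-mult}, and solve the recurrence $T(N)=2T(N/2)+O(t^2\cM(N)\log N)$ with the pre-processing shared across the recursion. That skeleton, and the cost accounting, match the paper's proof.

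However, there is a concrete error in your identification of the coupling block, caused by mixing the two indexing conventions (the paper's definition of $\bar\vA$ literally reads ``$\vA[j,i]$ is the coefficient of degree $N+j$ in $f_i$,'' while everything else, including your displayed block solve, treats row $i$ of $\vA$ as the coefficient vector of $f_i$). You first assert ``$\vA_{21}=\vzero$'' and then immediately use $\vA_{21}^T\vx_2$ as the nonzero correction term, and when you finally describe this block you say its entries are the degree-$(N+n+j)$ coefficients of $f_i(X)$ for $i\in[n]$. That block is identically zero: $\deg(f_i)\le N+i<N+n$ for $i<n$, so there are no coefficients at degree $N+n$ or above. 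With that description the correction term vanishes, the two half-systems decouple, and the algorithm returns a wrong answer whenever $\vA$ is not block diagonal. The block you actually need (with rows of $\vA$ indexed by polynomials) is the bottom-left block of $\vA$, whose $(i',j)$ entry is the coefficient of degree $N+j$, $j<N/2$, of the \emph{second-half} polynomial $f_{N/2+i'}$; its justification therefore goes through the recurrence re-initialized at index $N/2$ (the same jumped initial conditions $f_{N/2},\dots,f_{N/2+t-1}$ you already compute for $\vA_{22}$), not through $\vT_{[0:i]}$ acting on $f_0,\dots,f_{t-1}$ as you wrote. Once the block is identified correctly, computing $\vA_{21}^T\vx_2$ is indeed a single (zero-padded) invocation of Algorithm~\ref{algo:transpose-mult}, as the paper does, and the rest of your argument, including the degree bookkeeping for~\eqref{eq:fg} at size $N/2$ and the runtime recursion, is sound.
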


This is stronger than Theorem~\ref{thm:recur-inverse}.
\begin{proof}[Proof of Theorem~\ref{thm:recur-inverse}]
  Given $\vA$ defined by $f_0(X), \dots, f_{t-1}(X)$ and $g_{i,j}(X)$ satisfying Definition~\ref{defn:intro-width}, note that $X^Nf_0(X), \dots, X^Nf_{t-1}(X)$ and $g_{i,j}(X)$ satisfy~\eqref{eq:fg}.
  Furthermore, $\vA = \bar\vA(\{X^Nf_i\},\{g_{i,j}\})$. Applying Lemma~\ref{lmm:inv} gives the result.
\end{proof}

We show Lemma~\ref{lmm:inv} using a standard divide-and-conquer algorithm by partitioning $\vA^T$ into blocks. In order to solve
\[
    \vA^T\vx = \begin{bmatrix}\vA^T_{11} & \vA^T_{12} \\ \vzero & \vA^T_{22} \end{bmatrix}\vx = \vy,
\]
three steps are required. First, we see $(\vA_{22})^T\vx_2 = \vy_2$, so solving for $\vx_2$ is a recursive call. Then we have $\vA_{11}^T \vx_1 + (\vA^T)_{12}\vx_2 = \vy_1$. Note that multiplication by $\vA^T_{12}$ can be done with a call to Algorithm~\ref{algo:transpose-mult} (padding the input with $0$s), so $\vz = \vy_1 - (\vA^T)_{12}\vx_2$ can be computed efficiently. Finally, $\vA_{11}^T\vx_1 = \vz$, so solving for $\vx_1$ is another recursive call.

In order to run the above, it suffices to show that the standard block decomposition of $\vA$ yields matrices of the same structure.

\begin{lmm}
  \label{lmm:inv-recursive}
  Given $\{f_i\}, \{g_{i,j}\}$ satisfying~\eqref{eq:fg} and $\vA = \bar\vA(\{f_i\},\{g_{i,j}\})$, let $\vA' = \vA[0:N/2, 0:N/2]$ or $\vA[N/2:N, N/2:N]$. There exists $\{\hat{f}_i\}, \{\hat{g}_{i,j}\}$ satisfying~\eqref{eq:fg} (for dimensions $N/2$ instead of $N$) such that $\vA' = \bar\vA(\{\hat{f}_i\}, \{\hat{g}_{i,j}\})$. That is, the two triangular subblocks of $\vA$ have the same structure as itself.
\end{lmm}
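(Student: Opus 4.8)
\textbf{Proof proposal for Lemma~\ref{lmm:inv-recursive}.}

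The plan is to verify directly that each of the two diagonal subblocks $\vA[0{:}N/2,0{:}N/2]$ and $\vA[N/2{:}N,N/2{:}N]$ arises from a recurrence of the form~\eqref{eq:intro-recur-poly} with new initial polynomials, and that the degree bounds~\eqref{eq:fg} (scaled down to dimension $N/2$) are preserved. Recall the construction $\vA = \bar\vA(\{f_i\},\{g_{i,j}\})$ extracts the coefficients of degrees $X^N,\dots,X^{2N-1}$ of the polynomials $f_i(X)$: namely $\vA[j,i]$ is the coefficient of $X^{N+j}$ in $f_i(X)$. So the subblock $\vA[0{:}N/2,0{:}N/2]$ records the coefficients of $X^N,\dots,X^{N+N/2-1}$ in $f_0(X),\dots,f_{N/2-1}(X)$, and $\vA[N/2{:}N,N/2{:}N]$ records the coefficients of $X^{N+N/2},\dots,X^{2N-1}$ in $f_{N/2}(X),\dots,f_{N-1}(X)$.

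First I would handle the top-left block. Set $n = N/2$. For $0 \le i < n$ define $\hat f_i(X) = f_i(X) \bmod X^{n+i}$ (drop all terms of degree $\ge n+i$); we must check these serve as valid initial polynomials, i.e. $\deg(\hat f_i) \le n + i$, which is immediate. Take $\hat g_{i,j}(X) = g_{i,j}(X)$ unchanged, so the degree condition $\deg(\hat g_{i,j}) \le j$ still holds. The key claim is that running recurrence~\eqref{eq:intro-recur-poly} with $\{\hat f_i\}_{i<n}$ and $\{\hat g_{i,j}\}$ produces polynomials $\hat f_i(X)$ for $n \le i < N$ whose coefficients of degrees $X^n,\dots,X^{n+j-1}$ agree with those of $f_i(X)$ — wait, more precisely, we need the coefficients of degrees $X^{?}$. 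Actually the cleanest route: note that the map "coefficient of $X^N$ through $X^{N+n-1}$" is the same as the map "read off the degree-$[N{:}N+n)$ window". Since $\deg(g_{i,j}) \le j \le t$ is small, truncating $f_{i-j}(X)$ modulo $X^{n+i}$ before multiplying by $g_{i,j}(X)$ only affects coefficients of degree $\ge n+i$ (roughly), and in particular does not disturb coefficients in the window $[N{:}N+n)$ since $n + i < N + n$ is false in general — so I need to be careful about exactly which truncation window is invariant. The correct invariant to track, which I would prove by induction on $i$, is that $\hat f_i(X) \equiv f_i(X) \pmod{X^{n+i}}$ \emph{and} that the coefficients of $X^N,\dots,X^{N+i}$ of $f_i$ lie in this preserved range precisely because $\deg f_i \le N + i$ so only the top $\le i+1$ coefficients matter, and these are at positions $\le N+i < n + N + \dots$; rather than guess, I would just carry the bookkeeping: the degree-$N$ extraction of $\vA[0{:}n,0{:}n]$ involves coefficients of $f_i$ at positions in $[N, N+n)$, and since $\deg f_i \le N+i \le N + n - 1$ for $i < n$ these are exactly the top coefficients, all of which are determined (via the recurrence) by the $\hat f_{i'}$; so defining $\hat f_i$ to keep exactly those high-order coefficients (shifted) and checking closure under the recurrence is the content. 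For the bottom-right block the argument is the "shifted" analog of the work-horse / self-similarity phenomenon already invoked in Section~\ref{sec:transpose}: by the structure of the recurrence, $f_{n}(X),\dots,f_{n+t-1}(X)$ (truncated appropriately to have the right degrees relative to the new index origin) serve as initial polynomials, and $f_{n+\ell}(X)$ for $\ell \ge t$ is obtained from them by the same $\{g_{n+\ell,j}\}$, re-indexed; the extracted window $[N+n, 2N)$ for these polynomials, of which there are $\le \ell+1$ meaningful top coefficients since $\deg f_{n+\ell} \le N + n + \ell$, is again a valid instance of $\bar\vA$ at dimension $n$ with $\hat g_{i,j} = g_{n+i,j}$.

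The main obstacle I expect is the careful degree bookkeeping: pinning down exactly which truncation of the $f_i$ is invariant under the recurrence so that (a) the new initial polynomials satisfy $\deg \hat f_i \le n + i$, (b) the new recurrence coefficients still satisfy $\deg \hat g_{i,j} \le j$, and (c) the degree-$n$ extraction window of the new construction reproduces exactly the degree-$N$ extraction window of the corresponding subblock of $\vA$. I would organize this as a single induction on the row index showing $\hat f_i(X) \equiv c \cdot X^{-N+n} f_i(X)$ restricted to the relevant coefficient window (with suitable interpretation of negative shifts as coefficient-index translation), using that multiplication by a degree-$\le j \le t$ polynomial $g_{i,j}$ only mixes each target coefficient with a bounded window of source coefficients, so that the subblock's window stays closed. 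Once this invariant is established, Lemma~\ref{lmm:inv} follows by the divide-and-conquer sketched above — recursive solves on the two diagonal subblocks plus one forward multiplication by the off-diagonal block via Algorithm~\ref{algo:transpose-mult} — with the standard recurrence $T(N) = 2T(N/2) + O(t^2\cM(N)\log N)$ giving the claimed $O(t^2 \cM(N)\log^2 N)$ bound, and Theorem~\ref{thm:recur-inverse} following by the reduction $\vA = \bar\vA(\{X^N f_i\},\{g_{i,j}\})$ already noted.
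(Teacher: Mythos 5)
Your overall plan --- showing that each diagonal sub-block is itself a $\bar\vA$-instance of dimension $N/2$ with truncated initial polynomials and the same (re-indexed) coefficients $g_{i,j}$ --- is the same as the paper's, but the concrete step that carries the lemma is wrong in your write-up, and the argument that would justify the correct version is never given. For the top-left block you set $\hat{f}_i(X) = f_i(X) \bmod X^{n+i}$ with $n=N/2$, i.e.\ you keep the \emph{low}-order coefficients and discard everything of degree $\ge n+i$. But $\bar\vA$ reads off the coefficients of $X^{N},\dots,X^{N+N/2-1}$, which lie entirely in the part you discarded: reduction mod $X^{n+i}$ preserves exactly the coefficients that are irrelevant to $\vA[0{:}N/2,0{:}N/2]$ and destroys the relevant ones (already visible for $t=1$, $g_{i,1}=1$, where all $f_i=f_0$ and your $\hat{f}_0$ has zero coefficients in the extraction window). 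Consequently the invariant you propose to induct on ($\hat{f}_i \equiv f_i \pmod{X^{n+i}}$ together with the claim that the coefficients of $X^N,\dots,X^{N+i}$ ``lie in this preserved range'') cannot hold, since those coefficients are outside the preserved range. The correct choice, which you gesture at (``keep exactly those high-order coefficients (shifted)'') but never commit to or verify, is $\hat{f}_i = \floor{f_i}_{N/2}$, the coefficients of order $\ge X^{N/2}$ shifted down by $N/2$; then $\deg(\hat{f}_i) \le N/2 + i$ as~\eqref{eq:fg} requires, and the missing ingredient is the degree-bookkeeping you explicitly defer: because $\deg(g_{i,j}) \le j$, passing from index $k$ to index $i' < N/2$ raises degrees by at most $i'-k < N/2$, so the discarded coefficients of degree $< N/2$ can never reach degree $\ge N$ within the first $N/2$ columns and hence cannot influence the sub-block. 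That closure argument is the actual content of Lemma~\ref{lmm:inv-recursive} and is absent from your proposal.

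For the bottom-right block your sketch is closer in spirit (restart the recurrence at $f_{N/2},\dots,f_{N/2+t-1}$ with coefficients $g_{i,j}$, $i \ge N/2$), but ``truncated appropriately'' again hides the whole point: the right initial polynomials are $\floor{f_{N/2+i}}_{N}$, which have degree $\le N/2+i$, and the same degree argument shows coefficients below $X^{N}$ never reach the window $[N+N/2,\,2N)$. You also omit that these initial polynomials are not part of the input and must be produced by applying $\vT_{[0:N/2]}$ to $f_0,\dots,f_{t-1}$ ($t^2$ polynomial multiplications), which is needed for the operation count in Lemma~\ref{lmm:inv} and Algorithm~\ref{algo:inv}. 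So while the architecture of your argument matches the paper's, the central truncation-and-closure step is incorrect as stated and only vaguely repaired, leaving a genuine gap.
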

\begin{proof}
  First consider $\vA[0:N/2,0:N/2]$. The idea is that knowing the higher-order coefficients of $f_0(X), \dots, f_{t-1}(X)$ is enough to recover $\vA[0:N/2,0:N/2]$ because the degree condition on $g_{i,j}(X)$ means the low-order coefficients of $f_{0:t}(x)$ cannot influence this submatrix. Define
  \[
    \floor{p(X)}_i = {p(X) - (p(X)\pmod{X^i}) \over X^i},
  \]
  i.e.\ the polynomial formed by coefficients of $p(X)$ of order at least $X^i$.

  It is easy to show
  \begin{align*}
    \vA[0:N/2,0:N/2] &= \bar\vA\left(\left\{\floor{f_i}_{N/2} : i<t\}, \{g_{i,j}:i<N/2\right\} \right) \\
    \vA[N/2:N,N/2:N] &= \bar\vA\left(\left\{\floor{f_i}_{N} : i \in [N/2:N/2+t] \}, \{g_{i,j}:i\ge N/2\right\} \right)
  \end{align*}

  Finally, note that $\{f_i: i \in [N/2 : N/2+t]\}$ are easy to compute because
  \[
    \begin{bmatrix} f_{N/2} \\ \vdots \\ f_{N/2+t-1} \end{bmatrix}
    =
    \vT_{[0:N/2]}
    \begin{bmatrix} f_{0} \\ \vdots \\ f_{t-1} \end{bmatrix}
  \]
  (see Lemma~\ref{lmm:structure2}). This matrix-vector multiplication by $\vT_{[0:N/2]}$ is $t^2$ polynomial multiplications.
\end{proof}

\begin{algorithm}
  \begin{algorithmic}[1]
    \renewcommand{\algorithmicrequire}{\textbf{Input:}}
    \renewcommand{\algorithmicensure}{\textbf{Output:}}
    \Require{$\vT_{[\ell:r]}$ for all dyadic intervals}
    \Require{$\vA$ parameterized by $\{f_i : i \in [t]\}, \{g_{i,j}: i \in [N], j \in [t]\}$}
    \Require{$\vy \in \F^N$}
    \Ensure{$\vx = \bar\vA\left( \{f_i\},\{g_{i,j}\} \right)^{-1}\vy$}
    \State $\vx_1 \gets \Call{InverseMult}{\left\{\floor{f_i}_{N/2} : i<t\}, \{g_{i,j}:i<N/2\right\}, \vy_2}$
    \State $\vz \gets \Call{TransposeMult}{\vA_{12}^T, \vx_1}$
    \label{step:inv:mult}
    \State $\vx_2 \gets  \Call{InverseMult}{\left\{\floor{f_i}_{N} : i \in [N/2:N/2+t] \}, \{g_{i,j}:N/2\le i\right\}, \vz}$
    \label{step:inv:recurse2}
    \State \Return{$(\vx_1, \vx_2)$}
  \end{algorithmic}
  \caption{\textsc{InverseMult}}
  \label{algo:inv}
\end{algorithm}

We present the entire algorithm in Algorithm~\ref{algo:inv}.
This entire algorithm only requires the same pre-processing step as Algorithm~\ref{algo:transpose-mult}, which takes $O(t^\omega \cM(N)\log{N})$ operations by Lemma~\ref{cor:transpose-mult-pre}. Given this pre-processing, Step~\ref{step:inv:mult} requires $O(t^2\cM(N)\log{N})$ operations by Theorem~\ref{thm:mainATb}, and Step~\ref{step:inv:recurse2} requires $O(t^2\cM(N))$ operations as mentioned in the proof of Lemma~\ref{lmm:inv-recursive}. Therefore the entire algorithm requires $O(t^2\cM(N)\log^2{N})$ operations by the standard recursive analysis. This proves Lemma~\ref{lmm:inv}.

\subsection{Inverses of Matrices of Low Displacement Rank}
Along with Definition~\ref{defn:intro-width}, the other main class of matrices that recurrence width generalizes is those of low displacement rank. In this section we consider the problem of matrix-vector multiplication by the inverses of matrices of low displacement rank, or ``inverse multiplication'' for short.

The problem of inverting a matrix $\vA$ with low displacement rank is well-understood when $\vA$ is Toeplitz-like, Hankel-like, Vandermonde-like, or Cauchy-like~\cite{pan2000nearly}. These algorithms are generally variants of the Morf/Bitmead-Anderson algorithm~\cite{bitmead1980}. Given a matrix of low displacement rank, its inverse also has low displacement rank\footnote{e.g.\ for Sylvester-type displacement: If $\rank(\vL\vA-\vA\vR) = r$, then $\rank(\vR\vA^{-1}-\vA^{-1}\vL) = \rank(\vA^{-1}(\vL\vA-\vA\vR) \vA^{-1}) = r$.}, so it suffices to compute a representation of the displacement structure.

\subsubsection{Triangular $\vL,\vR$}
A general methodology for computing inverses of these matrices is due to Jeannerod and Mouilleron~\cite{jeannerod2010}, who provide an algorithm for computing specified generators of the inverse of a matrix $\vA$ with low displacement rank with respect to triangular $\vL$ and $\vR$. Their framework is general enough to be applicable to the triangular banded matrices of Theorem~\ref{thm:intro-dr}. Analogously to the result of Section~\ref{subsec:further:inverse}, their algorithm reduces to computing matrix-vector multiplication by $\vA$.

\begin{thm}[\cite{jeannerod2010}]
  Consider a class $\calC$ of square matrices such that if $\vL \in \calC$, then $\vL$ is triangular and every principal and trailing square sub-matrix (e.g.\ $\vL[0:i, 0:i]$) also lies in $\calC$.
  
  Consider $\vL,\vR \in \F^{N \times N}$ that belong to $\calC$ and $\vG,\vH \in \F^{N \times r}$ such that the equation $\vL\vA-\vA\vR = \vG\vH^T$ defines a strongly regular matrix $\vA$.\footnote{A matrix is strongly regular if every principal minor is non-zero. This is typically assumed in the displacement rank literature because a reduction can be made with an efficient probabilistic preconditioning step~\cite{kaltofen1994} (see also Lemma~\ref{lmm:dr-compress}).} Note that $\vR\vA^{-1}-\vA^{-1}\vL = \vA^{-1}(\vL\vA-\vA\vR) \vA^{-1} = (\vA^{-1}\vG)(\vA^{-T}\vH)^{-1}$, i.e.\ $\vA^{-1}$ has low displacement rank with generators $\vA^{-1}\vG, \vA^{-T}\vH$. These generators can be computed in $O(f_{\calC,r}(N) \log{N})$ operations, where $f_{\calC,r}(N)$ is the cost of multiplying $\vA\vB$ where $\vA \in \F^{N \times N}$ has displacement rank $r$ w.r.t. $\vL,\vR \in \calC$ and $\vB \in \F^{N \times r}$.
\end{thm}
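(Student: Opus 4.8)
The plan is to prove this by the classical divide-and-conquer scheme of Morf and of Bitmead--Anderson for inverting matrices with displacement structure, carried out abstractly over the class $\calC$. Assume $N$ is even (pad otherwise), take $\vL$ lower triangular and $\vR$ upper triangular, and partition every matrix at the midpoint into $2\times2$ blocks, so that $\vA = \begin{bmatrix}\vA_{11} & \vA_{12} \\ \vA_{21} & \vA_{22}\end{bmatrix}$ and $\vG = \begin{bmatrix}\vG_1\\ \vG_2\end{bmatrix}$, $\vH = \begin{bmatrix}\vH_1\\ \vH_2\end{bmatrix}$. Reading off the $(1,1)$-block of $\vL\vA-\vA\vR=\vG\vH^T$ (using triangularity, $\vL_{12}=\vzero$ and $\vR_{21}=\vzero$) gives $\vL_{11}\vA_{11}-\vA_{11}\vR_{11}=\vG_1\vH_1^T$, so $\vA_{11}$ has displacement rank at most $r$ with respect to $\vL_{11},\vR_{11}$, which lie in $\calC$ by hypothesis. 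Strong regularity of $\vA$ makes $\vA_{11}$ invertible, and since $\det\bigl(\vA[{:}k,{:}k]\bigr)=\det(\vA_{11})\det\bigl(\vS[{:}(k-N/2),{:}(k-N/2)]\bigr)$ for $k>N/2$, the Schur complement $\vS = \vA_{22}-\vA_{21}\vA_{11}^{-1}\vA_{12}$ is also strongly regular. A direct computation chaining the four block equations (and substituting $\vL_{11}\vA_{11}-\vA_{11}\vR_{11}=\vG_1\vH_1^T$ where needed) yields the Schur--displacement identity
\[
  \vL_{22}\vS-\vS\vR_{22} = \bigl(\vG_2-\vA_{21}\vA_{11}^{-1}\vG_1\bigr)\bigl(\vH_2-\vA_{12}^{T}\vA_{11}^{-T}\vH_1\bigr)^{T},
\]
so $\vS$ has displacement rank at most $r$ with respect to $\vL_{22},\vR_{22}\in\calC$, with explicit generators $\vG_\vS=\vG_2-\vA_{21}\vA_{11}^{-1}\vG_1$ and $\vH_\vS=\vH_2-\vA_{12}^{T}\vA_{11}^{-T}\vH_1$.

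Given these structural facts, the recursion is straightforward. First I would recursively compute displacement generators of $\vA_{11}^{-1}$ from $\vG_1,\vH_1$. Next I would form $\vG_\vS,\vH_\vS$: this costs a constant number of multiplications of a displacement-rank-$r$ matrix by an $(N/2)\times r$ matrix, because multiplication by $\vA_{11}^{-1}$ uses the generators just computed, and $\vA_{21}\vB$, $\vA_{12}^T\vB$ are sub-blocks of $\vA\begin{bmatrix}\vB\\ \vzero\end{bmatrix}$ and $\vA^T\begin{bmatrix}\vzero\\ \vB\end{bmatrix}$ respectively, hence cost $O(f_{\calC,r}(N))$. Then I would recurse to obtain generators of $\vS^{-1}$ from $\vG_\vS,\vH_\vS$; strong regularity of $\vS$ makes this well-defined. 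Finally, using the block factorization
\[
  \vA^{-1} = \begin{bmatrix}\vI & -\vA_{11}^{-1}\vA_{12}\\ \vzero & \vI\end{bmatrix}
  \begin{bmatrix}\vA_{11}^{-1} & \vzero\\ \vzero & \vS^{-1}\end{bmatrix}
  \begin{bmatrix}\vI & \vzero\\ -\vA_{21}\vA_{11}^{-1} & \vI\end{bmatrix},
\]
the target generators $\vA^{-1}\vG$ and $\vA^{-T}\vH$ are computed by applying the three factors to an $N\times r$ matrix: the middle block-diagonal factor uses the recursively computed inverse generators (on half-size blocks), and the two unipotent factors require only multiplications by $\vA_{12},\vA_{21}$ (sub-blocks of $\vA$, as above) and by $\vA_{11}^{-1}$. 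All of these cost $O(f_{\calC,r}(N))$, so $T(N)\le 2T(N/2)+O(f_{\calC,r}(N))$, which solves to $T(N)=O(f_{\calC,r}(N)\log N)$ since $f_{\calC,r}$ is at least linear; for the triangular banded operators of Theorem~\ref{thm:intro-dr} this gives the claimed bound.

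The hard part will be the middle step: verifying the Schur--displacement identity precisely and, more importantly, organizing the computation so that $\vS$ (and the auxiliary products $\vA_{11}^{-1}\vA_{12}$, $\vA_{21}\vA_{11}^{-1}$) are never formed explicitly but only ever enter through their action on $O(r)$-column matrices, since otherwise the per-level cost would degrade beyond $O(f_{\calC,r}(N))$. This is exactly where the hypothesis that $\calC$ is closed under taking leading and trailing principal square sub-matrices is used, as it guarantees $\vL_{11},\vR_{11},\vL_{22},\vR_{22}\in\calC$ at every level, so the same multiplication cost $f_{\calC,r}$ governs the whole recursion. As is standard, strong regularity of $\vA$ may be assumed without loss of generality after a randomized preconditioning step (cf.\ the discussion around Lemma~\ref{lmm:dr-compress}), and the analogous argument for the Stein operator follows by the same block manipulation after the usual change of operator $\vL\mapsto\vL^{-1}$.
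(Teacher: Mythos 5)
This theorem is quoted from \cite{jeannerod2010}; the paper itself gives no proof of it, so there is no in-paper argument to compare against directly. The closest in-paper analogue is the inverse algorithm for quasiseparable operators (Algorithm~\ref{algo:dr-inverse}), which follows the same Morf/Bitmead--Anderson strategy you propose, except that it explicitly tracks the growth of the displacement rank of the intermediate blocks and recompresses via Lemma~\ref{lmm:dr-compress}. Within its scope your argument is sound: the block bookkeeping is right, the Schur--displacement identity you state is correct (chaining the four block equations and substituting $\vL_{11}\vA_{11}-\vA_{11}\vR_{11}=\vG_1\vH_1^T$ does give $\vL_{22}\vS-\vS\vR_{22}=(\vG_2-\vA_{21}\vA_{11}^{-1}\vG_1)(\vH_2-\vA_{12}^{T}\vA_{11}^{-T}\vH_1)^{T}$), strong regularity of $\vS$ follows from your determinant identity, and the recurrence $T(N)\le 2T(N/2)+O(f_{\calC,r}(N))$ is legitimate because every per-level product is a structured-matrix-times-$O(r)$-columns product whose operators stay in $\calC$ (and whose spectra stay disjoint, so the generator representations of $\vA_{11}$ and $\vS$ are well defined and support fast multiplication).

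The genuine gap is the opening assumption ``take $\vL$ lower triangular and $\vR$ upper triangular,'' which is not a harmless normalization under the stated hypotheses. The class $\calC$ is only required to consist of triangular matrices closed under leading and trailing principal submatrices, and the statement as used in the paper (Corollary~\ref{cor:disp-inverse}, Theorem~\ref{thm:intro-dr}, e.g.\ the confluent Cauchy-like case where $\vL$ and $\vR$ are both Jordan-form and hence bidiagonal of the \emph{same} orientation) includes same-orientation pairs. In that case $\vR_{21}\neq\vzero$, the $(1,1)$ block reads $\vL_{11}\vA_{11}-\vA_{11}\vR_{11}=\vG_1\vH_1^T+\vA_{12}\vR_{21}$, and the Schur identity acquires analogous extra terms; displacement rank is no longer preserved exactly, so the clean rank-$r$ recursion, and with it the claimed bound $O(f_{\calC,r}(N)\log N)$, does not follow. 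For banded operators the extra rank per level is at most the bandwidth, and one can salvage an algorithm by recompressing with Lemma~\ref{lmm:dr-compress} while tolerating rank $r+O(\Delta\log N)$ in the recursive calls --- exactly the route the paper takes in the quasiseparable setting --- but that yields a weaker bound than the one asserted here. So either you must restrict to pairs with the compatible block-lower/block-upper orientation (the setting in which the classical MBA identity you use is valid), or you need the additional organization of \cite{jeannerod2010} that handles general triangular pairs; as written, the proof does not cover the full statement.
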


For our purposes, we consider $\calC$ to be the triangular $(\Delta+1)$-band matrices. Corollary~\ref{cor:disp-rank} gives the complexity of multiplication for matrices of low displacement rank with respect to this class. Finally, we note that although stated for Sylvester-type displacement, their technique also works for the Stein-type displacement operator. Therefore all the matrices of low displacement rank that we cover in Section~\ref{sec:disp-rank} also admit fast inverses. Note that the following runtime includes the cost of pre-processing $\vA$ to be able to run the matrix-vector multiplication algorithm needed above.
\begin{cor}
  \label{cor:disp-inverse}
  Let $\vL,\vR$ be triangular $(\Delta+1)$-band matrices and $\vA$ be a matrix of displacement rank $r$ with respect to $\vL,\vR$. Generators for $\vA^{-1}$ can be found in $O(\Delta^{\omega-1}(\Delta+r)\cM(N)\log{N} + \Delta^2 r^2 \cM(N) \log^2{N})$ operations.
\end{cor}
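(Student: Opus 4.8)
The plan is to instantiate the Jeannerod--Mouilleron framework (the theorem quoted immediately above) with $\calC$ taken to be the class of triangular $(\Delta+1)$-band matrices, and then feed in the cost estimates from Corollary~\ref{cor:disp-rank}. First I would verify that $\calC$ satisfies the hypothesis of the cited theorem: any triangular $(\Delta+1)$-band matrix $\vL$ has the property that every leading principal submatrix $\vL[0:i,0:i]$ and every trailing submatrix $\vL[i:N,i:N]$ is again triangular and $(\Delta+1)$-banded (bandedness and triangularity are clearly inherited by such contiguous principal submatrices), so $\calC$ is closed in the required sense. Hence the theorem applies to $\vL,\vR \in \calC$ and generators $\vG,\vH \in \F^{N\times r}$, assuming $\vA$ is strongly regular (the standard assumption, reducible to by the probabilistic preconditioning of~\cite{kaltofen1994}; see also Lemma~\ref{lmm:dr-compress}).

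Next I would pin down the quantity $f_{\calC,r}(N)$ appearing in the runtime of the cited theorem: it is the cost of computing $\vA\vB$ where $\vA\in\F^{N\times N}$ has displacement rank $r$ with respect to a pair of matrices in $\calC$, and $\vB\in\F^{N\times r}$. This is exactly a matrix-matrix multiplication of the kind analyzed in Section~\ref{subsec:matrix-matrix}: it amounts to $r$ simultaneous vector multiplications, so by Corollary~\ref{cor:disp-rank} together with the matrix-matrix observation, the bottleneck is replacing the $\alpha_{\Delta,r}$ (the $t\times t$ by $t\times r$ multiplication cost) appropriately. Concretely, multiplication by $\vA$ for this class costs $O(\Delta^2 r\,\cM(N)\log N)$ operations with $O(\Delta^{\omega-1}(\Delta+r)\cM(N)\log N)$ pre-processing (Corollary~\ref{cor:disp-rank}); computing $\vA\vB$ for $\vB$ with $r$ columns is then $O(\Delta^2 r^2 \cM(N)\log N)$ after the same pre-processing (or, via Section~\ref{subsec:matrix-matrix}, one can batch the $r$ columns for a possibly better but for our stated bound unnecessary estimate). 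So $f_{\calC,r}(N) = O(\Delta^{\omega-1}(\Delta+r)\cM(N)\log N + \Delta^2 r^2 \cM(N)\log N)$, where the first term accounts for the pre-processing needed before the multiplications can be run.

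Then the cited theorem yields generators $\vA^{-1}\vG,\ \vA^{-T}\vH$ for $\vA^{-1}$ (using $\vR\vA^{-1}-\vA^{-1}\vL = \vA^{-1}(\vL\vA-\vA\vR)\vA^{-1} = (\vA^{-1}\vG)(\vA^{-T}\vH)^T$, so $\vA^{-1}$ indeed has displacement rank $r$ with respect to $\vR,\vL$) in $O(f_{\calC,r}(N)\log N)$ operations, i.e.\ $O(\Delta^{\omega-1}(\Delta+r)\cM(N)\log^2 N + \Delta^2 r^2 \cM(N)\log^2 N)$. Since the pre-processing term $\Delta^{\omega-1}(\Delta+r)\cM(N)\log N$ is incurred only once (not per level of the $O(\log N)$-depth recursion in the Jeannerod--Mouilleron algorithm), it does not pick up the extra $\log N$, which is why the final bound reads $O(\Delta^{\omega-1}(\Delta+r)\cM(N)\log N + \Delta^2 r^2 \cM(N)\log^2 N)$ as stated in Corollary~\ref{cor:disp-inverse}. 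Finally, I would remark that although the quoted theorem of~\cite{jeannerod2010} is phrased for the Sylvester operator, the same divide-and-conquer works verbatim for the Stein operator (one uses the analogous identity $\vA^{-1} - \vR\vA^{-1}\vL$ has the appropriate low-rank form), so the conclusion covers all the displacement-structured matrices from Section~\ref{sec:disp-rank} with triangular banded $\vL,\vR$.

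The main obstacle I anticipate is not in the reduction itself — which is essentially a black-box invocation — but in correctly accounting for which costs are one-time pre-processing versus per-recursion-level, so that the $\log N$ factors come out as claimed; in particular one must check that the recursive calls in the Jeannerod--Mouilleron scheme operate on \emph{sub}matrices (whose $\vL,\vR$ restrictions remain in $\calC$ with the same bandwidth $\Delta$) of geometrically decreasing size, so that the pre-processing of the banded $\vT_{[\ell:r]}$ transition matrices can be shared across the recursion rather than recomputed, exactly paralleling the space/time discussion in Section~\ref{subsec:space}. A secondary technical point to handle carefully is the strong-regularity assumption: one should cite the standard preconditioning reduction (Lemma~\ref{lmm:dr-compress} / \cite{kaltofen1994}) and note that the preconditioner does not increase the bandwidth class $\calC$ beyond a constant factor, or absorb it into $r$.
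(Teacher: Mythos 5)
Your proposal is correct and follows essentially the same route as the paper: the paper also obtains Corollary~\ref{cor:disp-inverse} by instantiating the Jeannerod--Mouilleron theorem with $\calC$ the triangular $(\Delta+1)$-band matrices and plugging in the multiplication costs from Corollary~\ref{cor:disp-rank} (with the Stein case and strong regularity handled exactly as you indicate). Your explicit accounting of the pre-processing as a one-time cost shared across the recursion (so only the $\Delta^2 r^2\cM(N)\log N$ multiplication cost picks up the extra $\log N$) is precisely the bookkeeping the paper's stated bound relies on, just spelled out in more detail than the paper gives.
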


Given a displacement representation of $\vA^{-1}$, the product $\vA^{-1}\vb$ can be computed in $O(r\Delta^2 \cM(N)\log{N})$ by Corollary~\ref{cor:disp-rank}.

For constant $\Delta$, Corollary~\ref{cor:disp-inverse} matches the classic bounds of $\tO(r^2 N)$ for Toeplitz,Hankel,Cauchy,Vandermonde-like matrices~\cite{pan2000nearly}.
Recent works have shown that the inverses of these (and generalizations to block companion matrices) can be computed in $\tO(r^{\omega-1}N)$ time~\cite{bostan2007,bostan2017}.
Improving Corollary~\ref{cor:disp-inverse} to match this bound is a question for further exploration.

\subsubsection{Quasiseparable $\vL$, $\vR$}
For our most general class of displacement rank, since the displacements $\vL$ and $\vR$ are no longer triangular, there are no existing results that cover this setting. However, the standard Morf/Bitmead-Anderson technique~\cite{bitmead1980} of computing generators for the inverses of matrices with low displacement rank and its extensions to other matrices with displacement structure~\cite{P01} will still work, with small modifications. The general approach follows Pan's ``compress, operate, decompress'' motto for directly manipulating the compact generators of the displacement structure~\cite{P01}.

The problem is given $\vL\vA-\vA\vR = \vG\vH^T$, to find any $\vG', \vH' \in \F^{n \times r}$ such that $\vR\vA^{-1}-\vA^{-1}\vL = \vG'\vH'^T$. We sketch here a simplified randomized algorithm to illustrate the process and give an overview of the proof here in order to analyze the complexity in our case; for more details of this type of algorithm, see~\cite{kaltofen1994}.
More modern unified algorithms for inverting classic displacement structured matrices exist~\cite{pan2000nearly,P01,jeannerod2010,HLS17} and can still be adapted to this setting with the same asymptotic bounds.

The following Lemma, due to Kaltofen~\cite{kaltofen1994}, provides the randomization step that allows generators to be compressed.
\begin{lmm}[\cite{kaltofen1994}]
  \label{lmm:dr-compress}
  Given $\sylv{\vL}{\vR}(\vA) = \vG\vH^T$ with $\vG,\vH \in \F^{n\times r}$, and knowing $\rank \sylv{\vL}{\vR}(\vA) = p$, then we can compute $\vG',\vH' \in \F^{n \times p}$ s.t. $\sylv{\vL}{\vR}(\vA) = \vG'\vH'^T$ in $O(rpN + p\cM(N))$ operations. The algorithm is randomized and requires $2N-2$ uniformly random elements from a set $S \subset \F$, and is correct with probability at least $1-r(r+1)/|S|$.
\end{lmm}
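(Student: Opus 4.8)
\textbf{Proof plan for Lemma~\ref{lmm:dr-compress}.}

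The plan is to exploit a classical fact: if $\vM \in \F^{N \times N}$ has rank exactly $p$, and we form $\vM \vU$ where $\vU \in \F^{N \times p}$ has entries that are ``generic enough'' (e.g.\ a Vandermonde-like or structured random matrix built from $N$ random field elements), then with high probability $\rank(\vM\vU) = p$ and moreover the column space of $\vM\vU$ equals the column space of $\vM$; symmetrically on the left. So first I would pick $x_1, \dots, x_{N}$ uniformly at random from $S$ and build the matrix $\vU = (x_i^{\,j})_{0 \le i < N, 0 \le j < p}$ (an $N \times p$ truncated Vandermonde matrix), and similarly $\vV$ from a second batch of $N$ random elements $y_1, \dots, y_N$. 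Then set $\widetilde\vG = \vG(\vH^T\vU) \in \F^{N \times p}$ and $\widetilde\vH = \vH(\vG^T\vV) \in \F^{N \times p}$, which are just $\sylv{\vL}{\vR}(\vA)\vU$ and $\sylv{\vL}{\vR}(\vA)^T\vV$ respectively. The point is that $\vH^T \vU$ and $\vG^T\vV$ are $r \times p$ matrices computable in $O(rpN)$ operations (structured-matrix times dense-matrix, or even naively since $\vU$ is Vandermonde and $\vU^T\vw$ is a multipoint evaluation costing $\cM(N)$ per column, hence $p\cM(N)$ total), and then $\vG(\vH^T\vU)$ is another $O(rpN)$ multiplication.

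Next I would argue the rank/column-space claim. Write $\vM = \sylv{\vL}{\vR}(\vA)$, so $\vM = \vG\vH^T$ has rank $p$. After a change of basis we may assume the first $p$ rows/columns of a suitable restriction are independent; then $\det$ of the relevant $p \times p$ minor of $\vM\vU$, as a function of the $x_i$, is a nonzero polynomial of degree at most $p(N-1) < (N-1)\cdot r$ (using $p \le r$), hence by Schwartz--Zippel it vanishes with probability at most $p(N-1)/|S| < r(N-1)/|S|$; doing this on both sides and union-bounding gives failure probability at most $r(r+1)/|S|$ after accounting for the $2N-2$ random elements. Conditioned on success, $\mathrm{colspace}(\widetilde\vG) = \mathrm{colspace}(\vM)$ and $\mathrm{colspace}(\widetilde\vH) = \mathrm{colspace}(\vM^T)$, i.e.\ $\mathrm{rowspace}(\widetilde\vH^T) = \mathrm{rowspace}(\vM)$.

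The final step is to recover $\vG', \vH' \in \F^{N \times p}$ with $\vG'\vH'^T = \vM$ from $\widetilde\vG, \widetilde\vH$. Since $\widetilde\vG$ has full column rank $p$ and its columns span $\mathrm{colspace}(\vM)$, there is a unique $p \times p$ matrix $\vZ$ with $\vM = \widetilde\vG \vZ \widetilde\vH^T$; concretely, picking index sets $I, J$ of size $p$ for which $\widetilde\vG[I,:]$ and $\widetilde\vH[J,:]$ are invertible, one has $\vZ = (\widetilde\vG[I,:])^{-1} \, \vM[I,J] \, (\widetilde\vH[J,:])^{-T}$, and $\vM[I,J]$ is extractable in $O(p^2 r)$ operations from $\vG, \vH$. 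Then set $\vG' = \widetilde\vG \vZ$ and $\vH' = \widetilde\vH$, costing $O(p^\omega + p^2 r)$, which is dominated by the earlier terms. Collecting costs gives $O(rpN + p\cM(N))$ as claimed.

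The main obstacle is the second step: cleanly stating and justifying the probabilistic rank-preservation claim for the chosen family of $\vU$ (the structured Vandermonde choice is what keeps the cost at $p\cM(N)$ rather than $pN\cM$-free but worse, and it is what forces the degree count for Schwartz--Zippel to come out as $p(N-1)$), and getting the union bound to land exactly at $r(r+1)/|S|$. Everything else — the matrix arithmetic and the reconstruction of $\vZ$ — is routine bookkeeping. I would cite \cite{kaltofen1994} for the original version of this randomized compression and present only the adaptation needed here, namely that the degrees and costs are unchanged because the quasiseparable structure of $\vL, \vR$ never enters this step: compression only touches the error matrix $\vG\vH^T$, not $\vA$ itself.
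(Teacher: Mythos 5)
You should note at the outset that the paper itself does not prove this lemma: it is quoted from Kaltofen~\cite{kaltofen1994}, whose argument preconditions the displacement $\vG\vH^T$ on both sides by random \emph{unit triangular Toeplitz} matrices (hence exactly $2N-2$ random parameters), invokes the generic-rank-profile property of such preconditioned matrices, and then truncates the generators; the probability $1-r(r+1)/|S|$ arises as $2(1+2+\dots+r)/|S|$ from the degrees of the leading minors, and the $p\cM(N)$ term is the cost of applying the Toeplitz preconditioners to the retained generator columns. Your sketch-and-reconstruct route — right/left multiplication by truncated Vandermonde sketches $\vU,\vV$, then recovery of the $p\times p$ middle factor from index sets $I,J$ — is a genuinely different compression scheme, and the reconstruction step you outline (existence and uniqueness of $\vZ$, the formula via $\vM[I,J]$) is sound and routine.

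The gap is in the randomization analysis, which is exactly what you flag as the main obstacle. First, the degree bookkeeping is wrong: with $\vU[i,j]=x_i^{\,j}$, $j<p$, each entry of $\vM\vU$ has degree at most $p-1$ in the $x_i$'s, so the relevant $p\times p$ minor has total degree at most $p(p-1)/2$, not $p(N-1)$; with your bound the failure probability scales with $N$, and the claim that a union bound then "lands at $r(r+1)/|S|$" does not follow ($2r(N-1)/|S|\not\le r(r+1)/|S|$ in general). Second, even with the corrected degree count, Schwartz--Zippel applies only after you show the minor is not identically zero as a polynomial in $x_1,\dots,x_N$, i.e.\ that \emph{some} choice of nodes in the truncated-Vandermonde family gives $\rank(\vM\vU)=p$; this needs an actual argument (e.g.\ Cauchy--Binet together with linear independence of the maximal Vandermonde minors, or an explicit specialization), and it is precisely the step you defer. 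Finally, the exact constants in the statement ($2N-2$ random elements, probability $1-r(r+1)/|S|$, the $p\cM(N)$ term) are artifacts of Kaltofen's construction; with your sketches you would use $2N$ random points, the natural bound is roughly $p(p-1)/|S|$, and the $p\cM(N)$ cost does not arise where you place it (a transposed Vandermonde application costs $O(\cM(N)\log N)$ per column, or one computes $\vH^T\vU$ naively in $O(rpN)$). So either redo the constants for your own construction, or simply cite \cite{kaltofen1994} as the paper does.
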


\begin{algorithm}
  \begin{algorithmic}[1]
    \renewcommand{\algorithmicrequire}{\textbf{Input:}}
    \renewcommand{\algorithmicensure}{\textbf{Output:}}
    \Require{$\vL,\vR$ $t$-quasiseparable, $\vG,\vH\in\F^{N\times r}$ s.t. $\vG\vH^T = \sylv{\vL}{\vR}(\vA)$}
    \Ensure{$\vG',\vH'\in\F^{N\times r}$ s.t. $\vG'\vH'^T = \sylv{\vR}{\vL}(\vA^{-1})$}
    \State Compute $\vG_{ij},\vH_{ij}\in\F^{N \times O(r+t)}$ s.t. $\sylv{\vL_{ii}}{\vR_{jj}}(\vA_{ij}) = \vG_{ij}\vH_{ij}^T$
    \label{step:dr-inverse:1}
    \State $\vG_{11}',\vH_{11}' \gets \Call{DisplacementInverse}{\vL_{11},\vR_{11},\vG_{11},\vH_{11}}$
    \label{step:dr-inverse:2}
    \State Compute $\vG_{S},\vH_{S}\in\F^{N \times O(r+t)}$ s.t. $\sylv{\vL_{22}}{\vR_{11}}(\vS) = \vG_{S}\vH_{S}^T$
    \label{step:dr-inverse:3}
    \State $\vG_{S}',\vH_{S}' \gets \Call{DisplacementInverse}{\vL_{22},\vR_{22},\vG_{S},\vH_{S}}$
    \label{step:dr-inverse:4}
    \State Compute size $O(t+r)$ generators of $\sylv{\vL_{11}}{\vR_{11}}(\vB_{11})$ where $\vB_{11} = \vA_{11}^{-1} - \vA_{11}^{-1}\vA_{12}\vS^{-1}\vA_{21}\vA_{11}^{-1}$
    \label{step:dr-inverse:5}
    \State Compute size $O(t+r)$ generators of $\sylv{\vL_{11}}{\vR_{22}}(\vB_{12})$ where $\vB_{12} = \vA_{11}^{-1}\vA_{12}\vS^{-1}$
    \label{step:dr-inverse:6}
    \State Compute size $O(t+r)$ generators of $\sylv{\vL_{22}}{\vR_{11}}(\vB_{21})$ where $\vB_{21} = \vS^{-1}\vA_{21}\vA_{11}^{-1}$
    \label{step:dr-inverse:7}
    \State Compute size $O(t+r)$ generators of $\sylv{\vR}{\vL}(\vA^{-1})$ where $\vA^{-1} = \begin{bmatrix} \vB_{11} & \vB_{12} \\ \vB_{21} & \vS^{-1} \end{bmatrix}$
    \label{step:dr-inverse:8}
    \State Use Lemma~\ref{lmm:dr-compress} to compute size $r$ generators of $\sylv{\vR}{\vL}(\vA^{-1})$
  \end{algorithmic}
  \caption{\textsc{DisplacementInverse}}
  \label{algo:dr-inverse}
\end{algorithm}

The idea behind the inversion algorithm is to use the idea of Schur complements to reduce properties about $\vA$ and $\vA^{-1}$ into properties about $\vA_{11}$ and $\vS = \vA_{22}-\vA_{21}\vA_{11}^{-1}\vA_{12}$. In particular, $\vS$ can be found quickly from $\vA$, and then $\vA^{-1}$ can be expressed completely in terms of $\vA_{12},\vA_{21},\vA_{11}^{-1}$, and $\vS^{-1}$~\cite{MA}. We will also need the well-known multiplication identity for displacement rank~\cite{pan2002structured}
\begin{equation}
  \label{eq:dr-multiplication}
  \sylv{\vL}{\vR}(\vA\vB) = \sylv{\vL}{\vM}(\vA)\vB + \vA\sylv{\vM}{\vR}(\vB).
\end{equation}
In particular, if $\vA$ and $\vB$ have low displacement rank with ``compatible'' types, then $\vA\vB$ also has low displacement rank, and computing generators of it can be done with a small number of matrix-vector multiplications by $\vA$ and $\vB$.
The algorithm is sketched in Algorithm~\ref{algo:dr-inverse}.

\begin{lmm}
  Suppose that $\vL,\vR$ are both $t$-quasiseparable and we have a rank $r$ factorization of $\sylv{\vL}{\vR}(\vA)$. We can compute a rank $r$ factorization of $\sylv{\vR}{\vL}(\vA^{-1})$ in $\tO( (t+r)^2t^\omega N)$ operations.
\end{lmm}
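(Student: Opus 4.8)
The plan is to analyze Algorithm~\ref{algo:dr-inverse}, which is the classical Schur-complement (Morf / Bitmead--Anderson) inversion scheme adapted to quasiseparable displacement operators, and to show that every quantity it manipulates stays compressed to size $O(t+r)$ throughout. First I would set up the recursion. Assuming $\vA$ (and the relevant principal submatrices) invertible -- the standard strong-regularity hypothesis, attainable by the randomized preconditioning used in the displacement-rank literature -- partition $\vA,\vL,\vR$ into $2\times2$ blocks of size $N/2$ and let $\vS = \vA_{22} - \vA_{21}\vA_{11}^{-1}\vA_{12}$ be the Schur complement. The Morf/MA identity expresses the inverse entirely through $\vA_{11}^{-1}$, $\vS^{-1}$, $\vA_{12}$, $\vA_{21}$:
\[ \vA^{-1} = \begin{bmatrix} \vA_{11}^{-1} + \vA_{11}^{-1}\vA_{12}\vS^{-1}\vA_{21}\vA_{11}^{-1} & -\vA_{11}^{-1}\vA_{12}\vS^{-1} \\ -\vS^{-1}\vA_{21}\vA_{11}^{-1} & \vS^{-1} \end{bmatrix}. \]
Since leading and trailing principal submatrices of a $t$-quasiseparable matrix are again $t$-quasiseparable, inverting $\vA_{11}$ (displacement w.r.t.\ $\vL_{11},\vR_{11}$) and inverting $\vS$ (displacement w.r.t.\ $\vL_{22},\vR_{22}$) are two recursive calls of the same problem with the same $t$.

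Second, I would track displacement ranks, which is the heart of the argument. Restricting $\vL\vA-\vA\vR = \vG\vH^T$ to the $(i,j)$ block introduces terms involving the off-diagonal blocks of $\vL$ and $\vR$; by quasiseparability these have rank $\le t$, so $\sylv{\vL_{ii}}{\vR_{jj}}(\vA_{ij})$ has rank $O(t+r)$ (Step~\ref{step:dr-inverse:1}). Applying the displacement multiplication identity~\eqref{eq:dr-multiplication} to the Schur-complement formula, and choosing the intermediate operator to be the appropriate diagonal block of $\vL$ or $\vR$ so that the ``compatible types'' condition of~\eqref{eq:dr-multiplication} holds, shows $\vS$ has displacement rank $O(t+r)$ w.r.t.\ $(\vL_{22},\vR_{11})$ (Step~\ref{step:dr-inverse:3}). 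Using~\eqref{eq:dr-multiplication} once more on each product in the block formula gives generators of size $O(t+r)$ for each of $\vB_{11},\vB_{12},\vB_{21}$ (Steps~\ref{step:dr-inverse:5}--\ref{step:dr-inverse:7}); reassembling yields $\sylv{\vR}{\vL}(\vA^{-1})$ with generators of size $O(t+r)$, and a final call to Lemma~\ref{lmm:dr-compress} compresses this to the true rank $r$ (which equals $\rank(-\vA^{-1}\vE\vA^{-1}) = r$). Crucially, because all submatrices remain quasiseparable of order $t$ and $r$ never receives a multiplicative blowup, the generator sizes stay $O(t+r)$ at every level of recursion.

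Third, I would count operations. Each invocation of~\eqref{eq:dr-multiplication} to build generators of a product requires $O(1)$ products of an $N\times N$ matrix of displacement rank $O(t+r)$ w.r.t.\ a pair of $t$-quasiseparable operators by a thin $N\times O(t+r)$ matrix. By the multiplication algorithm for quasiseparable displacement matrices (Theorem~\ref{thm:intro-dr-quasi}, equivalently the resolvent routine of Lemma~\ref{lem:quasi-transpose}), one matrix-vector product costs $\tO((t+r)t^{\omega}N)$, so a matrix-by-thin-matrix product costs $\tO((t+r)^2t^{\omega}N)$; the compressions via Lemma~\ref{lmm:dr-compress} cost only $\tO((t+r)^2N)$ and are subsumed. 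Hence the per-level work is $W(N) = \tO((t+r)^2t^{\omega}N)$, and $T(N) = 2T(N/2) + W(N)$ solves to $T(N) = \tO((t+r)^2t^{\omega}N)$ (the extra $\log N$ absorbed by $\tO$), with the base case $N=1$ trivial.

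The main obstacle is the second step: showing that the interaction between the quasiseparable structure of $\vL,\vR$ (whose off-diagonal blocks contribute rank $\le t$ to the displacement of any submatrix) and the rank-$r$ displacement of $\vA$ never produces, at any recursion level, a term of rank exceeding $O(t+r)$ -- not for the block submatrices, not for the Schur complement, and not for the assembled inverse after the products in the Morf/MA formulas. This requires invoking~\eqref{eq:dr-multiplication} with carefully chosen intermediate operators at Steps~\ref{step:dr-inverse:1}, \ref{step:dr-inverse:3}, and \ref{step:dr-inverse:5}--\ref{step:dr-inverse:8}, and checking the compatibility condition each time. Once the rank bound is in hand, the recursion, the randomized compression, and the complexity bookkeeping are routine given the quasiseparable multiplication algorithm established earlier.
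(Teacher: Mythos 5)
Your proposal follows essentially the same route as the paper's proof: analyze Algorithm~\ref{algo:dr-inverse}, bound the displacement ranks of the sub-blocks and of the Schur complement using quasiseparability of $\vL,\vR$ together with the product identity~\eqref{eq:dr-multiplication}, perform the generator arithmetic via the quasiseparable displacement multiplication routine (cost $M_t(N,\cdot)$), and compress with Lemma~\ref{lmm:dr-compress}, with the recursion contributing only logarithmic factors. The one point where the paper is more careful is the rank bookkeeping across recursion levels: the displacement rank grows additively by $O(t)$ per level, so it is bounded by $O(r+t\log N)$ throughout rather than uniformly $O(t+r)$ as you assert, but this discrepancy is absorbed by the $\tO(\cdot)$ notation and does not affect the claimed bound.
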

\begin{proof}
  We analyze the runtime of Algorithm~\ref{algo:dr-inverse}. Let the notation $M_t(N,r)$ denote the cost of matrix-vector multiplication for a matrix of displacement rank $r$ with respect to $t$-quasiseparable matrices (see Theorem~\ref{thm:intro-dr-quasi}). We defer analyzing the cost of the recursive calls until the end.
  \begin{description}
    \item[Step~\ref{step:dr-inverse:1}]
      Since $\vL\vA - \vA\vR = \vG\vH^T$, we have
      \[
        \begin{bmatrix} \vL_{11} & \vzero \\ \vzero & \vL_{22} \end{bmatrix} \vA - \vA  \begin{bmatrix} \vR_{11} & \vzero \\ \vzero & \vR_{22} \end{bmatrix} =
        \vG\vH^T - \begin{bmatrix} \vzero & \vL_{12} \\ \vL_{21} & \vzero \end{bmatrix}\vA + \vA\begin{bmatrix} \vzero & \vR_{12} \\ \vR_{21} & \vzero \end{bmatrix}
      \]
      By quasiseparability of $\vL$ and $\vR$, the matrix $\begin{bmatrix} \vzero & \vL_{12} \\ \vL_{21} & \vzero \end{bmatrix}$ has rank at most $2t$. Given a rank $2t$ factorization of it, a rank $2t$ factorization of $\begin{bmatrix} \vzero & \vL_{12} \\ \vL_{21} & \vzero \end{bmatrix}\vA$ can be found with $2t$ matrix-vector multiplications by $\vA$, requiring $2t M_t(N,r)$ operations. Thus a rank $O(t+r)$ factorization of the RHS can be found in $O(t M_t(N,r))$ operations (by stacking the generators of the individual low-rank components of the sum).

      Note that taking sub-blocks of the above equation yield the desired generators described in Step~\ref{step:dr-inverse:1}.

    \item[Step~\ref{step:dr-inverse:3}]
      By definition, $\vS = \vA_{22}-\vA_{21}\vA_{11}^{-1}\vA_{12}$. By~\eqref{eq:dr-multiplication}, its displacement rank with respect to $\vL_{22},\vR_{22}$ is at most $4(r+2t) = O(r+t)$. Furthermore, calculating the generators via~\eqref{eq:dr-multiplication} takes $O( (r+t) M_{t}(N, t+r))$ operations.

    \item[Step~\ref{step:dr-inverse:5}, \ref{step:dr-inverse:6}, \ref{step:dr-inverse:7}]
      These are performed just as in Step~\ref{step:dr-inverse:3}. Note that all of $\vA_{ij}$ and $S$ have displacement rank $O(t+r)$ with respect to $\vL_{ii},\vR_{jj}$ for some $i,j$. Furthermore their displacement types are compatible in the expressions for $\vB_{11},\vB_{12},\vB_{21}$ so that generators for all of them can be computed by~\eqref{eq:dr-multiplication}.

    \item[Step~\ref{step:dr-inverse:8}]
      By Lemma~\ref{lmm:dr-compress}, this step requires $O( (t+r)^2N + (t+r)\cM(N))$ operations. This is subsumed by the other steps.

    \item[Step~\ref{step:dr-inverse:2}, Step~\ref{step:dr-inverse:4}]
      Finally, we analyze the recursive cost. The above shows that outside the recursive calls, the algorithm requires $O( (r+t) M_t(N, r+t))$ operations. Note that the displacement ranks of $\vA_{11}$ and $\vS$ are $O(r+t)$, going up by $t$ from the displacement rank of $\vA$. Therefore throughout the entire algorithm, the displacement ranks are bounded by $O(r+t\log N)$. Thus any recursive call on a submatrix of size $n$ uses $O((r+t\log{N}) M_t(n, r+t\log{N}))$ operations. By standard recursive analysis, the total cost is a $\log{N}$ factor on top of this, for a total complexity of $O((r+t\log{N}) M_t(N, r+t\log{N})\log{N})$.
  \end{description}
\end{proof}
This completes the inverse part of Theorem~\ref{thm:intro-dr-quasi}.

As a final remark, other operations such as computing determinants, triangular factorizations, or vectors in the null space (in the non-square case) have also been considered for matrices of low displacement rank~\cite{kaltofen1994,pan2000nearly}. The algorithms for these operations are all similar to the inverse algorithm, and can also be generalized to quasiseparable $\vL,\vR$.

\section{Other Properties and Applications of Recurrence Width}
\label{sec:other}

\subsection{Recovering Low Recurrence-Width Matrices}
\label{subsec:fitting}

As recurrence width is a measure of parameterized complexity, it is useful to be able to identify instances of low-width matrices and extract information about their structure.
One natural question to ask is how to recover the recurrence width parameterization of a matrix (i.e.\ the recurrence coefficients) from the normal parameterization (i.e.\ the $N^2$ entries).
As opposed to recovering the sparse product width (or an approximation) of an arbitrary matrix which is hard (see Appendix~\ref{subsec:spw:hard}), this problem is feasible for matrices of low recurrence width.

Consider a matrix satisfying the basic recurrence in Definition~\ref{defn:intro-width}, which we rewrite here for convenience:
\[
  a_{i}(X)=\sum_{j=1}^t g_{i,j}(X)a_{i-j}(X),
\]
where $\deg(g_{i,j}) \leq j$.

Knowing the actual width $t$, it is possible to solve for the recurrence coefficients $g_{i,j}(X)$ directly. Suppose that the $a_i(X)$ are known and the $g_{i,j}(X)$ are unknown. Then each coefficient of $a_{i}(X)$ is a linear combination of the coefficients of the $g_{i,j}(X)$. More concretely, let $f[\ell]$ be the coefficient of $X^\ell$ in polynomial $f(X)$. Thus equation~\eqref{eq:intro-recur-poly} is equivalent to
\begin{equation}
  \label{eq:fitting-linear}
  a_i[k] = \sum_{j=1}^t (g_{i,j}a_{i-j})[k] = \sum_{j=1}^t \sum_{\ell=0}^{\min(j,k)} g_{i,j}[\ell] a_{i-j}[k-\ell]
\end{equation}
which is a linear constraint on the unknowns $g_{i,j}[\ell]$ (ranging over $j$ and $\ell$). There are a total of $O(t^2)$ such unknowns, and there are $N$ constraints, so this can be interpreted as a system of linear equations in $O(t^2)$ unknowns (note that this requires $N > O(t^2)$; if not, the multiplication algorithm~\ref{algo:transpose-mult} is slower than the naive method anyways). Thus for any $i$ the recurrence parameters $g_{i,\cdot}$ can be recovered in $O( (t^2)^\omega)$, and recovering the entire parameterization takes $O(t^{2\omega} N)$ operations.

Similarly, a type of \emph{projection} onto the set of width-$t$ matrices can be performed. When $\vA$ does not have recurrence width at most $t$, the system~\eqref{eq:fitting-linear} (by varying $k$) forms an overdetermined linear system on the coefficients of the $g_{i,j}(X)$. Ranging the size of the system from $t^2$ to $N$ provides a tradeoff between operation complexity and closeness between the projected width $t$ matrix and $\vA$.

In fact, consider the general recurrence equation~\eqref{eq:recur}. Note that for fixed $\va_i$ and $\vR$ (i.e.\ $\vR=\vS$ for polynomial recurrences), this equation is linear in the $\vf_i$ and the coefficients of $g_{i,j}$. Thus we can consider the problem of minimizing any convex norm of the error matrix $\vF$, which is a convex program and can be solved efficiently. Although the resulting solution $\vG,\vF$ is not strictly low recurrence width, since $\vF$ is not low rank, by dropping $\vF$ or approximating it with a low rank matrix, we are left with a $(\vR,\vG,\vF)$-recurrence that induces an approximation to $\vA$ of low recurrence width.

\subsection{Hierarchy of Recurrence}
\label{sec:hierarchy}

In this section we show that a $(t+1)$-term recurrence cannot be recovered by a $t$-term recurrence, showing a clear hierarchy among the matrices that satisfy our recurrence. We note that just by a counting argument one can show that there exist a $(t+1)$-term recurrence cannot be recovered by a $t$-term recurrence. However, next we show that the best ``error term" has rank $\Omega(N)$ (at least for constant $t$).

Fix an arbitrary $N$. We will be looking at the simplest form of our recurrence: a polynomial family $a_0, \dots, a_{N-1}$ such that $$a_{i}(X) = \sum_{j=1}^t g_{i,j}(X) a_{i-j}(X)$$ where $\deg(g_{i,j}) \le j$. Define $P(t)$ to be all families of $N$ polynomials that satisfy our recurrence of size $t$. For simplicity, we assume that all polynomials have integer coefficients. For any polynomial family $f \in P(t)$, we define the matrix $\vM(f)$ that contains the coefficients of the polynomials as its elements. To show the hierarchy of matrices, we will show that no family in $P(t)$ can approximate the mapping specified by a particular family in $P(t+1)$. 

\begin{thm}
For every $t\ge 1$, there exists an $f \in P(t+1)$ such that for every $g \in P(t)$
\[\left\|\vM(f)\cdot\vone - \vM(g)\cdot\vone\right\|^2_2 \ge \frac{N}{2(t+1)}.\]
\end{thm}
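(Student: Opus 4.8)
The plan is to exhibit a single explicit family $f \in P(t+1)$ whose associated vector $\vM(f)\cdot\vone$ is ``spread out'' in a way that no width-$t$ recurrence can match. A natural candidate is the family built from the simplest nontrivial $(t+1)$-term recurrence: take $a_i(X) = X^i$ for $i < t+1$, and for $i \ge t+1$ set $a_i(X) = X^{t+1} a_{i-(t+1)}(X)$ (i.e.\ $g_{i,t+1}(X) = X^{t+1}$, all other $g_{i,j} = 0$), so that in fact $a_i(X) = X^i$ for all $i$. Then $\vM(f) = \vI$, so $\vM(f)\cdot\vone = \vone$. This is clean, but one must check $X^{t+1}$ has degree $t+1 \le (t+1)$, which is allowed in $P(t+1)$; alternatively, to make the target less trivially structured one can use $a_i(X) = (X+1)^{i \bmod (t+1)} X^{(t+1)\lfloor i/(t+1)\rfloor}$ or similar. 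I'd start with the identity-matrix target and see if the bound falls out, since then the claim reduces to: for every $g \in P(t)$, $\|\vone - \vM(g)\cdot\vone\|_2^2 \ge N/(2(t+1))$, i.e.\ the row sums $s_i := \sum_j \vM(g)[i,j] = a_i(1)$ of any width-$t$ family cannot all be close to $1$.

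The key observation is that the scalar sequence $s_i = a_i(1)$ obtained by evaluating a width-$t$ polynomial recurrence at $X=1$ satisfies a scalar linear recurrence of order $t$: $s_i = \sum_{j=1}^t g_{i,j}(1)\, s_{i-j}$. So the heart of the matter is a statement purely about scalar sequences: \emph{any} sequence $(s_i)_{i=0}^{N-1}$ satisfying a $t$-term linear recurrence (with arbitrary, possibly non-constant coefficients $c_{i,j} = g_{i,j}(1)$) must have $\sum_{i=0}^{N-1}(s_i - 1)^2 \ge N/(2(t+1))$. The mechanism: among any $t+1$ consecutive indices, the values $s_i, s_{i+1}, \dots, s_{i+t}$ are linearly dependent in a prescribed way — namely $s_{i+t}$ is determined by the previous $t$. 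If all $t+1$ of these were exactly equal to $1$, the recurrence would force a consistency condition; more usefully, if $s_i = s_{i+1} = \cdots = s_{i+t-1} = 1$, then $s_{i+t} = \sum_j c_{i+t,j}$, which is some fixed value, and there is no reason for it to equal $1$. The quantitative version I'd aim for: partition $\{0,\dots,N-1\}$ into $\approx N/(t+1)$ blocks of $t+1$ consecutive indices; in each block, show $\sum_{i \in \text{block}}(s_i-1)^2 \ge 1/2$. This per-block bound should follow because within a block of $t+1$ values, we cannot have the first $t$ all equal to $1$ \emph{and} the last one equal to $1$ unless the recurrence coefficients conspire — but crucially we get to \emph{choose} the target family $f$ adversarially, so I should pick $f$ so that this conspiracy is impossible. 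Here is where the choice of $f$ matters: rather than the identity, pick $f \in P(t+1)$ such that $a_i(1)$ alternates or otherwise varies enough that the width-$t$ recurrence cannot track it. Concretely, if the target has $a_i(1) = (-1)^i$ (achievable e.g.\ by $a_i(X) = (-1)^i$ for $i<t+1$ and an appropriate $(t+1)$-term recurrence with constant coefficients realizing $a_i(1) = (-1)^i$), then I'd want: any $t$-term sequence agreeing with $(-1)^i$ at $t$ consecutive points cannot agree at the $(t+1)$st — but that's false for adversarial coefficients too. So the real argument must be: the $t$-term recurrence has a $t$-dimensional solution space locally, but to match $t+1$ generic prescribed values on a block requires $t+1$ dimensions; a dimension/pigeonhole count over blocks gives the bound.

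So the cleanest route: in each block of $t+1$ consecutive coordinates, the vector $(s_{k(t+1)}, \dots, s_{k(t+1)+t})$ is constrained to lie in a $t$-dimensional affine subspace $V_k$ of $\mathbb{R}^{t+1}$ (determined by the recurrence coefficients in that block — the last coordinate is an affine function of the first $t$). The target block vector $\vv_k = (f\text{-values})$ should be chosen so that $\mathrm{dist}(\vv_k, V_k)^2 \ge 1/2$ for \emph{every} possible choice of recurrence coefficients. The distance from a point to a hyperplane $\{x : \la a, x\ra = b\}$ is $|\la a,\vv_k\ra - b|/\|a\|$; with $a = (c_1,\dots,c_t,-1)$, $\|a\| \ge 1$, so $\mathrm{dist}(\vv_k,V_k) \ge |\la a,\vv_k\ra - b|$, and choosing $\vv_k$ generically (or just $(-1)^i$ within the block, so the block pattern is $(1,-1,1,\dots)$) forces $|\sum_j c_j (\pm 1) \mp 1|$ to be... no — the coefficients $c_j$ are free, so they can annihilate this. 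The fix: we don't need distance to \emph{one} hyperplane to be large for all coefficient choices; we need the \emph{global} sequence to be a $t$-term recurrence, meaning the coefficients are tied together across blocks only loosely — actually they're essentially free per block. Hence I expect the correct and honest argument to instead choose $f$ with $\vM(f)\cdot\vone$ having a specific vector value (like $\vone$ or $(-1)^i$) and then argue: the image of $P(t)$ under $g \mapsto \vM(g)\cdot\vone$, intersected with the constraint that consecutive $t+1$-blocks lie in \emph{some} $t$-dim subspace, still cannot get within $\sqrt{N/(2(t+1))}$ of the target — via summing, over blocks, the fact that a $t$-dimensional affine subspace of $\mathbb{R}^{t+1}$ passing through ``most'' of $\vone$ restricted to the block still misses it by $\ge 1/2$ in squared distance \emph{when the target block is not all-equal}. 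The main obstacle, and the step I'd spend the most care on, is precisely pinning down the right target family $f$ and the right per-block geometric lemma so that the adversarial freedom in the recurrence coefficients $g_{i,j}(1)$ is genuinely insufficient — I suspect the intended argument uses a target like $a_i(X) = X^{i}$ giving $\vM(f) = \vI$, combined with the observation that $\vI\cdot\vone = \vone$ while $\vM(g)\vone$, being generated by a $t$-term recurrence, has the property that $s_{i+t} - \sum_{j=1}^t c_{i,j} s_{i-j} = 0$ identically, so one cannot have $s_i = 1$ for more than $t$ consecutive indices without the coefficients being forced, and a careful block-by-block count of the resulting slack yields $N/(2(t+1))$.
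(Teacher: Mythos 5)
There is a genuine gap, and it is in the one place you flagged as the step needing the most care: the choice of the target family $f$. Your primary candidate ($a_i(X)=X^i$, so $\vM(f)=\vI$ and the target vector is $\vone$) fails outright, because the all-ones vector of row sums is achieved \emph{exactly} by a width-$1$ (hence width-$t$) family: take $a_i(X)=X^i$ itself, i.e.\ $g_{i,1}(X)=X$, which gives $a_i(1)=1$ for every $i$, so $\|\vM(f)\vone-\vM(g)\vone\|_2^2=0$. The same defect kills the alternating target $a_i(1)=(-1)^i$ (use $g_{i,1}=-1$) and, more generally, your per-block affine-subspace argument cannot be repaired: since the coefficients $c_{i,j}=g_{i,j}(1)$ are essentially free at each step, the only situation in which the adversary's $t$-dimensional affine constraint is forced to miss a prescribed next value is when the previous $t$ values of the sequence are all zero — then the homogeneous recurrence forces the next value to be $0$ regardless of the coefficients. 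Your closing claim that ``one cannot have $s_i=1$ for more than $t$ consecutive indices without the coefficients being forced'' is false; it is $s_i=0$ on $t$ consecutive indices that forces something, namely $s_j=0$ for all subsequent $j$.

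That zero-propagation property is exactly the mechanism the paper uses, and it dictates a very different target: choose $f\in P(t+1)$ with $a_i(X)=X^i$ when $i$ is a multiple of $t+1$ and $a_i(X)=0$ otherwise, so the target vector $\vc=\vM(f)\vone$ is $1$ on multiples of $t+1$ and $0$ elsewhere. For any $g\in P(t)$ with $s_i=g_i(1)$, either some window of $t$ consecutive indices in the first half has all $s_i=0$ — in which case $s_i=0$ for all $i>N/2$ and the target's roughly $N/(2(t+1))$ ones in the second half each contribute $1$ to the squared error — or else every block strictly between consecutive multiples of $t+1$ (up to $N/2$) contains some $s_i\neq 0$ at a position where the target is $0$, and the paper's standing integrality assumption makes each such mismatch contribute at least $1$. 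Your reduction to the scalar sequence $s_i=a_i(1)$ and the block-of-size-$(t+1)$ bookkeeping are the right frame and match the paper, but without a target that is mostly zero (to exploit forced zero-propagation) and without the integrality of the values, the quantitative per-block bound you are after does not hold.
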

\begin{proof}
We are going to choose $f$ such that $a_i(X)= X^i$ if $i=k(t+1)$ for some $k \ge 0$ and $a_i(X) = 0$ otherwise. Note that $f\in P(t+1)$. Let $\vc = \vM(f) \cdot \vone$. In particular,
 \[\vc[i] =
\begin{cases}
 1& \text{ if } i=k(t+1) \text{ for some }k \ge 0\\
0&\text{ otherwise}
\end{cases}.
\]

Fix an arbitrary $g \in P(t)$. Let $\vc' = \vM(g) \cdot \vone$; note that $\vc'[i] = g_i(1)$. Note that if any $t$ consecutive $g_i(1), \dots, g_{i+t-1}(1)$ are $0$, the $t$-term recurrence implies that all subsequent polynomials in family uniformly evaluate to $0$ at $1$. So we have two cases: $(1)$ $g_{i}(1) = 0$ for all $i > N/2$ or $(2)$ for each $k$, there exists an $i$ such that $k(t+1) < i < (k+1)(t+1) \le N/2$ and $g_i(1) \neq 0$. In the first case $\vc'[i] = 0$ for all $i > N/2$, and $\| \vc - \vc'\|^2_2 \ge \frac{N}{2(t+1)}$. Similarly, in the second case $\vc'[i] \neq 0$ and $\vc[i] = 0$ for each of the specified $i$, implying once again that $\| \vc - \vc'\|^2_2 \ge \frac{N}{2(t+1)}$.
\end{proof}

\bcor
For every $t \ge 1$, there exists an $f \in P(t+1)$ such that for every $g \in P(t)$, $\rank(\vM(f) - \vM(g)) \ge \frac{N}{2t}$.
\ecor
\rpu{I don't think this (or an easy modification of this) can work. In particular, a $t$ recurrence means you can express a polynomial $a_{i+1}$ in terms of $a_{i-t}, \dots a_i$. If you remove one of the previous polynomials, I don't think you can guarantee that $a_{i+1}$ can be written in terms of the other $t-1$ polynomials.}
\begin{proof}
Let $\vH = \vM(f) - \vM(g)$. Recall that this is a lower triangular matrix. Once again, we define $f$ such that $a_i(X) = X^i$ if $i = k(t+1)$ for some $k \ge 0$ and $a_i(X) = 0$ otherwise. Once again, we have two cases. First, $\deg(g_i(X)) < i$ for all $i = k(t+1), i > N/2$. Then $\vH[i,i] = 1$ for all $i = k(t+1), i>N/2$, implying $\rank(\vH) \ge \frac{N}{2t}$. In the second case, we rely on the degree bound of the transition polynomials; in particular, if $g_{i+1}(X) = \sum_{j=0}^t h_{i,j}(X) g_{i-j}(X)$, we bound $\deg(h_{i,j}(X)) \le j$. If $\deg(g_i(X)) = i$ for some $i = k(t+1), i> N/2$, we know that for each value of $k$ such that $(k+1)(t+1)<N/2$, there exists a $j$ such that $k(t+1) < j < (k+1)(t+1)$ and $\deg(g_j(X)) = j$. For each of these $j$, $\vH[j,j] \neq 0$, implying $\rank(\vH) \ge \frac{N}{2t}$.
\end{proof}

\subsection{Succinct Representations and Multivariate Polynomials}
\label{sec:multi}

The goal of this section is two fold. The first goal is to present matrices that have low \width\ in our sense but were not captured by previous notions of widths of structured matrices. The second goal is to show that substantially improving upon the efficiency of our algorithms with respect to sharper notions of input size will lead to improvements in the state-of-the-art algorithms for multipoint evaluation of multivariate polynomials. Our initial interest in these matrices arose from their connections to coding theory, which we will also highlight as we deal with the corresponding matrices.



We consider the following problem.
\begin{defn}
Given an $m$-variate polynomial $f(X_1,\dots,X_m)$ such that each variable has degree at most $d-1$ and $N=d^m$ distinct points $\vx(i)=(x(i)_1,\dots,x(i)_m)$ for $1\le i\le N$, output the vector $(f(\vx(i)))_{i=1}^N$.
\end{defn}

The best runtime for an algorithm that solves the above problem (over an arbitrary field) takes time $O(d^{\omega_2(m-1)/2+1})$, where an $n\times n$ matrix can be multiplied with an $n\times n^2$ matrix with $O(n^{\omega_2})$ operations~\cite{bivariate,KU11}. (For the sake of completeness, we state these algorithms in Appendix~\ref{sec:app-multi}.) We remark on three points. First in the multipoint evaluation problem we do not assume any structure on the $N$ points: e.g.\ if the points form an $m$-dimensional grid, then the problem can be solved in $\tO(N)$ many operations using standard FFT techniques. Second, if we are fine with solving the problem over {\em finite} fields, then the breakthrough result of Kedlaya and Umans~\cite{KU11} solves this problem with $N^{1+o(1)}$ operations (but for arbitrary $N$ evaluation points). In other words, the problem is not fully solved only if we do not have any structure in the evaluation points and we want our algorithms to work over arbitrary fields (or even $\R$ or $\C$). Finally, from a coding theory perspective, this problem (over finite fields) corresponds to encoding of arbitrary puncturings of Reed-Muller codes.

While we do not prove any new upper bounds for these problems, it turns out that the connection to multipoint evaluation of multivariate polynomials has some interesting complexity implications for our result. In particular, recall that the worst-case input size of a matrix with \width\ $t$ is $\Theta(t^2N)$ and our algorithms are optimal with respect to this input measure (assuming $\omega=2$). However, it is natural to wonder if one can have faster algorithms with respect to a more per-instance input size.

Next, we aim to show that if we can improve our algorithms in certain settings then it would imply a fast multipoint evaluation of multivariate polynomials. In particular, we consider the following two more succinct ways of representing the input. For a given polynomial $f(X)\in\F[X]$, let $\|f\|_0$ denote the size of the support of $f$. Finally, consider a matrix $\vA$ defined by a recurrence in~\eqref{eq:recur}. Define
\[\|\vA\|_0=\sum_{i=0}^{N-1}\sum_{j=0}^t \|g_{i,j}\|_0 +rN,\]
i.e.\ the size of sum of the sizes of supports of $g_{i,j}$'s plus the size of the rank $r$-representation of the error matrix in~\eqref{eq:recur}.

The second more succinct representation where we have an extra bound that $\|g_{i,j}\|\le D$ (for potentially $D<t$) for the recurrence in~\eqref{eq:recur}. Then note that the corresponding matrix $\vA$ can be represented with size $\Theta(tDN+rN)$ elements. In this case, we will explore if one can improve upon the dependence on $r$ in Theorem~\ref{thm:recur-general}.

We would like to point out that in all of the above the way we argue that the error matrix $\vE$ has rank at most $r$ is by showing it has at most $r$ non-zero columns. Thus, for our reductions $rN$ is also an upper bound on $\|\vE\|_0$, so there is no hope of getting improved results in terms of the sparsity of the error matrix instead of its rank without improving upon the state-of-the-art results in multipoint evaluation of multivariate polynomials.


\subsubsection{Multipoint evaluation of bivariate polynomials}
\label{sec:bivar}

We begin with the bivariate case (i.e.\ $m=2$) since that is enough to connect improvements over our results to improving the state-of-the-art results in multipoint evaluation of bivariate polynomials.

For notational simplicity we assume that the polynomial is $f(X,Y)=\sum_{i=0}^{d-1}\sum_{j=0}^{d-1} f_{i,j} X^iY^j$ and the evaluation points are $(x_1,y_1),\dots,(x_N,y_N)$. Now consider the $N\times N$ matrix
\[\vA^{(2)}=
\begin{pmatrix}
1 & x_1 & \cdots &x_1^{d-1} & y_1 & y_1x_1&\cdots & y_1 x_1^{d-1} &  y_1^2 & y_1^2x_1 &\cdots & y_1^2 x_1^{d-1} & \cdots &  y_1^{d-1} & y_1^{d-1}x_1&\cdots& y_1^{d-1} x_1^{d-1}\\
1 & x_2 & \cdots &x_2^{d-1} & y_2 & y_2x_2&\cdots & y_2 x_2^{d-1} &  y_2^2 & y_2^2x_2 &\cdots & y_2^2 x_2^{d-1} & \cdots &  y_2^{d-1} & y_2^{d-1}x_2&\cdots& y_2^{d-1} x_2^{d-1}\\
 &  & & & & & \vdots &  &   &  & &  & \vdots &   & & & \\
1 & x_N & \cdots &x_N^{d-1} & y_N & y_Nx_N&\cdots & y_N x_N^{d-1} &  y_N^2 & y_N^2x_N &\cdots & y_N^2 x_N^{d-1} & \cdots &  y_N^{d-1} & y_N^{d-1}x_N&\cdots& y_N^{d-1}x_N^{d-1}
\end{pmatrix}.
\]

Note that to solve the multipoint evaluation problem we just need to solve $\vA^{(2)}\cdot \vf$, where $\vf$ contains the coefficients of $f(X,Y)$. Let $\vD_X$ and $\vD_Y$ denote the diagonal matrices with $\vx=(x_1,\dots,x_N)$ and $\vy=(y_1,\dots,y_N)$ on their diagonals respectively. Finally, define $\vZ=\vS^T$. Now consider the matrix
\[\vB^{(2)} = \vD_X^{-1}\vA^{(2)}-\vA^{(2)}\vZ.\]
It can be checked that $\vB^{(2)}$ has rank at most $d$. Indeed note that
\[\vB^{(2)}=
\begin{pmatrix}
\frac{1}{x_1} & 0 & \cdots &0 & \frac{y_1}{x_1}-x_1^{d-1} & 0&\cdots & 0 &  y_1\left(\frac{y_1}{x_1}-x_1^{d-1}\right) & 0 &\cdots & 0 & \cdots &  y_1^{d-2}\left(\frac{y_1}{x_1} -x_1^{d-1}\right) & 0&\cdots& 0\\
\frac{1}{x_2} & 0 & \cdots &0 & \frac{y_2}{x_2}-x_2^{d-1} & 0&\cdots & 0 &  y_2\left(\frac{y_2}{x_2}-x_2^{d-1}\right) & 0 &\cdots & 0 & \cdots &  y_2^{d-2}\left(\frac{y_2}{x_2} -x_2^{d-1}\right) & 0&\cdots& 0\\
 &  & & & & & \vdots &  &   &  & &  & \vdots &   & && & \\
\frac{1}{x_N} & 0 & \cdots &0 & \frac{y_N}{x_N}-x_N^{d-1} & 0&\cdots & 0 &  y_N\left(\frac{y_N}{x_N}-x_N^{d-1}\right) & 0 &\cdots & 0 & \cdots &  y_N^{d-2}\left(\frac{y_N}{x_N} -x_N^{d-1}\right) & 0&\cdots& 0\\
\end{pmatrix}.
\]

The above was already noticed in~\cite{OS99}. The above is not quite enough to argue what we want so we make the following stronger observation. Consider
\begin{equation}
\label{eq:bivariate-recur}
\vC^{(2)}=\vD_Y^{-1} \vB^{(2)}-\vB^{(2)}\vZ^d = \vD_Y^{-1}\vD_X^{-1}\vA^{(2)} -\vD_Y^{-1}\vA^{(2)}\vZ -\vD_X^{-1}\vA^{(2)}\vZ^d +\vA^{(2)}\vZ^{d+1}.
\end{equation}
One can re-write the above recurrence as follows (where $\va_i=\left(\vA^{(2)}[i,:]\right)^T$ and recall $\vZ=\vS^T$) for any $0\le i<N$:
\[\left(\frac{1}{x_iy_i}-\frac{\vS}{y_i}-\frac{\vS^d}{x_i}+\vS^{d+1}\right)\cdot \va_i= \left(\vC^{(2)}[i,:]\right)^T.\]
We now claim that the rank of $\vC^{(2)}$ is at most two. Indeed, note that
\[\vC^{(2)}=
\begin{pmatrix}
\frac{1}{x_1y_1} & 0 & \cdots &0 & \frac{-x_1^{d-1}}{y_1} & 0&\cdots & 0 &  0 & 0 &\cdots & 0 & \cdots &  0& 0&\cdots& 0\\
\frac{1}{x_2y_2} & 0 & \cdots &0 & \frac{-x_2^{d-1}}{y_2} & 0&\cdots & 0 &  0 & 0 &\cdots & 0 & \cdots &  0& 0&\cdots& 0\\
 &  & & & & & \vdots &  &   &  & &  & \vdots &   & && & \\
\frac{1}{x_Ny_N} & 0 & \cdots &0 & \frac{-x_N^{d-1}}{y_N} & 0&\cdots & 0 &  0 & 0 &\cdots & 0 & \cdots &  0& 0&\cdots& 0\\
\end{pmatrix}.
\]

Thus, we have a recurrence with \width\ $(1,2)$ and degree $(D,1)$.  Theorem~\ref{thm:recur-general} implies that we can solve the above problem with $\tO(d^3)$ operations. The algorithm of~\cite{bivariate} uses $\tO(d^{\omega_2/2+1})$ many operations. However, note that 
\[\|\vA^{(2)}\|_0 =\Theta(d^2).\]
Thus, we have the following result:
\begin{thm}
\label{thm:lb-1}
If one can solve $\vA\vb$ for any $\vb$ with $\tO\left(\left(\|\vA\|_0\right)^{\omega_2/4+1/2-\eps}\right)$ operations, then one will have an multipoint evaluation of bivariate polynomials with $\tO(d^{\omega_2/2+1-2\eps})$ operations, which would improve upon the currently best-known algorithm for the latter.
\end{thm}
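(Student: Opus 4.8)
The plan is to establish Theorem~\ref{thm:lb-1} via the chain of reductions already assembled in Section~\ref{sec:multi}: the bivariate multipoint evaluation problem reduces to computing $\vA^{(2)}\vf$, where $\vA^{(2)}$ is the $N\times N$ matrix laid out above with $N=d^2$, and this matrix satisfies the $\vR$-recurrence of width $(1,2)$ and degree $(D,1)$ coming from equation~\eqref{eq:bivariate-recur}. So the main body of the argument is just bookkeeping on the input measure $\|\vA^{(2)}\|_0$ and then plugging a hypothetical faster $\vA\vb$ algorithm into this reduction.

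First I would record that $\|\vA^{(2)}\|_0 = \Theta(d^2) = \Theta(N)$. This follows from inspecting $\vC^{(2)}$ (which has exactly two nonzero columns, so the rank-$2$ error representation contributes $rN = 2N = \Theta(d^2)$) together with the recurrence coefficients: the $\vR$-recurrence operator is $\tfrac{1}{x_iy_i} - \tfrac{\vS}{y_i} - \tfrac{\vS^d}{x_i} + \vS^{d+1}$, and reading this off gives $g_{i,0},\dots$ with $O(1)$ nonzero monomials each per row, contributing $O(N)$ total. Hence $\|\vA^{(2)}\|_0 = \Theta(d^2)$. Second, I would invoke the hypothesis: an $\vA\vb$ algorithm running in $\tO\!\left( (\|\vA\|_0)^{\omega_2/4 + 1/2 - \eps} \right)$ operations, applied to $\vA = \vA^{(2)}$ and $\vb = \vf$, computes $\vA^{(2)}\vf$ in $\tO\!\left( (d^2)^{\omega_2/4+1/2-\eps} \right) = \tO\!\left( d^{\omega_2/2 + 1 - 2\eps} \right)$ operations. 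Since $\vA^{(2)}\vf$ is precisely the vector of evaluations $(f(x_i,y_i))_{i=1}^N$, this is a bivariate multipoint evaluation algorithm with the claimed running time $\tO(d^{\omega_2/2+1-2\eps})$.

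Finally I would note that for any $\eps > 0$ this strictly beats the state-of-the-art $\tO(d^{\omega_2/2 + 1})$ algorithm of~\cite{bivariate} over arbitrary fields, completing the proof. One subtlety worth flagging explicitly is that the hypothesized algorithm must be allowed the usual pre-processing on $\vA$ (as in Theorem~\ref{thm:recur-general}); since the reduction produces $\vA^{(2)}$ from the evaluation points $(x_i,y_i)$ in near-linear time and the pre-processing in all our algorithms is also near-linear in the input measure, this does not affect the exponent, and the statement is cleanest if we simply absorb pre-processing into the $\tO(\cdot)$.

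The conceptual content of the theorem is entirely in the reductions of Section~\ref{sec:bivar} (the two-step displacement-to-recurrence construction yielding width $(1,2)$), not in this final deduction; the only place any care is needed is confirming $\|\vA^{(2)}\|_0 = \Theta(d^2)$ rather than something larger, which I expect to be the one point a referee would check. Everything else is substitution. I would therefore keep the proof to a few lines: compute $\|\vA^{(2)}\|_0$, substitute into the hypothesis, identify the output with the multipoint evaluation, and compare exponents.
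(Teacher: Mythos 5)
Your proposal is correct and is essentially the paper's own argument: the theorem is obtained there exactly by noting that bivariate multipoint evaluation is the product $\vA^{(2)}\vf$, that the recurrence~\eqref{eq:bivariate-recur} gives $O(1)$-sparse coefficients per row plus a rank-$2$ error so that $\|\vA^{(2)}\|_0=\Theta(d^2)$, and then substituting into the hypothesized bound to get $\tO\bigl((d^2)^{\omega_2/4+1/2-\eps}\bigr)=\tO(d^{\omega_2/2+1-2\eps})$, beating the $\tO(d^{\omega_2/2+1})$ algorithm of~\cite{bivariate}. Your added remark about absorbing near-linear pre-processing into the $\tO(\cdot)$ is a harmless clarification not spelled out in the paper.
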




\subsubsection{Multipoint evaluation of multivariate polynomials}



We now consider the general multivariate polynomial case. Note that we can represent the multipoint evaluation of the $m$-variate polynomial $f(X_1,\dots,X_m)$ as $\vA^{(m)}\vf$, where $\vf$ is the vector of coefficients and $\vA^{(m)}$ is presented as follows.

Each of the $d^m$ columns are indexed by tuples $\vi\in\Z_d^m$ and the columns are sorted in lexicographic increasing order of the indices. The column $\vi=(i_1,\dots,i_m)\in\Z_d^m$ is represented by
\[\vA^{(m)}[:,\vi]=
\begin{pmatrix}
\prod_{j=1}^m x(1)_j^{i_j}\\
\prod_{j=1}^m x(2)_j^{i_j}\\
\vdots\\
\prod_{j=1}^m x(N)_j^{i_j}
\end{pmatrix},
\]
where the evaluation points are given by $\vx(1),\dots,\vx(N)$.

For notational simplicity, we will assume that $m$ is even. (The arguments below can be easily modified for odd $m$.) Define recursively for $0\le j\le m/2$:
\begin{equation}
\label{eq:multivariate-recur}
\vB^{(j)} = \vD_{X_{m-j}}^{-1}\vB^{(j+1)} -\vB^{(j+1)}\vZ^{d^{j}},
\end{equation}
where $\vD_{X_{k}}$ is the diagonal matrix with $(x(1)_{k},\dots,x(N)_{k})$ on its diagonal. Finally, for the base case we have
\[\vB^{(\frac{m}{2}+1)}=\vA^{(m)}.\]

It can be verified (e.g.\ by induction) that the recurrence in~\eqref{eq:multivariate-recur} can be expanded out to
\begin{equation}
\label{eq:multivariate-recur-equiv}
\vB^{(0)}= \sum_{S\subseteq [m/2,m]} (-1)^{m/2+1-|S|}\left(\prod_{j\in S} \vD_{X_j}^{-1}\right)\vA^{(m)}\left(\prod_{j\in [m/2,m]\setminus S} \vZ^{d^{m-j}}\right).
\end{equation}
The above can be re-written as (where $\vf_i=\left(\vA^{(m)}[i,:]\right)^T$):
\[\left(\sum_{S\subseteq [m/2,m]} (-1)^{m/2+1-|S|}\cdot \frac{1}{\prod_{j\in S} x(i)_j} \left(\prod_{j\in [m/2,m]\setminus S} \vS^{d^{m-j}}\right)\right)\cdot \vf_i=\left(\vB^{(0)}[i,:]\right)^T.\]

We will argue in Appendix~\ref{sec:app-multi} that
\begin{lmm}
\label{lmm:disp-rank-multivariate}
$\vB^{(0)}$ has rank at most $2^{m/2}\cdot d^{m/2-1}$.
\end{lmm}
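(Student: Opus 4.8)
The plan is to bound the rank of $\vB^{(0)}$ by bounding the number of nonzero columns, mirroring the bivariate analysis of Section~\ref{sec:bivar}. Recall from the bivariate case that each application of the operator $\vM \mapsto \vD^{-1}\vM - \vM\vZ^{d^j}$ sent $\vB^{(j+1)}$ to a matrix $\vB^{(j)}$ that had strictly fewer nonzero columns: the shift $\vZ^{d^j}$ killed the top $d^j$ columns of the ``$-\vB^{(j+1)}\vZ^{d^j}$'' term while the $\vD^{-1}\vB^{(j+1)}$ term only rescaled columns, and the subtraction then canceled most of the surviving columns. The key observation to formalize is the following: if $\vM$ is a matrix all of whose nonzero columns lie in a set $J$ of column-indices, then $\vD^{-1}\vM$ has nonzero columns in $J$, and $\vM\vZ^{d^j}$ has nonzero columns in $J - d^j$ (the set $J$ shifted down by $d^j$), so $\vD^{-1}\vM - \vM\vZ^{d^j}$ has nonzero columns in $J \cup (J - d^j)$ — but crucially the \emph{structure} of $\vA^{(m)}$'s columns as monomials $\prod_j x_j^{i_j}$ means that after the subtraction the columns indexed by tuples whose relevant coordinate is neither $0$ nor maximal get canceled, just as in equation~\eqref{eq:bivariate-recur}.

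First I would set up notation: index the columns of $\vB^{(j+1)}$ by tuples in $\Z_d^m$, and track which coordinate $i_{m-j}$ the $j$th step ``peels off.'' I would prove by downward induction on $j$ from $m/2+1$ to $0$ that $\vB^{(j)}$ has nonzero columns only at tuples $\vi = (i_1,\dots,i_m)$ for which $i_{m-j'} \in \{0, d-1\}$ for all $j'$ with $j \le j' \le m/2$ — i.e.\ each of the ``processed'' coordinates $i_{m}, i_{m-1}, \dots, i_{m-(m/2-j)}$ (there are $m/2 - j + 1$ of them after the step producing $\vB^{(j)}$, but let me be careful and just say after all $m/2+1$ steps, coordinates $i_{m/2}, \dots, i_m$ are each forced into $\{0,d-1\}$) has been forced to take one of the two extreme values. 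The inductive step is exactly the bivariate computation of $\vC^{(2)}$ made one coordinate at a time: write $\vD_{X_{m-j}}^{-1}\va_i$ in terms of the monomial structure, observe the telescoping cancellation, and conclude that the new nonzero columns are those where $i_{m-j}\in\{0,d-1\}$, while all previously-forced coordinates stay forced (since $\vD^{-1}$ and the shift $\vZ^{d^j}$ act blockwise compatibly with the lexicographic column order, moving a column by $d^j$ positions corresponds exactly to changing coordinate $i_{m-j}$ by one).

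Once this structural claim is established, counting is immediate. After all $m/2+1$ steps, the coordinates $i_{m/2}, i_{m/2+1}, \dots, i_m$ — that is $m/2+1$ coordinates, or depending on the exact indexing $m/2$ of them — are each restricted to $\{0,d-1\}$, contributing a factor of $2^{m/2}$ worth of allowed columns, while the remaining $m/2-1$ coordinates $i_1,\dots,i_{m/2-1}$ range freely over $\Z_d$, contributing $d^{m/2-1}$. Hence the number of nonzero columns of $\vB^{(0)}$ is at most $2^{m/2}\cdot d^{m/2-1}$, which bounds its rank as claimed. (I would double-check the off-by-one in how many coordinates get peeled: the base case $\vB^{(m/2+1)}=\vA^{(m)}$ has all $m$ coordinates free, and there are $m/2+1$ steps each peeling one coordinate, which would give $m/2+1$ forced coordinates; the stated bound $2^{m/2}d^{m/2-1}$ suggests $m/2$ forced coordinates and $m/2-1$ free with one coordinate ``lost'' — so the indexing of the recurrence in~\eqref{eq:multivariate-recur} must be that $j$ ranges over $0\le j\le m/2$, i.e.\ $m/2+1$ values, but one coordinate is fully consumed rather than merely constrained to two values. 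I would reconcile this with the bivariate case where $m=2$, one step $j=0$ remains after the two displacement operations in~\eqref{eq:bivariate-recur}, rank $2 = 2^{1}d^{0}$, matching $2^{m/2}d^{m/2-1}$ with the last coordinate fully consumed.)

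The main obstacle I anticipate is bookkeeping the interaction between the \emph{column permutation} induced by the shift operators $\vZ^{d^j}$ and the \emph{lexicographic ordering} of the columns of $\vA^{(m)}$: one must verify that shifting the column index by exactly $d^j$ positions in the lexicographic order corresponds cleanly to incrementing the coordinate $i_{m-j}$ without disturbing the already-processed coordinates, and that the boundary columns (where incrementing would carry over into another coordinate) are precisely the ones that survive — this is what forces the extreme values $\{0,d-1\}$. Getting the indexing, the carrying behavior, and the off-by-one in the number of peeled coordinates exactly right is the delicate part; the algebraic cancellation itself is just the bivariate computation~\eqref{eq:bivariate-recur} repeated.
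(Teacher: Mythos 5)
Your high-level strategy — bound the rank of $\vB^{(0)}$ by counting its nonzero columns and arguing that most columns cancel — is indeed the one the paper uses, but your key structural claim is wrong, and the induction you sketch would not close. The surviving columns are \emph{not} those whose processed coordinates lie in the extreme set $\{0,d-1\}$; they are those whose processed coordinates are at most $1$. Your own template, the bivariate computation~\eqref{eq:bivariate-recur}, already shows this: the nonzero columns of $\vC^{(2)}$ are exactly the columns of the monomials $1$ and $y_1$, i.e.\ $y$-exponent in $\{0,1\}$, and the cancellation for exponent $\ge 2$ happens because $\vD_Y^{-1}$ applied to the column indexed by $(0,i_y)$ coincides with the column shifted in by $\vZ^d$ (both equal $y^{i_y-2}(\tfrac{y}{x}-x^{d-1})$), with exceptions only at $i_y=0$ (shift out of range) and $i_y=1$ (the shifted column is the exceptional column $1/x$). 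There is no mechanism forcing the value $d-1$ to survive. Moreover, your inductive hypothesis (``$\vB^{(j)}$ is supported on such-and-such columns'') is too weak on its own: to verify the cancellation at the next step you must also carry the invariant that, within a surviving block, the column at exponent $k\ge 1$ equals the processed variable times the column at exponent $k-1$ (with an exceptional column at exponent $0$) — without this the ``same computation repeated'' step cannot be checked. Finally, the carrying behavior of the index shift $\vZ^{d^j}$ across coordinate boundaries and the off-by-one in how many coordinates get constrained are exactly the points you leave unresolved, and they are where such an induction is most likely to break.

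The paper sidesteps the step-by-step induction entirely. It unrolls the recurrence~\eqref{eq:multivariate-recur} into the closed-form alternating sum~\eqref{eq:multivariate-recur-equiv} over subsets $S\subseteq[m/2,m]$, observes that the column of each summand $\vM_S$ indexed by $\vi$ is the evaluation of an explicit monomial $X^{(\vi\ominus \ve_{[m/2,m]\setminus S})-\ve_S}$, and then exhibits a sign-reversing pairing: whenever some processed coordinate satisfies $i_{j^*}\ge 2$, the terms for $S$ and $S\cup\{j^*\}$ are equal (Claim~\ref{clm:mnonmial-cancel}, which hinges on $\ominus\ve_{j^*}$ and $-\ve_{j^*}$ agreeing precisely when $i_{j^*}\ge 2$) and cancel in the alternating sum, so the corresponding column of $\vB^{(0)}$ is zero. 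Counting the remaining indices — processed coordinates restricted to $\{0,1\}$, the rest free — gives the bound $2^{m/2}d^{m/2-1}$. If you want to salvage your route, you would need to correct the surviving set to $\{0,1\}$, strengthen the induction hypothesis to include the multiplicative block structure of the surviving columns, and handle the carry/boundary cases of the shift explicitly; as written, the proposal has a genuine gap.
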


Note that the above lemma implies that the recurrence in~\eqref{eq:multivariate-recur-equiv} is a $\vS$-matrix recurrence with \width\  $(1,r=2^{m/2}d^{m/2-1})$ and degree
$(D=\frac{d^{1+m/2}-1}{d-1},1)$. 
Note that in this case we have $tDN+rN=\Theta((2d)^{3m/2-1})$. Thus, we have the following result:
\begin{thm}
\label{thm:lb-2}
 If for an $\vS$-dependent recurrence we could improve the algorithm from Theorem~\ref{thm:recur-general} to run with $\tO(\poly(t)\cdot DN+rN)$ operations for matrix vector multiplication, then we would be able to solve the general multipoint evaluation of multivariate polynomials in time $\tO((2d)^{3m/2-1})$, which would be a polynomial improvement over the current best algorithm (when $d=\omega(1)$), where currently we still have $\omega_2>3$.
\end{thm}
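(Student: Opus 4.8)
The plan is to read Theorem~\ref{thm:lb-2} as a reduction: exhibit the multivariate multipoint‑evaluation matrix $\vA^{(m)}$ as an $\vS$‑dependent recurrence‑width matrix with explicitly controlled parameters, then propagate the \emph{hypothetical} degree‑sensitive running time $\tO(\poly(t)\cdot DN+rN)$ through those parameters and compare the resulting bound against the state of the art $O(d^{\omega_2(m-1)/2+1})$ of \cite{bivariate,KU11}.

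First I would set up the encoding. With $N=d^m$, the matrix $\vA^{(m)}\in\F^{N\times N}$ has rows indexed by the evaluation points $\vx(1),\dots,\vx(N)$ and columns indexed by exponent tuples $\vi=(i_1,\dots,i_m)\in\Z_d^m$ in lexicographic order, with $\vA^{(m)}[k,\vi]=\prod_{j=1}^m x(k)_j^{i_j}$; hence $\vA^{(m)}\vf$ is exactly the vector $(f(\vx(k)))_{k=1}^N$. I would then invoke the recursive construction $\vB^{(j)}=\vD_{X_{m-j}}^{-1}\vB^{(j+1)}-\vB^{(j+1)}\vZ^{d^{j}}$ of \eqref{eq:multivariate-recur} (base case $\vB^{(m/2+1)}=\vA^{(m)}$, $\vZ=\vS^T$) together with its telescoped closed form \eqref{eq:multivariate-recur-equiv}. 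Reading \eqref{eq:multivariate-recur-equiv} one row at a time exhibits each row $\vf_i=\vA^{(m)}[i,:]^T$ as satisfying an $\vS$‑matrix recurrence in the sense of Definition~\ref{defn:recurrence-width}: $Q_i(\vS)\,\vf_i=\vB^{(0)}[i,:]^T$, where $Q_i(X)=\sum_{S\subseteq[m/2,m]}(-1)^{m/2+1-|S|}\big(\prod_{j\in S}x(i)_j\big)^{-1}X^{\sum_{j\in[m/2,m]\setminus S}d^{m-j}}$. This is a single polynomial acting on $\vf_i$ only, so the width in the first coordinate is $t=1$; its degree is $D=\sum_{j=m/2}^m d^{m-j}=\tfrac{d^{1+m/2}-1}{d-1}=\Theta(d^{m/2})$, giving the degree bound $(D,1)$; and $Q_i$ has an invertible leading coefficient because $\prod_{j\in S}x(i)_j\ne 0$ for the evaluation points, so the recurrence is well‑posed. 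The bivariate case $m=2$ worked out in Section~\ref{sec:bivar} (there $r=2$, $D=d+1$) is the template.

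Second I would apply Lemma~\ref{lmm:disp-rank-multivariate}, taking its proof (deferred to Appendix~\ref{sec:app-multi}) as given, to conclude $\rank\vB^{(0)}\le r:=2^{m/2}d^{m/2-1}$; hence the error matrix in \eqref{eq:recur} has rank at most $r$, and $\vA^{(m)}$ is a width‑$(1,r)$, degree‑$(D,1)$, $\vS$‑dependent recurrence matrix. Such a matrix is stored in $O(tDN+rN)$ field elements, and a routine (if slightly delicate) estimate gives $tDN+rN=\tO\!\big((2d)^{3m/2-1}\big)$; I would carry this out explicitly, including the adjustment for odd $m$. Now, \emph{if} an algorithm running in $\tO(\poly(t)\cdot DN+rN)$ operations existed for $\vS$‑dependent recurrences — note $\vS$ is trivially Krylov efficient, so this is the natural target — then substituting $t=1$ (whence $\poly(t)=O(1)$) yields a multipoint‑evaluation algorithm over an arbitrary field using $\tO(DN+rN)=\tO\!\big((2d)^{3m/2-1}\big)$ operations.

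Finally I would make the comparison. The current best bound is $O(d^{\omega_2(m-1)/2+1})$, and since $\omega_2>3$ its exponent satisfies $\omega_2(m-1)/2+1>3(m-1)/2+1=(3m-1)/2>3m/2-1$; so for fixed $m$ and $d=\omega(1)$ the hypothesized runtime $d^{\,3m/2-1+o(1)}\cdot 2^{O(m)}$ beats it by a factor $d^{\Omega(1)}$, a polynomial improvement, which is the asserted barrier. The main obstacle is not any one line of \emph{this} argument — it is essentially parameter substitution once the two imported ingredients are in hand — but rather making those imports airtight: (i) checking that \eqref{eq:multivariate-recur} genuinely telescopes to \eqref{eq:multivariate-recur-equiv} and then to the clean single‑polynomial recurrence $Q_i(\vS)\vf_i=\vB^{(0)}[i,:]^T$ with the claimed degree $D$ and invertibility, and (ii) the rank bound of Lemma~\ref{lmm:disp-rank-multivariate}, where one argues that each of the $\le 2^{m/2+1}$ shifted copies $\big(\prod_j\vD_{X_j}^{-1}\big)\vA^{(m)}\big(\prod_j\vZ^{d^{m-j}}\big)$ has all but an $N\times d^{m/2-1}$ block of columns annihilated by the $\vZ$‑shifts and so contributes rank $\le d^{m/2-1}$, summing to $r$. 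I would also be careful to state the size estimate as an upper bound sufficient for the barrier rather than insisting on a tight $\Theta$, since $tDN$ and $rN$ need not individually match $(2d)^{3m/2-1}$ to within constants.
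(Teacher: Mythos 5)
Your proposal is correct and follows essentially the same route as the paper: the paper's own argument for this theorem is exactly the preceding derivation in Section~\ref{sec:multi} — reading \eqref{eq:multivariate-recur-equiv} row-by-row as an $\vS$-matrix recurrence of width $(1,r)$ with $r=2^{m/2}d^{m/2-1}$ (via Lemma~\ref{lmm:disp-rank-multivariate}) and coefficient degree $D=\frac{d^{1+m/2}-1}{d-1}$, then substituting these parameters into the hypothesized $\tO(\poly(t)\cdot DN+rN)$ bound and comparing with $O(d^{\omega_2(m-1)/2+1})$. Your extra caution about treating $tDN+rN$ as an upper bound rather than a tight $\Theta((2d)^{3m/2-1})$ is reasonable (the paper states the estimate somewhat loosely), but it does not change the approach.
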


Note that the above shows that improving the dependence in $r$ in Theorem~\ref{thm:recur-general} significantly (even to the extent of having some dependence on $Dr$) will improve upon the current best-known algorithms (unless $\omega_2=3$).


Finally, in Appendix~\ref{subsec:multi-derivative}, we present one more example of matrices that have been studied in coding theory that satisfy our general notion of recurrence. These matrices encode multipoint evaluation of multivariate polynomials and their derivatives and correspond to (puncturing of) multivariate multiplicity codes, which have been studied recently~\cite{derivative-code-1,derivative-code-2,derivative-code-3}. However, currently this does not yield any conditional ``lower bounds" along the lines of Theorem~\ref{thm:lb-1} or~\ref{thm:lb-2}.





\section*{Acknowledgments}

We thank the anonymous STOC 17, FOCS 17, and SODA 18 reviewers of  earlier drafts of this paper for their helpful comments including pointing us to the transposition principle and other existing literature.
We would like to thank Swastik Kopparty, Chris Umans, and Mary Wootters for helpful discussions.
We gratefully acknowledge the support of the Defense
Advanced Research Projects Agency (DARPA) SIMPLEX program under No. N66001-15-C-4043,
the DARPA D3M program under No. FA8750-17-2-0095,
the National Science Foundation (NSF) CAREER Award under No. IIS-1353606,
the Office of Naval Research (ONR) under awards
No. N000141210041 and No. N000141310129,
a Sloan Research Fellowship, the Moore Foundation, an Okawa Research
Grant, Toshiba, and Intel.
Any opinions, findings, and conclusions
or recommendations expressed in this material are
those of the authors and do not necessarily reflect the views
of DARPA, NSF, ONR, or the U.S. government.
 Atri Rudra's research is supported by NSF grant CCF-1319402 and CCF-1717134.

\bibliographystyle{acm}
\bibliography{matrix-vect}

\begin{thebibliography}{10}

\bibitem{cheb-approx}
\url{https://en.wikipedia.org/wiki/Approximation_theory#Chebyshev_approximation}.

\bibitem{jacobi}
\url{https://en.wikipedia.org/wiki/Jacobi_polynomials#Differential_equation}.

\bibitem{zernike}
\url{https://en.wikipedia.org/wiki/Zernike_polynomials}.

\bibitem{OP-conf}
\url{http://wis.kuleuven.be/events/OPSFA/}.

\bibitem{chebfun}
\url{http://www.chebfun.org}.

\bibitem{AHU74}
{\sc Aho, A.~V., Hopcroft, J.~E., and Ullman, J.~D.}
\newblock {\em The Design and Analysis of Computer Algorithms}.
\newblock Addison-Wesley, 1974.

\bibitem{apostol}
{\sc Apostol, T.~M.}
\newblock {\em Introduction to analytic number theory}.
\newblock Springer Science \& Business Media, 2013.

\bibitem{BEGO08}
{\sc Bella, T., Eidelman, Y., Gohberg, I., and Olshevsky, V.}
\newblock Computations with quasiseparable polynomials and matrices.
\newblock {\em Theoretical Computer Science 409}, 2 (2008), 158 -- 179.
\newblock Symbolic-Numerical Computations.

\bibitem{BP94}
{\sc Bini, D., and Pan, V.~Y.}
\newblock {\em Polynomial and Matrix Computations (Vol. 1): Fundamental
  Algorithms}.
\newblock Birkhauser Verlag, Basel, Switzerland, Switzerland, 1994.

\bibitem{bitmead1980}
{\sc Bitmead, R.~R., and Anderson, B.~D.}
\newblock Asymptotically fast solution of {T}oeplitz and related systems of
  linear equations.
\newblock {\em Linear Algebra and its Applications 34\/} (1980), 103--116.

\bibitem{bordewijk1956}
{\sc Bordewijk, J.~L.}
\newblock Inter-reciprocity applied to electrical networks.
\newblock {\em Applied Scientific Research B: Electrophysics, Acoustics,
  Optics, Mathematical Methods 6\/} (1956), 1--74.
\newblock URL: {\tt http://cr.yp.to/\allowbreak bib/\allowbreak
  entries.html\#\allowbreak 1956/\allowbreak bordewijk}.

\bibitem{bostan2017}
{\sc Bostan, A., Jeannerod, C.-P., Mouilleron, C., and Schost, {\'E}.}
\newblock On matrices with displacement structure: generalized operators and
  faster algorithms.
\newblock {\em arXiv preprint arXiv:1703.03734\/} (2017).

\bibitem{bostan2007}
{\sc Bostan, A., Jeannerod, C.-P., and Schost, {\'E}.}
\newblock Solving {T}oeplitz-and {V}andermonde-like linear systems with large
  displacement rank.
\newblock In {\em Proceedings of the 2007 international symposium on Symbolic
  and algebraic computation\/} (2007), ACM, pp.~33--40.

\bibitem{bostan2010op}
{\sc Bostan, A., Salvy, B., and Schost, {\'E}.}
\newblock Fast conversion algorithms for orthogonal polynomials.
\newblock {\em Linear Algebra and its Applications 432}, 1 (2010), 249--258.

\bibitem{boyar2013logic}
{\sc Boyar, J., Matthews, P., and Peralta, R.}
\newblock Logic minimization techniques with applications to cryptology.
\newblock {\em Journal of Cryptology 26}, 2 (2013), 280--312.

\bibitem{BK78}
{\sc Brent, R.~P., and Kung, H.~T.}
\newblock Fast algorithms for manipulating formal power series.
\newblock {\em J. ACM 25}, 4 (Oct. 1978), 581--595.

\bibitem{burgisser2013algebraic}
{\sc B{\"u}rgisser, P., Clausen, M., and Shokrollahi, A.}
\newblock {\em Algebraic complexity theory}, vol.~315.
\newblock Springer Science \& Business Media, 2013.

\bibitem{cantor1991fast}
{\sc Cantor, D.~G., and Kaltofen, E.}
\newblock On fast multiplication of polynomials over arbitrary algebras.
\newblock {\em Acta Informatica 28}, 7 (1991), 693--701.

\bibitem{chihara}
{\sc Chihara, T.}
\newblock {\em An Introduction to Orthogonal Polynomials}.
\newblock Dover Books on Mathematics. Dover Publications, 2011.

\bibitem{fft}
{\sc Cooley, J.~W., and Tukey, J.~W.}
\newblock An algorithm for the machine calculation of complex {F}ourier series.
\newblock {\em Math. Comput. 19\/} (1965), 297--301.

\bibitem{CLRS}
{\sc Cormen, T.~H.}
\newblock {\em Introduction to algorithms}.
\newblock MIT press, 2009.

\bibitem{delvaux2007givens}
{\sc Delvaux, S., and Van~Barel, M.}
\newblock A {G}ivens-weight representation for rank structured matrices.
\newblock {\em SIAM Journal on Matrix Analysis and Applications 29}, 4 (2007),
  1147--1170.

\bibitem{driscoll}
{\sc Driscoll, J.~R., Healy, Jr., D.~M., and Rockmore, D.~N.}
\newblock Fast discrete polynomial transforms with applications to data
  analysis for distance transitive graphs.
\newblock {\em SIAM J. Comput. 26}, 4 (Aug. 1997), 1066--1099.

\bibitem{eidelman1999}
{\sc Eidelman, Y., and Gohberg, I.}
\newblock On a new class of structured matrices.
\newblock {\em Integral Equations and Operator Theory 34}, 3 (1999), 293--324.

\bibitem{EGH13}
{\sc Eidelman, Y., Gohberg, I., and Haimovici, I.}
\newblock {\em Separable Type Representations of Matrices and Fast Algorithms:
  Volume 1 Basics. Completion Problems. Multiplication and Inversion
  Algorithms}, vol.~234.
\newblock Springer Science \& Business Media, 2013.

\bibitem{FMKL79}
{\sc Friedlander, B., Morf, M., Kailath, T., and Ljung, L.}
\newblock New inversion formulas for matrices classified in terms of their
  distance from toeplitz matrices.
\newblock {\em Linear Algebra and its Applications 27\/} (1979), 31 -- 60.

\bibitem{GO94}
{\sc Gohberg, I., and Olshevsky, V.}
\newblock Complexity of multiplication with vectors for structured matrices.
\newblock {\em Linear Algebra and its Applications 202\/} (1994), 163 -- 192.

\bibitem{GS98}
{\sc Guruswami, V., and Sudan, M.}
\newblock Improved decoding of {R}eed-{S}olomon and algebraic-geometry codes.
\newblock {\em IEEE Transactions on Information Theory 45}, 6 (1999),
  1757--1767.

\bibitem{hartman1960}
{\sc Hartman, P.}
\newblock A lemma in the theory of structural stability of differential
  equations.
\newblock {\em Proceedings of the American Mathematical Society 11}, 4 (1960),
  610--620.

\bibitem{bernoulli}
{\sc Harvey, D.}
\newblock A multimodular algorithm for computing {B}ernoulli numbers.
\newblock {\em Math. Comput. 79}, 272 (2010), 2361--2370.

\bibitem{HR84}
{\sc Heinig, G., and Rost, K.}
\newblock {\em Algebraic methods for Toeplitz-like matrices and operators}.
\newblock Mathematical research. Akademie-Verlag, 1984.

\bibitem{higham2008}
{\sc Higham, N.~J.}
\newblock {\em Functions of matrices: theory and computation}.
\newblock Siam, 2008.

\bibitem{MA}
{\sc Horn, R.~A., and Johnson, C.~R.}, Eds.
\newblock {\em Matrix Analysis}.
\newblock Cambridge University Press, New York, NY, USA, 1986.

\bibitem{HLS17}
{\sc Hyun, S.~G., Lebreton, R., and Schost, {\'E}.}
\newblock {Algorithms for structured linear systems solving and their
  implementation}.
\newblock Draft at \url{https://hal-lirmm.ccsd.cnrs.fr/lirmm-01484831}, Jan.
  2017.

\bibitem{jeannerod2010}
{\sc Jeannerod, C.-P., and Mouilleron, C.}
\newblock Computing specified generators of structured matrix inverses.
\newblock In {\em Proceedings of the 2010 International Symposium on Symbolic
  and Algebraic Computation\/} (2010), ACM, pp.~281--288.

\bibitem{KKM79}
{\sc Kailath, T., Kung, S.-Y., and Morf, M.}
\newblock Displacement ranks of matrices and linear equations.
\newblock {\em Journal of Mathematical Analysis and Applications 68}, 2 (1979),
  395 -- 407.

\bibitem{disp-rank-survey}
{\sc Kailath, T., and Sayed, A.~H.}
\newblock Displacement structure: Theory and applications.
\newblock {\em SIAM Review 37}, 3 (1995), 297--386.

\bibitem{kaltofen1994}
{\sc Kaltofen, E.}
\newblock Asymptotically fast solution of {T}oeplitz-like singular linear
  systems.
\newblock In {\em Proceedings of the international symposium on Symbolic and
  algebraic computation\/} (1994), ACM, pp.~297--304.

\bibitem{wiedemann}
{\sc Kaltofen, E., and Saunders, B.~D.}
\newblock On {W}iedemann's method of solving sparse linear systems.
\newblock In {\em International Symposium on Applied Algebra, Algebraic
  Algorithms, and Error-Correcting Codes\/} (1991), Springer Berlin Heidelberg,
  pp.~29--38.

\bibitem{KU11}
{\sc Kedlaya, K.~S., and Umans, C.}
\newblock Fast polynomial factorization and modular composition.
\newblock {\em {SIAM} J. Comput. 40}, 6 (2011), 1767--1802.

\bibitem{KS73}
{\sc Kogge, P.~M., and Stone, H.~S.}
\newblock A parallel algorithm for the efficient solution of a general class of
  recurrence equations.
\newblock {\em IEEE transactions on computers 100}, 8 (1973), 786--793.

\bibitem{derivative-code-2}
{\sc Kopparty, S.}
\newblock List-decoding multiplicity codes.
\newblock {\em Theory of Computing 11\/} (2015), 149--182.

\bibitem{derivative-code-3}
{\sc Kopparty, S., Meir, O., Ron{-}Zewi, N., and Saraf, S.}
\newblock High rate locally-correctable and locally-testable codes with
  sub-polynomial query complexity.
\newblock {\em CoRR abs/1504.05653\/} (2015).

\bibitem{derivative-code-1}
{\sc Kopparty, S., Saraf, S., and Yekhanin, S.}
\newblock High-rate codes with sublinear-time decoding.
\newblock {\em J. {ACM} 61}, 5 (2014), 28:1--28:20.

\bibitem{lancaster1984factored}
{\sc Lancaster, P., Lerer, L., and Tismenetsky, M.}
\newblock Factored forms for solutions of {AX- XB= C} and {X- AXB= C} in
  companion matrices.
\newblock {\em Linear Algebra and Its Applications 62\/} (1984), 19--49.

\bibitem{lupanov}
{\sc Lupanov, O.}
\newblock A method of circuit synthesis.
\newblock {\em Izv. VUZ (Radiofizika) 1\/} (1958), 120--140.

\bibitem{moczulski2015acdc}
{\sc Moczulski, M., Denil, M., Appleyard, J., and de~Freitas, N.}
\newblock {ACDC}: A structured efficient linear layer.
\newblock {\em arXiv preprint arXiv:1511.05946\/} (2015).

\bibitem{morf1980}
{\sc Morf, M.}
\newblock Doubling algorithms for {T}oeplitz and related equations.
\newblock In {\em Acoustics, Speech, and Signal Processing, IEEE International
  Conference on ICASSP'80.\/} (1980), vol.~5, IEEE, pp.~954--959.

\bibitem{bivariate}
{\sc N{\"{u}}sken, M., and Ziegler, M.}
\newblock Fast multipoint evaluation of bivariate polynomials.
\newblock In {\em Algorithms - {ESA} 2004, 12th Annual European Symposium,
  Bergen, Norway, September 14-17, 2004, Proceedings\/} (2004), pp.~544--555.

\bibitem{olshevsky2003}
{\sc Olshevsky, V.}
\newblock {\em Fast Algorithms for Structured Matrices: Theory and
  Applications: AMS-IMS-SIAM Joint Summer Research Conference on Fast
  Algorithms in Mathematics, Computer Science, and Engineering, August 5-9,
  2001, Mount Holyoke College, South Hadley, Massachusetts}, vol.~323.
\newblock American Mathematical Soc., 2003.

\bibitem{OP98}
{\sc Olshevsky, V., and Pan, V.}
\newblock A unified superfast algorithm for boundary rational tangential
  interpolation problems and for inversion and factorization of dense
  structured matrices.
\newblock In {\em Foundations of Computer Science, 1998. Proceedings. 39th
  Annual Symposium on\/} (1998), IEEE, pp.~192--201.

\bibitem{OS99}
{\sc Olshevsky, V., and Shokrollahi, M.~A.}
\newblock A displacement approach to efficient decoding of algebraic-geometric
  codes.
\newblock In {\em Proceedings of the Thirty-First Annual {ACM} Symposium on
  Theory of Computing, May 1-4, 1999, Atlanta, Georgia, {USA}\/} (1999),
  pp.~235--244.

\bibitem{OS00}
{\sc Olshevsky, V., and Shokrollahi, M.~A.}
\newblock Matrix-vector product for confluent {C}auchy-like matrices with
  application to confluent rational interpolation.
\newblock In {\em Proceedings of the Thirty-Second Annual {ACM} Symposium on
  Theory of Computing, May 21-23, 2000, Portland, OR, {USA}\/} (2000),
  pp.~573--581.

\bibitem{bernoulli-facts}
{\sc Olver, F.~W.}
\newblock {\em NIST handbook of mathematical functions hardback and CD-ROM}.
\newblock Cambridge University Press, 2010.

\bibitem{P90}
{\sc Pan, V.~Y.}
\newblock On computations with dense structured matrices.
\newblock {\em Mathematics of Computation 55}, 191 (1990), 179--190.

\bibitem{pan2000nearly}
{\sc Pan, V.~Y.}
\newblock Nearly optimal computations with structured matrices.
\newblock In {\em Proceedings of the eleventh annual ACM-SIAM symposium on
  Discrete algorithms\/} (2000), Society for Industrial and Applied
  Mathematics, pp.~953--962.

\bibitem{P01}
{\sc Pan, V.~Y.}
\newblock {\em Structured Matrices and Polynomials: Unified Superfast
  Algorithms}.
\newblock Springer-Verlag New York, Inc., New York, NY, USA, 2001.

\bibitem{pan2002structured}
{\sc Pan, V.~Y., Rami, Y., and Wang, X.}
\newblock Structured matrices and {N}ewton's iteration: unified approach.
\newblock {\em Linear algebra and its applications 343\/} (2002), 233--265.

\bibitem{PT17}
{\sc Pan, V.~Y., and Tsigaridas, E.~P.}
\newblock Nearly optimal computations with structured matrices.
\newblock {\em Theoretical Computer Science\/} (2017).

\bibitem{peralta}
{\sc Peralta, R.}
\newblock Personal communication, 2017.

\bibitem{pernet2016}
{\sc Pernet, C.}
\newblock Computing with quasiseparable matrices.
\newblock In {\em Proceedings of the ACM on International Symposium on Symbolic
  and Algebraic Computation\/} (2016), ACM, pp.~389--396.

\bibitem{pernet2017}
{\sc Pernet, C., and Storjohann, A.}
\newblock Time and space efficient generators for quasiseparable matrices.
\newblock {\em Journal of Symbolic Computation 85\/} (2017), 224--246.

\bibitem{potts1998fast}
{\sc Potts, D., Steidl, G., and Tasche, M.}
\newblock Fast algorithms for discrete polynomial transforms.
\newblock {\em Mathematics of Computation of the American Mathematical Society
  67}, 224 (1998), 1577--1590.

\bibitem{putzer1966}
{\sc Putzer, E.~J.}
\newblock Avoiding the {J}ordan canonical form in the discussion of linear
  systems with constant coefficients.
\newblock {\em The American Mathematical Monthly 73}, 1 (1966), 2--7.

\bibitem{simoncini2016computational}
{\sc Simoncini, V.}
\newblock Computational methods for linear matrix equations.
\newblock {\em SIAM Review 58}, 3 (2016), 377--441.

\bibitem{sindhwani2015structured}
{\sc Sindhwani, V., Sainath, T., and Kumar, S.}
\newblock Structured transforms for small-footprint deep learning.
\newblock In {\em Advances in Neural Information Processing Systems\/} (2015),
  pp.~3088--3096.

\bibitem{trefethen}
{\sc Trefethen, L.~N., and Bau~III, D.}
\newblock {\em Numerical linear algebra}, vol.~50.
\newblock Siam, 1997.

\bibitem{van1992computational}
{\sc Van~Loan, C.}
\newblock {\em Computational frameworks for the fast {F}ourier transform}.
\newblock SIAM, 1992.

\bibitem{semisep}
{\sc Vandebril, R., Van~Barel, M., Golub, G., and Mastronardi, N.}
\newblock A bibliography on semiseparable matrices.
\newblock {\em Calcolo 42}, 3 (2005), 249--270.

\bibitem{vandebril2005note}
{\sc Vandebril, R., Van~Barel, M., and Mastronardi, N.}
\newblock A note on the representation and definition of semiseparable
  matrices.
\newblock {\em Numerical Linear Algebra with Applications 12}, 8 (2005),
  839--858.

\bibitem{VVM08}
{\sc Vandebril, R., Van~Barel, M., and Mastronardi, N.}
\newblock {\em Matrix computations and semiseparable matrices: linear systems},
  vol.~1.
\newblock JHU Press, 2007.

\bibitem{woodbury1950}
{\sc Woodbury, M.~A.}
\newblock Inverting modified matrices.
\newblock {\em Memorandum report 42\/} (1950), 106.

\bibitem{yap2000fundamental}
{\sc Yap, C.-K.}
\newblock {\em Fundamental problems of algorithmic algebra}, vol.~49.
\newblock Oxford University Press Oxford, 2000.

\bibitem{yu2016}
{\sc Yu, K.-H., Zhang, C., Berry, G.~J., Altman, R.~B., R{\'e}, C., Rubin,
  D.~L., and Snyder, M.}
\newblock Predicting non-small cell lung cancer prognosis by fully automated
  microscopic pathology image features.
\newblock {\em Nature Communications 7\/} (2016).

\bibitem{holonomic}
{\sc Zeilberger, D.}
\newblock A holonomic systems approach to special functions identities.
\newblock {\em Journal of Computational and Applied Mathematics 32}, 3 (1990),
  321 -- 368.

\bibitem{zhao2017theoretical}
{\sc {Zhao}, L., {Liao}, S., {Wang}, Y., {Li}, Z., {Tang}, J., {Pan}, V., and
  {Yuan}, B.}
\newblock {Theoretical Properties for Neural Networks with Weight Matrices of
  Low Displacement Rank}.
\newblock {\em ArXiv e-prints\/} (Mar. 2017).

\end{thebibliography}

\appendix

\agu{Note: everything in appendix should have a line in main body referencing it}
\ar{YES!}

\section{Related Work and Known Results}
\label{sec:related}

\agu{related work to end of main body, known results keep in appendix (although do we actually need to keep any of that?)}

\ar{Might not hurt to keep in since this has stuff about quasiseparable. But there should be a forward pointer to this from the main body.}


Superfast structured matrix-vector multiplication has been a rich area of research. Popular classes of structured matrices such as Toeplitz, Hankel, and Vandermonde matrices and their inverses all have classical superfast multiplication algorithms that correspond to operations on polynomials~\cite{AHU74, BP94}.
The four classes of matrices are all subsumed by the notion of displacement rank introduced by Kailath, Kung, and Morf in 1979~\cite{KKM79} (also see~\cite{FMKL79} and~\cite{HR84}). Kailath et al. initially used displacement rank to define Toeplitz-like matrices, generalizing Toeplitz matrices. Then Morf~\cite{morf1980} and Bitmead and Anderson~\cite{bitmead1980} developed a fast divide-and-conquer approach for solving a Toeplitz-like linear system of equations in quasilinear time, now known as the MBA algorithm. This was extended to Hankel and Vandermonde-like matrices in~\cite{P90}. In 1994, Gohberg and Olshevsky further used displacement rank to define Vandermonde-like, Hankel-like, and Cauchy-like matrices and developed superfast algorithms for vector multiplication~\cite{GO94}. 
In 1998, Olshevsky and Pan extended the MBA algorithm to superfast inversion, in a unified way, of these four main classes of displacement structured matrices~\cite{OP98}.
These matrix classes were unified and generalized by Olshevsky and Shokrollahi in 2000~\cite{OS00} with a class of matrices they named confluent Cauchy-like, which have low displacement rank with respect to Jordan form matrices. In this work, we extend these results by investigating matrices with low displacement rank with respect to any triangular $t$-band matrices, which we define to be matrices whose non-zero elements all appear in $t$ consecutive diagonals. General and unified algorithms for the most popular classes of matrices with displacement structure have continued to be refined for practicality and precision, such as in~\cite{HLS17} and~\cite{PT17}.

A second strand of research that inspired our work is the study of orthogonal polynomial transforms, especially that of Driscoll, Healy, and Rockmore~\cite{driscoll}. Orthogonal polynomials are widely used and well worth studying in their own right: for an introduction to the area, see the classic book of Chihara~\cite{chihara}. We present applications for some specific orthogonal polynomials. Chebyshev polynomials are used for numerical stability (see e.g.\ the ChebFun package~\cite{chebfun}) as well as approximation theory (see e.g.\ Chebyshev approximation~\cite{cheb-approx}). Jacobi polynomials form solutions of certain differential equations~\cite{jacobi}. Zernike polynomials have applications in optics and physics~\cite{zernike}. In fact, our investigation into structured matrix-vector multiplication problems started with some applied work on Zernike polynomials, and our results applied to fast Zernike transforms have been used in improved cancer imaging~\cite{yu2016}.
Driscoll et al. rely heavily on the three-term recurrence satisfied by orthogonal polynomials to devise a divide-and-conquer algorithm for computing matrix-vector multiplication. One main result of this work is a direct generalization of the recurrence, and we rely heavily on the recurrence to formulate our own divide and conquer algorithm. The basic algorithm in Section~\ref{sec:transpose} is modeled after previous works on orthogonal polynomials~\cite{driscoll,potts1998fast,bostan2010op}, which are themselves reminiscent of classic recursive doubling techniques such as that proposed by Kogge and Stone~\cite{KS73}.

A third significant strand of research is the study of rank-structured matrices. The most well-known example is the class of semiseparable matrices, for which we refer to an extensive survey by Vandebril, Van Barel, Golub, and Mastronardi~\cite{semisep} for a detailed overview of the body of work. Many variants and generalizations exist including the generator representable semiseparable matrices, sequentially separable mentions, and quasiseparable matrices, which all share the defining feature of certain sub-matrices (such as those contained above or below the diagonal) being low rank~\cite{VVM08,EGH13}. These types of matrices turn out to be highly useful for devising fast practical solutions of certain structured systems of equations. Quasiseparable matrices in particular, which appear in our Theorem~\ref{thm:intro-dr-quasi}, are actively being researched with recent fast representations and algorithms in~\cite{pernet2016,pernet2017}. Just as how the four main displacement structures are closely tied to polynomial operations~\cite{P01}, the work of Bella, Eidelman, Gohberg, and Olshevsky show deep connections between computations with rank-structured matrices and with polynomials~\cite{BEGO08}. Indeed at this point connections between structured matrices and polynomials is well established~\cite{P90,BP94,P01}.

As discussed in Section~\ref{sec:intro}, we view the connection of these many strands of work as our strongest conceptual contribution.

\subsection{Known Results}
\label{subsec:known}

Multiplication of two degree $N$ univariate polynomials can be performed in $\tO(N)$ operations: The classic FFT computes it in $O(N\log N)$ operations for certain fields~\cite{fft}, and generalizations of the Sch\"{o}nhage-Strassen algorithm compute it in $O(N\log N\log\log N)$ ring operations in general~\cite{cantor1991fast}. Computing polynomial divisors and remainders, i.e.\ $p(X) \pmod{q(X)}$ with $\deg(p(X)), \deg(q(X)) = O(N)$ can be done with the same number of operations~\cite{AHU74}.

If matrix-matrix multiplication by two $N \times N$ matrices can be performed with $O(N^\omega)$ operations for some $\omega$, then an $N \times N$ matrix can be inverted with $O(N^\omega)$ operations~\cite{CLRS}.

The matrix-vector product by matrices $\vA,\vA^T$ (and $\vA^{-1},\vA^{-T}$ when they exist) for Toeplitz and Hankel matrices $\vA$ can be computed in $O(N\log N)$ operations~\cite{P01}.
When $\vA$ is a Vandermonde matrix, these products takes $O(N\log^2 N)$ operations~\cite{P01}. The same holds for confluent Vandermonde matrices, defined as

\begin{defn}
  \label{defn:vandermonde}
  Given a set of points $p_0, \dots, p_{N-1}$, let $n_i = \displaystyle\sum_{j=0}^{i-1} \mathbb{1}(p_j = p_i)$ be the number of preceding points identical to $p_i$. Then the $N\times M$ {\em (confluent) Vandermonde matrix}\footnote{The term Vandermonde is usually applied when the $p_i$ are all distinct, otherwise it is known as a confluent Vandermonde matrix. However, either way the matrix is uniquely specified by the ordered sequence $p_i$ so we drop this distinction when the context is clear}, denoted $\vV_{p_0, \dots, p_{N-1}}$, is defined such that \[\vV[i,j] = \begin{cases} 0, & j < n_i \\  \frac{j!}{(j-n_i)!(n_i)!} {p_i}^{j-n_i}, &j \ge n_i\end{cases}\]  for $0 \le i < N, 0 \le j < M$.\footnote{Some sources define the confluent Vandermonde matrix with the factor of $n_i!$ in the denominator~\cite{P01}, and others do not. It makes no real difference, but is slightly more convenient for us to use the former notation.}
\end{defn}
Furthermore, multiplication by these matrices encode polynomial evaluation. For any vector $\vy$, $\vV_{p_0, \dots, p_{N-1}} \vy$ is equivalent to evaluating the polynomial $v(X) = \displaystyle\sum_{i=0}^{N-1} \vy[i] X^i$ at $v^{(n_i)}(p_i)$ for each $i$.


The characteristic polynomial $c_{\vM}(X)$ of a matrix $\vM$ is equal to the determinant of $X\vI-\vM$. When $\vM$ is triangular, it is equal to $\prod (X-\vM[i,i])$. By the Cayley-Hamilton Theorem, every matrix satisfies its own characteristic equation, i.e.\ $c_{\vM}(\vM) = 0$.

A $\lambda$-{\em Jordan block} is a square matrix of the form $\lambda\vI + \vS$, where $\vS$ is the shift matrix which is $1$ on the superdiagonal and $0$ elsewhere. A matrix in {\em Jordan normal form} is a direct sum of Jordan blocks. The minimal polynomial of a matrix is equal to the product $\prod_{\lambda} (X-\lambda)^{n_{\lambda}}$ where $n_{\lambda}$ is the size of the largest Jordan block for $\lambda$. Consequently, if a matrix has equal minimal and characteristic polynomials, then it has only one Jordan block per eigenvalue.

\section{Sparse Product Width}
\label{sec:spw}

In this section, we define the notion of \emph{sparse product width}. This notion and its properties have been known in various equivalent forms, most closely as the shortest linear straight-line program describing a matrix. We collect and describe these properties in the language of sparse product width, because we find this notion more intuitive in the context of structured dense matrices.

\subsection{Formal Definition}

\begin{defn}
  A \emph{factorization} of a matrix $\vA \in \F^{m \times n}$ is a sequence of matrices $\vA_1, \vA_2, \ldots, \vA_p$ (for some $p \in \N$) such that for some sequence $k_0, k_1, \ldots, k_p \in \N$, the matrices are each of shape $\vA_i \in \F^{k_i \times k_{i-1}}$, and $k_0 = n$ and $k_p = m$, and
  \[
    \vA = \vA_p \cdot \vA_{p-1} \cdots \vA_2 \cdot \vA_1 = \prod_{i=1}^p \vA_i.
  \]

  A \emph{proper factorization} of $\vA$ is a factorization of $\vA$ with the additional constraints
  \begin{itemize}
    \item None of the $\vA_i$, except possibly $\vA_p$, contains a zero row
    \item None of the $\vA_i$, except possibly $\vA_1$, contains a zero column
  \end{itemize}
\end{defn}

\begin{defn}
  \label{defn:idim}
  The \emph{inner dimension} of a pair of matrices $\vA \in \F^{m \times n}$ and $\vB \in \F^{n \times p}$ is their shared dimension
  \[ \idim(\vA,\vB) = n \]
\end{defn}

\begin{defn}
  \label{defn:spw}
  The \emph{sparse product width} of a factorization of a matrix $\vA$ is the value
  \[
    \spw(\vA_1, \ldots, \vA_p)
    =
    \sum_{i=1}^p \norm{\vA_i}_0
    -
    \sum_{i=1}^{p-1} \idim(\vA_{i+1}, \vA_i).
  \]
  The sparse product width of a matrix $\vA$ is the minimum sparse product width over all its proper factorizations, and is denoted $\spw(\vA)$.
\end{defn}

The sparse product width is equivalent to other notions usually associated with circuit complexity. One direct connection is to the Shortest Linear Program (SLP), which is to minimize the number of linear operations needed to compute a set of linear forms ~\cite{boyar2013logic}.
\begin{prop}
\label{prop:spw-equiv}
  The sparse product width of a matrix $\vA$ is equal to the output dimension (the number of non-zero rows of $\vA$) plus the length of the shortest linear program computing $\vA$. 
\end{prop}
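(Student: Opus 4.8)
The plan is to exhibit a two-way correspondence between proper factorizations of $\vA$ and linear straight-line programs computing the linear forms associated to $\vA$, and to check that under this correspondence the sparse product width of a factorization matches the program length plus the output dimension. First I would fix notation: a linear straight-line program computing $\vA \in \F^{m \times n}$ is a sequence of instructions, each of which either reads one of the $n$ input variables or forms a new variable as an $\F$-linear combination of previously-defined variables, with $m$ of the variables designated as outputs corresponding to the rows of $\vA$; its length is the number of instructions that are actual linear-combination steps (equivalently, the number of ``$+$'' and scalar-multiply gates, with the standard convention that an instruction $y = \sum_j c_j z_j$ involving $k$ summands counts as roughly $k-1$ additions, so that the total cost of an instruction is $\norm{\text{its coefficient row}}_0 - 1$ when it is a fresh combination). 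Then the length of the program is exactly $\sum (\norm{\text{row}}_0) - (\#\text{fresh instructions})$.

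Next I would describe the two directions. Given a proper factorization $\vA = \vA_p \cdots \vA_1$ with $\vA_i \in \F^{k_i \times k_{i-1}}$: introduce $k_0 = n$ variables for the inputs, and for each $i$ from $1$ to $p$, for each row of $\vA_i$, emit one instruction computing that row's linear combination of the $k_{i-1}$ currently-live variables, producing $k_i$ new variables; the final $k_p = m$ variables are the outputs. The number of fresh instructions is $\sum_{i=1}^p k_i$, but since each of the last-level outputs is a designated output the ``output dimension'' $m = k_p$ of them are not counted toward program length in the convention where the output rows are free (this is the bookkeeping that produces the ``$+$ output dimension'' term); thus the program length is $\sum_{i=1}^p \norm{\vA_i}_0 - \sum_{i=1}^p k_i + \text{(correction)}$. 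Comparing with Definition~\ref{defn:spw}, where we subtract $\sum_{i=1}^{p-1} \idim(\vA_{i+1},\vA_i) = \sum_{i=1}^{p-1} k_i$ (note this omits $k_0 = n$ and $k_p = m$), one sees the difference between $\spw$ and program length is precisely $-k_0 + (\text{output term})$; choosing the convention consistently gives $\spw = \text{length} + m$. The properness conditions (no zero row except in $\vA_p$, no zero column except in $\vA_1$) are exactly what guarantees the program has no dead code and no unused inputs, i.e.\ that the correspondence is tight and $\spw$ is actually minimized by a program-derived factorization.

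For the converse, given a shortest linear program, put each fresh instruction into its own ``layer,'' or more economically, topologically sort the instructions and bucket them into levels where level $i$ depends only on earlier levels; each level becomes a matrix $\vA_i$ whose rows are the coefficient rows of that level's instructions together with pass-through rows (rows of the identity) for variables that are still needed later. Pass-through rows are $1$-sparse, and the key observation is that a pass-through row of $\vA_i$ contributes $\norm{\cdot}_0 = 1$ to $\sum \norm{\vA_i}_0$ but is exactly cancelled by the $\idim$ subtraction, so pass-throughs are ``free'' in the $\spw$ count — which is why $\spw$ counts only genuine linear operations. One must verify that the resulting factorization can be made proper (delete zero rows/columns, which only decreases $\spw$) and that its $\spw$ equals the program length plus $m$. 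Taking the minimum over all programs on one side and all proper factorizations on the other, together with both inequalities, yields equality.

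The main obstacle I anticipate is purely the bookkeeping of the additive constants: getting the ``output dimension'' term to come out exactly right requires being careful about (a) which of the boundary inner dimensions $k_0, k_p$ are and are not subtracted in Definition~\ref{defn:spw}, (b) the convention for how many additions a $k$-term linear combination costs, and (c) handling output variables that are themselves just copies of input or intermediate variables (zero-cost outputs) versus outputs that require a genuine combination. A clean way to manage this is to first prove the statement for the \emph{sparse product cost} $\sum \norm{\vA_i}_0 - \sum_{i=0}^{p}\idim$-type quantity (subtracting \emph{all} inner dimensions including the boundary ones), show that this equals the program length exactly, and then observe that restoring the two boundary terms $k_0 = n$ and $k_p = m$ adjusts by a known amount; since $n$ is the input dimension (fixed, and the factorization's leftmost dimension) while $m = k_p$ is the number of non-zero rows of $\vA$, the net discrepancy is exactly the output dimension $m$, giving Proposition~\ref{prop:spw-equiv}. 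The linear-algebra content — that every matrix factors and every factorization multiplies back — is trivial; all the work is in this constant-accounting and in checking that properness is the exact translation of ``no dead code, no unused input.''
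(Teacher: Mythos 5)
Your proposal is correct and follows essentially the same route as the paper: translate each row of each factor into a linear-combination instruction costing its sparsity minus one (equivalently, break it into binary gates), and conversely turn each instruction into an identity-plus-one-new-row factor whose pass-through entries are exactly cancelled by the $\idim$ subtraction. One simplification: you do not need any convention that output rows are free (nor does $k_0$ ever enter) — summing $\norm{\vA_i[r,:]}_0 - 1$ over all rows gives length $= \sum_{i=1}^{p}\bigl(\norm{\vA_i}_0 - k_i\bigr)$, and comparing with $\spw = \sum_{i=1}^{p}\norm{\vA_i}_0 - \sum_{i=1}^{p-1}k_i$ yields the discrepancy $k_p = m$ directly.
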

This equivalence is quite direct. For simplicity assume that $\vA$ is well-behaved so it has no zero rows or columns. Note that the sparse product width minus output dimension is equal to the minimum over factorizations of
\[ \sum_{i=1}^p \norm{\vA_i}_0 - R(\vA_i) = \sum_{i=1}^p \sum_{r=0}^{R(\vA_i)-1} \norm{\vA_i[r,:]}_0-1 \]
where $R(\vA)$ is the number of rows of a matrix. We only need to show that the above quantity matches the length of the SLP.

Given a proper factorization $(\vA_1, \dots, \vA_p)$ of $\vA$, the factorization can be viewed as computing the linear transformation $\vx \mapsto \vA\vx$ through composing the linear transformations $\vx \mapsto \vA_i \vx, i = 1, \dots, p$. Consider any row $R$ of any matrix in the factorization, and suppose the row has $r$ non-zero entries. Multiplication by this row is the same as computing some linear function $x_\ell \leftarrow c_1 x_{i_1} + \dots + c_r x_{i_r}$, where the $c_j$ are constants depending on this factorization (i.e.\ $c_j$ are the entries of $\F$), and the $x_{i_j}$ refer to previously computed values. This can be broken into $r-1$ binary linear operations of the form $z \gets ax + by$ for any previously computed values $x,y$. Thus computation of the row $R$ can be represented by $r-1$ instructions in a linear straight-line program. Summing this over all rows in the factorization gives the value in Definition~\ref{defn:spw}.

The converse reduction is also easy to show. The $k$th linear operation of a straight line program $z \gets ax + by$ can be represented as a matrix $\vA_k \in \F^{(k+1) \times k}$ where $\vA_k[0:k, 0:k] = \vI_k$ and $\vA_k[k,:]$ is a row containing only the elements $a$ and $b$. Thus multiplying by $\vA_k$ reflects keeping all previously computed values and computing one new value $ax+by$. Thus a straight line program of length $s$ can be represented as a product of such matrices, with a sparse product width equal to $s$ plus the output dimension.

Therefore a factorization $\cA$ of $\vA$, viewed as a linear transformation, is equivalent to a linear straight-line program of length $\spw(\cA)$. This is also equivalent to the size of a linear circuit.

We also remark that an advantage of the sparse product width over notions like SLP is that it provides a natural language to describe existing classes of structured matrices. Notably, the two most canonical types, sparse matrices and low rank matrices, manifest as the simplest types of sparse factorizations - a $1$-matrix factorization and a $2$-matrix dense factorization, respectively. Furthermore, there is a large class of structured matrices including the Discrete Fourier/Sine/Cosine Transform and classical matrices of low displacement rank (Toeplitz-like, Hankel-like, Vandermonde-like, Cauchy-like, etc.) that have similar sparse factorizations - consisting of $O(\log{N})$ matrices that are roughly block diagonal - that can be grouped together under this parameterization.

\begin{rmk}
  In Definition~\ref{defn:spw}, the subtracted $\idim$ terms are necessary; as mentioned, it reflects that multiplying by a $1 \times m$ matrix requires $m-1$ linear operations. The minimum value of $\sum \norm{\vA_i}_0$ can be more than constant factor away from Definition~\ref{defn:spw}. An example is the \emph{prefix sum} matrix $\vA$ consisting of $1$s on the diagonal and below and $0$ above the diagonal. This matrix clearly requires $O(n)$ operations to multiply so $\spw(\vA) \le  O(n)$. However, suppose there was a proper factorization satisfying $\sum \norm{\vA_i}_0 = O(n)$. Note that for a proper factorization, we must have $\norm{\vA_i}_0\ge n$ for all $\vA_i$ except perhaps the first and last one. This implies that the factorization must consist of $O(1)$ matrices, each of which have $O(1)$ sparsity per row. With $O(n)$ processors, multiplication by each matrix takes constant time, so multiplication by $\vA$ can be done in $O(1)$ time. However, there is no $O(1)$ algorithm for computing prefix sums in parallel, a contradiction.\ar{(1) Need a citation for the parallel prefix sum lower bound. (2) I forget the communication model in these models. Is the lower bound for shared memory model? Basically in the reduction, you do need to communicate the intermediate result corresponding to each $\vA_i$.}
\end{rmk}

Because of Proposition~\ref{prop:spw-equiv}, the following properties of sparse product width follow from results about linear circuits.

\subsection{Properties}
\begin{lmm}
  \label{lmm:properties}
  The following properties hold for all matrices $\vA,\vB$ of the appropriate dimensions.
  \begin{description}
    \item[Product]
      \label{lmm:spw-product}
      $
        \mathsf{spw}(\vA\vB) \le \mathsf{spw}(\vA) + \mathsf{spw}(B) - \idim(\vA, \vB)
      $
    \item[Block Composition]
      \label{lmm:spw-block}
      The $\spw$ of a block matrix is the sum of the $\spw$ of the blocks.
    \item[Sum]
      \label{lmm:spw-sum}
      $
        \spw(\vA + \vB) \le \spw(\vA) + \spw(\vB).
      $
    \item[Kronecker Product]
      Let $\vA$ be an $n\times n$ matrix and $\vB$ be $m\times m$. Then $\spw(\vA \otimes \vB) \leq \spw( (\vA \otimes \vI_m)(\vI_n \otimes \vB) ) \leq m\spw(\vA) + n\spw(\vB)$.
  \end{description}
\end{lmm}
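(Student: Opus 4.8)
The plan is to reduce every property to the identification $\spw(\vA) = \mathrm{outdim}(\vA) + L(\vA)$ from Proposition~\ref{prop:spw-equiv}, where $\mathrm{outdim}(\vA)$ denotes the number of nonzero rows of $\vA$ and $L(\vA)$ is the minimum number of binary linear operations (equivalently, the smallest linear circuit) computing $\vA$; equivalently one argues directly with optimal proper factorizations. The four parts should be established in the order \textbf{Product}, \textbf{Block Composition}, \textbf{Sum}, \textbf{Kronecker Product}, since the last three reduce to the first two.

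\emph{Product.} Take optimal proper factorizations $\calA = (\vA_1,\dots,\vA_p)$ of $\vA$ and $\calB = (\vB_1,\dots,\vB_q)$ of $\vB$ and concatenate them into $\calC = (\vB_1,\dots,\vB_q,\vA_1,\dots,\vA_p)$, a factorization of $\vA\vB$. Directly from Definition~\ref{defn:spw}, $\spw(\calC) = \spw(\calA)+\spw(\calB)-\idim(\vA_1,\vB_q) = \spw(\vA)+\spw(\vB)-\idim(\vA,\vB)$, since the only newly subtracted inner dimension is the shared dimension at the junction. The subtlety is that $\calC$ need not be proper there: the now-interior factors $\vB_q$ and $\vA_1$ may contain a zero row or a zero column respectively. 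I would restore properness by surgery at the junction — for each zero column of $\vA_1$ set the matching row of $\vB_q$ to zero (which leaves the product $\vA_1\vB_q$ unchanged) and then delete the matched zero row/column pair — and check that this does not raise $\spw(\calC)$; the only bad case is a coordinate dead on both sides, which I would handle separately or exclude by working in the non-degenerate regime where $\vA,\vB$ have no zero rows or columns. This junction bookkeeping is where the real care lies.

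\emph{Block Composition.} Write the block matrix as $\bigoplus_i \vA^{(i)}$. For ``$\le$'', pad the optimal factorizations of the $\vA^{(i)}$ to a common length by splicing in identity factors — a free operation, since inserting $\vI_k$ between two factors of shared dimension $k$ adds $k$ to both $\sum\norm{\cdot}_0$ and $\sum\idim$ — and then take the block-diagonal product layer by layer, which is proper and has $\spw = \sum_i\spw(\vA^{(i)})$. The matching lower bound is equivalent to additivity of linear circuit complexity over direct sums, $L(\bigoplus_i\vA^{(i)}) = \sum_i L(\vA^{(i)})$, and is the delicate ingredient: I would induct on circuit size, pick a predecessor-free gate $g = \alpha x_a + \beta x_b$, and when $x_a,x_b$ lie in the same block $I_j$ factor $\vA = \vA'\vM$ with $\vM$ modifying only block $j$, so that the induction closes using $L(\vA^{(j)}) \le L(\vA'^{(j)})+1$; the case of a predecessor-free gate mixing two blocks requires showing such a gate is eliminable in an optimal circuit, or invoking the standard additivity result~\cite{burgisser2013algebraic}.

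\emph{Sum and Kronecker.} For the Sum property use $\vA+\vB = \begin{bmatrix}\vI_m & \vI_m\end{bmatrix}(\vA\oplus\vB)\begin{bmatrix}\vI_n\\\vI_n\end{bmatrix}$; applying Product twice and Block Composition, and bounding $\spw\bigl(\begin{bmatrix}\vI_m & \vI_m\end{bmatrix}\bigr)\le 2m$ and $\spw\bigl(\begin{bmatrix}\vI_n\\\vI_n\end{bmatrix}\bigr)\le 2n$ via the trivial single-factor bound $\spw \le \norm{\cdot}_0$, the junction inner dimensions $2m$ and $2n$ cancel and leave $\spw(\vA+\vB)\le\spw(\vA)+\spw(\vB)$. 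For the Kronecker Product, the mixed-product identity $(\vA\otimes\vI_m)(\vI_n\otimes\vB)=\vA\otimes\vB$ makes the first inequality an equality; then Product gives $\spw(\vA\otimes\vB)\le\spw(\vA\otimes\vI_m)+\spw(\vI_n\otimes\vB)$. Now $\vI_n\otimes\vB=\bigoplus_{i=1}^n\vB$, so Block Composition gives $\spw(\vI_n\otimes\vB)\le n\spw(\vB)$; and $\vA\otimes\vI_m$ is conjugate to $\vI_m\otimes\vA=\bigoplus_{i=1}^m\vA$ by perfect-shuffle permutation matrices, so by permutation-invariance of $\spw$ — immediate from Product together with $\spw(\vP)=N$ for an $N\times N$ permutation matrix — we get $\spw(\vA\otimes\vI_m)=\spw(\vI_m\otimes\vA)\le m\spw(\vA)$, giving the bound. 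I expect the direct-sum lower bound behind Block Composition to be the main obstacle; everything else is factorization bookkeeping plus the junction-properness surgery in Product.
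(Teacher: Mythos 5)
The paper itself never writes out a proof of this lemma: immediately before the statement it simply remarks that, via Proposition~\ref{prop:spw-equiv}, the properties "follow from results about linear circuits." So your factorization-level derivation is a genuinely different, self-contained route, and most of it works. In the Product step your junction worry in fact evaporates under the nondegeneracy you already impose: if $\vA$ has no zero columns then the first factor $\vA_1$ of a proper factorization of $\vA$ has no zero columns (a zero column of $\vA_1$ forces a zero column of $\vA$), and likewise a zero row of $\vB_q$ forces a zero row of $\vB$; hence the concatenated factorization is already proper at the junction and no surgery is needed. Your degenerate example is also the right observation --- for $\vA=\vB=0$ of size $1\times 1$ the claimed bound reads $0\le -1$ --- so some such hypothesis is genuinely required (the paper makes the same "no zero rows or columns" assumption informally after Proposition~\ref{prop:spw-equiv}). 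The identity-padding argument for the $\le$ half of Block Composition, the $\begin{bmatrix}\vI&\vI\end{bmatrix}$-sandwich for Sum, and the shuffle-permutation/mixed-product argument for Kronecker are all correct given Product.

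The genuine gap is the equality asserted in Block Composition, i.e.\ the lower bound $\spw\bigl(\bigoplus_i \vA^{(i)}\bigr)\ge\sum_i\spw(\vA^{(i)})$. Your induction fails exactly at the case you flag: a predecessor-free gate $\alpha x_a+\beta x_b$ with $x_a,x_b$ drawn from different blocks, and there is no argument that such a gate can be eliminated from an optimal circuit --- if it always could, direct-sum additivity of linear-circuit complexity would follow by a two-line induction, whereas this additivity is essentially the direct-sum problem for additive/linear complexity and is not a standard citable theorem (what one finds in~\cite{burgisser2013algebraic} is the direct-sum \emph{question}, not a proof of additivity). So as a proof of the lemma as literally stated, that direction is missing, and you should not lean on the fallback citation. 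Note, however, that every downstream use --- including $\spw(\vI_n\otimes\vB)\le n\,\spw(\vB)$ inside your own Kronecker argument --- needs only the $\le$ direction of Block Composition, which your padding construction does prove; if that item is read (or restated) as an inequality, your proof is complete.
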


It is known that over infinite fields, linear straight line programs are optimal to within a constant factor compared to general straight-line programs (that is, those that allow multiplication of variables as well, thus not necessarily always computing linear functions of the inputs)~\cite{burgisser2013algebraic}. Correspondingly, the sparse product width is an optimal descriptor of the algorithmic complexity of matrix-vector multiplication on these types of models.
\begin{thm}
  \label{thm:spw-optimal}
  For any matrix $\vA \in \R^{m\times n}$, let $\mathcal{T}(\vA)$ denote the runtime of the fastest matrix-vector multiplication algorithm for this matrix on any arithmetic circuit. Then,
  \[
    \mathcal{T}(\vA) = \Theta(\mathsf{spw}(\vA)).
  \]
  for all $\vA \in \R^{m\times n}$.
\end{thm}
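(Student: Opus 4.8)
The plan is to prove the two directions of the asymptotic equality separately. The upper bound $\mathcal{T}(\vA) = O(\mathsf{spw}(\vA))$ is the easy direction: given an optimal proper factorization $\vA = \vA_p\cdots\vA_1$ witnessing $\mathsf{spw}(\vA)$, I would unfold it into a linear straight-line program exactly as in the discussion following Proposition~\ref{prop:spw-equiv}. Each row of each $\vA_i$ with $r$ nonzero entries becomes $r-1$ binary linear operations $z \gets ax+by$, plus the multiplications by the constants $a,b$ which are absorbed into the same operations; summing over all rows of all factors gives a straight-line (hence arithmetic) program of length $\mathsf{spw}(\vA)$ minus the output dimension, and the whole program therefore has size $O(\mathsf{spw}(\vA))$. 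This yields an arithmetic circuit computing $\vx \mapsto \vA\vx$, so $\mathcal{T}(\vA) \le O(\mathsf{spw}(\vA))$.

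The lower bound $\mathcal{T}(\vA) = \Omega(\mathsf{spw}(\vA))$ is where the real content lies, and I would invoke two known facts. First, by the result cited from B\"urgisser et al.~\cite{burgisser2013algebraic}, over an infinite field (in particular $\R$) the shortest \emph{general} arithmetic straight-line program computing a set of linear forms is within a constant factor of the shortest \emph{linear} straight-line program computing them. So it suffices to lower-bound $\mathcal{T}(\vA)$ by the shortest linear straight-line program length. Second, by Proposition~\ref{prop:spw-equiv}, the shortest linear straight-line program computing $\vA$ has length equal to $\mathsf{spw}(\vA)$ minus the output dimension of $\vA$; since the output dimension is itself $O(\mathsf{spw}(\vA))$ (every nonzero row contributes at least $1$ to the sparse product width via the $\norm{\cdot}_0$ terms, net of the $\idim$ subtractions), we get that the shortest linear straight-line program has length $\Theta(\mathsf{spw}(\vA))$. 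Chaining: $\mathcal{T}(\vA) \ge \Omega(\text{shortest general SLP}) = \Omega(\text{shortest linear SLP}) = \Omega(\mathsf{spw}(\vA))$.

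The main obstacle, and the step I would spend the most care on, is making the reduction from the abstract model ``fastest matrix-vector multiplication algorithm on any arithmetic circuit'' to the straight-line program count fully rigorous. One has to pin down that $\mathcal{T}(\vA)$ counts arithmetic operations (additions, subtractions, multiplications, divisions, and scalar operations with field constants) in an arithmetic circuit, and argue that such a circuit computing $\vx \mapsto \vA\vx$ for all inputs $\vx$ is exactly (up to constants) an arithmetic straight-line program of that size — this is essentially the standard equivalence between arithmetic circuits and straight-line programs, but it must be stated for this model. A subtlety worth flagging is the handling of division and the genericity assumptions needed for the B\"urgisser-style linearization theorem over $\R$: one should note that the theorem applies because $\R$ is infinite, and that the constant-factor loss in passing from general to linear programs is absorbed into the $\Theta$. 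With those model details fixed, combining the two directions gives $\mathcal{T}(\vA) = \Theta(\mathsf{spw}(\vA))$ as claimed.
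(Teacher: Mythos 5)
Your proposal is correct and follows essentially the same route as the paper, which derives Theorem~\ref{thm:spw-optimal} directly from Proposition~\ref{prop:spw-equiv} (the equivalence between proper factorizations and linear straight-line programs) together with the cited fact from~\cite{burgisser2013algebraic} that over an infinite field such as $\R$ linear straight-line programs are within a constant factor of general arithmetic computations of linear forms. The extra care you flag about the circuit/SLP model and the output-dimension bookkeeping is exactly the (implicit) content the paper relies on, so there is no substantive difference in approach.
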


From Definition~\ref{defn:spw}, it is evident that $\spw(\vA) = \spw(\vA^T)$, which implies the classic \emph{transposition principle}. This principle has a rich history dating to Bordewijk~\cite{bordewijk1956}.
\begin{thm}
  \label{thm:transposition}
  For any matrix $\vA$, the runtimes of optimal algorithms computing $\vA$ and $\vA^T$ are within a constant factor of each other.
\end{thm}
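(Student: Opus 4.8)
The plan is to route the claim through the sparse product width, which Theorem~\ref{thm:spw-optimal} has already shown characterizes the optimal runtime of matrix--vector multiplication: $\mathcal{T}(\vA) = \Theta(\spw(\vA))$. Thus it suffices to establish the purely combinatorial identity $\spw(\vA) = \spw(\vA^T)$; the theorem then follows at once by applying Theorem~\ref{thm:spw-optimal} to both $\vA$ and $\vA^T$.

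To prove $\spw(\vA) = \spw(\vA^T)$, I would first show $\spw(\vA^T) \le \spw(\vA)$, the reverse inequality following by applying this with $\vA^T$ in place of $\vA$. Take any proper factorization $\vA = \vA_p \cdots \vA_1$ achieving $\spw(\vA)$, and consider the transposed, reversed sequence $\vB_j := \vA_{p+1-j}^T$ for $1 \le j \le p$. Then $\vB_p \cdots \vB_1 = (\vA_p \cdots \vA_1)^T = \vA^T$, and the shapes line up correctly. One checks that this is again a \emph{proper} factorization: for $j \ne p$ we have $p+1-j \ne 1$, so $\vA_{p+1-j}$ has no zero column and hence $\vB_j = \vA_{p+1-j}^T$ has no zero row; symmetrically, for $j \ne 1$ we have $p+1-j \ne p$, so $\vA_{p+1-j}$ has no zero row and $\vB_j$ has no zero column. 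Finally, $\norm{\vB_j}_0 = \norm{\vA_{p+1-j}}_0$ since transposition does not change the number of nonzero entries, and $\idim(\vB_{j+1},\vB_j) = \idim(\vA_{p-j}^T, \vA_{p+1-j}^T) = \idim(\vA_{p+1-j}, \vA_{p-j})$ since the inner dimension is symmetric in its two arguments (it is just the shared dimension). Summing, the two sums in Definition~\ref{defn:spw} agree term for term between the two factorizations, so $\spw(\vB_1,\dots,\vB_p) = \spw(\vA_1,\dots,\vA_p) = \spw(\vA)$, giving $\spw(\vA^T) \le \spw(\vA)$. Combining with the symmetric inequality yields $\spw(\vA) = \spw(\vA^T)$, and then $\mathcal{T}(\vA) = \Theta(\spw(\vA)) = \Theta(\spw(\vA^T)) = \Theta(\mathcal{T}(\vA^T))$ by Theorem~\ref{thm:spw-optimal}, which is the assertion.

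There is essentially no deep obstacle here once Theorem~\ref{thm:spw-optimal} is available; the one point genuinely requiring care is verifying that the asymmetric ``proper'' conditions (no zero rows except possibly in the last factor, no zero columns except possibly in the first) are exactly swapped — and hence preserved — under the transpose-and-reverse operation, which is where the row/column asymmetry in Definition~\ref{defn:spw} does real work. If one wished to bypass Theorem~\ref{thm:spw-optimal}, the alternative is the classical Tellegen/Bordewijk argument: reverse every wire in a linear straight-line program computing $\vA$ and swap its inputs with its outputs to obtain one computing $\vA^T$ of the same length. That is morally the same computation as the factorization-reversal above, with the bookkeeping of converting fan-out gates into additions (and vice versa) playing the role of tracking the $\idim$ corrections; I would present the factorization-based argument as the cleaner route given the machinery developed in this section.
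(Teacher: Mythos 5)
Your proposal is correct and follows essentially the same route as the paper: the paper observes that $\spw(\vA) = \spw(\vA^T)$ is evident from Definition~\ref{defn:spw} (by transposing and reversing a proper factorization) and then invokes the characterization $\mathcal{T}(\vA) = \Theta(\spw(\vA))$ of Theorem~\ref{thm:spw-optimal}, exactly as you do. Your only addition is spelling out the check that the asymmetric zero-row/zero-column conditions are swapped and preserved under transpose-and-reverse, which the paper leaves implicit.
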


Additionally, the sparse product width characterization nicely captures the parameterization of many types of structured matrices. Here are bounds for some important classes.
\begin{lmm}
  \begin{enumerate}[label=(\arabic*)]
    \item A matrix $\vA \in \F^{m \times n}$ that has sparsity $s$ satisfies $\spw(\vA) \leq s$.
    \item A matrix $\vA \in \F^{m \times n}$ that has rank $r$ satisfies $\spw(\vA) \leq r(m+n-1)$.
    \item A matrix $\vA \in \F^{m \times n}$ that has rigidity $\operatorname{Rig}_\vA(r) \leq s$ satisfies $\spw(\vA) \leq r(m+n-1) + s$. 
    \item A matrix $\vA \in \F^{n \times n}$ that is a Discrete Fourier Transform, Hankel, or Toeplitz matrix satisfies $\spw(\vA) \leq O(n\log n)$. A Vandermonde matrix satisfies $\spw(\vA) \leq O(n\log^2 n)$.
  \end{enumerate}
\end{lmm}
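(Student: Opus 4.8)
The plan is to derive all four bounds from the basic properties of sparse product width collected in Lemma~\ref{lmm:properties}, together with the known fast matrix-vector multiplication algorithms recalled in Appendix~\ref{subsec:known} and the optimality characterization of Theorem~\ref{thm:spw-optimal}. Parts (1)--(3) are short and essentially only require exhibiting a good proper factorization; part (4) is where the real content lies.

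For (1), if $\|\vA\|_0 \le s$ then the one-matrix sequence $(\vA)$ is itself a proper factorization (with $p=1$ there are no non-trivial zero-row/zero-column constraints), and Definition~\ref{defn:spw} gives $\spw(\vA)\le\|\vA\|_0\le s$. For (2), write a rank factorization $\vA=\vU\vV$ with $\vU\in\F^{m\times r}$ and $\vV\in\F^{r\times n}$; this is a two-matrix factorization of inner dimension $r$, so $\spw(\vA)\le\|\vU\|_0+\|\vV\|_0-\idim(\vU,\vV)\le mr+nr-r=r(m+n-1)$. The only subtlety is properness: $\vU$ must contain no zero row and $\vV$ no zero column. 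One may assume $\vA$ has no zero rows or columns (deleting them only decreases both sides, by the Block Composition property of Lemma~\ref{lmm:properties}), after which one chooses $\vU$ to consist of a maximal independent set of columns of $\vA$ and $\vV$ the corresponding coefficient matrix, which then has the required form. For (3), rigidity $\operatorname{Rig}_\vA(r)\le s$ means $\vA=\vL+\vR$ with $\operatorname{rank}(\vL)\le r$ and $\|\vR\|_0\le s$; the Sum property of Lemma~\ref{lmm:properties} combined with (1) and (2) yields $\spw(\vA)\le\spw(\vL)+\spw(\vR)\le r(m+n-1)+s$.

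For (4), I would either exhibit explicit $O(\log n)$-term sparse factorizations or, more cleanly, invoke Theorem~\ref{thm:spw-optimal} (over a field admitting the relevant roots of unity) which equates $\spw(\vA)$ with the optimal arithmetic-circuit complexity of $\vA\vb$: it is classical (Appendix~\ref{subsec:known}) that the DFT, Toeplitz, and Hankel matrices admit $O(n\log n)$ matrix-vector multiplication and Vandermonde matrices admit $O(n\log^2 n)$. Concretely, the Cooley--Tukey decomposition writes the $n\times n$ DFT matrix $\vF$ as a product of $O(\log n)$ ``butterfly'' matrices each with $O(n)$ nonzeros, so $\spw(\vF)\le O(n\log n)$ by the Product property; a circulant matrix equals $\vF^{-1}\vD\vF$ for diagonal $\vD$, and a Toeplitz (resp.\ Hankel) matrix is a submatrix (resp.\ reversed submatrix) of a circulant of size $2n$, so Block Composition and multiplication by a sparse permutation give the $O(n\log n)$ bound; and fast multipoint evaluation realizes the Vandermonde matrix as a product across $O(\log n)$ levels of a remainder tree, each level being a block-diagonal matrix whose blocks are polynomial multiplication/division operators of total $\spw$ at most $O(n\log n)$, for an overall $O(n\log^2 n)$ by combining Block Composition and the Product property.

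The main obstacle I anticipate is purely bookkeeping: in part (4) one must check that the known fast algorithms are genuinely \emph{linear} straight-line programs (they are, since every step is an $\F$-linear combination of previously computed quantities) so that Proposition~\ref{prop:spw-equiv} and Theorem~\ref{thm:spw-optimal} apply, and one must be slightly careful about the ground field $\F$ for the DFT (when $\F$ lacks a primitive $n$-th root of unity one uses a Bluestein/chirp-$Z$ reduction, which is still $O(n\log n)$ and still linear, or states the bound for fields that do). Parts (1)--(3) carry no real difficulty beyond the properness normalization noted above.
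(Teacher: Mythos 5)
Your proposal is correct and follows essentially the same route as the paper: parts (1)--(3) via the trivial one-matrix factorization, a rank factorization, and subadditivity, and part (4) by exhibiting the classical $O(\log n)$-term sparse factorizations (Cooley--Tukey butterflies for the DFT, circulant embedding for Toeplitz/Hankel, and the fast multipoint-evaluation structure for Vandermonde, which is the same block-Toeplitz-per-level factorization the paper cites from Driscoll et al.). Your extra care about properness in (2) and the ground field for the DFT only fills in details the paper leaves implicit.
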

\begin{proof}
  (1) and (2) follow from definitions of sparsity and rank; these sparse factorizations consist of $1$ matrix and $2$ matrices respectively. (3) follows from the subadditive property of sparse product width. The matrices of (4) have classically factorizations of length $O(\log{N})$ or $O(\log^2{N})$ where each matrix has $O(N)$ entries. For example, it is well-known that the DFT matrix can be factored into $\log{N}$ matrices with each having $2$ non-zero entries per row, corresponding to the Cooley-Tukey Fast Fourier Transform algorithm~\cite{van1992computational}; Hankel and Toeplitz matrices can be viewed as convolutions and hence a constant number of DFT applications; Vandermonde matrices can be factored into $\log{N}$ block-Toeplitz matrices~\cite{driscoll}.
\end{proof}

The basic low-recurrence width matrices~\eqref{eq:intro-recur-poly} also admit a similar sparse factorization, consisting of $O(\log{N})$ roughly block-diagonal matrices. For details in the orthogonal polynomial case, see~\cite{bostan2010op}; generalizing from recurrences of width $2$ to $t$ is easy.

\subsection{Hardness}
\label{subsec:spw:hard}
The sparse product width is a useful way to describe how complex a given linear operator is. However, given a matrix, it is not easy to actually recover its sparse product width. It is known that over any field $\F$, the question of finding the shortest linear straight-line program is NP-hard~\cite{boyar2013logic}. Furthermore, there is no polynomial time approximation scheme for it~\cite{boyar2013logic}. The best known approximation ratio is $O(N/\log{N})$~\cite{peralta} (which follows from Lupanov's upper bound on general matrix vector multiplication~\cite{lupanov}).

\section{Details on the Structure Lemma}
\label{sec:structure}

\begin{lmm}
  \label{lmm:structure-general-deg}
  For any commutative ring $\cR$, let $\vG \in \cR^{N \times N}$ be a $t$-band lower triangular matrix, in particular
  \[
    \vG[i,j] =
    \begin{cases}
      g_{i,0} & i=j \\
      -g_{i,i-j} & j < i \le j+t \\
      0 & \text{otherwise}
    \end{cases}
  \]
  Let $\vT_i: i = 0, \dots, n-2$ be the companion matrix 
  \[\vT_i=
    \begin{pmatrix}
    0 & 1& \cdots &0  &0 \\
    \vdots & \vdots & \cdots &\vdots &\vdots\\
    0 & 0 &\cdots &1 &0\\
    0 & 0 &\cdots &0 &1\\
    g'_{i+1,t} & g'_{i+1,t-1} & \cdots & g'_{i+1,2} & g'_{i+1,1} \\
    \end{pmatrix}.
  \]
  where $g'_{i,j} = g_{i,j}\prod_{k=i-j+1}^i g_{k,0}$ (we say $g_{i,j} = 0$ for $j > i$).

  Then
  \[
    \vG^{-1}[i,j] = (g_{j,0}\dots g_{i,0})^{-1} \vT_{[j:i]}[t-1,t-1]
  \]
  (recall $\vT_{[j:i]}[t-1,t-1]$ is the bottom right corner of $\vT_{[j:i]}$).
\end{lmm}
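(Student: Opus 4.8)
The plan is to prove Lemma~\ref{lmm:structure-general-deg} by first reducing to the case where all $g_{i,0} = 1$, and then establishing the formula for $\vG^{-1}$ in terms of the ranged transition matrices $\vT_{[j:i]}$ by a direct induction.

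First I would handle the diagonal rescaling. Let $\vE = \diag(g_{0,0}, g_{0,0}g_{1,0}, \dots, g_{0,0}\cdots g_{N-1,0})$, so that $\vE[i,i] = \prod_{k=0}^i g_{k,0}$. Consider $\tilde{\vG} = \vE^{-1} \vG \vE$; one checks that $\tilde{\vG}$ has $1$ on the diagonal and $\tilde{\vG}[i,j] = -g_{i,i-j} \prod_{k=j+1}^i g_{k,0} / \prod_{k=?}^? \dots$ — more precisely, $\tilde{\vG}[i,j] = \vG[i,j] \cdot \vE[j,j]/\vE[i,i] = -g_{i,i-j} / \prod_{k=j+1}^i g_{k,0}$. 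Hmm — that is not quite $g'_{i,i-j}$, so the cleaner choice is $\tilde{\vG} = \vE \vG \vE^{-1}$ or to scale only columns. Let me instead set $\tilde{\vG}$ by dividing row $i$ by $g_{i,0}$: then $\tilde{\vG}[i,i] = 1$ and $\tilde{\vG}[i,j] = -g_{i,i-j}/g_{i,0}$, which still has fractions. The right normalization, matching the definition $g'_{i,j} = g_{i,j}\prod_{k=i-j+1}^i g_{k,0}$, is to conjugate by $\vE$ and observe $\vT_i$ is built from the $g'$, not the $g$. So the first step is: show that the companion matrices $\vT_i$ as defined (using $g'$) are precisely the transition matrices for the recurrence associated to $\tilde{\vG} = \vD \vG \vD^{-1}$ for the appropriate diagonal $\vD$, and that $\vG^{-1} = \vD^{-1} \tilde{\vG}^{-1} \vD$, which introduces exactly the factor $(g_{j,0}\cdots g_{i,0})^{-1}$ into entry $(i,j)$.

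The core step is then proving, in the normalized case ($g_{i,0} = 1$), that $\tilde{\vG}^{-1}[i,j]$ equals the bottom-right entry of $\vT_{[j:i]} = \vT_{i-1} \cdots \vT_j$. I would do this by induction on $i - j$. The key algebraic fact is that $\tilde{\vG}$ being lower-triangular and banded means $\tilde{\vG}^{-1}$ is lower-triangular, and its entries $h_{ij} = \tilde{\vG}^{-1}[i,j]$ satisfy the recurrence $h_{ij} = \sum_{\ell=1}^{\min(t,i-j)} g'_{i,\ell} h_{i-\ell,j}$ for $i > j$ with $h_{jj} = 1$ (this comes from reading off row $i$ of the identity $\tilde{\vG}\tilde{\vG}^{-1} = \vI$, using $\tilde{\vG}[i,i]=1$ and $\tilde{\vG}[i,i-\ell] = -g'_{i,\ell}$). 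On the other side, the companion matrix $\vT_i$ acts on the vector $(h_{i-t+1,j}, \dots, h_{i,j})^T$ by shifting and appending $\sum_\ell g'_{i+1,\ell} h_{i+1-\ell,j}$, which is exactly $h_{i+1,j}$. So by induction $\vT_{[j:i]}$ applied to $(h_{j-t+1,j}, \dots, h_{j,j})^T = (0,\dots,0,1)^T = \ve_{t-1}$ yields $(h_{i-t+1,j},\dots,h_{i,j})^T$, and its last coordinate $h_{ij}$ is precisely $\vT_{[j:i]}\ve_{t-1}$ read in the last slot, i.e.\ $\vT_{[j:i]}[t-1,t-1]$. The base case $i = j$ gives $\vT_{[j:j]} = \vI_t$ whose bottom-right entry is $1 = h_{jj}$.

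The main obstacle I anticipate is bookkeeping the diagonal rescaling correctly — getting the products $\prod_{k=i-j+1}^i g_{k,0}$ in the definition of $g'$ to line up exactly with what conjugation by $\vE$ produces, and confirming that the companion matrix written with the $g'_{i+1,\ell}$ (note the index shift to $i+1$) is genuinely the transition matrix $\vT_i$ advancing from index $i$ to $i+1$. Everything else is routine: the recurrence for $\tilde{\vG}^{-1}$'s entries is a one-line consequence of $\tilde{\vG}\tilde{\vG}^{-1}=\vI$, and the induction identifying it with powers of companion matrices is standard. I would also remark that this lemma immediately specializes to Lemmas~\ref{lmm:structure1} and~\ref{lmm:structure2} by taking $\cR = \F[X]/(c_\vR(X))$ and noting $\vH(\vR)\vG(\vR) = \vI$ via Cayley--Hamilton.
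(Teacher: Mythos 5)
Your induction for the normalized case ($g_{i,0}=1$) is correct and is essentially the paper's own argument in slightly different clothing: the paper solves $\vG\vx=\vy$ symbolically, rewrites row $i$ as the recurrence $\vx[i]=\sum_{\ell}g_{i,\ell}\vx[i-\ell]+\vy[i]$, puts this in companion form, iterates from the base vector, and matches coefficients against $\vx=\vG^{-1}\vy$; you specialize $\vy$ to a column of the identity and read off $h_{ij}$ as the last entry of $\vT_{[j:i]}\ve_{t-1}$, which is the same computation. So the core step is fine and not a new route.

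The genuine gap is in your reduction to that normalized case. You propose a conjugation $\tilde{\vG}=\vD\vG\vD^{-1}$, and this cannot do the job for two concrete reasons: conjugating a triangular matrix by a diagonal matrix leaves its diagonal untouched, so $\tilde{\vG}[i,i]=g_{i,0}$ is never normalized to $1$; and the prefactor that conjugation would put on $\vG^{-1}[i,j]$ is $\vD[j,j]/\vD[i,i]$, which telescopes to a product of the $g_{k,0}^{-1}$ over $k$ strictly between one endpoint and the other, not the claimed $(g_{j,0}\cdots g_{i,0})^{-1}$, which includes \emph{both} endpoints. The paper instead uses an asymmetric two-sided scaling $\vG'=\vL\vG\vR$ with $\vL=\diag{(1,\,g_{0,0},\,g_{0,0}g_{1,0},\dots)}$ and $\vR=\diag{(g_{0,0}^{-1},\,(g_{0,0}g_{1,0})^{-1},\dots)}$: this makes the diagonal of $\vG'$ identically $1$, and then $\vG^{-1}=\vR\,(\vG')^{-1}\,\vL$ gives entrywise $\vG^{-1}[i,j]=\bigl(\prod_{k=j}^{i}g_{k,0}\bigr)^{-1}(\vG')^{-1}[i,j]$, which is exactly the stated prefactor. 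You flagged this bookkeeping as your anticipated obstacle but left it unresolved, and the mechanism you gestured at (pure conjugation, or dividing each row by $g_{i,0}$) would fail; with the two-sided scaling substituted in, the rest of your argument goes through. One further caution, since you were (rightly) uneasy about it: when you carry out the scaling, compute the off-diagonal entries of $\vG'$ explicitly and check how they line up against the definition $g'_{i,j}=g_{i,j}\prod_{k=i-j+1}^{i}g_{k,0}$ used in $\vT_i$ — under the scaling above they come out as $-g_{i,i-j}\prod_{k=j+1}^{i-1}g_{k,0}$, so the alignment of which diagonal factors are absorbed into the companion entries versus the outer prefactor is exactly the place where an off-by-one is easy to introduce, and it must be verified rather than asserted.
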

\begin{proof}
  We first show it assuming $g_{i,0} = 1$ for all $i$. In this case, $g'_{i,j} = g_{i,j}$ for all $i,j$. 

  Consider $\vx,\vy \in \cR^{N}$ (symbolically) such that $\vG\vx = \vy$. This is equivalent to the recurrence relation
  \[
    \vx[i] = \sum_{j=1}^t g_{i,j}\vx[i-j] + \vy[i]
  \]
  or
  \[
    \begin{bmatrix} \vx[i-t+1] \\ \vdots \\ \vx[i] \end{bmatrix}
    =
    \vT_{i-1}
    \begin{bmatrix} \vx[i-t] \\ \vdots \\ \vx[i-1] \end{bmatrix}
    +
    \begin{bmatrix} \vzero^T \\ \vdots \\ \ve_i^T \end{bmatrix}
    \vy
  \]
  for all $i \geq 1$. As a base case of the recurrence, we have
  \[
    \begin{bmatrix} \vx[-t+1] \\ \vdots \\ \vx[0] \end{bmatrix}
    =
    \begin{bmatrix} \vzero^T \\ \vdots \\ \ve_0^T \end{bmatrix}
    \vy
  \]
  Iterating this relation gives
  \[
    \begin{bmatrix} \vx[i-t+1] \\ \vdots \\ \vx[i] \end{bmatrix}
    =
    \vT_{i-1}\dots\vT_0
    \begin{bmatrix} \vzero^T \\ \vdots \\ \ve_0^T \end{bmatrix}
    \vy
    + \dots +
    \vT_{i-1}
    \begin{bmatrix} \vzero^T \\ \vdots \\ \ve_{i-1}^T \end{bmatrix}
    \vy +
    \begin{bmatrix} \vzero^T \\ \vdots \\ \ve_{i}^T \end{bmatrix}
    \vy
    =
    \sum_{j=0}^i
    \vT_{[j:i]}
    \begin{bmatrix} \vzero^T \\ \vdots \\ \ve_{j}^T \end{bmatrix}
    \vy
  \]
  Thus
  \[
    \vx[i] = \sum_{j=0}^i \vT_{[j:i]}[t-1,t-1] \vy[j].
  \]
  But symbolically, $\vx = \vG^{-1}\vy$, so by matching coefficients we see
  \[
    \vG^{-1}[i,j] = \vT_{[j:i]}[t-1,t-1],
  \]
  in other words the bottom right element of $\vT_{[j:i]}$.

  We now show it in general. Notice that multiplying $\vG$ on the left by $\vL = \diag(1, g_{0,0}, g_{0,0}g_{1,0}, \dots, g_{0,0}\dots g_{n-2,0})$ and on the right by $\vR = \diag( g_{0,0}^{-1}, (g_{0,0}g_{1,0})^{-1}, \dots, (g_{0,0}\dots g_{n-1,0})^{-1})$ yields the matrix
  \[
    \vG'[i,j] =
    \begin{cases}
      1 & i=j \\
      -g'_{i,i-j} & j < i \le j+t \\
      0 & \text{otherwise}
    \end{cases}
  \]
  Thus $\vG^{-1} = (\vL^{-1}\vL\vG\vR\vR^{-1})^{-1} = \vR\vG'^{-1}\vL$, where $\vG'^{-1}$ can be found by the previous case.
\end{proof}

\begin{proof}[Proof of Lemma~\ref{lmm:structure2}]
  Apply Lemma~\ref{lmm:structure-general-deg}. When the recurrence is of degree $(d,\bard)$, multiply on the left and right by appropriate diagonal matrices as in Lemma~\ref{lmm:structure-general-deg}.
\end{proof}

%


\begin{proof}[Proof of Lemma~\ref{lmm:T-range-sizes}]
  We prove this when $(d,\bar{d}) = (1,0)$. For general degrees, the transition matrices $\vT_i$ have degrees uniformly scaled by $(d+\bard)$, as shown in Lemma~\ref{lmm:structure-general-deg}.

  Fix any arbitrary $\ell$. We will prove the statement by induction on $r-\ell$. For the base case (i.e.\ when we are considering $\vT_{\ell}$), the bounds follow from the definition of $\vT_{\ell}$ and the our assumption on the sizes of $g_{\ell,j}(X)$ for $0\le j\le t$. In fact when $r \leq \ell+t$ it can be shown inductively that the last $r-\ell$ rows satisfy the desired degree constraints and the rest looks like a shift matrix, i.e.\ $\vT_{[\ell:r]}[i,j] = \delta_{r-\ell+i-j}$ for $i < t-(r-\ell)$. Now assume the result is true for $r-\ell=\Delta \ge t+1$.

Now consider the case of $r=\ell+\Delta+1$. 
 In this case note that $\vT_{[\ell:r]}=\vT_{r-1}\cdot \vT_{[\ell:r-1]}$. By the action of $\vT_{r-1}$, the first $t-1$ rows of $\vT_{[\ell:r]}$ are the last $t-1$ rows of $\vT_{[\ell:r-1]}$ and the size claims for entries in those rows follows from the inductive hypothesis. Now note that for any $0\le j < t$, we have
 \[
     \vT_{[\ell:r]}[t-1,j]=\sum_{k=1}^{t} g_{r,k}(X)\cdot \vT_{[\ell:r-1]}[t-k,j].
 \]
By the inductive hypothesis, we have that
\[
    \deg\vT_{[\ell:r]}[t-1,j] \le \max_{0\le k < t} \left( \deg g_{r,k} + \deg\vT_{[\ell:r-1]}[t-k,j] \right) \le \max_k (k + ((r-1)-\ell+(t-k)-j)) = r-\ell+(t-1)-j,
\]
as desired.
\end{proof}

\section{Explicit Algorithms for Computing $\vA\vb$}
\label{sec:Ab}

In this section we elaborate on the problem of computing
\[\vc=\vA\vb,\]
by providing an explicit algorithm. We note that an alternate way to derive such an algorithm is through the transposition principle. In particular, by analyzing the algorithm for $\vA^T\vb$ as a linear computation on $\vb$, a circuit or sparse factorization (Appendix~\ref{sec:spw}) of $\vA$ can be deduced. The transpose of this factorization yields an algorithm calculating $\vA\vb$. Such an approach was taken in~\cite{bostan2010op} in the case of orthogonal polynomial recurrences, and a generalization of their sparse factorization from $2$ to $t$ can be used to deduce a $\vA\vb$ algorithm for a restricted setting (polynomial recurrences with no modulus or error). However, here we provide an algorithm with a natural interpretation in terms of the behavior of the linear operator $\vA$. It also turns out that this algorithm is the natural transpose of Algorithm~\ref{algo:transpose-mult}, although we do not show the details here.

We start with a basic polynomial recurrence~\eqref{eq:intro-recur-poly} where the $g_{i,j}(X)$ are degree $(1,0)$. For the purposes of later generalizing to modular recurrences, we will actually consider a minor generalization where the recurrence has degree $(d,\bard)$ and the polynomials are not taken in a mod, so they define a matrix of dimensions $N \times (d+\bard)N$. Namely, consider the polynomials $f_i(X)$ defined by recurrence
\begin{equation}
\label{eq:recur-rational-equiv-new}
D_{i}(X)f_{i+1}(X) =\sum_{j=0}^t n_{i,j}(X)\cdot f_{i-j}(X).
\end{equation}
where $\deg(D_i(X)) = \bar{d}$ and $\deg(n_{i,j}(X)) \leq d(j+1) + \bar{d}$. Furthermore, assume the starting conditions $f_i(X) : 0 \leq i \leq t$ satisfy
\begin{equation}
  \label{eq:rational-assumption}
  \prod_{j=0}^{N-1} D_j(X) \mid f_i(X)
\end{equation}
(this condition is only to ensure that the recurrence generates polynomials, which will be shown).

Assume that the $f_i(X)$ have degrees bounded by $\bar{N} = (d+\bar{d})N$, and we will consider the problem of computing $\vA\vb$ where the $\vA[i,j]$ is the coefficient of $X^j$ in $f_i(X)$ (note that $\vA \in \F^{N \times \bar{N}}, \vb \in \F^{\bar{N}}$).

We again emphasize that the setting when $(d,\bar{d}) = (1,0)$ corresponds to the basic polynomial recurrence and should be considered the prototypical example for this section. In this case $\bar{N} = N$, $D_i(X) = 1$ for all $i$, and the divisibility assumptions on the starting polynomials~\eqref{eq:rational-assumption} are degenerate.

The main idea of the algorithm is to use the following observation to compute $\vA\vb$. 
For any vector $\vu\in\F^N$, define the polynomial
\[\vu(X)=\sum_{i=0}^{N-1} u_i\cdot X^i.\]
Also define
\[\vu^R=\vJ\cdot \vu,\]
for the {\em reverse} vector, where $\vJ$ is the `reverse identity' matrix.
Finally, define $\coef_i(p(X))$ to be the coefficient of the term $X^i$ in the polynomial $p(X)$.
With the above notations we have
\begin{lmm}
\label{lmm:reverse-convolution}
For any vector $\vu,\vv\in\F^N$, we have $\ip{\vu}{\vv} \left(\ip{\vu}{\vv^R}\right) =  \coef_{N-1}(\vu(X)\cdot \vv^R(X))$ ($\vu(X)\cdot \vv(X)$ resp.).
\end{lmm}
\begin{proof}
The proof follows from noting that the coefficient of $X^{N-1}$ in $\vu(X)\cdot \vv^R(X)$ is given by
\[\sum_{j=0}^{N-1} \vu[i]\cdot \vv^R[N-1-i] =\sum_{i=0}^{N-1} \vu[i]\cdot \vv[i] =\ip{\vu}{\vv},\]
where we used that fact that $\left(\vv^R\right)^R=\vv$.
\end{proof}

For notational convenience, define $D_{[i:j]}(X) = \prod_{k=i}^{j-1} D_k(X)$ ($i,j$ can be out of the range $[t:N]$ with the convention $D_i(X) = 1$ for $i$ outside the range). By Lemma~\ref{lmm:reverse-convolution}, we know that $\vc[i] = \coef_{\bar{N}-1}(f_i(X)\vb^R(X))$. By Lemma~\ref{lmm:structure2}, this means that $\vc[i]$ is the coefficient of $X^{\bar{N}-1}$ of
\[ \ve \vT_{[t:i+t]} D_{[t:N+t]}(X) \vF, \]
where $\vF \in \F^{t+1}$ is defined by
\begin{equation}
  \label{eq:Ab-initialization}
  \vF[t-i] = \frac{f_i(X)\vb^R(X)}{D_{[t:N+t]}(X)},
\end{equation}
and $\ve$ is the row vector $\begin{bmatrix} 0 & \cdots & 0 & 1 \end{bmatrix}$ (i.e.\ so we are considering the last row of $\vT_{[t:i+t]}$). Note that this expression is well-defined because $\vT_i D_i(X)$ is a matrix of polynomials, and $\vF$ is a vector of polynomials by~\eqref{eq:rational-assumption}, so the resulting product is a polynomial.

Note that the first half of these expressions can be written as $(\ve \vT_{[t:i+t]} D_{[t:N/2+t]}(X)) (D_{[N/2+t:N+t]}(X) \vF)$ for $0 \leq i < N/2$. By Lemma~\ref{lmm:T-range-sizes}, the left term $\vT_{[t:i+t]} D_{[t:N/2+t]}(X)$ is a matrix with polynomial entries where the last row has degrees bounded by $di + \bar{d}N/2 < \bar{N}/2$. So the coefficient of $\bar{N}$ in the whole product depends only on the higher-order $\bar{N}/2$ coefficients of the right term $D_{[N/2+t:N+t]}(X) \vF$. For convenience, define this operator which reduces a polynomial to one on its higher order coefficients
\[ \operatorname{Reduce}(f(X), n) = \frac{\left( f(X)\mod{X^n} \right) - \left( f(X)\mod{X^{n/2}} \right)}{X^{n/2}} \]
Thus if we define $\vF_\ell[i] = \operatorname{Reduce}( D_{[N/2+t:N+t]}(X)\vF[i], \bar{N})$, then $\vc[i]$ for $0\leq i < N/2$ is the coefficient of $X^{\bar{N}/2 - 1}$ in
\[ \ve \vT_{[t:i+t]} D_{[t:N/2+t]}(X) \vF_\ell.\]
Note that this has the same form as the original problem, but with every term of half the size.

Similarly, we examine the second half of the answer. We want the coefficient of $X^{\bar{N}-1}$ in $ \ve \vT_{[t:N/2+i+t]} D_{[t:N+t]}(X) \vF$ for $0 \leq i < N/2$, which can be written as $(\ve \vT_{[N/2+t:N/2+i+t]} D_{[N/2+t:N+t]}(X)) (\vT_{[t:N/2+t]} D_{[t:N/2+t]}(X) \vF)$. Note once again that the left matrix has degrees bounded by $\bar{N}/2$, so we only need the higher order $\bar{N}/2$ coefficients of the polynomials in the right term. Thus defining $\vF_r[i] = \operatorname{Reduce}( \vT_{[t:N/2+t]} D_{[t:N/2+t]}(X) \vF[i], \bar{N})$, then $\vc[N/2+i] : 0\leq i < N/2$ is the coefficient of $X^{\bar{N}/2-1}$ in
\[ \ve \vT_{[N/2+t:N/2+i+t]} D_{[N/2+t:N+t]}(X) \vF_r \]
which is once again a problem of half the size of the original.

The algorithm is formalized in Algorithm~\ref{algo:vect-mult-rat-new}, the correctness of which follows from the above discussion. The initial call is to $\Amult(\vF,m,0)$, where we assume that $N=2^{m}$.
\begin{algorithm}
\caption{\Amult($\vF,a,k$)}
\label{algo:vect-mult-rat-new}
\begin{algorithmic}[1]
\renewcommand{\algorithmicrequire}{\textbf{Input:}}
\Require{$\vT_{\left[\frac{bN}{2^d}:\frac{bN}{2^d}+\frac{N}{2^{d+1}}\right]}$ for $0\le d<m, 0\le b<2^d$}
\Require{$\vF,a,k$, such that $\deg(\vF[i]) \leq (d+\bar{d})2^a$ for $0\leq i \leq t$}
\renewcommand{\algorithmicensure}{\textbf{Output:}}
\Ensure{$\vc[i] = \coef_{X^{(d+\bar{d})2^a-1}} (\ve\vT_{[k:k+i]} D_{[k:k+2^a]}(X) \vF)$, for $0\le i<2^a$}
\State $n\gets 2^a$
\If{$n \le t$}\Comment{Base case}
\label{step:rat-base-case}
\State $\vc[k+i] \gets \coef_{X^{(d+\bar{d})n-1}} D_{[k:k+2^a]}(X)\vF[t-i]$ for $0 \leq i < n$
\EndIf
\State $\vF_\ell \gets \operatorname{Reduce}(D_{[k+n/2:k+n]}(X)\vF, (d+\bar{d})n)$
\State $\vc[k:k+n/2]\gets \Amult(\vF_\ell,a-1,k)$
\State $\vF_r \gets \operatorname{Reduce}(\vT_{[k:k+n/2]} D_{[k:k+n/2]}(X) \vF, (d+\bar{d})n)$
\label{step:rat-algo-mv}
\State $\vc[k+n/2:k+n]\gets \Amult(\vF_r,a-1,k+n/2)$
\end{algorithmic}
\end{algorithm}

We argue that Algorithm runs efficiently. In particular,
\begin{lmm}
\label{lmm:rat-algo-runtime}
A call to $\Amult(\vF,m,t)$ takes $O(t^2(\bar{d}+d)\cM(N)\log{N})$ many operations.
\end{lmm}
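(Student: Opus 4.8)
The plan is to establish the recurrence $T(\bar N) \le 2T(\bar N/2) + O(t^2(\bar d + d)\cM(\bar N))$ with a base case $T(t) = O((d+\bar d)t^2 \cM(t))$, where $\bar N = (d+\bar d)N$ is the polynomial degree bound, and then unwind it to $T(\bar N) = O(t^2(\bar d + d)\cM(N)\log N)$. All the structural facts I need are already in place: the transition matrices $\vT_{[\ell:r]}$ have been pre-computed (this is Corollary~\ref{cor:transpose-mult-pre}, or rather its $(d,\bar d)$-scaled analogue, which costs $O(t^\omega(d+\bar d)\cM(N)\log N)$ and is not counted here), and Lemma~\ref{lmm:T-range-sizes} tells us that the entries of $\vT_{[\ell:r]}$ are polynomials of degree $O((d+\bar d)(r-\ell))$. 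The products $D_{[bN/2^d:bN/2^d+N/2^{d+1}]}(X)$ can be computed up front by a straightforward bottom-up divide-and-conquer in $O(\bar d N \log^2 N)$ operations, which is within the stated bound.

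Next I would account for the work done inside a single call to $\Amult(\vF, a, k)$, ignoring the two recursive sub-calls. There are two non-trivial steps. First, computing $\vF_\ell = \operatorname{Reduce}(D_{[k+n/2:k+n]}(X)\vF, (d+\bar d)n)$ requires $t+1$ multiplications of polynomials of degree $O((d+\bar d)n)$, hence $O(t(d+\bar d)\cM(n))$ operations; the $\operatorname{Reduce}$ truncation itself is free. Second, computing $\vF_r = \operatorname{Reduce}(\vT_{[k:k+n/2]} D_{[k:k+n/2]}(X)\vF, (d+\bar d)n)$ is dominated by the matrix–vector product $\vT_{[k:k+n/2]}\vF$, which is $(t+1)^2$ multiplications of polynomials of degree $O((d+\bar d)n)$ — using Lemma~\ref{lmm:T-range-sizes} to bound the degrees of the $\vT$ entries and the invariant that $\deg \vF[i] \le (d+\bar d)n$ — giving $O(t^2(d+\bar d)\cM(n))$ operations. (The pre-multiplication by the scalar polynomial $D_{[k:k+n/2]}(X)$ is a further $O(t(d+\bar d)\cM(n))$, subsumed.) The base case $n \le t$ computes, for each of $n$ rows, the relevant coefficient of a product of $O(t)$-degree polynomials, which is $O((d+\bar d)t)$ multiplications of degree-$O((d+\bar d)t)$ polynomials, i.e.\ $O((d+\bar d)t^2\cM(t))$ total; since there are $N/t$ leaves this contributes $O((d+\bar d)tN\cM(t)) = O(t^2(d+\bar d)\cM(N))$ overall, which is fine.

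Assembling these gives $T(\bar N) = 2T(\bar N/2) + O(t^2(d+\bar d)\cM(\bar N))$, and the master-theorem-style unwinding over the $\log N$ levels of recursion yields $T(\bar N) = O(t^2(d+\bar d)\cM(N)\log N)$, using super-additivity of $\cM$ (i.e.\ $\sum_{i} \cM(N/2^i)\cdot 2^i = O(\cM(N)\log N)$). I do not expect any real obstacle: the only point requiring care is making sure the degree bounds are tracked correctly through the $\operatorname{Reduce}$ operations and the recursion so that every polynomial appearing at recursion depth $\log(N/n)$ really does have degree $O((d+\bar d)n)$ — this is exactly what the derivation preceding Algorithm~\ref{algo:vect-mult-rat-new} established, invoking Lemma~\ref{lmm:T-range-sizes} to argue that the low-order coefficients of the "far" factor cannot influence the target coefficient — so the argument is essentially a careful bookkeeping of the analysis already carried out for Lemma~\ref{lmm:transpose-mult-comp}, with all polynomial degrees uniformly scaled by the factor $(d+\bar d)$.
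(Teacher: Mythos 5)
Your proposal is correct and follows essentially the same route as the paper's own proof: pre-compute the dyadic $D_{[\cdot:\cdot]}$ products up front, charge each call $O(t(d+\bar d)\cM(n))$ for forming $\vF_\ell$ and $O(t^2(d+\bar d)\cM(n))$ for the transition-matrix product defining $\vF_r$ (using Lemma~\ref{lmm:T-range-sizes} for the degree bounds), handle the $n\le t$ base cases separately, and unwind the recurrence $T(n)\le 2T(n/2)+O(t^2(d+\bar d)\cM(n))$ over the $\log N$ levels. No gap; the only differences are cosmetic bookkeeping in the base-case count.
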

\begin{proof}
  First we note that $D_{[bN/2^d : bN/2^d + N/2^{d-1}]}(X)$ can be computed for all $0 \leq d < m, 0 \leq b < 2^d$ in $O(\bar{d}\cM(N)\log N)$ operations by a straightforward divide and conquer, so we compute these first within the time bound.

  Also note that the base case is just $t$ multiplications of polynomials of size $(d+\bar{d})t$, which takes $O( (d+\bar{d})t \cM(t))$ operations to compute.

If $T(n)$ is the number of operations needed for a call to Algorithm~\ref{algo:vect-mult-rat-new} with input size $a = \log_2 n$, then we will show that
\[ T(n) \le 2T(n/2)+O(t^2 (d+\bar{d})\cM(n)). \]
This will prove the claimed runtime.

The first recursive call only requires computing $D_{[k+n/2:k+n]}(X)\vF$ which consists of $t+1$ multiplications of degree $(d+\bar{d})n$ polynomials; this takes $O(t(d+\bar{d})\cM(n)\log{n})$ operations. For the second recursive call, the runtime is dominated by Step~\ref{step:rat-algo-mv} is matrix vector multiplication with dimension $t+1$ where each entry is a polynomial of degree $O( (d+\bar{d})n)$. Hence, this step takes $O(t^2(d+\bar{d})\cM(n)\log{n})$ many operations, as desired.
\end{proof}

We remark that the analysis here is essentially equivalent to that of Lemma~\ref{lmm:transpose-mult-comp}; both algorithms are bottlenecked by multiplication of a ranged transition matrix. The only difference is that the transition matrices for a $(d,\bar{d})$-degree recurrence have degrees scaled by a factor of $(d+\bar{d})$ as shown in Lemma~\ref{lmm:T-range-sizes}. Similarly, a simple modification of Lemma~\ref{cor:transpose-mult-pre} shows that the pre-processing step of computing $\vT_{[bN/2^d:bN/2^d + N/2^{d+1}]}$ for all $0 \leq d < m, 0 \leq b < 2^d$ takes $O(t^{\omega} (d+\bar{d}) \cM(N)\log N)$ operations here. Thus, we have argued the following result:
\begin{thm}
\label{thm:main-result-Ab-rat-new}
For any $(d,\bar{d})$-degree recurrence as in~\eqref{eq:recur-rational-equiv-new} satisfying \eqref{eq:rational-assumption}, with  $O(t^{\omega}(\bar{d}+d)\cM(N) \log N)$ pre-processing operations, any $\vA\vb$ can be computed with $O(t^2(\bar{d}+d)\cM(N)\log{N})$ operations over $\F$.
\end{thm}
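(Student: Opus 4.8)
This final theorem (Theorem~\ref{thm:main-result-Ab-rat-new}) is the $\vA\vb$ counterpart of the $\vA^T\vb$ algorithm from Section~\ref{sec:transpose}, so the plan is to instantiate and analyze Algorithm~\ref{algo:vect-mult-rat-new}, whose correctness has already been argued in the surrounding text. First I would establish that the recurrence~\eqref{eq:recur-rational-equiv-new} together with the divisibility assumption~\eqref{eq:rational-assumption} actually produces polynomials (not merely rational functions) of degree at most $\bar{N} = (d+\bar{d})N$. This follows by induction: assumption~\eqref{eq:rational-assumption} guarantees the base cases $f_0(X), \dots, f_t(X)$ are divisible by $\prod_j D_j(X)$, and Lemma~\ref{lmm:structure2} (or a direct induction on the recurrence) shows that $f_{i+1}(X)$ inherits the analogous divisibility from $f_{i-j}(X)$, since the factored transition matrices $\vT_i D_i(X)$ have polynomial entries. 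The degree bound then follows from Lemma~\ref{lmm:T-range-sizes}, which says the last row of $\vT_{[\ell:r]}$ has entries of degree $\leq (d+\bar{d})(r-\ell)$.

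Next I would verify the correctness reduction: $\vc[i] = \coef_{\bar{N}-1}(f_i(X)\vb^R(X))$ by Lemma~\ref{lmm:reverse-convolution}, and then combine with Lemma~\ref{lmm:structure2} to rewrite $\vc[i]$ as the coefficient of $X^{\bar{N}-1}$ in $\ve\vT_{[t:i+t]}D_{[t:N+t]}(X)\vF$, where $\vF$ is the initialization vector~\eqref{eq:Ab-initialization}. The key structural observation driving the divide-and-conquer is that in splitting $\vT_{[t:i+t]}D_{[t:N+t]}(X) = (\vT_{[\cdot]}D_{[t:N/2+t]})(D_{[N/2+t:N+t]})$ (or the analogous split for the second half), the left factor has last-row degree $< \bar{N}/2$ by Lemma~\ref{lmm:T-range-sizes}, so extracting the top-order coefficient of the full product only needs the top $\bar{N}/2$ coefficients of the right factor — which is precisely what $\operatorname{Reduce}$ computes. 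This yields two subproblems of half the size, one corresponding to rows $0,\dots,N/2-1$ and one to rows $N/2,\dots,N-1$, both of the same form (the second shifted by using $\vT_{[k:k+n/2]}$ to advance the initialization).

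The complexity analysis is the routine part, carried out in Lemma~\ref{lmm:rat-algo-runtime}: the pre-processing computes $D_{[bN/2^d : bN/2^d+N/2^{d-1}]}(X)$ and the dyadic transition matrices $\vT_{[s]}$ bottom-up, costing $O(t^\omega(d+\bar{d})\cM(N)\log N)$ by the same merging argument as Lemma~\ref{lmm:tree-compute} (with degrees scaled by $d+\bar{d}$); the main recursion satisfies $T(n) \leq 2T(n/2) + O(t^2(d+\bar{d})\cM(n))$, bottlenecked by the matrix-vector multiply $\vT_{[k:k+n/2]}\vF$ in Step~\ref{step:rat-algo-mv} (which is $t^2$ polynomial multiplications of degree $O((d+\bar{d})n)$) and the polynomial multiplication $D_{[\cdot]}\vF$; and the base case is $t$ multiplications of degree-$(d+\bar{d})t$ polynomials. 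Unrolling gives $T(N) = O(t^2(d+\bar{d})\cM(N)\log N)$, matching the claim.

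The main obstacle — or rather the main subtlety to get right — is the bookkeeping of degrees and shifts in the $\operatorname{Reduce}$ steps: one must be careful that the index range of the "top $\bar{N}/2$ coefficients" extracted is exactly aligned so that the coefficient of $X^{\bar{N}-1}$ in the full product equals the coefficient of $X^{\bar{N}/2-1}$ in the reduced subproblem, and that this alignment is preserved across the recursion as $\bar{N}$ halves and the transition-matrix ranges shift (note the two halves use $\vT_{[t:i+t]}$ versus $\vT_{[N/2+t:N/2+i+t]}$, with correspondingly different $D_{[\cdot]}$ factors pulled out). Once the $(1,0)$ case is pinned down as the prototype, the general $(d,\bar{d})$ case follows by uniformly scaling all polynomial degrees by $d+\bar{d}$, exactly as in the transition from Lemma~\ref{lmm:transpose-mult-comp}; combined with the pre-processing bound this completes the proof.
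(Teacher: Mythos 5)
Your proposal is correct and follows essentially the same route as the paper's own proof in Appendix~\ref{sec:Ab}: the coefficient-extraction identity of Lemma~\ref{lmm:reverse-convolution} combined with Lemma~\ref{lmm:structure2} to express $\vc[i]$ as a coefficient of $\ve\,\vT_{[t:i+t]}D_{[t:N+t]}(X)\vF$, the degree bound from Lemma~\ref{lmm:T-range-sizes} justifying the $\operatorname{Reduce}$-based halving into two self-similar subproblems, and the recursion $T(n)\le 2T(n/2)+O(t^2(d+\bar d)\cM(n))$ together with the dyadic pre-computation of the $\vT_{[s]}$ (scaled version of Lemma~\ref{lmm:tree-compute}) yielding the stated bounds. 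The only additions you make — spelling out the inductive divisibility argument that the $f_i(X)$ are genuine polynomials of degree at most $(d+\bar d)N$, and flagging the index-alignment subtlety in the $\operatorname{Reduce}$ steps — are points the paper treats more tersely, so there is no substantive difference in approach.
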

\begin{cor}
  For any $(1,0)$-degree recurrence~\eqref{eq:intro-recur-poly}, with  $O(t^{\omega}\cM(N) \log N)$ pre-processing operations, any $\vA\vb$ can be computed with $O(t^2\cM(N)\log{N})$ operations over $\F$.
\end{cor}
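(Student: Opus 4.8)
The plan is to obtain this corollary as a direct specialization of Theorem~\ref{thm:main-result-Ab-rat-new} to the degree parameters $(d,\bar{d}) = (1,0)$. The only substance is checking that a matrix $\vA$ satisfying the basic polynomial recurrence~\eqref{eq:intro-recur-poly} of Definition~\ref{defn:intro-width} genuinely fits the hypotheses of that theorem, after which the complexity bounds follow by plugging in.

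First I would exhibit the embedding of~\eqref{eq:intro-recur-poly} into the rational form~\eqref{eq:recur-rational-equiv-new}. Writing the basic recurrence with a shifted index, $a_{i+1}(X) = \sum_{j=1}^{t} g_{i+1,j}(X) a_{i+1-j}(X)$, take $f_k(X) = a_k(X)$, $D_i(X) = 1$ for all $i$ (so $\bar{d} = 0$), and $n_{i,j}(X) = g_{i+1,j+1}(X)$ for $0 \le j < t$ with $n_{i,t}(X) = 0$. Since $\deg(g_{i,j}) \le j$, we get $\deg(n_{i,j}) \le j+1 = d(j+1)+\bar{d}$ with $d=1$, so the degree hypothesis of~\eqref{eq:recur-rational-equiv-new} holds. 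The starting polynomials $f_0(X),\dots,f_{t-1}(X)$ are the initial $a_i(X)$ (with $\deg(a_i)\le i$), and condition~\eqref{eq:rational-assumption}, namely $\prod_{j=0}^{N-1} D_j(X) \mid f_i(X)$, holds trivially because $\prod_{j} D_j(X) = 1$. Hence all $f_i(X)$ are polynomials and, propagating the degree bound through the recurrence, $\deg(a_i) \le i \le N-1 < \bar{N} = (d+\bar{d})N = N$, so $\vA \in \F^{N\times N}$ exactly as required by the theorem.

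Having verified the hypotheses, I would substitute $(d,\bar{d}) = (1,0)$ into the bounds of Theorem~\ref{thm:main-result-Ab-rat-new}: the pre-processing cost $O(t^{\omega}(\bar{d}+d)\cM(N)\log N)$ collapses to $O(t^{\omega}\cM(N)\log N)$, and the per-query cost $O(t^2(\bar{d}+d)\cM(N)\log N)$ collapses to $O(t^2\cM(N)\log N)$, which is precisely the claimed statement.

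There is no real obstacle here; the corollary is a straightforward specialization. The only point needing a little care is the index bookkeeping in the embedding above — reconciling that~\eqref{eq:intro-recur-poly} spans $t$ consecutive terms while~\eqref{eq:recur-rational-equiv-new} spans $t+1$, and confirming that $\deg(g_{i,j}) \le j$ is subsumed by $\deg(n_{i,j}) \le d(j+1)+\bar{d}$. Alternatively one may simply appeal to the discussion preceding Theorem~\ref{thm:main-result-Ab-rat-new}, which already records that the $(1,0)$ case is the basic polynomial recurrence with $\bar{N} = N$, $D_i(X) = 1$, and~\eqref{eq:rational-assumption} degenerate, rendering the corollary immediate.
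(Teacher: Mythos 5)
Your proposal is correct and matches the paper's own treatment: the corollary is obtained exactly as the specialization of Theorem~\ref{thm:main-result-Ab-rat-new} to $(d,\bar{d})=(1,0)$, where $D_i(X)=1$, $\bar{N}=N$, and condition~\eqref{eq:rational-assumption} is degenerate, as the paper notes just before stating the theorem. Your index bookkeeping for embedding~\eqref{eq:intro-recur-poly} into~\eqref{eq:recur-rational-equiv-new} is accurate, so nothing further is needed.
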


\paragraph{$\vA\vb$ for modular recurrence}
Now we consider computing $\vA\vb$ where $\vA$ is square and defined by a degree $(d,\bard)$ recurrence $\pmod{M(X)}$ for some degree $N$ polynomial $M(X)$.
To compute this product, we will factor the matrix $\vA$ into matrices that we have already shown admit fast matrix-vector multiplication.

As usual $\vA$ is associated with polynomials $f_i(X) = \sum_{j=0}^{N-1} \vA[i,j]X^j$. By Lemma~\ref{lmm:structure2}, $\vf_i$ is the last element of $\vT_{[t:i+t]} \vF$, where this product is done $\pmod{M(X)}$. Now let the roots of $M(X)$ be $\alpha_0, \dots, \alpha_{N-1}$. Consider the $N\times N$ matrix $\vZ$ such that $\vZ[i,j] = f_i(\alpha_j)$. This can be factored into $\vZ = \vA\vV^T$, where $\vV$ is the $N\times N$ Vandermonde matrix on the $\alpha_j$ (Definition~\ref{defn:vandermonde}).

On the other hand, we can factor $\vZ$ in a different way. Let $D(X) = \prod D_i(X)$ and let $C(X)$ be its inverse $\pmod{M(X)}$. For all $i$, define the polynomial $g_i(X)$ which is the last element of $\vT_{[t:i+t]} (D(X)C(X)\vF)$, where this is computed over $\F[X]$. Note that these polynomials have degree at most $(d+\bar{d})N$; furthermore, $D(X)$ can be computed in time $O(dN\log^2{dN})$ by divide-and-conquer, and $C(X)$ can be computed in $\widetilde{O}(N)$ operations using the fast Euclidean Algorithm~\cite{yap2000fundamental}. But the $g_i(X)$ exactly satisfy recurrence~\eqref{eq:recur-rational-equiv-new} with constraint~\eqref{eq:rational-assumption}, so that we can run Algorithm~\ref{algo:vect-mult-rat-new}. Thus by Theorem~\ref{thm:main-result-Ab-rat-new}, we can efficiently multiply by the $N \times (d+\bar{d})N$ matrix $\vA'$ containing the coefficients of the $g_i(X)$. Finally, note that $g_i(\alpha_j) = f_i(\alpha_j)$ by their equivalence modulo $M(X)$. Therefore $\vZ = \vA' \vV'^T$, where $\vV'$ is the $N \times (d+\bar{d})N$ Vandermonde matrix on the $\alpha_j$.

Thus we have the factorization $\vA = \vA'\vV'^T\vV^{-T}$. The Vandermonde and inverse Vandermonde matrices have dimensions at most $(d+\bar{d})N$ and thus can be multiplied in $O((d+\bar{d})N\log^2 N)$ time, so each component of this factorization admits matrix-vector multiplication in order $O( t^2(d+\bar{d})N\log^2 N)$ operations. 

We remark that this factorization of $\vA$ can be used to perform the multiplication $\vA^T\vb$ as well, with the same asymptotic runtime but a worse constant factor than directly running Algorithm~\ref{algo:transpose-mult}.


\paragraph{$\vA\vb$ for matrix recurrence}
The derivation for equation~\eqref{eq:recur-matrix-reduction} actually shows that $\vA$ has the form $\sum_{j=0}^{r-1} \vH_j \vK_j$ where $\vH_j$ is a basic recurrence~\eqref{eq:intro-recur-poly} and $\vK_j$ is a Krylov matrix.
If we define $\vb_j = \vK_j \vb$, it suffices to compute $\sum_{j=0}^t \vH_j \vb_j$. Because each $\vH_j$ satisfies the same recurrence, it suffices to modify the initialization step of Algorithm~\ref{algo:vect-mult-rat-new} (note how this is dual to $\vA^T\vb$, where we only need to modify the post-processing step). We only need to replace the numerator of equation~\eqref{eq:Ab-initialization}, which becomes
\[ \vF[t-i] = \frac{\sum_{j=0}^t h_{i,j}^{(0)}(X)\vb_j^R(X)}{D_{[t:N+t]}(X)}. \]
By Lemma~\ref{lmm:structure2}, $h_{i,j}^{(0)}(X) = \delta_{i,j}$ so this initialization has no additional cost if the $\vb_j^R(X)$ are known.

Thus just as in the $\vA^T\vb$ case, the product $\vA\vb$ for a matrix recurrence~\eqref{eq:recur} reduces directly to the $\vA\vb$ algorithm for the same recurrence but in a modulus. Thus this algorithm has the same complexity as in Theorem~\ref{thm:recur-general}.

\section{Succinct Representations and Multivariate Polynomials}
\label{sec:app-multi}

Here we present details and additional connections to the work in Section~\ref{sec:multi}.

First we recollect known algorithms for multipoint evaluation of multivariate polynomials. Then we show the omitted proofs of results in Section~\ref{sec:multi}. Finally we show an additional example of a matrix from coding theory that has low recurrence width.

\subsection{Algorithms for Multipoint Evaluation of Multivariate Polynomials}
Recall the multipoint evaluation problem:

\begin{defn}
Given an $m$-variate polynomial $f(X_1,\dots,X_m)$ such that each variable has degree at most $d-1$ and $N=d^m$ distinct points $\vx(i)=(x(i)_1,\dots,x(i)_m)$ for $1\le i\le N$, output the vector $\left(f(\vx(i))\right)_{i=1}^N$.
\end{defn}

We will use the following reduction from~\cite{KU11,bivariate}. Let $\alpha_1,\dots,\alpha_N$ be distinct points. For $i\in [m]$, define the polynomial $g_i(X)$ of degree at most $N-1$ such that for every $j\in [N]$, we have
\[g_i(\alpha_j)=x(j)_i.\]
Then as shown in~\cite{KU11}, the multipoint evaluation algorithm is equivalent to computing the following polynomial:
\[c(X)=f(g_1(X),\dots,g_m(X))\mod{h(X)},\]
where $h(X)=\prod_{j=1}^N (X-\alpha_i)$. In particular, we have $c(\alpha_i)=f(\vx(i))$ for every $i\in [N]$. Thus, we aim to solve the following problem:
\begin{defn}[Modular Composition]
Given a polynomial $f(X_1,\dots,X_m)$ with individual degree at most $d-1$ and $m+1$ polynomials $g_1(X),\dots,g_m(X),h(X)$ all of degree at most $N-1$ (where $N\eqdef d^m$), compute the polynomial
\[f(g_1(X),\dots,g_m(X))\mod{h(X)}.\]
\end{defn}

As was noted in~\cite{bivariate}, if for the multipoint evaluation problem all the $x(j)_1$ for $j\in [N]$ are distinct, then we can take $\alpha_j=x(j)_1$ and in this case $\deg(g_1)=1$ since we can assume that $g_1(X)=X$. We will see that this allows for slight improvement in the runtime. Also in what follows, we will assume that an $n\times n$ and $n\times n^2$ matrix can be multiplied with $O(n^{\omega_2})$ operations.

\subsubsection{Algorithm for the general case}

Consider Algorithm~\ref{algo:mod-comp-gen} (which is a straightforward generalization of the algorithm for $m=1$ from~\cite{BK78}). 

We will use $X^{\vim}$ for any $\vim=(i_1,\dots,i_m)$ to denote the monomial $\prod_{\ell=1}^m X_{\ell}^{i_{\ell}}$. Let $k$ be any integer that divides $d$ and define
\[q=\frac{d}{k}.\]
With this notation, write down $f$ as follows
\begin{equation}
\label{eq:f-decompose-gen}
f(X_1,\dots,X_m)=\sum_{\vjm\in\Z_q^m} \left(\sum_{\vim\in\Z_k^m} f_{\vjm,\vim}\cdot X^{\vim}\right)\cdot X^{\vjm\cdot k},
\end{equation}
where $f_{\vjm,\vim}$ are constants.

\begin{algorithm}
\caption{Algorithm for Modular Composition: general case}
\label{algo:mod-comp-gen}
\begin{algorithmic}[1]
\renewcommand{\algorithmicrequire}{\textbf{Input:}}
\Require{$f(X_1,\dots,X_m)$ in the form of~\eqref{eq:f-decompose-gen} and $g_1(X),\dots,g_m(X),h(X)$ of degree at most $N-1$ with $N=d^m$}
\renewcommand{\algorithmicensure}{\textbf{Output:}}
\Ensure{\[f(g_1(X),\dots,g_m(X))\mod{h(X)}\]}
\State Let $k$ be an integer that divides $d$\Comment{We will use $k=\sqrt{d}$}
\State $q\gets \frac{d}{k}$
\For{ every $\vim=(i_1,\dots,i_m)\in\Z_k^m$}
\label{step:mod-comp-gen-gi}
	\State $g_{\vim}(X)\gets \prod_{\ell=1}^m \left(g_{\ell}(X)\right)^{i_{\ell}}\mod{h(X)}$.
\EndFor
\For{ every $\vjm=(j_1,\dots,j_m)\in\Z_q^m$}
\label{step:mod-comp-gen-gj}
	\State $g^{\vjm}(X)\gets \prod_{\ell=1}^m \left(g_{\ell}(X)\right)^{j_{\ell}\cdot k}\mod{h(X)}$.
\EndFor
\For{ every $\vjm\in\Z_q^{m}$}
\label{step:mod-comp-gen-ai}
	\State $a_{\vjm}(X)\gets \sum_{\vim\in\Z_k^m} f_{\vjm,\vim}\cdot g_{\vim}(X)$
\EndFor
\State \Return{$\sum_{\vjm\in\Z_q^{m}} a_{\vjm}(X)\cdot g^{\vjm}(X)\mod{h(X)}$}
\label{step:mod-comp-gen-return}
\end{algorithmic}
\end{algorithm}

Algorithm~\ref{algo:mod-comp-gen} presents the algorithm to solving the modular composition problem. The correctness of the algorithm follows from definition. We now argue its runtime.

Note that for a fixed $\vim\in\Z_k^m$, the polynomial $g_{\vim}(X)$ can be computed in $\tO(mN)$ operations since it involves $m$ exponentiations and $m-1$ product of polynomials of degree at most $N-1$ mod $h(X)$. Thus, Step~\ref{step:mod-comp-gen-gi} overall takes $\tO(m\cdot k^m\cdot N)$ many operations. By a similar argument Step~\ref{step:mod-comp-gen-gj} takes $\tO(m\cdot q^m\cdot N)$ operations. Step~\ref{step:mod-comp-gen-return} needs $q^m$ polynomial multiplication (mod $h(X)$) and $q^{m}-1$ polynomial multiplication where all polynomial are of degree at most $N-1$ and hence, this step takes $\tO(q^m\cdot N)$ operations. So all these steps overall take $\tO(m\cdot \max(k,q)^m\cdot d^m)$ many operations.

So the only step we need to analyze is Step~\ref{step:mod-comp-gen-ai}. Towards this end note that for any $\vjm\in\Z_q^m$
\begin{align*}
a_{\vjm}(X) & = \sum_{\vim\in\Z_k^m} f_{\vjm,\vim}\cdot g_{\vim}(X)\\
& = \sum_{\vim\in\Z_k^m} f_{\vjm,\vim}\cdot \sum_{\ell=0}^{N-1} g_{\vim}[\ell]\cdot X^{\ell}\\
& = \sum_{\ell=0}^{N-1} \left(\sum_{\vim\in\Z_k^m} f_{\vjm,\vim}\cdot g_{\vim}[\ell]\right) X^{\ell}.
\end{align*}
Thus, if we think of the $q^m\times d^m$ matrix $\vA$, where $\vA[\vjm,:]$ has the coefficients of $a_{\vjm}(X)$, then we have
\[\vA=\vF\times \vG,\]
where $\vF$ is an $q^m\times k^m$ matrix with $F_{\vjm,\vim}=f_{\vjm,\vim}$ and $\vG$ is an $k^m\times d^m$ matrix with $G_{\vim,\ell}=g_{\vim}[\ell]$.  Let $\omega(r,s,t)$ be defined so that one can multiply an $n^r\times n^s$ with an $n^s\times n^t$ matrix with $n^{\omega(r,s,t)}$ operations.  If we set $k=d^{\eps}$ for some $0\le \eps\le 1$, we have that Algorithm~\ref{algo:mod-comp-gen} can be implemented with
\[\tO\left( m\cdot d^{m(\max(\eps,1-\eps)+1)}+ d^{m\cdot \omega(1-\eps,\eps,1)}\right)\]
many operations. It turns out that the expression above is optimized at $\eps=\frac{1}{2}$, which leads to an overall (assuming $m$ is a constant) $\tO\left(d^{\omega_2m/2}\right)$ many operations. Thus, we have argued that
\begin{thm}
\label{thm:mod-comp-gen}
The modular composition problem with parameters $d$ and $m$ can be solved with $\tO\left(d^{\omega_2m/2}\right)$ many operations.
\end{thm}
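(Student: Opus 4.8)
The plan is to establish the theorem by exhibiting Algorithm~\ref{algo:mod-comp-gen}, a multivariate generalization of the Brent--Kung baby-step/giant-step strategy~\cite{BK78}, and carefully accounting for its operation count. First I would fix a divisor $k$ of $d$, set $q = d/k$, and write $f$ in the split form~\eqref{eq:f-decompose-gen}, separating the ``low'' monomials $X^{\vim}$ with $\vim \in \Z_k^m$ from the ``high'' shifts $X^{\vjm \cdot k}$ with $\vjm \in \Z_q^m$. Substituting $X_\ell \mapsto g_\ell(X)$ then gives $f(g_1,\dots,g_m) \equiv \sum_{\vjm} a_{\vjm}(X)\, g^{\vjm}(X) \pmod{h(X)}$, where $g_{\vim}(X) = \prod_\ell g_\ell(X)^{i_\ell}$, $g^{\vjm}(X) = \prod_\ell g_\ell(X)^{j_\ell k}$, and $a_{\vjm}(X) = \sum_{\vim} f_{\vjm,\vim}\, g_{\vim}(X)$, all reductions taken modulo $h(X)$. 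Correctness of the algorithm is then immediate from this polynomial identity.

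Next I would bound the cost of each phase. Computing all $k^m$ baby steps $g_{\vim}$ and all $q^m$ giant steps $g^{\vjm}$ uses $\tO(m N)$ operations apiece (a constant number of modular exponentiations and products of degree-$(N-1)$ polynomials), for $\tO(m \max(k,q)^m N)$ in total; the final reassembly in step~\ref{step:mod-comp-gen-return} is $q^m$ polynomial multiplications modulo $h(X)$, i.e.\ $\tO(q^m N)$. The crux is step~\ref{step:mod-comp-gen-ai}: stacking the coefficient vectors of the $a_{\vjm}$ into a $q^m \times d^m$ matrix $\vA$, one has $\vA = \vF \vG$ where $\vF \in \F^{q^m \times k^m}$ holds the coefficients $f_{\vjm,\vim}$ and $\vG \in \F^{k^m \times d^m}$ holds the coefficients of the $g_{\vim}$. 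Writing $k = d^{\eps}$, this is a product of shape $d^{m(1-\eps)} \times d^{m\eps}$ by $d^{m\eps} \times d^m$, costing $d^{m\,\omega(1-\eps,\eps,1)}$ operations, so the whole algorithm runs in $\tO\!\big(m\, d^{m(\max(\eps,1-\eps)+1)} + d^{m\,\omega(1-\eps,\eps,1)}\big)$.

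Finally I would optimize over $\eps$. Taking $\eps = 1/2$ balances the two halves of $f$ (both inner and outer index sets have size $\sqrt{N} = d^{m/2}$), so step~\ref{step:mod-comp-gen-ai} becomes an $n \times n$ by $n \times n^2$ product with $n = d^{m/2}$, which by the definition of $\omega_2$ costs $n^{\omega_2} = d^{\omega_2 m/2}$. The bookkeeping term is then $\tO(m\, d^{3m/2})$, and since $\omega_2 \ge 3$ this is dominated by $d^{\omega_2 m/2}$ once $m$ is treated as a constant, yielding the claimed $\tO(d^{\omega_2 m/2})$ bound.

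The main obstacle I anticipate is the analysis of step~\ref{step:mod-comp-gen-ai}: one has to recognize that batching the $q^m$ linear combinations $a_{\vjm} = \sum_{\vim} f_{\vjm,\vim} g_{\vim}$ over all $\vjm$ \emph{simultaneously} is precisely one rectangular matrix multiplication, and then verify that the balanced choice $\eps = 1/2$ is optimal and makes the rectangular exponent $\omega(1-\eps,\eps,1)$ collapse to the $n \times n$ by $n \times n^2$ exponent $\omega_2$ rather than a strictly larger rectangular exponent. The remaining phases are routine FFT-based polynomial arithmetic and contribute only lower-order terms.
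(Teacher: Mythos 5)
Your proposal is correct and follows essentially the same route as the paper's own proof: the identical baby-step/giant-step split of $f$ with $k=d^{\eps}$, the same reduction of the $a_{\vjm}$ computation to a $q^m\times k^m$ by $k^m\times d^m$ matrix product, and the same optimization at $\eps=1/2$ yielding the $n\times n$ by $n\times n^2$ shape and the $\tO(d^{\omega_2 m/2})$ bound. The only (harmless) addition is your explicit remark that the $\tO(m\,d^{3m/2})$ bookkeeping term is dominated because $\omega_2\ge 3$, which the paper leaves implicit.
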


\subsubsection{A `direct' algorithm for the multipoint evaluation case}

We now note that one can convert Algorithm~\ref{algo:mod-comp-gen} into a ``direct" algorithm for the multipoint evaluation problem. Algorithm~\ref{algo:multipoint-eval} has the details.

\begin{algorithm}
\caption{Algorithm for Multipoint Evaluation}
\label{algo:multipoint-eval}
\begin{algorithmic}[1]
\renewcommand{\algorithmicrequire}{\textbf{Input:}}\Require{$f(X_1,\dots,X_m)$ in the form of~\eqref{eq:f-decompose-gen} and evaluation points $\va(i)$ for $i\in [N]$}
\renewcommand{\algorithmicensure}{\textbf{Output:}}
\Ensure{\[\left( f(\va(i)\right)_{i\in [N]}\]}
\For{ every $\vim=(i_1,\dots,i_{m/2})\in\Z_d^{m/2}$}
        	\State $\vg_{\vim}\gets \left(\prod_{\ell=1}^{m/2} \left(a(k)_{\ell}\right)^{i_{\ell}}\right)_{k\in [N]}$
\EndFor
\For{ every $\vjm=(j_1,\dots,j_m)\in\Z_d^{m/2}$}
        	\State $\vg^{\vjm}\gets \left(\prod_{\ell=m/2+1}^m \left(a(k)_{\ell}\right)^{i_{\ell}}\right)_{k\in [N]}$
\EndFor
\For{ every $\vjm\in\Z_d^{m/2}$}
        	\State $\vb_{\vjm}\gets \left(\sum_{\vim\in\Z_d^{m/2}} f_{\vjm,\vim}\cdot \vg_{\vim}(k)\right)_{k\in [N]}$
\EndFor
\State \Return{$\sum_{\vjm\in\Z_d^{m/2}} \left\langle \vb_{\vjm}, \vg^{\vjm}\right\rangle$}
\end{algorithmic}
\end{algorithm}

The correctness of the algorithm again follows from definition. It is easy to check that the computation of $\vg_{\vim}, \vg_{\vjm}$ and the output vectors can be accomplished with $\tO(d^{3m/2})$ many operations. Finally, the computation of the $\vb_{\vjm}$ can be done with $\tO(d^{\omega_2m/2})$ many operations using fast rectangular matrix multiplication.
\subsubsection{Algorithm for the distinct first coordinate case}

We now consider the case when all the $x(j)_1$ for $j\in [N]$ are distinct: i.e.\ we assume that $g_1(X)=X$. In this case we re-write $f$ as follows:
\begin{equation}
\label{eq:f-decompose-diff}
f(X_1,\dots,X_m)=\sum_{\vjm\in\Z_q^{m-1}} \left(\sum_{\vim\in\Z_k^{m-1}} f_{\vjm,\vim}(X_1)\cdot X_{-1}^{\vim}\right)\cdot X_{-1}^{\vjm\cdot k},
\end{equation}
where $X_{-1}^{\vim}$ denotes the monomial on the variables $X_2,\dots, X_m$ and each $f_{\vjm,\vim}(X_1)$ is of degree at most $d-1$.

\begin{algorithm}
\caption{Algorithm for Modular Composition: distinct first coordinate case}
\label{algo:mod-comp-diff}
\begin{algorithmic}[1]
\renewcommand{\algorithmicrequire}{\textbf{Input:}}
\Require{$f(X_1,\dots,X_m)$ in the form of~\eqref{eq:f-decompose-diff} and $g_2(X),\dots,g_m(X),h(X)$ of degree at most $N-1$ with $N=d^m$}
\renewcommand{\algorithmicensure}{\textbf{Output:}}
\Ensure{\[f(X,g_2(X),\dots,g_m(X))\mod{h(X)}\]}
\State Let $k$ be an integer that divides $d$\Comment{We will use $k=\sqrt{d}$}
\State $q\gets \frac{d}{k}$
\For{ every $\vim=(i_2,\dots,i_m)\in\Z_k^{m-1}$}
\label{step:mod-comp-diff-gi}
        \State $g_{\vim}(X)\gets \prod_{\ell=2}^m \left(g_{\ell}(X)\right)^{i_{\ell}}\mod{h(X)}$.
\EndFor
\For{ every $\vjm=(j_2,\dots,j_m)\in\Z_q^{m-1}$}
\label{step:mod-comp-diff-gj}
        \State $g^{\vjm}(X)\gets \prod_{\ell=2}^m \left(g_{\ell}(X)\right)^{j_{\ell}\cdot k}\mod{h(X)}$.
\EndFor
\For{ every $\vjm\in\Z_q^{m-1}$}
\label{step:mod-comp-diff-ai}
        \State $a_{\vjm}(X)\gets \sum_{\vim\in\Z_k^{m-1}} f_{\vjm,\vim}(X)\cdot g_{\vim}(X)\mod{h(X)}$ 
\EndFor
\State \Return{$\sum_{\vjm\in\Z_q^{m-1}} a_{\vjm}(X)\cdot g^{\vjm}(X)\mod{h(X)}$}
\label{step:mod-comp-diff-return}
\end{algorithmic}
\end{algorithm}

Algorithm~\ref{algo:mod-comp-diff} shows how to update Algorithm~\ref{algo:mod-comp-gen} to handle this special case. Again the correctness of this algorithm follows from the definitions.

We next quickly outline how the analysis of Algorithm~\ref{algo:mod-comp-diff} differs from that of Algorithm~\ref{algo:mod-comp-gen}. First, the same argument we used earlier can be used to show that Steps~\ref{step:mod-comp-diff-gi},~\ref{step:mod-comp-diff-gj} and~\ref{step:mod-comp-diff-return} can be accomplished with $\tO(m\cdot \max(k,q)^{m-1}\cdot d^m)$ many operations.

As before the runtime is dominated by the number of operations needed for Step~\ref{step:mod-comp-diff-ai}. Towards this end note that
\begin{align*}
a_{\vjm}(X) & = \sum_{\vim\in\Z_k^{m-1}} f_{\vjm,\vim}(X)\cdot g_{\vim}(X)\\
& = \sum_{\vim\in\Z_k^{m-1}} f_{\vjm,\vim}(X)\cdot \sum_{\ell=0}^{d^{m-1}-1} g_{\vim}[\ell](X)\cdot (X^d)^{\ell}\\
& = \sum_{\ell=0}^{d^{m-1}-1} \left(\sum_{\vim\in\Z_k^{m-1}} f_{\vjm,\vim}(X)\cdot  g_{\vim}[\ell](X)\right) (X^d)^{\ell}.
\end{align*}
Note that in the above we have decomposed $g_{\vim}$ as a polynomial in powers of $X^d$ (instead of $X$ for Algorithm~\ref{algo:mod-comp-gen}). In particular, this implies that all of $f_{\vjm,\vim}(X)$ and $g_{\vim}[\ell](X)$ are polynomials of degree at most $d-1$. If we think of $a_{\vjm}(X)$ as polynomials in $X^d$ (with coefficients being polynomials of degree at most $2d-2$), we can represent the above as
\[\vA=\vF\times \vG,\]
where the $q^{m-1}\times k^{m-1}$ matrix $\vF$ is defined by $F_{\vjm,\vim}=f_{\vjm,\vim}(X)$ and the $k^{m-1}\times d^{m-1}$ matrix $\vG$ is defined by $g_{\vim,\ell}=g_{\vim}[\ell](X)$. Again if we set $k=n^{\eps}$, then the run time of Algorithm~\ref{algo:mod-comp-diff} is given by (we use the fact that each multiplication and addition in $\vF\times \vG$ can be implemented with $\tO(d)$ operations):
\[\tO\left( m\cdot d^{(m-1)\max(\eps,1-\eps)+m}+ d\cdot d^{(m-1)\cdot \omega(1-\eps,\eps,1)}\right)\]
many operations. It turns out that the expression above is optimized at $\eps=\frac{1}{2}$, which leads to an overall (assuming $m$ is a constant) $\tO\left(d^{1+\omega_2(m-1)/2}\right)$ many operations. Thus, we have argued that
\begin{thm}
\label{thm:mod-comp-gen-distinct}
The modular composition problem with parameters $d$ and $m$  for the case of $g_1(X)=X$  can be solved with $\tO\left(d^{1+\omega_2(m-1)/2}\right)$ many operations.
\end{thm}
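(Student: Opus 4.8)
The plan is to analyze Algorithm~\ref{algo:mod-comp-diff} exactly as the proof of Theorem~\ref{thm:mod-comp-gen} analyzes Algorithm~\ref{algo:mod-comp-gen}, but exploiting the extra hypothesis $g_1(X)=X$ to save roughly a factor of $\sqrt d$ in the bottleneck step. First I would dispatch correctness: starting from the decomposition~\eqref{eq:f-decompose-diff} of $f$ as a double sum over $\vjm\in\Z_q^{m-1}$ and $\vim\in\Z_k^{m-1}$ with coefficient polynomials $f_{\vjm,\vim}(X_1)$ of degree $\le d-1$, substitute $X_1=X$ and $X_\ell=g_\ell(X)$ for $2\le\ell\le m$ and reduce modulo $h(X)$; using the definitions of $g_{\vim}(X)=\prod_{\ell\ge2}g_\ell(X)^{i_\ell}$ and $g^{\vjm}(X)=\prod_{\ell\ge2}g_\ell(X)^{j_\ell k}$ from Steps~\ref{step:mod-comp-diff-gi}--\ref{step:mod-comp-diff-gj}, the composition equals $\sum_{\vjm}a_{\vjm}(X)g^{\vjm}(X)\bmod h(X)$ with $a_{\vjm}(X)=\sum_{\vim}f_{\vjm,\vim}(X)g_{\vim}(X)\bmod h(X)$, which is what the algorithm returns. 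This is a routine unwinding of the identity.

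Next the runtime. The ``easy'' Steps~\ref{step:mod-comp-diff-gi}, \ref{step:mod-comp-diff-gj}, \ref{step:mod-comp-diff-return} involve $k^{m-1}$ (resp.\ $q^{m-1}$) products, each built from $O(m)$ exponentiations/multiplications of degree-$<N$ polynomials modulo $h(X)$ at cost $\tO(N)=\tO(d^m)$ each, contributing $\tO(m\max(k,q)^{m-1}d^m)$ operations. The crux is Step~\ref{step:mod-comp-diff-ai}: here I would write each already-reduced $g_{\vim}(X)$ as a polynomial $\sum_{\ell=0}^{d^{m-1}-1}g_{\vim}[\ell](X)\,(X^d)^\ell$ in $X^d$ whose coefficient polynomials have degree $\le d-1$, so that $f_{\vjm,\vim}(X)g_{\vim}(X)=\sum_\ell\bigl(f_{\vjm,\vim}(X)g_{\vim}[\ell](X)\bigr)(X^d)^\ell$, and stacking the $a_{\vjm}(X)$ into a matrix over $\F[X]$ is realized by the product $\vA=\vF\vG$ where $\vF\in(\F[X])^{q^{m-1}\times k^{m-1}}$ has entries $f_{\vjm,\vim}(X)$ and $\vG\in(\F[X])^{k^{m-1}\times d^{m-1}}$ has entries $g_{\vim}[\ell](X)$, all of degree $\le d-1$. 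Each ring operation costs $\tO(d)$ field operations, and the carry/recombination of the degree-$\le 2d-2$ products plus a final single reduction modulo $h(X)$ stay within budget.

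Finally I would optimize the split. With $k=d^\eps$, $q=d^{1-\eps}$ and fast rectangular matrix multiplication of exponent $\omega(r,s,t)$, the total is
\[
  \tO\!\left(m\cdot d^{(m-1)\max(\eps,1-\eps)+m}\;+\;d\cdot d^{(m-1)\,\omega(1-\eps,\eps,1)}\right),
\]
and $\eps=\tfrac12$ makes $\vF$ an $n\times n$ and $\vG$ an $n\times n^2$ matrix with $n=d^{(m-1)/2}$, costing $n^{\omega_2}=d^{\omega_2(m-1)/2}$ ring operations, i.e.\ $\tO(d^{1+\omega_2(m-1)/2})$ field operations; I would then check this dominates the $\tO(m\,d^{(3m-1)/2})$ contribution of the other steps, which holds precisely because $\omega_2\ge3$ (indeed $\omega_2>3$), since $1+\omega_2(m-1)/2\ge(3m-1)/2\iff\omega_2\ge3$. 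Treating $m$ as a constant yields the claimed bound. The only real obstacle is bookkeeping, not a conceptual gap: one must ensure that reducing $g_{\vim}(X)$ modulo $h(X)$ \emph{before} the $X^d$-re-expansion truly leaves coefficient polynomials of degree $\le d-1$, that the degree-$\le 2d-2$ products generate only $O(1)$ carries between adjacent $X^d$-digits, and that the one final reduction mod $h(X)$ of a degree-$O(d^m)$ polynomial is $\tO(d^m)$ and hence negligible — all standard, after which the statement follows by assembling the estimates.
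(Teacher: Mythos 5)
Your proposal is correct and follows essentially the same route as the paper's own proof: the paper likewise analyzes Algorithm~\ref{algo:mod-comp-diff}, charges $\tO(m\max(k,q)^{m-1}d^m)$ to Steps~\ref{step:mod-comp-diff-gi}, \ref{step:mod-comp-diff-gj}, \ref{step:mod-comp-diff-return}, recasts Step~\ref{step:mod-comp-diff-ai} by expanding each $g_{\vim}(X)$ in powers of $X^d$ so that it becomes a $q^{m-1}\times k^{m-1}$ by $k^{m-1}\times d^{m-1}$ matrix product over degree-$<d$ polynomial entries (each ring operation costing $\tO(d)$), and optimizes at $\eps=\tfrac12$ to obtain $\tO(d^{1+\omega_2(m-1)/2})$. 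Your extra check that this term dominates the remaining $\tO(m\,d^{(3m-1)/2})$ work because $\omega_2\ge 3$ is a small verification the paper leaves implicit, but it is not a different argument.
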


\subsection{Omitted Proofs}

\subsubsection{Proof of Lemma~\ref{lmm:disp-rank-multivariate}}


We now prove Lemma~\ref{lmm:disp-rank-multivariate}. We will argue that all but at most $2^{m/2}d^{m/2-1}$ columns of $\vB^{(0)}$ are $\vzero$, which would prove the result. Towards this end we will use the expression  in~\eqref{eq:multivariate-recur-equiv}.


For notational convenience for any index $\vi\in\Z_d^m$, we will use $X^{\vi}$ to denote the monomial $\prod_{j=1}^m X_j^{i_j}$. Then note that $\vA^{(m)}[:,\vi]$ is just the evaluation of the monomial $X^{\vi}$ on the points $\vx(1),\dots,\vx(N)$. Further, it can be checked (e.g.\ by induction) that that exists polynomials $P_{\vi}(X_1,\dots,X_m)$ such that $\vB^{(0)}[:,\vi]$ is the evaluation of $P_{\vi}(X_1,\dots,X_m)$ on the points $\vx(1),\dots,\vx(N)$.

To simplify subsequent expressions, we introduce few more notation. For any $S\subseteq [m/2,m]$, define the matrix
\begin{equation}
\label{eq:M-S}
\vM_S = \left(\prod_{j\in S}\vD_{X_j}^{-1}\right)\vA^{(m)}\left(\prod_{j\in [m/2,m]\setminus S} \vZ^{d^{m-j}}\right).
\end{equation}
We can again argue by induction that for every $\vi\in\Z_d^m$, we have that $\vM_s[:,\vi]$ is the evaluation of a polynomial $Q_S^{\vi}(X_1,\dots,X_m)$ on the points $\vx(1),\dots,\vx(N)$. Note that this along with~\eqref{eq:multivariate-recur-equiv}, implies that
\begin{equation}
\label{eq:P-def}
P_{\vi}(X_1,\dots,X_m) = \sum_{S\subseteq [m/2,m]} (-1)^{m/2+1-|S|} Q_S^{\vi}(X_1,\dots,X_m).
\end{equation}

Now we claim that
\begin{claim}
\label{clm:mnonmial-cancel}
For every $\vi\in\Z_d^m$ that has an index $m/2\le j^*\le m$ such that $i_{j^*}\ge 2$ and $S\subseteq [m/2,m]\setminus \{j^*\}$, the following holds:
\[Q_S^{\vi} (X_1,\dots X_m)= Q_{S\cup\{j^*\}}^{\vi}(X_1,\dots,X_m).\]
\end{claim}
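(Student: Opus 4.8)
\textbf{Proof proposal for Claim~\ref{clm:mnonmial-cancel}.}

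The plan is to show that, for such a $\vi$ and $S$, the polynomials $Q_S^{\vi}$ and $Q_{S\cup\{j^*\}}^{\vi}$ are literally the \emph{same} monomial (or both identically zero), by working directly from the definition $\vM_S = \left(\prod_{j\in S}\vD_{X_j}^{-1}\right)\vA^{(m)}\left(\prod_{j\in[m/2,m]\setminus S}\vZ^{d^{m-j}}\right)$ in~\eqref{eq:M-S}. First I would isolate the index $j^*$: since left-multiplication by the diagonal matrices $\vD_{X_j}^{-1}$ and right-multiplication by the shift powers $\vZ^{d^{m-j}}$ all pairwise commute, setting $\vE = \left(\prod_{j\in S}\vD_{X_j}^{-1}\right)\vA^{(m)}\left(\prod_{j\in[m/2,m]\setminus(S\cup\{j^*\})}\vZ^{d^{m-j}}\right)$ we get $\vM_S = \vE\,\vZ^{d^{m-j^*}}$ and $\vM_{S\cup\{j^*\}} = \vD_{X_{j^*}}^{-1}\vE$. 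Commuting the $\vD_{X_j}^{-1}$ for $j\in S$ out of both sides, it suffices to prove that the $\vi$-th column of $\vw\,\vZ^{d^{m-j^*}}$ equals that of $\vD_{X_{j^*}}^{-1}\vw$, where $\vw := \vA^{(m)}\vZ^{T''}$ and $T'' := \sum_{j\in[m/2,m]\setminus(S\cup\{j^*\})}d^{m-j}$; denoting its $\vk$-th column by $\vw(\vk)$, and using that $i_{j^*}\ge 2\ge 1$ so right-multiplication by $\vZ^{d^{m-j^*}}$ sends column $\vi$ to column $\vi-\ve_{j^*}$ with no borrow, this reduces to the identity $\vw(\vi-\ve_{j^*}) = \vD_{X_{j^*}}^{-1}\vw(\vi)$.

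The core is then a short base-$d$ index computation. Recall that columns of $\vA^{(m)}$ are indexed lexicographically (so $X_1$ is the most significant coordinate), that decreasing the column index by $d^{m-j}$ corresponds to decrementing the $j$-th coordinate up to borrows, and that the column of $\vA^{(m)}$ at index $c$ (for $0\le c\le d^m-1$) is the evaluation of the monomial indexed by $c$. Now $\vw(\vi)$ is the column of $\vA^{(m)}$ at index $c := \mathrm{idx}(\vi) - T''$ (interpreted as the zero vector if $c<0$), and $\vw(\vi-\ve_{j^*})$ is the column at index $c - d^{m-j^*}$. The key observation is that $T''$ has a zero digit in the base-$d$ position that encodes coordinate $j^*$ (since $j^*$ is excluded from the sum defining $T''$), so subtracting $T''$ from $\mathrm{idx}(\vi)$ can only \emph{borrow out of} that position and therefore decreases the $j^*$-th digit by at most $1$; hence, whenever $c\ge 0$, the $j^*$-th coordinate of the tuple at index $c$ is at least $i_{j^*}-1\ge 1$. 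Consequently either $c<0$, in which case $c-d^{m-j^*}<0$ as well and both $\vw(\vi)$ and $\vw(\vi-\ve_{j^*})$ are the zero vector; or $c\ge 0$, the column at index $c$ is the evaluation of some monomial $X^{\mu}$ with $\mu\in\Z_d^m$ and $\mu_{j^*}\ge 1$, and subtracting $d^{m-j^*}$ causes no further borrow, so the column at index $c-d^{m-j^*}$ is the evaluation of $X^{\mu}/X_{j^*}$. In both cases $\vw(\vi-\ve_{j^*}) = \vD_{X_{j^*}}^{-1}\vw(\vi)$, giving $\vM_S[:,\vi]=\vM_{S\cup\{j^*\}}[:,\vi]$; tracing the monomials back through the remaining $\vD_{X_j}^{-1}$ factors then shows $Q_S^{\vi}=Q_{S\cup\{j^*\}}^{\vi}=X^{\mu}/\big(X_{j^*}\prod_{j\in S}X_j\big)$, or both are the zero polynomial.

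The step I expect to be the main obstacle is making the base-$d$ arithmetic fully rigorous: one has to check that the composed shifts $\vZ^{d^{m-j}}$ interact correctly (each subtraction contributes at most one borrow to any given position), pin down the precise convention for the lexicographic indexing so that ``coordinate $j^*$'' corresponds to the correct digit, and handle the edge case where a shift ``falls off the left edge'' as an honest zero column. One also needs the auxiliary fact (established inductively earlier, when it is shown that $\vM_S[:,\vi]$ is an evaluation of a \emph{polynomial} $Q_S^{\vi}$) that the exponent vector $\mu$ above has nonnegative entries in every coordinate of $S$, so that $X^{\mu}/(X_{j^*}\prod_{j\in S}X_j)$ is genuinely a monomial; everything else is routine bookkeeping with the commuting diagonal and shift factors.
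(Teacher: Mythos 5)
Your proposal is correct and follows essentially the same route as the paper: both identify each column $\vM_S[:,\vi]$ with a monomial whose exponent vector is obtained by decrementing coordinates through the shift factors and the diagonal inverses, and both hinge on the observation that $i_{j^*}\ge 2$ keeps the $j^*$-coordinate at least $1$ after the other shifts, so subtracting $d^{m-j^*}$ (right-multiplication by $\vZ^{d^{m-j^*}}$) and dividing by the $j^*$-th variable (left-multiplication by $\vD_{X_{j^*}}^{-1}$) have the identical effect. Your explicit base-$d$ borrow bookkeeping is simply a more careful rendering of what the paper compresses into its $\ominus$ versus $-$ notation.
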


We first argue why the claim above completes the proof. Fix any $\vi\in\Z_d^m$ that has an index $m/2\le j^*\le m$ such that $i_{j^*}\ge 2$. Indeed by pairing up all $S\subseteq [m/2,m]\setminus \{j^*\}$ with $S\cup\{j^*\}$, Claim~\ref{clm:mnonmial-cancel} along with~\eqref{eq:P-def} implies that
\[P_{\vi}(X_1,\dots,X_m)\equiv 0.\]
Note that this implies that $\vB^{(0)}[:,\vi]=\vzero$ if there exists an index $m/2\le j^*\le m$ such that $i_{j^*}\ge 2$. Note that there are at least $d^m-2^{m/2}\cdot d^{m/2-1}$ such indices, which implies that at most $2^{m/2}\cdot d^{m/2-1}$ non-zero columns in $\vB^{(0)}$, as desired.

\begin{proof}[Proof of Claim~\ref{clm:mnonmial-cancel}] 
For any subset $T\subseteq [m]$, recall that $\ve_T$ is the characteristic vector of $T$ in $\{0,1\}^m$. Then note that for any $S\subseteq [m/2,m]$, we have
\[Q_S^{\vi}(X_1,\dots,X_m)=X^{(\vi\ominus \ve_{[m/2,m]\setminus S})-\ve_S},\]
where $\ominus$ is subtraction over $\Z_d^m$ (and $-$ is the usual subtraction over $\Z^m$). Indeed the $\ominus \ve_{[m/2,m]\setminus S}$ term corresponds to the matrix $\left(\prod_{j\in [m/2,m]\setminus S} \vZ^{d^{m-j}}\right)$ and the $-\ve_S$ term corresponds to the matrix $\left(\prod_{j\in S}\vD_{X_j}^{-1}\right)$ in~\eqref{eq:M-S}.

Fix a $\vi\in\Z_d^m$ that has an index $m/2\le j^*\le m$ such that $i_{j^*}\ge 2$ and $S\subseteq [m/2,m]\setminus \{j^*\}$. Define $S'=S\cup \{j^*\}$. Then we claim that
\[(\vi\ominus \ve_{[m/2,m]\setminus S})-\ve_S = (\vi\ominus \ve_{[m/2,m]\setminus S'})-\ve_{S'},\]
which is enough to prove the claim.
To prove the above, we will argue that
\[(\vi\ominus \ve_{[m/2,m]\setminus S}) = (\vi\ominus \ve_{[m/2,m]\setminus S'})-\ve_{j^*}.\]
Note that the above is sufficient by definition of $S'$. Now note that $\ve_{[m/2,m]\setminus S}= \ve_{[m/2,m]\setminus S'}+\ve_{j^*}$. In particular, this implies that it is sufficient to prove
\begin{equation}
\label{eq:no-diff-minuses}
(\vi\ominus \ve_{[m/2,m]\setminus S'})\ominus \ve_{j^*} = (\vi\ominus \ve_{[m/2,m]\setminus S'})-\ve_{j^*}.
\end{equation}
By the assumption that $i_{j^*}\ge 2$, we have that 
\[(\vi\ominus \ve_{[m/2,m]\setminus S'})_{j^*}\ge 1,\]
which implies that both $\ominus \ve_{j^*}$ and $-\ve_{j^*}$ have the same effect on $(\vi\ominus \ve_{[m/2,m]\setminus S'})$, which proves~\eqref{eq:no-diff-minuses}, as desired.
\end{proof}

\subsection{Multipoint evaluation of multivariate polynomials and their derivatives}
\label{subsec:multi-derivative}

In this section, we present another example of a matrix with our general notion of recurrence that has been studied in coding theory. We would like to stress that currently this section does not yield any conditional ``lower bounds" along the lines of Theorem~\ref{thm:lb-1} or~\ref{thm:lb-2}.

We begin by setting up the notation for the derivative of a multivariate polynomial $f(X_1,\dots, X_m)$. In particular, given an $\vim=(i_1,\dots,i_m)$, we denote the $\vim$th derivative of $f$ as follows:
\[f^{(\vim)}(X_1,\dots,X_m)=\frac{\partial^{i_1}}{\partial X_1}\cdots\frac{\partial^{i_m}}{\partial X_m} f,\]
where $\frac{\partial^0}{\partial X} f =f$. If the underlying field $\F$ is a finite field, then the derivatives are defined as the {\em Hasse} derivatives.

We consider the following problem.
\begin{defn}
Given an $m$-variate polynomial $f(X_1,\dots,X_m)$ such that each variable has degree at most $d-1$, an integer $0\le r<d$ and $n=d^m$ distinct points $\va(i)=(a(i)_1,\dots,a(i)_m)$ for $1\le i\le N$, output the vector 
\[\left(f^{(\vim)}(\va(j)) \right)_{j\in [n], \vim\in\Z_r^m}.\]
\end{defn}
The above corresponds to (puncturing) of multivariate multiplicity codes, which have been studied recently in coding theory~\cite{derivative-code-1,derivative-code-2,derivative-code-3}. These codes have excellent local and list decoding properties. We note that in definition of multiplicity codes, $d$ and $r$ are limits on the total degree and the total order of derivatives while in our case these are bounds for individual variables. However, these change the problem size by only a factor that just depends on $m$ (i.e.\ we have at most a factor $m!$ more rows and columns). Since we think of $m$ as constant, we ignore this difference. For such codes in~\cite{derivative-code-1,derivative-code-2} the order of the derivatives $r$ is assumed to be a constant. However, the multiplicity code used in~\cite{derivative-code-3}, $r$ is non-constant. We would like to mention that these matrices turn up in list decoding of Reed-Solomon and related codes (this was also observed in~\cite{OS99}).\footnote{However in the list decoding applications $\vA$ is not square and one is interested in obtaining a non-zero element $\vf$ from its kernel: i.e.\ a non-zero $\vf$ such that $\vA\vf=\vzero$.} In particular, for the Reed-Solomon list decoder of Guruswami and Sudan~\cite{GS98}, the algorithm needs $\vA$ with $m=2$ and $r$ being a polynomial in $n$.

We note that the problem above is the same as $\vA\cdot \vf$, where $\vf$ is the vector of coefficients of $f(X_1,\dots,X_m)=\sum_{\vjm\in\Z_d^m} f_{\vjm} X^{\vjm}$, where as before $X^{\vjm}$ is the monomial $\prod_{\ell=1}^m X_j^{j_{\ell}}$ and the matrix $\vA$ is defined as follows:

\[A_{(k,\vim),\vjm}=\prod_{\ell=1}^m \binom{j_{\ell}}{i_{\ell}} \left(a(k)_{\ell}\right)^{j_{\ell}-i_{\ell}},\]
for $k\in [n]$, $\vjm=(j_1,\dots,j_m)\in\Z_d^m$ and $\vim=(i_1,\dots,i_m)\in\Z_r^m$. We note that the definition holds over all fields. 

We use the convention that $\binom{b}{c}=0$ if $b<c$.

The aim of the rest of the section is to show that the matrix $\vA$ satisfies a recurrence that we can handle.

For notational simplicity, let us fix $k\in [n]$ and we drop the dependence on $k$ from the indices. In particular, consider the (submatrix):
\[A_{\vim,\vjm}=\prod_{\ell=1}^m \binom{j_{\ell}}{i_{\ell}} \left(a_{\ell}\right)^{j_{\ell}-i_{\ell}}.\]
Think of $\vim=(\vim^0,\vim^1)$, where $\vim^0=(i_1,\dots,i_{m'})$ and $\vim^1=(i_{m'+1},\dots,i_m)$ for some $1\le m'<m$ to be determined. Now fix an $\vim=(\vim^0,\vim^1)$ such that $\vim^1\neq\vzero$ and define \[S_{\vim}=\{\ell\in\{m'+1,\dots,m\}|i_{\ell}\neq 0\}.\]
Consider the following sequence of relations:
{\allowdisplaybreaks
\begin{align}
\label{eq:derivative-s1}
A_{\vim,\vjm}&=\left(\prod_{\ell\in [m]\setminus S_{\vim}} \binom{j_{\ell}}{i_{\ell}} \left(a_{\ell}\right)^{j_{\ell}-i_{\ell}}\right)\cdot\left(\prod_{\ell\in S_{\vim}}\left(\binom{j_{\ell}-1}{i_{\ell}-1}+\binom{j_{\ell}-1}{i_{\ell}}\right) a_{\ell}^{j_{\ell}-i_{\ell}}\right)\\
\label{eq:derivative-s2}
&=\sum_{T\subseteq S_{\vim}} \prod_{\ell=1}^m \binom{j_{\ell}-\ind{\ell\in S_{\vim}}}{i_{\ell}-\ind{\ell\in T}} a_{\ell}^{j_{\ell}-i_{\ell}}\\
&=\sum_{T\subseteq S_{\vim}} \left(\prod_{\ell=1}^m a_{\ell}^{\ind{\ell\in S_{\vim}}-\ind{\ell\in T}}\right)\cdot\left(\prod_{\ell=1}^m \binom{j_{\ell}-\ind{\ell\in S_{\vim}}}{i_{\ell}-\ind{\ell\in T}} a_{\ell}^{j_{\ell}-\ind{\ell\in S_{\vim}}-(i_{\ell}-\ind{\ell\in T})}\right)\notag\\
\label{eq:derivative-s3}
&=\sum_{T\subseteq S_{\vim}} \va^{\ve_{S_{\vim}}-\ve_T}\cdot A_{\vim-\ve_T,\vjm-\ve_{S_{\vim}}}.
\end{align}
}
In the above~\eqref{eq:derivative-s1} follows from the following equality for integers $b\ge 0$ and $c\ge 1$, $\binom{b}{c}=\binom{b-1}{c-1}+\binom{b-1}{c}$ while~\eqref{eq:derivative-s2} follows from the notation that $\ind{P}$ is the indicator value for the predicate $P$.

Consider the matrix $\vE$ defined as follows:
\begin{equation}
\label{eq:derivative-recur}
\vE[(k,\vim),:]=\vA[(k,\vim),:]- \sum_{T\subseteq S_{\vim}} \va^{\ve_{S_{\vim}}-\ve_T}\cdot \vA[(k,\vim-\ve_T),:]\cdot \vZ^{\sum_{\ell\in S_{\vim}} d^{m-\ell}}.
\end{equation}

By~\eqref{eq:derivative-s3}, we have that for every $\vim=(\vim^0,\vim^1)$ such that $\vim^1\neq\vzero$ 
\[E[(k,\vim),\vjm]=0.\]
In other words, the  non-zero rows  $(k,\vim)$ must satisfy $\vim^1=\vzero$. Thus, we have argued that
\begin{lmm}
\label{lmm:rank-err-derivative}
$\vE$ as defined in~\eqref{eq:derivative-recur} has rank at most $nr^{m'}$.
\end{lmm}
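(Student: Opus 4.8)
The plan is to show that $\vE$ has at most $n r^{m'}$ nonzero rows, which immediately yields the rank bound. First I would establish that $\vE[(k,\vim),:] = \vzero$ for every $k \in [n]$ and every $\vim = (\vim^0,\vim^1) \in \Z_r^m$ with $\vim^1 \neq \vzero$ (these are exactly the rows to which the recurrence \eqref{eq:derivative-recur} applies, since $S_{\vim}$ is only defined when $\vim^1\neq\vzero$). The only rows that can survive are those indexed by $(k,\vim)$ with $\vim^1 = \vzero$, and there are exactly $n\cdot r^{m'}$ of them ($n$ choices of $k$ and $r^{m'}$ choices of $\vim^0 \in \Z_r^{m'}$). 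Since a matrix has rank at most its number of nonzero rows, this gives $\mathrm{rank}(\vE) \le n r^{m'}$.

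The engine of the vanishing statement is the pointwise identity already derived in \eqref{eq:derivative-s1}--\eqref{eq:derivative-s3}: for $\vim^1 \neq \vzero$ and every column index $\vjm$,
\[
A_{\vim,\vjm} = \sum_{T \subseteq S_{\vim}} \va(k)^{\ve_{S_{\vim}} - \ve_T}\cdot A_{\vim - \ve_T,\, \vjm - \ve_{S_{\vim}}},
\]
obtained by applying Pascal's rule $\binom{j}{i} = \binom{j-1}{i-1} + \binom{j-1}{i}$ once for each coordinate $\ell \in S_{\vim}$. To promote this into the vanishing of a row of $\vE$, I would set up the dictionary between column operations and powers of the shift matrix $\vZ = \vS^T$: because the columns of $\vA$ are ordered lexicographically, the column index $\vjm$ lies in position $\sum_{\ell} j_\ell d^{m-\ell}$, so right-multiplication by $\vZ^{\,d^{m-\ell}}$ decrements coordinate $\ell$ of every column index, and right-multiplication by $\vZ^{\,\sum_{\ell \in S_{\vim}} d^{m-\ell}}$ realizes the substitution $\vjm \mapsto \vjm - \ve_{S_{\vim}}$. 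Feeding this into the displayed identity term by term shows that the bracketed sum in \eqref{eq:derivative-recur} reproduces $\vA[(k,\vim),:]$ entrywise, i.e.\ $\vE[(k,\vim),:] = \vzero$, and then the counting above finishes.

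The hard part will be the boundary behavior of this dictionary. The position shift induced by $\vZ^{\,d^{m-\ell}}$ coincides with ``decrement coordinate $\ell$'' only away from underflow; when some $j_\ell = 0$ for $\ell \in S_{\vim}$, the index $\vjm - \ve_{S_{\vim}}$ leaves $\Z_d^m$, so the correspondence between $\vZ$-shifts and the formal right-hand side of \eqref{eq:derivative-s3} must be argued separately there. My plan for these columns is to use that they sit strictly below the ``staircase'' support of row $\vim$: for such $\vjm$ one has $A_{\vim,\vjm}=0$ since $\binom{j_\ell}{i_\ell} = \binom{0}{i_\ell} = 0$ (recall $i_\ell \ge 1$ for $\ell \in S_{\vim}$), and one then checks that the shifted contributions from the rows $\vim - \ve_T$ at those positions also cancel out — equivalently, that extending the product formula for $A_{\cdot,\cdot}$ to $\Z^m$ column indices, where a negative coordinate forces a vanishing binomial, is compatible with the out-of-range convention for $\vZ$. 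Carrying out this bookkeeping of the boundary cases carefully is the only nonroutine part; granting it, the rows with $\vim^1\neq\vzero$ vanish and the bound $\mathrm{rank}(\vE)\le nr^{m'}$ follows.
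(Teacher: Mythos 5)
Your overall route is the same as the paper's: bound the rank by the number of nonzero rows, and argue that every row $(k,\vim)$ with $\vim^1\neq\vzero$ vanishes via the Pascal-rule identity~\eqref{eq:derivative-s3}, translated into matrix form through the shift $\vZ^{\sum_{\ell\in S_{\vim}}d^{m-\ell}}$. You have also correctly isolated the only delicate point, namely the columns $\vjm$ with $j_\ell=0$ for some $\ell\in S_{\vim}$, where subtracting $\sum_{\ell\in S_{\vim}}d^{m-\ell}$ from the linear column position does not implement $\vjm\mapsto\vjm-\ve_{S_{\vim}}$ but instead borrows into higher coordinates.

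However, the resolution you defer — ``the shifted contributions from the rows $\vim-\ve_T$ at those positions also cancel out'' — is false, so the gap cannot be closed by the bookkeeping you describe. Take $\F=\Q$, $m=2$, $m'=1$, $d=3$, $r=2$, a single point $\va(k)=(a_1,a_2)$ with $a_2\neq 0$, the row $\vim=(0,1)$ (so $S_{\vim}=\{2\}$ and the shift is $\vZ^{1}$), and the column $\vjm=(1,0)$, which sits at linear position $3$. The shifted rows contribute their entries at position $2$, i.e.\ at the legitimate multi-index $(0,2)$, not at the out-of-range index $(1,-1)$, and~\eqref{eq:derivative-recur} gives
\[
\vE[(k,(0,1)),(1,0)] \;=\; A_{(0,1),(1,0)} - a_2\,A_{(0,1),(0,2)} - A_{(0,0),(0,2)} \;=\; 0 - a_2\cdot 2a_2 - a_2^2 \;=\; -3a_2^2 \;\neq\; 0 .
\]
So rows with $\vim^1\neq\vzero$ do not vanish; they are supported on (and generically nonzero at) the columns having $j_\ell=0$ for some $\ell\in S_{\vim}$. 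The formal identity~\eqref{eq:derivative-s3} itself is fine — at such $\vjm$ both sides are $0$ under the convention $\binom{b}{c}=0$ for $b<c$ — but right-multiplication by powers of $\vZ$ does not realize ``negative coordinate $\Rightarrow$ zero''; it realizes integer subtraction of positions with borrowing. To be clear, the paper's own proof elides exactly this point (it asserts row-vanishing from~\eqref{eq:derivative-s3} and then just counts the $nr^{m'}$ indices with $\vim^1=\vzero$), so you have not missed an argument the paper supplies; but as written your proposal does not establish the lemma, and the fix is not a routine check. Compare Claim~\ref{clm:mnonmial-cancel}, where the analogous borrowing problem for plain multipoint evaluation is avoided only by restricting to column indices whose relevant coordinates are at least $2$ and by counting nonzero \emph{columns}; here the set of borrowing columns (those with $j_\ell=0$ for some $\ell>m'$) is far larger than $nr^{m'}$, so rescuing the stated bound requires a genuinely different accounting of the error matrix rather than the deferred cancellation check.
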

\begin{proof}
This follows from the fact that there are $n\cdot r^{m'}$ many indices $(k,\vim)$ with $\vim^1=\vzero$. 
\end{proof}

This in turn implies the following:
\begin{lmm}
\label{lmm:derivative-recur-param}
The recurrence in~\eqref{eq:derivative-recur} has \width\ of  $\left(t= \frac{r^{m-m'+1}-1}{r-1},nr^{m'}\right)$ and degree $\left(D=\frac{d^{m-m'+1}-1}{d-1},D\right)$.
\end{lmm}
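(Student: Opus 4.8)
The plan is to read the parameters of the recurrence~\eqref{eq:derivative-recur} straight off, using that the rank bound on its error matrix is already Lemma~\ref{lmm:rank-err-derivative}; the only remaining work is to fix a linear order on the rows and columns of $\vA$ compatible with Definition~\ref{defn:recurrence-width} and to count. First I would fix conventions: index the columns by $\vjm\in\Z_d^m$ in lexicographic order, so that the linear index of $\vjm$ is $\sum_{\ell=1}^m j_\ell d^{m-\ell}$; then the factor $\vZ^{\sum_{\ell\in S_{\vim}}d^{m-\ell}}$ in~\eqref{eq:derivative-recur} (with $\vZ=\vS^T$) is exactly the column translation $\vjm\mapsto\vjm-\ve_{S_{\vim}}$ that appears in the derivation~\eqref{eq:derivative-s1}--\eqref{eq:derivative-s3}; the boundary case where some $j_\ell=0$ with $\ell\in S_{\vim}$ is harmless, since then $\binom{j_\ell}{i_\ell}=0$ forces $A_{\vim,\vjm}=0$ and the out-of-range column is treated as zero. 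Index the rows by pairs $(k,\vim)$, $k\in[n]$, $\vim\in\Z_r^m$, with $k$ the most significant part and $\vim$ ordered lexicographically inside each $k$-block. Since $S_{\vim}\subseteq\{m'+1,\dots,m\}$, every $\vim-\ve_T$ with $T\subseteq S_{\vim}$ agrees with $\vim$ on coordinates $1,\dots,m'$ and is coordinatewise $\le\vim$, hence strictly lexicographically below $\vim$; so in this order~\eqref{eq:derivative-recur} expresses each row $(k,\vim)$ with $\vim^1\ne\vzero$ in terms of strictly earlier rows of the same $k$-block, making it a genuine forward recurrence.

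Transposing~\eqref{eq:derivative-recur} row by row (each row becoming the column vector $\va_{(k,\vim)}$, and right multiplication by $\vZ^e$ becoming left multiplication by $\vS^e$), I would observe that it has the shape of~\eqref{eq:recur} with $\vR=\vS\in\F^{N\times N}$, $N=d^m$, i.e.\ a modular recurrence $\pmod{c_\vS(X)=X^N}$ in the sense of~\eqref{eq:recur-mod}. For a row with $\vim^1\ne\vzero$, separating the $T=\emptyset$ summand puts $g_{i,0}(X)=1-\va(k)^{\ve_{S_{\vim}}}X^{\sum_{\ell\in S_{\vim}}d^{m-\ell}}$ on the left (invertible modulo $X^N$ since $X$ is nilpotent there), each remaining summand contributes, for the step to $\va_{(k,\vim-\ve_T)}$, the single monomial $g_{i,j}(X)=\va(k)^{\ve_{S_{\vim}}-\ve_T}X^{\sum_{\ell\in S_{\vim}}d^{m-\ell}}$, and $\vf_{(k,\vim)}=\vzero$; for a row with $\vim^1=\vzero$ one takes $g_{i,0}=1$, all other $g_{i,j}=0$, and $\vf_{(k,\vim)}=\va_{(k,\vim)}$, which is precisely a row recorded as nonzero by Lemma~\ref{lmm:rank-err-derivative}.

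It then remains to bound the three parameters. The error matrix is $\vE$, which by Lemma~\ref{lmm:rank-err-derivative} has at most $nr^{m'}$ nonzero rows, so its rank is at most $nr^{m'}$; this is the second width component. The reach-back: the linear-index gap between $(k,\vim)$ and $(k,\vim-\ve_T)$ is $\sum_{\ell\in T}r^{m-\ell}\le\sum_{\ell=m'+1}^m r^{m-\ell}=\tfrac{r^{m-m'}-1}{r-1}\le t$, so every nonzero $g_{i,j}$ has $j\le t$. The degree: every coefficient polynomial above is a single monomial of degree at most $\sum_{\ell=m'+1}^m d^{m-\ell}=\tfrac{d^{m-m'}-1}{d-1}\le D$, uniformly in $j$, so the recurrence has degree $(D,D)$ (in fact already $(0,D)$, but the form $(D,D)$ is the one consumed by Theorem~\ref{thm:recur-general}). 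Combining these gives width $(t,nr^{m'})$ and degree $(D,D)$; and since $\vS$ is a triangular banded matrix with known characteristic polynomial $X^N$, hence Krylov efficient by Theorem~\ref{thm:ut-band-ke}, Theorem~\ref{thm:recur-general} (Corollary~\ref{cor:recur-general} in the rectangular case) then yields superfast multiplication by $\vA$ and $\vA^T$.

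The one genuinely delicate point is the first step: choosing a single linear order on the rows that makes~\eqref{eq:derivative-recur} causal while keeping the reach-back at most $t$, and cleanly absorbing the degenerate rows $\vim^1=\vzero$ (which must be declared generators/error rows) and the out-of-range column shifts. Everything after that is direct substitution into Definition~\ref{defn:recurrence-width}, with the only nontrivial structural input — the low rank of the error matrix — already supplied by Lemma~\ref{lmm:rank-err-derivative}; the stated bounds $t=\tfrac{r^{m-m'+1}-1}{r-1}$ and $D=\tfrac{d^{m-m'+1}-1}{d-1}$ are not tight but follow comfortably from the counts above.
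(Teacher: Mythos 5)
Your proof is correct and follows essentially the same route the paper (implicitly) takes: the rank component is exactly Lemma~\ref{lmm:rank-err-derivative}, and the width and degree components are read off from~\eqref{eq:derivative-recur} by bounding the lexicographic reach-back $\sum_{\ell\in T} r^{m-\ell}$ and the shift exponent $\sum_{\ell\in S_{\vim}} d^{m-\ell}$ by geometric sums, which is all the paper does (your bounds are in fact slightly tighter than the stated $t$ and $D$, which is harmless since width and degree are upper-bound parameters). Your explicit choices — ordering rows with $k$ major and $\vim$ lexicographic, pulling the $T=\emptyset$ term to the left as an invertible $g_{i,0}(\vS)$, and declaring the $\vim^1=\vzero$ rows to be pure error rows consistent with Lemma~\ref{lmm:rank-err-derivative} — just make precise what the paper leaves implicit.
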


\subsubsection{Instantiation of parameters}

We now consider few instantiation of parameters to get a feel for Lemma~\ref{lmm:derivative-recur-param}.

We start with the case of $n=1$: note that in this case we have $r=d$ (since we want $N=n\cdot r^m=d^m$). In this case the choice of $m'$ that makes all the parameters roughly equal is $m'=m/2$. In this case we get that~\eqref{lmm:derivative-recur-param} has \width $(2d^{m/2},d^{m/2})$ and is degree $(2d^{m/2},2d^{m/2})$. However, this does not give anything algorithmically interesting since the input size for such a recurrence is already $\Omega(d^{m/2}\cdot d^{m/2}\cdot N+d^{m/2}\cdot N)=\Omega(N^2)$. 

Recall that a recurrence with \width\ $(T,R)$ that is degree $(D,D)$ has an input size of $O((TD+R)N)$. Thus, to decrease the input size of the recurrence in~\eqref{eq:derivative-recur}, we need to pick $m'$ such that $2(m-m')=m'$. This implies that we pick $m'=2m/3$ and thus we have a recurrence with \width\ $(2d^{m/3},2d^{m/3})$ that is degree $(2d^{m/3},2d^{m/3})$ for an overall input size of $O(d^{5m/6})=O(N^{5/3})$.



\section{Miscellaneous Results}
\subsection{Efficient Jordan Decomposition}
\label{subsec:jordan}

Every linear operator admits a Jordan normal form, which in some sense is the smallest and most uniform representation of the operator under any basis. Knowing the Jordan decomposition of a matrix permits many useful applications. For example, being able to quickly describe and manipulate the change of basis matrix, which is a set of generalized eigenvectors, subsumes calculating and operating on the eigenvectors of the matrix.

Another use of the Jordan decomposition is being able to understand the evaluation of any analytic function on the original matrix. We highlight this application because of its connection to Krylov efficiency in Section~\ref{sec:Krylov}, which can be solved by evaluating the matrix at a certain polynomial function. Computing matrix functions has widespread uses, and there is a significant body of research on the design and analysis of algorithms for various matrix functions, including the logarithm, matrix sign, $p$th root, sine and cosine~\cite{higham2008}. Perhaps the most prominent example of a matrix function is the matrix exponential, which is tied to the theory of differential equations - systems are well-approximated around equilibria by a linearization $\vx'(t) = \vM\vx(t)$, which is solved by the exponential $\vx(t) = e^{t\vM}\vx(0)$~\cite{hartman1960}. We note that much work has been done on computing the matrix exponential in particular without going through the Jordan decomposition, since it is generally hard to compute~\cite{putzer1966}.
 \agu{import of old min=char result for Jordan decomposition. pretty unhappy about how this section looks right now.. will clean up if time permits\ldots}
The Jordan form can be used to easily compute matrix functions. Consider a matrix $\vM$ with a Jordan decomposition $\vM = \vA\vJ\vA^{-1}$. We say that $\vM$ is $(\alpha,\beta)$-Jordan efficient if $\vA$ and $\vA^{-1}$ admit super-fast matrix-vector multiplication in $O(\beta)$ operations with $O(\alpha)$ pre-processing steps.
%
%
%


Now consider a triangular $\Delta$-banded $N\times N$ matrix $\vM$ whose minimal polynomial has full degree. Throughout this section, we assume $\vM$ is upper triangular. We will prove that $\vM$ is $(\Delta^{\omega_\calR}, \Delta^2)$-Jordan efficient in this case.

\subsubsection{Recurrence of $A$}
\label{sec:recurrence}
$\vM$ is similar to $\vF_\vM$ where ${\vF_\vM}^T$ is the Frobenius companion matrix for $c_\vM(X)$. In particular, suppose $c_\vM(X) = X^N - c_{N-1} X^{N-1} \dots - c_0$, then $\vF_\vM$ will be
\[\begin{bmatrix}
  0 & 1 & 0 & \dots & 0 \\
  0 & 0 & 1 & \dots & 0 \\
  \vdots & \vdots & \vdots & \ddots & \vdots \\
  0 & 0 & 0 & \dots & 1 \\
  c_0 & c_1 & c_2 & \dots & c_{N-1} 
\end{bmatrix}\]
This matrix ${\vF_\vM}^T$ is simply the Frobenius normal form, i.e., the canonical rational form, of $\vM^T$.

For a matrix $\vA$, we define $a_i(X)$ as the polynomial corresponding to the $i^{th}$ row of $\vA$, i.e., $a_i(X) = \sum_{j=0}^N \vA[i,j] X^j$. 

\begin{lmm}
  For some matrix $\vA$, $\vM \vA = \vA \vF_\vM$ if and only if $(\vM- X \vI) \begin{bmatrix} a_0(X) \\ \vdots \\ a_{N-1}(X) \end{bmatrix} = 0 \mod c_\vM(X)$.
\end{lmm}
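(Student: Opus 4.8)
The plan is to translate the matrix identity $\vM\vA=\vA\vF_\vM$ into a statement about the row polynomials $a_i(X)$ and observe that this is exactly the claimed congruence. The key computational observation is that right-multiplication by $\vF_\vM$ implements ``multiply the row polynomial by $X$ modulo $c_\vM(X)$''. Concretely, writing $c_\vM(X)=X^N-c_{N-1}X^{N-1}-\cdots-c_0$, the matrix $\vF_\vM$ has $1$s on the superdiagonal and last row $(c_0,c_1,\ldots,c_{N-1})$; hence for any matrix $\vA$ the $i$th row of $\vA\vF_\vM$ has $k$th entry $\vA[i,k-1]+\vA[i,N-1]c_k$ (with the convention $\vA[i,-1]=0$). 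Summing these against $X^k$ and using $\sum_{k=0}^{N-1}c_kX^k=X^N-c_\vM(X)$, the polynomial attached to the $i$th row of $\vA\vF_\vM$ works out to $Xa_i(X)-\vA[i,N-1]c_\vM(X)$, which is precisely $Xa_i(X)$ reduced $\bmod\,c_\vM(X)$: the subtraction cancels the degree-$N$ term, leaving the unique representative of degree $<N$.

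First I would record this identity as a short lemma (pure bookkeeping with the explicit entries of $\vF_\vM$). Second, I would compute both sides of $\vM\vA=\vA\vF_\vM$ row by row in terms of polynomials: the $i$th row of $\vM\vA$ is $\sum_\ell\vM[i,\ell]\vA[\ell,:]$, which corresponds to the polynomial $\sum_\ell\vM[i,\ell]a_\ell(X)$ --- a polynomial of degree $<N$, since $\vM\vA$ is an honest $N\times N$ matrix product with no reduction involved --- while by the first step the $i$th row of $\vA\vF_\vM$ corresponds to $Xa_i(X)\bmod c_\vM(X)$.

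Third, I would conclude. Two $N\times N$ matrices coincide iff all $N$ of their row polynomials (each of degree $<N$) coincide. Thus $\vM\vA=\vA\vF_\vM$ holds iff $\sum_\ell\vM[i,\ell]a_\ell(X)\equiv Xa_i(X)\pmod{c_\vM(X)}$ for every $i$, equivalently $\sum_\ell\vM[i,\ell]a_\ell(X)-Xa_i(X)\equiv 0\pmod{c_\vM(X)}$ for every $i$. Stacking these $N$ congruences into a single vector identity over $\F[X]$ gives exactly $(\vM-X\vI)\,(a_0(X),\ldots,a_{N-1}(X))^T\equiv 0\pmod{c_\vM(X)}$, as desired.

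There is no real obstacle here; it is a direct computation once the translation is set up. The only point requiring a little care is that $\vA\vF_\vM$ automatically produces row polynomials of degree $<N$, whereas $Xa_i(X)$ may have degree $N$, so the correct formulation of the equivalence is via congruence modulo $c_\vM(X)$ rather than literal polynomial equality; the identity $\sum_k c_kX^k=X^N-c_\vM(X)$ is precisely what reconciles the two, and it is worth stating this explicitly to make the degree bookkeeping transparent.
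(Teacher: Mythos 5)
Your proof is correct and follows essentially the same route as the paper: the paper's entire argument is the observation that right-multiplication by $\vF_\vM$ on coefficient rows is isomorphic to multiplication by $X$ in $\F[X]/(c_\vM(X))$, which you simply verify by an explicit entry-wise computation rather than citing as a standard companion-matrix fact. Your careful degree bookkeeping (the rows of $\vA\vF_\vM$ are the canonical degree-$<N$ representatives of $X a_i(X)$, forcing the equivalence to be stated as a congruence) is exactly the point underlying the paper's one-line proof, so nothing is missing.
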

\begin{proof}
  Multiplying a vector by $\vF_\vM$ in $\calR$ is isomorphic to multiplying the corresponding polynomial by $X$ in $\calR[X]/(c_\vM(X))$. This implies that $\vM \begin{bmatrix} a_0(X) \\ \vdots \\ a_{N-1}(X) \end{bmatrix} = X \begin{bmatrix} a_0(X) \\ \vdots \\ a_{N-1}(X) \end{bmatrix} \pmod{c_\vM(X)}$. 
\end{proof}

\begin{cor}
  \label{cor:jordan-recurrence}
  There exists a $\Delta-1$ width matrix $\vA$ such that $\vM \vA = \vA \vF_\vM$. More specifically, the rows of $\vA$ satisfy
  \begin{align}
	(X - \vM[i,i]) a_i(X) = \sum_{j=1}^{\min(\Delta, N-i)} \vM[i,i+j] a_{i+j}(X)+d_i c_\vM(X)\label{eqn:delta-recur}
  \end{align}
  where $d_i\in\mathbb F$. 
\end{cor}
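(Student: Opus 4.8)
The plan is to obtain $\vA$ not by an explicit row-by-row construction, but directly from a similarity transformation, and then read off the recurrence its rows are forced to satisfy using the preceding lemma together with a degree count.

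First I would invoke the hypothesis that $\vM$ is non-derogatory: its minimal polynomial has full degree $N$, hence equals its characteristic polynomial $c_\vM(X)$. The matrix $\vF_\vM$ is likewise non-derogatory with the same characteristic polynomial $c_\vM(X)$, since ${\vF_\vM}^T$ is the Frobenius companion matrix of $c_\vM$ and every matrix is similar to its transpose. Two non-derogatory matrices with equal characteristic polynomial are each similar to the companion matrix of that polynomial, hence to each other, so there is an invertible $\vA$ with $\vA^{-1}\vM\vA = \vF_\vM$, i.e.\ $\vM\vA = \vA\vF_\vM$. This already produces a matrix $\vA$ as in the first sentence of Corollary~\ref{cor:jordan-recurrence}; it remains to show its rows satisfy \eqref{eqn:delta-recur}, which will simultaneously justify the width-$(\Delta-1)$ claim.

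Next, by the lemma above, $\vM\vA = \vA\vF_\vM$ is equivalent to $(\vM - X\vI)\,[a_0(X),\dots,a_{N-1}(X)]^T \equiv 0 \pmod{c_\vM(X)}$, where $a_i(X) = \sum_j \vA[i,j]X^j$. I would then write out the $i$-th component of the left-hand side. Because $\vM$ is upper triangular and $\Delta$-banded, with nonzero off-diagonal entries among $\vM[i,i+1],\dots$ only up to $\vM[i,i+\Delta]$ (truncated near the bottom), this component is exactly $(\vM[i,i]-X)a_i(X) + \sum_{j=1}^{\min(\Delta,N-1-i)} \vM[i,i+j]\,a_{i+j}(X)$. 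The crux is the degree bookkeeping: each $a_k(X)$ is a row of an $N\times N$ matrix, so $\deg a_k < N$, and hence this component has degree at most $N$; since it is divisible by $c_\vM(X)$, which has degree exactly $N$, it must equal $-d_i\,c_\vM(X)$ for a \emph{scalar} $d_i\in\F$. Rearranging gives precisely $(X-\vM[i,i])a_i(X) = \sum_{j=1}^{\min(\Delta,N-i)} \vM[i,i+j]a_{i+j}(X) + d_i c_\vM(X)$, i.e.\ \eqref{eqn:delta-recur}. Reversing the row order ($\ell\mapsto N-1-\ell$) turns this backward recurrence into a forward one expressing each row as a combination of the previous $\Delta-1$ rows with coefficients of degree at most $1$ plus a single error term proportional to $c_\vM(X)$, matching Definition~\ref{defn:recurrence-width} with $\vR=\vS$; so $\vA$ (after reversal) indeed has recurrence width $\Delta-1$ with a rank-$1$ error.

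I do not expect a serious obstacle: the real content is recognizing that degree constraints force the ``error'' coefficient $d_i$ to be a constant (this is exactly what makes \eqref{eqn:delta-recur} a genuine bounded-width recurrence rather than a modular recurrence with an arbitrary $d_i(X)$), and resisting the temptation to build $\vA$ by hand — the similarity transform hands us a valid $\vA$ for free, and \emph{every} such $\vA$ automatically inherits the claimed structure. The only points needing mild care are the band-width convention ($\Delta$ versus $\Delta+1$ diagonals) and correctly matching the backward recurrence to the forward form of Definition~\ref{defn:recurrence-width}.
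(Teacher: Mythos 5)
Your proposal is correct and follows essentially the same route as the paper: both rest on the preceding lemma converting $\vM\vA=\vA\vF_\vM$ into $(\vM-X\vI)\,(a_0(X),\dots,a_{N-1}(X))^T\equiv 0 \pmod{c_\vM(X)}$, and both use the degree count ($\deg a_i<N$, so each component has degree at most $N$) to force the multiple of $c_\vM(X)$ to be a scalar $d_i$. The only cosmetic difference is direction: you obtain $\vA$ first from the similarity of the non-derogatory $\vM$ to $\vF_\vM$ (a fact the paper states immediately before the lemma) and then read off the recurrence, whereas the paper phrases the conclusion as ``any family of polynomials satisfying \eqref{eqn:delta-recur} yields such an $\vA$''; since the lemma is an equivalence, these amount to the same argument.
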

\begin{proof}
  Note that the matrix $\vM$ we are interested in is an upper-triangular, $\Delta$-banded matrix. Our previous lemma implies that that $\vM \vA = \vA \vF_\vM$ if and only if $(\vM - X\vI)  \begin{bmatrix} a_0(X) \\ \vdots \\ a_{N-1}(X) \end{bmatrix} = 0 \pmod{c_\vM(X)}$.

  We remove the modulus and treat the $a_i(X)$ as polynomials over $\mathbb F[X]$ of degree less than $N$. The above condition becomes the system of equations~\ref{eqn:delta-recur}.
Since the left side has degree $N$, the $d_i$ must have degree $0$ so they are scalars. Thus if a family of polynomials is defined to satisfy the recurrence above, the associated $\Delta-1$ width matrix $A$ must satisfy $\vM \vA = \vA \vF_\vM$. 
\end{proof}

We say that a recurrence of form (\ref{eqn:delta-recur}) has \emph{size $N$} if the total length of the recurrence has length $N$ and $c(X)$ has degree $N$, so that all polynomials are bounded by degree $N-1$. We call the scalars $\vM[\cdot,\cdot]$ the recurrence coefficients, and the $d_i$ the error coefficients.

From now on, our use of $\vA$ will denote such a $\Delta-1$ width matrix. Note that independent of the $d_i$ coefficients, the following divisibility lemma holds.

\begin{lmm}
  \label{lmm:triangular}
  $\prod_{j=0}^{i-1} (X-\vM[j,j]) | a_i(X)$ for $0 \le i \le N-1$. 
\end{lmm}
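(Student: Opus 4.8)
The plan is a downward induction on $i$, from $i = N-1$ down to $i = 0$, driven by the recurrence~\eqref{eqn:delta-recur} that defines the $a_i(X)$ (Corollary~\ref{cor:jordan-recurrence}). The one structural fact I would lean on is that $\vM$ is upper triangular, so its characteristic polynomial splits completely over the diagonal: $c_\vM(X) = \prod_{j=0}^{N-1}(X-\vM[j,j])$. Write $p_i(X) = \prod_{j=0}^{i-1}(X-\vM[j,j])$, so the target is $p_i(X)\mid a_i(X)$. The two elementary identities I will use repeatedly are $p_{i+1}(X) = p_i(X)\,(X-\vM[i,i])$ and $p_{i+1}(X)\mid c_\vM(X)$.

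For the base case $i = N-1$, the sum on the right of~\eqref{eqn:delta-recur} is vacuous (its only would-be term, $\vM[N-1,N]\,a_N$, lies outside the $N\times N$ matrix), so $(X-\vM[N-1,N-1])\,a_{N-1}(X) = d_{N-1}\,c_\vM(X)$, and cancelling $(X-\vM[N-1,N-1])$ gives $a_{N-1}(X) = d_{N-1}\,p_{N-1}(X)$, which is divisible by $p_{N-1}(X)$. For the inductive step at index $i$, assume $p_k(X)\mid a_k(X)$ for every $k$ with $i < k \le N-1$. For any such $k$ we have $k-1 \ge i$, hence $(X-\vM[i,i])\mid p_{i+1}(X)\mid p_k(X)\mid a_k(X)$, and in particular $p_{i+1}(X)$ divides each $a_{i+j}(X)$ appearing on the right of~\eqref{eqn:delta-recur}; since also $p_{i+1}(X)\mid c_\vM(X)$, the entire right-hand side is divisible by $p_{i+1}(X) = p_i(X)\,(X-\vM[i,i])$, and therefore so is $(X-\vM[i,i])\,a_i(X)$. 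Because $\F[X]$ is an integral domain and $X-\vM[i,i]\neq 0$, I can cancel this factor to conclude $p_i(X)\mid a_i(X)$, completing the induction.

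I do not expect any real obstacle here; the only point requiring a word of care is that the $a_i(X)$ are genuine polynomials, but this is produced by the same induction: the argument above simultaneously shows that $(X-\vM[i,i])$ divides the right-hand side of~\eqref{eqn:delta-recur}, so $a_i(X)\in\F[X]$ (and its degree is at most $N-1$, since the right-hand side has degree at most $N$). One may therefore simply strengthen the inductive hypothesis to the conjunction ``$a_k(X)$ is a polynomial of degree $<N$ and $p_k(X)\mid a_k(X)$''. Thus the lemma comes down to exactly two ingredients: the complete splitting of $c_\vM$ forced by triangularity of $\vM$, and cancellation in the polynomial ring $\F[X]$.
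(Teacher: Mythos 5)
Your proof is correct and follows essentially the same route as the paper's (very terse) argument: downward induction from $i=N-1$ using the recurrence~\eqref{eqn:delta-recur}, the fact that triangularity gives $c_\vM(X)=\prod_{j=0}^{N-1}(X-\vM[j,j])$, and cancellation of the factor $(X-\vM[i,i])$ in $\F[X]$. You merely spell out the base case and inductive step (and the well-definedness of the $a_i$ as polynomials) that the paper leaves implicit.
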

\begin{proof}
  Proof by induction. As a base case, this is true for $a_{N-1}(X)$ by equation (\ref{eqn:delta-recur}) since $\prod_{j=0}^{N-1}(x-\vM[j,j]) = c_{\vM}(X)$. And (\ref{eqn:delta-recur}) gives us the inductive step.
\end{proof}

\subsubsection{Conditions on the Error Coefficients}
\label{sec:error-coef}

We showed in Section~\ref{sec:recurrence} that the equation $\vM\vA = \vA\vF$ is equivalent to a recurrence (\ref{eqn:delta-recur}). In order to complete the task of finding $\vA$ satisfying $\vM = \vA\vF\vA^{-1}$, we must find scalars $d_i$ in the recurrence that lead to an invertible matrix $\vA$. The goal of this subsection is in proving a strong sufficiency condition on such sequences $d_i$.

We will first show some equivalent conditions to $\vA$ being invertible. We will make heavy use of (confluent) Vandermonde matrices here (Definition~\ref{defn:vandermonde}), and for convenience define $\vV_\vM$ to be the Vandermonde matrix $\vV_{\vM[0,0], \dots, \vM[N-1, N-1]}$.


\begin{lmm}
  \label{lmm:triangular-equiv}
  The following are true for $\vA$ defined by recurrence (\ref{eqn:delta-recur}), for any $d_i$.
  \begin{enumerate}[label=(\alph*)]
	\item $\vA\vV_{\vM}^T$ is upper triangular.
	\item $a_i^{(n_j)}(\vM[j,j]) = 0$ for all $i$ and $j < i$.
	\item $\prod_{j<i}(X-\vM[j,j]) \,|\, a_i(X)$ for all $i$.
  \end{enumerate}
\end{lmm}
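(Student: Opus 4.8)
The plan is to first turn the matrix identity in (a) into an explicit formula for the entries of $\vA\vV_{\vM}^T$, from which (a) and (b) will be manifestly the same statement, and then to derive the divisibility in (c) (equivalently, (b)) from the divisibility already established unconditionally in Lemma~\ref{lmm:triangular}.

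First I would compute the $(i,k)$ entry of $\vA\vV_{\vM}^T$. Writing $a_i(X)=\sum_j \vA[i,j]X^j$, and using Definition~\ref{defn:vandermonde}, we have $\vV_{\vM}^T[j,k]=\vV_{\vM}[k,j]=\binom{j}{n_k}\vM[k,k]^{j-n_k}$ when $j\ge n_k$ and $0$ otherwise. Since $\binom{j}{n_k}X^{j-n_k}$ is exactly the $n_k$-th Hasse derivative of $X^j$ (which coincides with $\tfrac{1}{n_k!}$ times the ordinary derivative in characteristic zero), summing over $j$ gives
\[
  (\vA\vV_{\vM}^T)[i,k] = a_i^{(n_k)}(\vM[k,k]),
\]
where $a_i^{(n)}$ denotes the $n$-th derivative interpreted as in the rest of the paper. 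Consequently $\vA\vV_{\vM}^T$ is upper triangular precisely when $(\vA\vV_{\vM}^T)[i,k]=0$ for all $i>k$, i.e.\ exactly when $a_i^{(n_j)}(\vM[j,j])=0$ for all $j<i$; this is the equivalence of (a) and (b).

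Next I would reduce (b) to the divisibility statement (c). The key combinatorial observation is that, for a fixed $i$ and a fixed field element $\lambda$, if $j_1<j_2<\dots<j_m$ are the indices $j<i$ with $\vM[j,j]=\lambda$, then by definition $n_{j_\ell}=\ell-1$, so $\{\,n_j : j<i,\ \vM[j,j]=\lambda\,\}=\{0,1,\dots,m-1\}$; moreover $\lambda$ occurs with multiplicity exactly $m$ as a root of $\prod_{j<i}(X-\vM[j,j])$. Hence for each such $\lambda$, the condition $(X-\lambda)^m \mid a_i(X)$ is equivalent, by the standard characterization of root multiplicity in terms of Hasse derivatives, to $a_i^{(0)}(\lambda)=\dots=a_i^{(m-1)}(\lambda)=0$, which in turn is exactly $a_i^{(n_j)}(\vM[j,j])=0$ for all $j<i$ with $\vM[j,j]=\lambda$. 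Since distinct factors $(X-\lambda)$ are pairwise coprime, taking the conjunction over all distinct values $\lambda$ occurring in $\{\vM[j,j]:j<i\}$ shows that (b) is equivalent to $\prod_{j<i}(X-\vM[j,j])\mid a_i(X)$, which is (c). Because Lemma~\ref{lmm:triangular} gives (c) for every choice of the scalars $d_i$, all three statements follow for every $\vA$ defined by recurrence~\eqref{eqn:delta-recur}.

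I do not expect a genuine obstacle here: the argument is essentially an unwinding of the confluent Vandermonde definition together with a short counting bookkeeping. The one point that requires care is the distinction between ordinary and Hasse derivatives over fields of positive characteristic — one should note that Definition~\ref{defn:vandermonde} is written with the binomial-coefficient normalization, so that $\vA\vV_{\vM}^T$ naturally encodes Hasse derivatives, and that ``$a_i^{(n_j)}$'' in the statement of the lemma must be read accordingly, so that the multiplicity criterion invoked in the (b)$\Leftrightarrow$(c) step is valid in all characteristics.
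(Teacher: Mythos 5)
Your proposal is correct and follows essentially the same route as the paper: identify the entries of $\vA\vV_{\vM}^T$ as the Hermite-type evaluations $a_i^{(n_k)}(\vM[k,k])$ to get (a)$\Leftrightarrow$(b), translate (b) into root-multiplicity divisibility to get (b)$\Leftrightarrow$(c), and then conclude from Lemma~\ref{lmm:triangular}, which gives (c) for every choice of the $d_i$. Your explicit unwinding of the confluent Vandermonde entries and the remark about Hasse derivatives in positive characteristic are just a more careful spelling-out of the paper's "Hermitian evaluation" and Taylor-expansion steps.
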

\begin{proof}
  We show the equivalence of the conditions. This is sufficient since $(c)$ is true by Lemma~\ref{lmm:triangular}.
  \begin{description}
	\item[$(a) \iff (b)$]
	  As noted above, $\vA \vV_\vM^T[i,j]$ can be understood as a Hermetian evaluations (evaluation a polynomial and its derivatives) and is equal to $a_i^{(n_j)}(\vM[j,j])$. The equivalence follows since upper triangularity the same as saying $\vA\vV_\vM^T[i,j] = 0$ for all $j<i$.
	\item[$(b) \iff (c)$]
	  Fix an $i$. For every $\lambda$, let $n_\lambda$ be its multiplicity in $\prod_{j<i} (X-\vM[j,j])$. Note that condition (c) is equivalent to saying $(X-\lambda)^{n_\lambda} \,|\, a_i(X)$ for all $\lambda$. For a fixed $\lambda$, the divisibility condition $(X-\lambda)^{n_\lambda} \,|\,a_i(X)$ is equivalent to $a_i^{(k)}(\lambda) = 0$ for $k < n_\lambda$. This can be seen, for example, by considering the Taylor expansion of $a_i$ around $\lambda$. Finally, the union of these equations over all $\lambda$ is exactly (b), completing the equivalence.
  \end{description}
\end{proof}

The equivalence of the following conditions follows easily from the same reasoning.
\begin{cor}
  \label{lmm:invertible-equiv}
  The following are equivalent for $\vA$ defined by recurrence (\ref{eqn:delta-recur}).
  \begin{enumerate}[label=(\alph*)]
	\item $\vA$ is invertible.
	\item $\vA\vV_{\vM}^T$ is invertible.
	\item $a_i^{(n_i)}(\vM[i,i]) != 0$.
	\item $\prod_{j\leq i}(X-\vM[j,j]) \,\not|\, a_i(X)$\label{cond4}
  \end{enumerate}
\end{cor}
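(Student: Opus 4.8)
The goal is to establish the chain $(a)\iff(b)\iff(c)\iff(d)$, building directly on Lemma~\ref{lmm:triangular-equiv} (which already gives that $\vA\vV_\vM^T$ is upper triangular, together with the evaluation interpretation of its entries) and on standard facts about confluent Vandermonde matrices. Throughout, $\vV_\vM$ denotes the $N\times N$ (confluent) Vandermonde matrix on the ordered sequence $\vM[0,0],\dots,\vM[N-1,N-1]$ as in Definition~\ref{defn:vandermonde}, and derivatives are taken in the Hasse sense so that $(\vA\vV_\vM^T)[i,j] = a_i^{(n_j)}(\vM[j,j])$ exactly (over $\R$ or $\C$ this differs from the ordinary derivative only by the nonzero factor $1/n_j!$, which is irrelevant for the vanishing statements below).

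First I would handle $(a)\iff(b)$. Because the sequence $\vM[0,0],\dots,\vM[N-1,N-1]$ gives a consistent Hermite-data pattern (the $n_i$ count preceding repetitions, so each distinct value $\lambda$ carries data $D^{(0)},\dots,D^{(m_\lambda-1)}$ for its multiplicity $m_\lambda$, and $\sum_\lambda m_\lambda = N$), the matrix $\vV_\vM$ encodes the Hermite evaluation map on polynomials of degree $<N$, which is a bijection; hence $\vV_\vM$ is invertible. Therefore $\vA$ is invertible iff $\vA\vV_\vM^T$ is invertible, giving $(a)\iff(b)$. For $(b)\iff(c)$: by Lemma~\ref{lmm:triangular-equiv}, $\vA\vV_\vM^T$ is upper triangular, so it is invertible iff all its diagonal entries are nonzero; and its $(i,i)$ entry is $a_i^{(n_i)}(\vM[i,i])$, so $(b)\iff(c)$.

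The remaining equivalence $(c)\iff(d)$ is the one small computation. I would invoke Lemma~\ref{lmm:triangular}, which gives $\prod_{j<i}(X-\vM[j,j])\mid a_i(X)$; since the multiplicity of $\vM[i,i]$ among $\{\vM[j,j]:j<i\}$ is exactly $n_i$, this lets us write $a_i(X) = (X-\vM[i,i])^{n_i}\,s_i(X)$ for some polynomial $s_i(X)$. Applying the Hasse–Leibniz rule, $D^{(n_i)}\!\big((X-\vM[i,i])^{n_i}s_i\big)$ evaluated at $X=\vM[i,i]$ has every summand killed except the one in which all $n_i$ derivatives fall on $(X-\vM[i,i])^{n_i}$, so $a_i^{(n_i)}(\vM[i,i]) = s_i(\vM[i,i])$. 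Hence $a_i^{(n_i)}(\vM[i,i])\neq 0$ iff $(X-\vM[i,i])\nmid s_i(X)$ iff $(X-\vM[i,i])^{n_i+1}\nmid a_i(X)$ iff $\prod_{j\le i}(X-\vM[j,j])\nmid a_i(X)$, which is precisely $(d)$.

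\textbf{Main obstacle.} There is no serious difficulty here: the content lies entirely in (i) recording that the confluent Vandermonde $\vV_\vM$ is nonsingular even though the nodes $\vM[i,i]$ may repeat, and (ii) the factorization $a_i = (X-\vM[i,i])^{n_i}s_i$ together with the Hasse-derivative identity reducing the diagonal entry to $s_i(\vM[i,i])$. The only place to be careful is keeping the derivative/normalization conventions consistent with Definition~\ref{defn:vandermonde} and with the Hasse-derivative convention used over finite fields; once those are pinned down, everything is bookkeeping that follows from Lemmas~\ref{lmm:triangular} and~\ref{lmm:triangular-equiv}.
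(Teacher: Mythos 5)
Your proof is correct and follows essentially the same route the paper intends: the paper dispatches this corollary in one line as "the same reasoning" as Lemma~\ref{lmm:triangular-equiv}, i.e.\ invertibility of the confluent Vandermonde $\vV_\vM$ for $(a)\iff(b)$, triangularity plus the Hermite-evaluation reading of the diagonal entries for $(b)\iff(c)$, and the Taylor/divisibility argument for $(c)\iff(d)$, which is exactly what you fill in (with the factorization $a_i=(X-\vM[i,i])^{n_i}s_i$ from Lemma~\ref{lmm:triangular} making the last equivalence explicit).
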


We use Corollary~\ref{lmm:invertible-equiv} and in particular \ref{cond4} to find conditions when $\vA$ is invertible. For convenience, define $p_i(X) = \prod_{j < i} \vM[j,j]$.

Consider a fixed $i$. We will show that if $d_{i+1},\dots,d_{N-1}$ are fixed such that \ref{cond4} is satisfied for all $a_{[i+1:N]}(X)$, then we can choose $d_i$ such that \ref{cond4} is satisfied for $a_i(X)$ as well.

\begin{description}
  \item{Case 1}: There does not exist $j > i$ such that $\vM[j,j] = \vM[i,i]$.

	By the recurrence,
	\[ a_i(X) = \frac{\sum_{j>i} \vM[i,j]a_j(X)}{X-\vM[i,i]} + d_i \frac{p_N(X)}{X-\vM[i,i]} \]

	Note that the first term on the RHS is a polynomial and also is a multiple of $p_i(X)$ by inductively invoking Lemma~\ref{lmm:triangular}. Note that the second term is a multiple of $p_i(X)$ but not $p_{i+1}(X)$  by the assumption for this case. Thus there exists some $d_i$ such that $\prod_{j\leq i}(X-\vM[j,j]) \,\not|\, a_i(X)$ holds - in fact, there is only one $d_i$ such that it doesn't hold.

  \item{Case 2}: There exists $j > i$ such that $\vM[j,j] = \vM[i,i]$. Choose $j$ to be the largest such index.

	\begin{claim}
	  Fix $a_{j}$ and follow the recurrence (1) without adding multiples of the modulus - i.e. let $d_{[i:j]} = 0$. Let this family of polynomials be $a'_{i}(X)$. Then $p_{i+1}(X) | a_i(X)$ if and only if $p_{i+1}(X) | a'_i(X)$.
	\end{claim}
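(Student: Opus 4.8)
The claim to prove is the following: in Case 2, where $j > i$ is the largest index with $\vM[j,j] = \vM[i,i]$, fixing $a_j$ and running the recurrence~\eqref{eqn:delta-recur} downward with $d_{[i:j]} = 0$ produces polynomials $a'_k(X)$ such that $p_{i+1}(X) \mid a_i(X)$ iff $p_{i+1}(X) \mid a'_i(X)$. Here $p_{i+1}(X) = \prod_{k \leq i}(X - \vM[k,k])$.

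\medskip
\noindent\textbf{Plan of proof.}
The plan is to track, by downward induction on $k$ from $j$ to $i$, the relationship between $a_k(X)$ and $a'_k(X)$. Since the recurrence is linear and the only difference between the two processes is the addition of the error terms $d_\ell c_\vM(X)$ for $\ell \in [i:j]$, we have $a_k(X) - a'_k(X) = e_k(X)$ where $e_k$ is the contribution of those error terms. More precisely, I would first establish that $e_k(X)$ is itself a polynomial obtained by running the homogeneous recurrence (no added modulus) with appropriate source terms: $e_j(X) = 0$, and for each step down, $(X - \vM[k,k])e_k(X) = \sum_{\ell > k}\vM[k,\ell]e_\ell(X) + d_k c_\vM(X)$ for $k \in [i:j]$ (and $e_k$ for $k > j$ is zero since those rows are identical). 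The key structural fact I would extract is that every $e_k(X)$ for $k \in [i:j]$ is divisible by $\prod_{k < \ell \leq j}(X - \vM[\ell,\ell])$ — this follows by the same downward-induction argument as Lemma~\ref{lmm:triangular}, since $c_\vM(X) = \prod_{\text{all }\ell}(X - \vM[\ell,\ell])$ is divisible by $\prod_{k \leq \ell}(X-\vM[\ell,\ell])$ and dividing through by $(X-\vM[k,k])$ at each step peels off one factor.

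\medskip
\noindent Now the crux: I claim that $p_{i+1}(X) \mid e_i(X)$ always holds (regardless of the choice of $d_{[i:j]}$), so that $a_i(X)$ and $a'_i(X)$ are congruent modulo $p_{i+1}(X)$, which immediately gives the iff. To see $p_{i+1}(X) \mid e_i(X)$: by the structural fact above, $e_i(X)$ is divisible by $\prod_{i < \ell \leq j}(X - \vM[\ell,\ell])$. By Lemma~\ref{lmm:triangular} applied to the $e$-recurrence starting from $e_j = 0$ — or more directly, tracing the same induction — $e_i(X)$ is also divisible by $\prod_{\ell < i}(X - \vM[\ell,\ell]) = p_i(X)$, since each $e_k$ inherits the divisibility by $p_k(X)$ just as the $a_k$ do (the error term $d_k c_\vM(X)$ is divisible by $p_i(X)$ for all $k \geq i$, hence after dividing by $(X - \vM[k,k])$ the running divisor is maintained). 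Combining: $e_i(X)$ is divisible by $p_i(X) \cdot (X - \vM[i,i])$ if $\vM[i,i]$ does not appear among $\vM[\ell,\ell]$ for $\ell < i$ — but even when it does, the two divisibility conditions $p_i(X) \mid e_i(X)$ and $(X-\vM[i,i]) \mid e_i(X)$ together yield $\operatorname{lcm}(p_i(X), X-\vM[i,i]) \mid e_i(X)$. Since $p_{i+1}(X) = p_i(X)\cdot(X-\vM[i,i])$ and the factor $(X-\vM[i,i])$ is "new" relative to $p_i(X)$ precisely when the eigenvalue multiplicity increases — and here $j > i$ exists, so actually $\vM[i,i]$ reappears, meaning the multiplicity of $\vM[i,i]$ in $p_{i+1}$ exceeds that in $p_i$ — I need to be slightly careful: the right statement is that $p_{i+1}(X) \mid e_i(X)$ because $e_i$ is divisible by $\prod_{i<\ell\le j}(X-\vM[\ell,\ell])$ which contains an extra factor $(X-\vM[i,i])$ (namely the $\ell = j$ term since $\vM[j,j]=\vM[i,i]$), and this extra factor combined with $p_i(X) \mid e_i(X)$ gives $p_i(X)(X-\vM[i,i]) = p_{i+1}(X) \mid e_i(X)$ after checking the factor is genuinely independent.

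\medskip
\noindent\textbf{Main obstacle.} The delicate point is the bookkeeping of divisibility with repeated eigenvalues: one must argue carefully that $e_i(X)$ picks up enough factors of $(X - \vM[i,i])$ to be divisible by $p_{i+1}(X)$ and not merely $p_i(X)$. The clean way around this is to phrase everything in terms of Hasse-derivative vanishing conditions à la Lemma~\ref{lmm:triangular-equiv}: divisibility of $e_i$ by $p_{i+1}$ is equivalent to $e_i^{(m)}(\lambda) = 0$ for each eigenvalue $\lambda$ and each $m$ up to its multiplicity in $p_{i+1}$, and the structural facts ($\prod_{\ell<i}(X-\vM[\ell,\ell]) \mid e_i$ from below, $\prod_{i<\ell\le j}(X-\vM[\ell,\ell]) \mid e_i$ from above) cover exactly the needed multiplicities when combined, since the "from below" part supplies multiplicity $n_i$ at $\vM[i,i]$ (the count of earlier occurrences) and the "from above" part supplies at least one more (the occurrence at index $j$), totalling $n_i + 1$ which is exactly the multiplicity of $\vM[i,i]$ in $p_{i+1}$. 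I would therefore organize the proof around these two one-sided divisibility lemmas for the $e$-sequence and then conclude by a multiplicity count at each eigenvalue.
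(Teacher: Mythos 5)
Your overall skeleton — set $e_k = a_k - a'_k$, note $e_k = 0$ for $k \ge j$, observe that the $e_k$ follow the homogeneous recurrence~\eqref{eqn:delta-recur} with sources $d_k c_\vM(X)$, and conclude by proving $p_{i+1}(X) \mid e_i(X)$ — is sound, but both structural facts you lean on fail. First, the ``from above'' claim that $\prod_{k<\ell\le j}(X-\vM[\ell,\ell]) \mid e_k(X)$ is false in general: the downward induction does not close, because the divisor you need grows as $k$ decreases while the inductive hypothesis at rows $k+1,\dots$ only supplies the smaller products — the contribution of an error injected at a higher row $\ell'$ has already been divided by the diagonal factors at the intermediate indices and so has lost exactly those factors. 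Concretely, take $N=3$, $\Delta=1$, diagonal $(\lambda,\mu,\lambda)$ with $\lambda\ne\mu$, unit superdiagonal, $i=0$, $j=2$: then $e_1 = d_1(X-\lambda)^2$ and $e_0 = (X-\lambda)\bigl(d_1 + d_0(X-\mu)\bigr)$, which is not divisible by $(X-\mu)(X-\lambda)$ once $d_1\ne 0$. Second, even granting both one-sided divisibilities, the multiplicity bookkeeping is invalid: divisibility by $p_i(X)$ and by $\prod_{i<\ell\le j}(X-\vM[\ell,\ell])$ only yields divisibility by their least common multiple, so at the eigenvalue $\vM[i,i]$ you may conclude multiplicity $\max(n_i,m)$, not $n_i+1$; whenever $n_i \ge m$ (e.g.\ diagonal $(\lambda,\lambda,\mu,\lambda)$ with $i=1$, $j=3$, where in fact $e_1=(X-\lambda)^2(d_2+d_1(X-\mu))$) this falls short of what $p_{i+1}$ demands. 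Recasting things as Hasse-derivative vanishing conditions does not help: the union of the two sets of conditions is still governed by the maximum of the multiplicities, not their sum.

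The missing idea — and the paper's actual argument — is to track each injected error term separately rather than seeking a uniform divisibility property of the aggregated differences $e_k$. The contribution of the single term $d_\ell c_\vM(X)$, for $i \le \ell < j$, to row $i$ is a polynomial multiple of $c_\vM(X)\big/\prod_{m=i}^{\ell}(X-\vM[m,m])$, because propagating from row $\ell$ down to row $i$ divides by precisely the diagonal factors along that path (this is the structure of the inverse of the banded triangular polynomial system, cf.\ Lemma~\ref{lmm:structure-general-deg}). That quotient equals $p_i(X)\prod_{m>\ell}(X-\vM[m,m])$, and since $\ell < j$ it retains the index-$j$ factor $(X-\vM[j,j]) = (X-\vM[i,i])$. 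Hence each contribution, and therefore their sum $e_i = a_i - a'_i$, is divisible by $p_i(X)(X-\vM[i,i]) = p_{i+1}(X)$, giving $a_i \equiv a'_i \pmod{p_{i+1}(X)}$ and the claimed equivalence. The per-term view is what keeps the full numerator $c_\vM(X)$ intact with only the path denominators removed — exactly the surviving factor at index $j$ that your aggregated induction loses track of.
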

	\begin{proof}
	  $a_i(X)$ and $a'_i(X)$ differ only through the terms $c_{i}p_N(X), \cdots, c_{j-1}p_N(X)$ added while following (1) with the modulus. All of these terms contribute a multiple of $p_{i+1}(X)$ to row $i$ so $a_i(X) \equiv a'_i(X) \pmod{p_{i+1}(X)}$.
	\end{proof}

	\begin{claim}
	  $p_{i+1}(X) | a'_i(X)$ if and only if $p_{j+1}(X) | a_j(X)$.
	\end{claim}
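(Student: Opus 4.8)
The plan is to reduce the claim to a statement about the vanishing of two quotient polynomials at the common eigenvalue, and then to propagate that vanishing along the recurrence. Write $\lambda = \vM[i,i] = \vM[j,j]$ and, as above, $p_k(X) = \prod_{m<k}(X - \vM[m,m])$, so that $p_{k+1}(X) = (X-\lambda_k)p_k(X)$ with $\lambda_k := \vM[k,k]$. First I would show that $p_k(X)\mid a'_k(X)$ for every $k\ge i$ (extending $a'_m := a_m$ for $m\ge j$): this holds for $m\ge j$ by Lemma~\ref{lmm:triangular}, and it propagates downward because the error-free recurrence $(X-\lambda_k)a'_k(X) = \sum_{\ell}\vM[k,k+\ell]a'_{k+\ell}(X)$ has right-hand side divisible by $p_{k+1}(X) = (X-\lambda_k)p_k(X)$. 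Hence $b_k(X) := a'_k(X)/p_k(X)$ is a polynomial for all $k\ge i$, and since $p_{i+1}(X) = (X-\lambda)p_i(X)$ and $p_{j+1}(X) = (X-\lambda)p_j(X)$, the claim is equivalent to: $b_i(\lambda) = 0$ if and only if $b_j(\lambda) = 0$.

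Next I would extract a recurrence for the $b_k$. Dividing the error-free recurrence for $a'_k$ by $p_{k+1}(X)$ gives, for $i\le k<j$,
\[
  b_k(X) = \sum_{\ell=1}^{\min(\Delta,\,N-k)} \vM[k,k+\ell]\left(\prod_{m=k+1}^{k+\ell-1}(X-\lambda_m)\right) b_{k+\ell}(X).
\]
Evaluating at $X=\lambda$ and letting $i = k_0 < k_1 < \cdots < k_r = j$ enumerate the indices in $[i,j]$ with $\lambda_{k_s}=\lambda$, the factor $\prod_{m=k+1}^{k+\ell-1}(\lambda-\lambda_m)$ vanishes as soon as the window $[k+1,k+\ell-1]$ contains some $k_s$, so the surviving terms of $b_k(\lambda)$ only reach indices strictly before the next occurrence of $\lambda$. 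This yields a triangular linear system over $\F$ relating $b_i(\lambda),\dots,b_j(\lambda)$ — equivalently, a description of the reduction modulo $(X-\lambda)$ of the ranged transition matrix $\vT_{[i:j]}$ (analogous to the $\vT_{[\ell:r]}$ of Lemma~\ref{lmm:structure2}) on the coordinate linking row $j$ to row $i$.

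The conclusion I aim for is that this reduced transition is invertible, i.e.\ that $b_i(\lambda)$ is a \emph{nonzero} scalar multiple of $b_j(\lambda)$; this gives the equivalence and, combined with the preceding claim ($a_i(X)\equiv a'_i(X)\bmod p_{i+1}(X)$), shows that in Case~2 the divisibility $p_{i+1}(X)\mid a_i(X)$ is independent of $d_i$ and governed solely by whether $p_{j+1}(X)\mid a_j(X)$, which is exactly what the downward construction of the error coefficients (via Corollary~\ref{lmm:invertible-equiv}) needs. The main obstacle is this non-vanishing for general band width $\Delta$. For $\Delta = 2$ it is transparent: the $b$-recurrence collapses to $b_k(X) = \vM[k,k+1]\,b_{k+1}(X)$, so $b_i(X) = \big(\prod_{k=i}^{j-1}\vM[k,k+1]\big)b_j(X)$, and the product is nonzero, since otherwise $\vM$ would split a Jordan block for $\lambda$, contradicting the hypothesis that its minimal polynomial equals its characteristic polynomial. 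For general $\Delta$ one needs a structural lemma: since $\vM$ is nonderogatory, $\dim\ker(\vM-\lambda\vI)=1$, and this forces the band entries connecting two consecutive occurrences of $\lambda$ to be non-degenerate enough that the coefficient of $b_j(\lambda)$ in the expansion of $b_i(\lambda)$ is nonzero. Proving this from the band structure is the technical heart of the argument; the downward induction of the first paragraph and the bookkeeping of the second are routine.
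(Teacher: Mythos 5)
Your reduction is sound and, up to normalization, is the same one the paper uses: writing $b_k = a'_k/p_k$, unrolling the error-free recurrence, and observing that every path from row $i$ that jumps past row $j$ picks up a factor $(\lambda-\vM[j,j])=0$, so that $b_i(\lambda)=c\,b_j(\lambda)$ for a scalar $c$ depending only on $\vM$ (your $c$ is exactly the paper's $h_{i,j}(\lambda)$). The first two paragraphs are correct bookkeeping. The problem is the third: the entire claim hinges on $c\neq 0$, and for general bandwidth you do not prove it — you state that "one needs a structural lemma" that nonderogatority "forces the band entries \ldots to be non-degenerate enough," and you call this the technical heart. That is precisely the part that must be supplied, and it is not clear that a local argument of the kind you sketch exists: for $\Delta>1$ the coefficient $c$ is a polynomial combination of band entries and diagonal differences, individual zero entries do not decouple the matrix, and $\dim\ker(\vM-\lambda\vI)=1$ does not translate in any obvious way into non-vanishing of that particular combination. (Even your "transparent" case is narrower than stated: with the paper's convention the single-product collapse $b_k=\vM[k,k+1]b_{k+1}$ occurs only for one superdiagonal, i.e.\ $\Delta=1$; already for $\Delta=2$ you get $b_k(\lambda)=\vM[k,k+1]b_{k+1}(\lambda)+\vM[k,k+2](\lambda-\lambda_{k+1})b_{k+2}(\lambda)$ and the Jordan-splitting argument no longer applies directly.)

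The paper closes this gap with a short global argument that avoids any structural analysis of the band: if $c=h_{i,j}(\lambda)=0$, then \emph{every} family of polynomials satisfying recurrence~\eqref{eqn:delta-recur} — for every choice of the error coefficients and of $a_{i+1},\dots,a_{N-1}$ — would have $p_{i+1}\mid a_i$, hence would violate condition \ref{cond4} of Corollary~\ref{lmm:invertible-equiv} and could not be invertible. But since the minimal polynomial of $\vM$ equals its characteristic polynomial, $\vM$ is similar to the companion matrix, so some invertible change-of-basis matrix exists and, by Corollary~\ref{cor:jordan-recurrence}, it satisfies~\eqref{eqn:delta-recur} — a contradiction. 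If you replace your unproven structural lemma with this existence-based contradiction, your argument becomes complete; as written, it has a genuine gap at its central step.
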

	\begin{proof}
	  The recurrence without the mod can be written as the product of $\Delta\times\Delta$ transition matrices, and in particular we can express $a'_i(X)$ via the equation
	  \[ \frac{a'_i(X)}{p_i(X)} = h_{i,j}(X)\frac{a_j}{p_j} + h_{i,j+1}(X-\vM[j,j])\frac{a_{j+1}}{p_{j+1}} + \dots + h_{i,j+\Delta}(X-\vM[j,j])\cdots(X-\vM[j+\Delta-1,j+\Delta-1])\frac{a_{j+\Delta}}{p_{j+\Delta}} \]

	  Multiplying through again to isolate $a'_i$ and taking this mod $p_{i+1}(X)$, we see that only the first term affects whether $p_{i+1}(X) | a'_i(X)$.


          Thus the claim is equivalent to saying that $h_{i,j}(X)$ is not a multiple of $(X-\vM[i,i])$. If it was, then note that $a'_i(X)$ will \emph{always} be a multiple of $p_{i+1}(X)$ no matter what $a_j(X)$ is. Then $a_i(X)$ will be a multiple of $p_{i+1}(X)$ no matter what $a_{i+1}, \cdots, a_{N-1}$ are, and there is \emph{no} family of polynomials satisfying \ref{cond4} of Corollary~\ref{lmm:invertible-equiv}. This is a contradiction since there is some $\vA$ satisfying recurrence~\eqref{eqn:delta-recur} that is invertible - since there is a change of basis matrix $\vA$ such that $\vM = \vA\vF\vA^{-1}$, and by Corollary~\ref{cor:jordan-recurrence} it satisfies~\eqref{eqn:delta-recur}.
	\end{proof}
\end{description}

The results above thus imply the following result.
\begin{thm}
  \label{thm:greedy-error-coef}
  Let $\vA$ be a matrix satisfying (\ref{eqn:delta-recur}).
  \begin{itemize}
	\item $\vA \vV_\vM^T$ is upper triangular for any sequences $d_i$.
	\item We can pick $d_{N-1}, \ldots, d_{0}$ in order, such that each $d_i$ chosen to satisfy the local condition $a_i^{(n_i)}(M[i,i]) != 0$. Then the matrix $\vA \vV_\vM^T$ will be invertible. Furthermore, if there exists $j>i$ such that $M[i,i] = M[j,j]$ then $d_i$ can be anything, in particular $0$.
  \end{itemize}
\end{thm}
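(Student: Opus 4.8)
The statement I want to prove is Theorem~\ref{thm:greedy-error-coef}, which concerns the error coefficients $d_i$ in the recurrence~\eqref{eqn:delta-recur} for the Jordan/Frobenius change of basis matrix $\vA$.

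\textbf{Overall plan.} The theorem has two parts. The first part—that $\vA\vV_\vM^T$ is upper triangular for \emph{any} choice of the $d_i$—is already essentially established by Lemma~\ref{lmm:triangular} and its equivalent reformulations in Lemma~\ref{lmm:triangular-equiv}. So the main content lies in the second part: a greedy, bottom-up procedure (choosing $d_{N-1}, d_{N-2}, \dots, d_0$ in that order) that guarantees $\vA\vV_\vM^T$ is invertible, together with the observation that whenever the eigenvalue $\vM[i,i]$ recurs later on the diagonal, the choice of $d_i$ is irrelevant (in particular $d_i = 0$ works). By Corollary~\ref{lmm:invertible-equiv}, invertibility of $\vA\vV_\vM^T$ is equivalent to the ``local'' conditions $a_i^{(n_i)}(\vM[i,i]) \neq 0$ for all $i$, equivalently $p_{i+1}(X) \nmid a_i(X)$ where $p_i(X) = \prod_{j<i}(X - \vM[j,j])$. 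So it suffices to show we can inductively maintain this condition.

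\textbf{Key steps.} First I would dispatch the ``any $d_i$'' claim by invoking Lemma~\ref{lmm:triangular-equiv}, which already states that $\vA\vV_\vM^T$ is upper triangular regardless of the $d_i$. Next, for the invertibility claim, I would set up the downward induction: assume $d_{i+1}, \dots, d_{N-1}$ have been fixed so that $p_{j+1} \nmid a_j$ for all $j > i$, and produce $d_i$ so that $p_{i+1} \nmid a_i$. I would then split into the two cases already laid out in the discussion preceding the theorem. In Case 1 (the eigenvalue $\vM[i,i]$ does not recur after index $i$), write $a_i(X) = \big(\sum_{j>i} \vM[i,j] a_j(X)\big)/(X - \vM[i,i]) + d_i\, p_N(X)/(X - \vM[i,i])$; the first term is a polynomial divisible by $p_{i+1}$ (by Lemma~\ref{lmm:triangular} applied to the $a_j$), while the second term is divisible by $p_i$ but not $p_{i+1}$, so exactly one value of $d_i$ makes $a_i$ divisible by $p_{i+1}$, and any other value works. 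In Case 2 ($\vM[i,i] = \vM[j,j]$ for some largest $j > i$), I would use the two claims in the discussion: (a) $a_i \equiv a_i' \pmod{p_{i+1}}$ where $a_i'$ is obtained by running the recurrence from $a_j$ with $d_i = \dots = d_{j-1} = 0$, and (b) $p_{i+1} \mid a_i'$ iff $p_{j+1} \mid a_j$, which fails by the inductive hypothesis; hence $p_{i+1} \nmid a_i$ automatically, so $d_i$ can be arbitrary. The proof of claim (b) requires expanding $a_i'/p_i$ via the ranged transition matrices $\vT_{[j:i]}$ (Lemma~\ref{lmm:structure2}) and checking that the leading coefficient polynomial $h_{i,j}(X)$ is not divisible by $(X - \vM[i,i])$; the argument here is the existence argument—since \emph{some} invertible $\vA$ satisfying~\eqref{eqn:delta-recur} exists (namely the genuine change-of-basis matrix, which satisfies~\eqref{eqn:delta-recur} by Corollary~\ref{cor:jordan-recurrence}), $h_{i,j}$ cannot force $p_{i+1} \mid a_i'$ unconditionally.

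\textbf{Main obstacle.} The delicate step is Case 2, specifically verifying that $h_{i,j}(X)$ (the polynomial governing the dependence of $a_i'$ on $a_j$ in the recurrence without error) is not a multiple of $(X - \vM[i,i])$. This is the crux because it is what makes the local condition at index $i$ genuinely inherit from index $j$, rather than being either automatically satisfied or automatically violated. The cleanest route, as sketched in the excerpt, is the non-constructive one: if $h_{i,j}$ were divisible by $(X-\vM[i,i])$, then \emph{every} $\vA$ satisfying~\eqref{eqn:delta-recur} would have $p_{i+1} \mid a_i$, contradicting the fact that the true Jordan change-of-basis matrix is invertible and satisfies the recurrence. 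I would make sure this contradiction is airtight by carefully pinning down that $\vM = \vA\vF\vA^{-1}$ (with $\vF$ the Frobenius form) gives an invertible $\vA$ with $\vM\vA = \vA\vF$, hence satisfying~\eqref{eqn:delta-recur} via Corollary~\ref{cor:jordan-recurrence}, and that invertibility of this $\vA$ translates (Corollary~\ref{lmm:invertible-equiv}) into $p_{i+1} \nmid a_i$ for \emph{that} matrix—which would be impossible if $h_{i,j}$ were divisible by $(X-\vM[i,i])$. Once Case 2 is settled, the theorem follows by assembling the two parts.
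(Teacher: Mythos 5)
Your proposal is correct and takes essentially the same route as the paper: the triangularity claim is exactly Lemma~\ref{lmm:triangular-equiv}, and invertibility is established by the same downward induction with the same two cases, including the same non-constructive crux (via Corollary~\ref{cor:jordan-recurrence} and Corollary~\ref{lmm:invertible-equiv}) that $h_{i,j}(X)$ cannot be a multiple of $(X-\vM[i,i])$. One small slip: in Case 1, Lemma~\ref{lmm:triangular} only guarantees that the first term $\bigl(\sum_{j>i}\vM[i,j]a_j(X)\bigr)/(X-\vM[i,i])$ is divisible by $p_i(X)$, not by $p_{i+1}(X)$, but this is harmless since only the second term's non-divisibility by $p_{i+1}(X)$ is needed to conclude that exactly one value of $d_i$ fails the local condition.
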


\subsubsection{Finding the Error Coefficients and Inverting $\vA {\vV_\vM}^T$}
\label{sec:error-coef2}

The main goal of this section is to prove a structure result on $\vA \vV_\vM^T$, that will easily allow us to
\begin{itemize}
  \item Given a recurrence with unspecified error coefficients, pick the $d_i$ such that $\vA\vV_\vM^T$ is invertible.
  \item Given a fully specified recurrence such that $\vA\vV_\vM^T$ is invertible, multiply $(\vA\vV_\vM^T)^{-1}\vb$ fast.
\end{itemize}

Suppose we have fixed error coefficients $d_{[0:N]}$ and consider matrix $\vA$ corresponding to the polynomials generated by a recurrence of the form (\ref{eqn:delta-recur})
\begin{equation}
  \label{eqn:delta-recur2}
  (X - \lambda_i) a_i(X) = \sum_{j=1}^{\min(\Delta, N-i)} c_{i,j} a_{i+j}(X)+d_i p_{\lambda_{[0:N]}}(X)
\end{equation}

(We will use the shorthand notation $p_{\alpha,\beta,\ldots}(X) = (X-\alpha)(X-\beta)\cdots$.)


By triangularity, we can partition the matrix as follows
\[
  \vA \vV_{\lambda_{[0:N]}}^T =
  \left[\begin{array}{c:c}
	  \vT_L & \vB \\ \hdashline
	  0 & \vT_R
  \end{array}\right].
\]

The core result is that the two triangular sub-blocks have the same structure as itself.

\begin{lmm}
  \label{lmm:self-similarity}
  Given an $N$-size recurrence (\ref{eqn:delta-recur2}) that defines a matrix $\vA$, there exist $N/2$-size recurrences of the form (\ref{eqn:delta-recur2}) producing matrices $\vA_L,\vA_R$ such that
  \[ \vT_L = \vA_L \vV_{\lambda_{[0:N/2]}}^T \quad \text{and} \quad \vT_R = \vA_R \vV_{\lambda_{[N/2:N]}}^T \vE \]
  for a matrix $E$ that is a direct sum of upper-triangular Toeplitz matrices and invertible.

  Furthermore, the coefficients of these recurrences can be found with $O(\Delta^2 N\log^3 N)$ operations.
\end{lmm}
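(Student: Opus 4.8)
The plan is to build the two sub-recurrences explicitly — by polynomial reduction for the left block and polynomial division for the right — and then verify the two claimed factorizations directly. Recall that $\vA$ is the coefficient matrix of polynomials $a_0(X),\dots,a_{N-1}(X)$ satisfying~(\ref{eqn:delta-recur2}), all of degree $\le N-1$, that $\prod_{k<i}(X-\lambda_k)\mid a_i(X)$ by Lemma~\ref{lmm:triangular}, and that $\vA\vV_{\lambda_{[0:N]}}^T$ is upper triangular with $(i,j)$-entry $\tfrac1{n_j!}a_i^{(n_j)}(\lambda_j)$, where $n_j$ counts the occurrences of the value $\lambda_j$ among $\lambda_0,\dots,\lambda_{j-1}$. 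Write $p(X)=p_{\lambda_{[0:N/2]}}(X)=\prod_{k<N/2}(X-\lambda_k)$, monic of degree $N/2$.

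\textbf{Left block.} For $0\le i<N$ put $\hat a_i(X)=a_i(X)\bmod p(X)$, so $\deg\hat a_i\le N/2-1$, and $\hat a_i=0$ for $i\ge N/2$ since $p\mid a_i$ there. Reducing~(\ref{eqn:delta-recur2}) modulo $p$ (the error term $d_ip_{\lambda_{[0:N]}}$ vanishes, as $p\mid p_{\lambda_{[0:N]}}$) shows that for $i<N/2$ the polynomial $(X-\lambda_i)\hat a_i(X)-\sum_j c_{i,j}\hat a_{i+j}(X)$ has degree $\le N/2$ and is divisible by the monic degree-$N/2$ polynomial $p$, hence equals $\hat d_i\,p(X)$ for a scalar $\hat d_i$. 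Thus $\hat a_0,\dots,\hat a_{N/2-1}$ obey a size-$N/2$ recurrence of the form~(\ref{eqn:delta-recur2}) with the original eigenvalues and $c$-coefficients (restricted to the first half) and new error coefficients $\hat d_i$; calling its coefficient matrix $\vA_L$, we get $\vT_L=\vA_L\vV_{\lambda_{[0:N/2]}}^T$, because for $j<N/2$ restriction does not change $n_j$ and $(X-\lambda_j)^{n_j+1}\mid p$, so $\hat a_i^{(n_j)}(\lambda_j)=a_i^{(n_j)}(\lambda_j)$.

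\textbf{Right block.} For $i\ge N/2$, Lemma~\ref{lmm:triangular} lets us set $b_i(X)=a_i(X)/p(X)$, of degree $\le N/2-1$; dividing~(\ref{eqn:delta-recur2}) by $p$ and reindexing shows $b_{N/2},\dots,b_{N-1}$ satisfy a size-$N/2$ recurrence of the form~(\ref{eqn:delta-recur2}) with eigenvalues $\lambda_{N/2},\dots,\lambda_{N-1}$, $c$-coefficients $c_{i+N/2,\cdot}$, error coefficients $d_{i+N/2}$, and modulus polynomial $p_{\lambda_{[N/2:N]}}$; call its coefficient matrix $\vA_R$. Writing $p(X)=(X-\lambda_j)^{m_j}r_j(X)$ with $r_j(\lambda_j)\ne 0$, where $m_j$ is the multiplicity of the value $\lambda_j$ in $\lambda_0,\dots,\lambda_{N/2-1}$, and applying the Leibniz rule at $\lambda_j$ gives $\tfrac1{n_j!}a_i^{(n_j)}(\lambda_j)=\sum_{k\ge0}\tfrac{r_j^{(k)}(\lambda_j)}{k!}\cdot\tfrac{b_i^{(n_j'-k)}(\lambda_j)}{(n_j'-k)!}$, where $n_j'$ is the multiplicity of $\lambda_j$ within $\lambda_{N/2},\dots,\lambda_{j-1}$ (so $n_j=m_j+n_j'$). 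This is precisely $\vT_R=\vA_R\vV_{\lambda_{[N/2:N]}}^T\vE$, with $\vE$ having, after grouping occurrences of each distinct value $\mu$ appearing in $\lambda_{[N/2:N]}$, a block that is upper-triangular Toeplitz with diagonals $r^{(k)}(\mu)/k!$, $k=0,1,\dots$; since the leading diagonal $r(\mu)\ne 0$, each block — hence $\vE$ — is invertible.

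\textbf{Computing the coefficients, and the main obstacle.} The right sub-recurrence's parameters are literally subsequences of the original, so the only work there is forming $\vE$, i.e.\ a multipoint/Hermite evaluation of $p/(X-\mu)^{m_\mu}$ at the distinct $\mu$. The left sub-recurrence reuses the original $\lambda$'s and $c$'s, so only the $\hat d_i$ must be produced; the clean characterization $\hat d_i=\sum_j c_{i,j}q_{i+j}(\lambda_i)+d_i\,p_{\lambda_{[N/2:N]}}(\lambda_i)$, where $q_i=a_i/p$ for $i\ge N/2$ (the rows of $\vA_R$) and the quotients $q_i$ for $i<N/2$ obey a recurrence of the same shape as~(\ref{eqn:delta-recur2}) with the free error coefficient replaced by the divisibility-forced scalar $\hat d_i$, shows that the $\hat d_i$ are obtained by a divide-and-conquer of the same form: base cases (blocks of size $\le\Delta$) are handled by direct polynomial arithmetic, and at each node the two auxiliary inputs — the first $\Delta$ rows of $\vA_R$, gotten by ``jumping'' the right sub-recurrence from its seed using precomputed ranged products of the local $(\Delta+1)\times(\Delta+1)$ transition matrices, and the values $p_{\lambda_{[N/2:N]}}(\lambda_i)$ from one multipoint evaluation — each cost $\tO(\Delta^2N)$. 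I expect this last step to be the main obstacle: one must confirm that threading the $\hat d_i$ computation through the recursion keeps the polynomial degrees and error ranks controlled, that the error-aware jumping (reusing the precomputations of the Jordan section) works at every level, and that the recursion bottoms out with only the claimed logarithmic overhead, yielding $O(\Delta^2N\log^3N)$ in total. The remaining bookkeeping — matching the Vandermonde normalization factor $1/n_j!$, the off-by-one in the recurrence length, and the permutation that groups equal eigenvalues into Toeplitz blocks of $\vE$ — is routine.
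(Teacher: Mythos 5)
Your construction is the paper's own proof in all essentials: the left block is obtained by reducing the $a_i$ modulo $p_{\lambda_{[0:N/2]}}$ (the new scalar errors being your $\hat d_i$, the paper's $d'_i$), the right block by dividing by $p_{\lambda_{[0:N/2]}}$, and your Leibniz-rule identity for $\vE$ is exactly the paper's Taylor-coefficient convolution, with the same invertibility argument from the nonvanishing cofactor on the diagonal; all of these steps are correct as you state them.

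The one piece you flag as the ``main obstacle'' is not actually an obstacle, and is handled in the paper by precisely the divide-and-conquer you sketch. The degree-control worry dissolves because each $\hat d_i$ depends on the quotient polynomials only through their evaluations at $\lambda_i$, i.e.\ only on the $\le\Delta$ quotients with indices in $(i,i+\Delta]$ evaluated at a single point; hence at each level of the recursion the seed polynomials may be reduced modulo the product $\prod(X-\lambda_j)$ over the current half without changing any $\hat d_i$ computed in that half, so all polynomials stay at the subproblem size. The known term $d_i\,p_{\lambda_{[N/2:N]}}(\lambda_i)$ is stripped off by one multipoint evaluation (as you note), the jump to the first $\Delta$ rows of $\vA_R$ reuses the ranged transition matrices already precomputed, and the recursion $T(N)=2T(N/2)+O(\Delta^2 N\log^2 N)$ with $O(\Delta^3)$-size base cases gives the claimed $O(\Delta^2 N\log^3 N)$. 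So completing your write-up requires only stating this local-evaluation observation and the resulting recursion, not any new idea.
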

\begin{proof}
  Define $c_L(X) = \prod_{i = 0}^{N/2-1} (X-\lambda_i)$ and $c_R(X) = \prod_{i=N/2}^{N-1} (X-\lambda_i)$, corresponding to the left and right halves of the diagonal elements.

  \textbf{Structure of $\vT_R$} 

  $\vT_R$ consists of the higher order (derivative) evaluations of $a_{[N/2:N]}$. However, we would like to express it instead as low order evaluations (namely, corresponding to $\vV_{\lambda_{[N/2:N]}}$) of low-degree polynomials. Fix a root $\lambda$ of $p_{\lambda_{[N/2:N]}}$ and suppose that $\vT_R$ ``contains'' the evaluations $a_i^{(j)}(\lambda), \dots, a_i^{(k)}$ for $k \ge j \ge 0$. Equivalently, $j$ is the multiplicity of $\lambda$ in $\lambda_{[0:N/2]}$ and $k-j$ is the multiplicity in $\lambda_{[N/2:N]}$.

  Let $q_i(X) = a_i(X) / c_L(X)$ and consider the Taylor expansions of $a_i(X),q_i(X),c_L(X)$ around $\lambda$
  \begin{equation*}
	a_i(X) = \displaystyle\sum_{\ell = j}^{\infty} \frac{\alpha_i^{(\ell)}(\lambda)}{\ell!} (X - \lambda)^\ell \qquad
	q_i(X) = \displaystyle\sum_{\ell = 0}^{\infty} \frac{q_i^{(\ell)}(\lambda)}{\ell!}(X - \lambda)^\ell \qquad
	c_L(X) = \displaystyle\sum_{\ell = j}^{\infty} \frac{c_L^{(\ell)}(\lambda)}{\ell!}(X - \lambda)^\ell \qquad
  \end{equation*}

  Since polynomial multiplication corresponds to a convolution of coefficients,
  \[
	\begin{bmatrix} a_i^{(j)}(\lambda)/j! \\ \vdots \\ a_i^{(k)}(\lambda)/k! \end{bmatrix}
	= 
	\begin{bmatrix} q_i^{(0)}(\lambda)/0! \\ \vdots \\ q_i^{(k-j)}(\lambda)/(k-j)! \end{bmatrix}
	\ast
	\begin{bmatrix} c_L^{(j)}(\lambda)/j! \\ \vdots \\ c_L^{(k)}(\lambda)/k! \end{bmatrix}
  \]
  where $*$ denotes the convolution. Consider the matrix $\vA_R$ where row $i$ consists of the coefficients of $q_i = a_i/c_L$. If $S$ is the indices of the subsequence of $\lambda_{[N/2:N]}$ corresponding to $\lambda$, then the above equation can be equivalently written
  \[
	(\vT_R)_{i,S} = (\vA_R\vV_{\lambda_{[N/2:N]}}^T)_{i,S}
	\begin{bmatrix}
	c_L^{(j)}(\lambda)/j! & \cdots & c_L^{(k)}(\lambda)/k! \\
	\vdots & \ddots & \vdots \\
	0 & \cdots & c_L^{(j)}(\lambda)/j! \\
	\end{bmatrix}
  \]
  There is an analogous upper-triangular Toeplitz matrix for each $\lambda$, so define $\vE$ to be the $N\times N$ matrix that is the appropriate direct sum of these Toeplitz matrices, such that $(\vT_R)_{i} = (\vA_R\vV_{\lambda_{[N/2:N]}}^T)_{i}\vE$ holds. Note that the unique entries of $\vE$ correspond to the evaluation of $c_L$ at the second half of $\vV_{\lambda_{[0:N]}}$ which can be computed in $O(N\log^2 N)$. Also, $\vE$ is upper-triangular with non-zero diagonal. Since the above equation holds for all $i \in [N/2:N]$, we can factor
  \[
	\vT_R = (\vA_R\vV_{\lambda_{[N/2:N]}}^T) \vE
  \]
  Finally, the polynomials that $\vA_R$ correspond to satisfy a recurrence
  \begin{equation*}
	(X - \lambda_i) q_i(X) = \sum_{j=1}^{\min(\Delta, N-i)} c_{i,j} q_{i+j}(X)+d_i p_{\lambda_{[N/2:N]}}(X)
  \end{equation*}
  which directly follows from dividing (\ref{eqn:delta-recur2}) by $p_{\lambda_{[0:N/2]}}$.


  \textbf{Structure of $\vT_L$}

  $\vT_L$ corresponds to evaluations of $a_{[0:N/2]}(X)$ and their derivatives at $\lambda_{[0:N/2]}$. Note that we can subtract $c_L(X)$ from any polynomial without changing these evaluations. So if we define the polynomials $q_i(X): i \in [0:N/2]$ to be the unique polynomials of degree less than $N/2$ such that $q_i = a_i \pmod{c_L}$ and $\vA_L$ to be the corresponding matrix of coefficients of $q_i$, then $\vT_L = \vA_L \vV_{\lambda_{[0:N/2]}}^T$.

  It remains to show that the $q_i$ satisfy a recurrence of the form (\ref{eqn:delta-recur2}) and to specify its coefficients. Then $\vA_L$ will be parameterized the same way as $\vA$, completing the self-similarity result that $\vT_L$ has the same structure as $\vA\vV_{\lambda_{[0:N/2]}}^T$ but of half the size.

  Consider an $i \in [0:N/2]$. Note that (\ref{eqn:delta-recur2}) implies that
  \begin{align}
	(X-\lambda_i)a_i(X) &= \sum c_{i,j}a_{i+j}(X) \pmod{p_{\lambda_{[0:N]}}(X)} \notag\\
	\label{eq:atri-q}
	(X-\lambda_i)a_i(X) &= \sum c_{i,j}a_{i+j}(X) \pmod{p_{\lambda_{[0:N/2]}}(X)} \\
	(X-\lambda_i)q_i(X) &= \sum c_{i,j}q_{i+j}(X) \pmod{p_{\lambda_{[0:N/2]}}(X)} \notag
  \end{align}
  so that we know there exists scalars $d'_i$ such that
  \begin{equation}
	\label{eqn:delta-recur2-reduced}
	(X - \lambda_i) q_i(X) = \sum_{j=1}^{\min(\Delta, N-i)} c_{i,j} q_{i+j}(X)+d'_i p_{\lambda_{[0:N/2]}}(X)
  \end{equation}
  and the goal is to find these scalars. Define $b_i(X) = (a_i-q_i)/p_{\lambda_{[0:N/2]}}$ which is a polynomial from the definition of $q_i$. Subtract the previous equation from (\ref{eqn:delta-recur2}) and divide out $p_{\lambda_{[0:N/2]}}$ to obtain
  \[
	(X - \lambda_i) b_i(X) = \sum_{j=1}^{\min(\Delta, N-i)} c_{i,j} b_{i+j}(X) + (d_i p_{\lambda_{[N/2:N]}}(X) - d'_i)
  \]
  Therefore $d'_i$ is the unique element of $\mathbb F$ that makes the RHS of the above equation divisible by $(X-\lambda_i)$. This element is just $\sum c_{i,j} b_{i+j}(\lambda_i) + d_i p_{\lambda_{[N/2:N]}}(\lambda_i)$. Note that $p_{\lambda_{[N/2:N]}}$ can be evaluated at all $\lambda_{[0:N/2]}$ in time $O(N\log^2 N)$ time, so that term can be treated separately. Then define $d''_i := d'_i - d_i p_{\lambda_{[N/2:N]}}(\lambda_i)$ which is equivalently defined as the unique number making $\sum_{j=1}^{\min(\Delta, N-i)} c_{i,j} b_{i+j}(X) - d''_i$ divisible by $(X-\lambda_i)$. It suffices to find the $d''_i$.

  \begin{prop}
	Suppose $b_{N}, \ldots, b_{N+\Delta}$ are polynomials of degree $N$ and $\lambda_i, c_{i,j}$ are some fixed scalars. For $i = N-1, \ldots, 0$, define $b_i$ to be unique polynomial such that
	\[ (X - \lambda_i) b_i(X) = \sum_{j=1}^{\min(\Delta, N-i)} c_{i,j} b_{i+j}(X)- d''_i \]
	holds for some $d''_i$ (which is also unique). Then we can find all the $d''_i$ in time $O(\Delta^2 N\log^3 N)$.
  \end{prop}
  \begin{proof}
	If $N = O(\Delta)$, we can explicit compute the $b_i$ and $d''_i$ naively in time at most $O(\Delta^3)$. Otherwise we will find the $d''_i$ with a divide-and-conquer algorithm.

        As previously noted, the value of $d''_i$ depends only on the values of $b_{[i+1:i+\Delta]}$ evaluated at $\lambda_i$. Conversely, the polynomial $b_i$ contributes to the result only through its evaluations at $\lambda_{[i-\Delta:i-1]}$. In particular, $b_{[N/2+\Delta:N+\Delta]}$ can be reduced $\pmod{\prod_{[N/2:N]}(X-\lambda_j)}$ without affecting the $d''_i$. So the remainders of the starting conditions $b_{[N:N+\Delta]}$ mod $\prod_{[N/2:N]}(X-\lambda_j)$ suffice to recursively compute $d''_{[N/2:N]}$. Using these, we can use the ranged transition matrix to compute $b_{[N/2:N/2+\Delta]}$. Reducing these $\pmod{\prod_{[0:N/2]}(X-\lambda_j)}$ and recursing gives $d''_{[0:N/2]}$.
	The base cases need $O(\Delta^3\cdot N/\Delta )$ operations which is dominated by the main recursion, which has runtime $T(N) = 2T(N/2) + \Delta^2 N\log^2 N$ resolving to $O(\Delta^2 N\log^3 N)$ assuming that the ranged transition matrices for the jumps are pre-computed.
  \end{proof}
  This auxiliary algorithm gives us the following result.
  \begin{prop}
	Given the coefficients of the recurrence (\ref{eqn:delta-recur2}) and starting conditions $a_{[N/2:N/2+\Delta]}$, we can find the coefficients of recurrence (\ref{eqn:delta-recur2-reduced}) in $O(\Delta^2 N\log^3 N)$ operations.
  \end{prop}
  \begin{proof}
    First find $b_{[N/2:N/2+\Delta]}$ which reduces $a_i$ mod $p_{\lambda_{[0:N/2]}}$. Then run the above algorithm to find all $d''_{[0:N/2]}$. Finally, set $d'_i = d''_i + d_i p_{\lambda_{[N/2:N]}}(\lambda_i)$. The bottleneck is the auxilliary algorithm which requires $O(\Delta^2 N\log^3 N)$ time.

  \end{proof}
  \begin{cor}
    Given the coefficients of recurrence (\ref{eqn:delta-recur2-reduced}), we can find coefficients of (\ref{eqn:delta-recur2}) in $O(\Delta^2 N\log^3 N)$ operations.
  \end{cor}
  \begin{proof}
    Same as above, but the last step is reversed. Given $d'_i$, we compute $d_i$ as $d_i = (d'_i-d''_i)/p_{\lambda_{[N/2:N]}}(\lambda_i)$. This is well-defined if $p_{\lambda_{[N/2:N]}}(\lambda_i)$ is non-zero. Otherwise, $\lambda_i$ appears later in the sequence and we can just set $d_i = 0$ by Theorem~\ref{thm:greedy-error-coef}.
  \end{proof}


  This completes the proof of the self-similarity structure of sub-blocks of $\vA\vV_{\lambda_{[0:N/2]}}^T$.
\end{proof}

Using Lemma~\ref{lmm:self-similarity}, we can pick coefficients $d_i$ for (\ref{eqn:delta-recur2}) that generate an invertible $\vA$, and also multiply a vector by $\vA^{-1}$.
\begin{cor}
  Given a size-$N$ recurrence (\ref{eqn:delta-recur2}) with fixed recurrence coefficients and starting conditions, and unspecified error coefficients $d_i$, we can pick the $d_i$ such that the resulting matrix $\vA$ is invertible in time $O(\Delta^2 N\log^4 N)$.
\end{cor}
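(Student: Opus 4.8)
The plan is to reduce the task to a greedy assignment whose correctness is already Theorem~\ref{thm:greedy-error-coef}, and then to implement that greedy assignment in near-linear time using the self-similarity of Lemma~\ref{lmm:self-similarity}. By Lemma~\ref{lmm:triangular-equiv} the matrix $\vA\vV_\vM^T$ is upper triangular for any choice of the $d_i$, and by Corollary~\ref{lmm:invertible-equiv}, $\vA$ is invertible iff every diagonal entry $a_i^{(n_i)}(\lambda_i)$ of $\vA\vV_\vM^T$ is nonzero. For fixed $a_{i+1}(X),\dots,a_{i+\Delta}(X)$ this entry is an affine function $\alpha_i + \beta_i d_i$ of $d_i$: since $a_i(X) = \tfrac{1}{X-\lambda_i}\big(\sum_j c_{i,j}a_{i+j}(X) + d_i\, p_{\lambda_{[0:N]}}(X)\big)$, the slope $\beta_i$ is the $n_i$-th derivative at $\lambda_i$ of $p_{\lambda_{[0:N]}}(X)/(X-\lambda_i)$, which is nonzero exactly when $\lambda_i$ does not recur later in $\lambda_{[0:N]}$; and when $\beta_i = 0$ the constant $\alpha_i$ is automatically nonzero (the rank argument behind Theorem~\ref{thm:greedy-error-coef}). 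Hence the greedy choice---running $i$ from $N-1$ down to $0$, picking $d_i$ to avoid the unique bad value when $\beta_i\neq 0$ and setting $d_i=0$ otherwise---produces an invertible $\vA$; all that remains is to implement it in $O(\Delta^2 N\log^4 N)$ operations rather than the $\Omega(N^2)$ of the naive sweep.

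For this I would split the recurrence at $N/2$ and write $\vA\vV_\vM^T = \left[\begin{smallmatrix}\vT_L & \vB\\ 0 & \vT_R\end{smallmatrix}\right]$, so $\vA$ is invertible iff $\vT_L$ and $\vT_R$ both are. Lemma~\ref{lmm:self-similarity} gives $\vT_R = \vA_R\,\vV_{\lambda_{[N/2:N]}}^T\,\vE$ with $\vE$ invertible and $\vA_R$ the coefficient matrix of a size-$N/2$ recurrence of the same form (\ref{eqn:delta-recur2}) whose coefficients and starting data are read off directly from the given data, with error coefficients $d_{[N/2:N]}$ still free; and $\vT_L = \vA_L\,\vV_{\lambda_{[0:N/2]}}^T$ with $\vA_L$ the coefficient matrix of a size-$N/2$ recurrence (\ref{eqn:delta-recur2-reduced}) whose error coefficients $d'_i$ satisfy $d'_i = d''_i + d_i\, p_{\lambda_{[N/2:N]}}(\lambda_i)$, where the $d''_i$ and the starting data of $\vA_L$'s recurrence depend only on the recurrence coefficients and the ``middle'' polynomials $a_{N/2}(X),\dots,a_{N/2+\Delta-1}(X)$. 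The algorithm then (i)~recurses on $\vA_R$ to fix $d_{[N/2:N]}$ (which makes $\vT_R$ invertible, since $\vE$ is); (ii)~computes the middle polynomials by one ranged-transition-matrix multiplication $\big(a_{N/2},\dots,a_{N/2+\Delta-1}\big)^T = \vT_{[N/2:N]}\big(a_N,\dots,a_{N+\Delta-1}\big)^T$ using transition matrices precomputed as in Lemma~\ref{lmm:tree-compute} (these depend only on the recurrence coefficients, not on the $d_i$, so the same precomputation serves every recursion level); (iii)~uses the auxiliary algorithm inside the proof of Lemma~\ref{lmm:self-similarity} to produce the $d''_i$ and the translation map in $O(\Delta^2 N\log^3 N)$ operations; (iv)~recurses on $\vA_L$ to fix $d'_{[0:N/2]}$; and (v)~translates back via $d_i = (d'_i - d''_i)/p_{\lambda_{[N/2:N]}}(\lambda_i)$, after one multipoint evaluation of $p_{\lambda_{[N/2:N]}}$ at $\lambda_{[0:N/2]}$. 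The base case $N = O(\Delta)$ is the naive greedy in $O(\Delta^3)$ operations, contributing $O(\Delta^2 N)$ over all $O(N/\Delta)$ leaves; the recursion is $T(N) = 2T(N/2) + O(\Delta^2 N\log^3 N)$, giving $T(N) = O(\Delta^2 N\log^4 N)$, with the one-time $O(\Delta^\omega N\log^2 N)$ cost of the ranged transition matrices being part of the preprocessing for the Jordan decomposition.

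The main obstacle is the translation in step~(v) when $p_{\lambda_{[N/2:N]}}(\lambda_i) = 0$, i.e.\ when $\lambda_i$ reappears in the bottom half of the diagonal: then $d_i$ has no effect on $d'_i$, and indeed none on $q_i = a_i \bmod c_L$ at all---the perturbation $p_{\lambda_{[0:N]}}(X)/(X-\lambda_i)$ is divisible by $c_L = p_{\lambda_{[0:N/2]}}$---so $d'_i = d''_i$ is forced. To keep the recursion consistent, the recursive subroutine must accept a recurrence in which a designated subset of error coefficients is \emph{forced} to prescribed values and the rest are free, carrying the promise that some valid completion of the free ones exists; at a forced index it simply uses the prescribed value. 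This promise is preserved because (a)~the value $d''_i$ we force already appears in a valid completion---namely any one induced by a valid choice of $d_{[0:N/2]}$ for the parent problem, since for such indices $d'_i$ does not depend on $d_i$---and (b)~the diagonal entry of $\vA_L\vV_{\lambda_{[0:N/2]}}^T$ at a forced index $i$ equals the original $a_i^{(n_i)}(\lambda_i)$, which is automatically nonzero because $\lambda_i$ recurs later in $\lambda_{[0:N]}$. The analogous point for $\vA_R$ is that, although the submatrix $\vM[N/2:N,N/2:N]$ need not itself have a full-degree minimal polynomial (so Theorem~\ref{thm:greedy-error-coef} is not directly available at that scale), the relevant diagonal entries of $\vT_R$ are nonzero scalar multiples of the corresponding $a_i^{(n_i)}(\lambda_i)$, so their non-vanishing at slope-zero indices is still inherited from the original $\vM$. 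Getting this forced-versus-free bookkeeping right, together with the divisibility and indexing of the sub-recurrences, is the crux; the rest is routine divide-and-conquer accounting.
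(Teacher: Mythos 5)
Your proposal follows essentially the same route as the paper's own proof: recurse via Lemma~\ref{lmm:self-similarity}, fix $d_{[N/2:N]}$ by the recursive call for $\vA_R$ (invertibility of $\vT_R$ then follows since $\vE$ is invertible), jump the recurrence to obtain $a_{[N/2:N/2+\Delta]}$, recurse on the reduced left-half recurrence, translate $d'_i$ back to $d_i$ via $d_i=(d'_i-d''_i)/p_{\lambda_{[N/2:N]}}(\lambda_i)$, and analyze $T(N)=2T(N/2)+O(\Delta^2 N\log^3 N)$ with an $O(\Delta^3)$ base case; correctness rests on Theorem~\ref{thm:greedy-error-coef} exactly as in the paper. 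In fact you go further than the paper in one useful respect: where the paper disposes of the degenerate indices with ``set $d_i=0$'', you spell out the forced-versus-free bookkeeping — that $q_i^{(n_i)}(\lambda_i)=a_i^{(n_i)}(\lambda_i)$ at such indices, that this value is nonzero once the right-half choices make the later occurrence's diagonal entry nonzero, and that the right-half subproblem's diagonal entries are nonzero scalar multiples of the original ones via $\vE$ — which is the justification the paper's terse proof implicitly relies on.

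One step is off as written. Your jump formula $\bigl(a_{N/2},\dots,a_{N/2+\Delta-1}\bigr)^T=\vT_{[N/2:N]}\bigl(a_N,\dots,a_{N+\Delta-1}\bigr)^T$ accounts only for the homogeneous part of recurrence~(\ref{eqn:delta-recur2}); the middle polynomials also receive the accumulated error contributions $d_i\,p_{\lambda_{[0:N]}}(X)$ for $i\in[N/2:N]$, and since the right-half recursion will generically choose some $d_i\neq 0$, omitting them gives the wrong $a_{[N/2:N/2+\Delta]}$ and hence wrong $d''_i$ and wrong starting data for $\vA_L$. The fix is standard and stays within the claimed bound: accumulate $\sum_{i\in[N/2:N]}\vT_{[N/2:i]}\,\vC_i$ (a $\vQ_{\ell,r}$-style prefix sum as in Algorithm~\ref{algo:transpose-mult}, using the same dyadic transition matrices precomputed via Lemma~\ref{lmm:tree-compute}), or equivalently fold the errors in with the dummy-row trick; either way the jump costs $O(\Delta^2 N\log^2 N)$, which is what the paper's proof asserts, and the rest of your argument and the $O(\Delta^2 N\log^4 N)$ total are unaffected.
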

\begin{proof}
  If $N = O(\Delta)$, then we can explicitly compute $a_{N-1},\ldots,a_{0}$ in order using the recurrence while picking $d_{N-1},\ldots,d_0$ appropriately; the results of Section~\ref{sec:error-coef} ensure that this is possible. Otherwise, we use Lemma~\ref{lmm:self-similarity} to recurse.

  Since $\vA_R$ satisfies a size-$N/2$ recurrence with known recurrence coefficients and starting conditions and unknown error coefficients, we can recursively find $d_{[N/2:N]}$ such that $\vA_R$ and hence $\vT_R$ is invertible.

  For the other half, we can jump the recurrence using $d_{[N/2:N]}$ to find $a_{[N/2:N/2+\Delta]}$ (time $O(\Delta^2 N\log^2 N)$). Now $\vA_L$ satisfies a size-$N/2$ recurrence with known recurrence coefficients and unknown error coefficients, and we can use [corollary inside the lemma] to find $d_i$ such that $\vA_L$ is invertible.

  Thus we have picked $d_i$ such that $\vT_L,\vT_R$ are invertible, and by triangularity so is $\vA\vV_{\lambda_{[0:N/2]}}^T$.

  Reducing to the subproblems is $O(\Delta^2 N\log^3 N)$ work by Lemma~\ref{lmm:self-similarity}.
\end{proof}

\begin{cor}
  \label{cor:inverse-multiply}
  Given a fully specified recurrence (\ref{eqn:delta-recur2}) such that $\vA\vV_{\lambda_{[0:N/2]}}^T$ is invertible, we can compute $(\vA\vV_{\lambda_{[0:N/2]}}^T)^{-1}\vb$ for any vector $\vb$ in $O(\Delta^2 N\log^4 N)$ operations.
\end{cor}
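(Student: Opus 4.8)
The plan is to solve the linear system $(\vA\vV_{\lambda_{[0:N]}}^T)\vx = \vb$ by divide-and-conquer, using the block upper-triangular structure established in Lemma~\ref{lmm:self-similarity}. Write $\vM = \vA\vV_{\lambda_{[0:N]}}^T = \begin{bmatrix} \vT_L & \vB \\ 0 & \vT_R \end{bmatrix}$ and partition $\vx = (\vx_1,\vx_2)$, $\vb = (\vb_1,\vb_2)$ conformally. Since $\vM$ is block triangular and invertible, $\det\vM = \det\vT_L\cdot\det\vT_R$, so $\vT_L$ and $\vT_R$ are both invertible. The bottom block gives $\vT_R\vx_2 = \vb_2$ and the top block gives $\vT_L\vx_1 = \vb_1 - \vB\vx_2$. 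For the base case $N = O(\Delta)$ I would form $\vM$ explicitly and solve in $O(\Delta^3)$ operations; over all base cases this contributes only $O(\Delta^2 N)$.

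For the recursive step, I would first invoke Lemma~\ref{lmm:self-similarity} to obtain, in $O(\Delta^2 N\log^3 N)$ operations, the size-$N/2$ recurrences defining $\vA_L,\vA_R$ with $\vT_L = \vA_L \vV_{\lambda_{[0:N/2]}}^T$ and $\vT_R = \vA_R \vV_{\lambda_{[N/2:N]}}^T \vE$, where $\vE$ is a direct sum of invertible upper-triangular Toeplitz blocks whose entries (evaluations of $c_L(X) = \prod_{j<N/2}(X-\lambda_j)$) are computable in $O(N\log^2 N)$. To solve $\vT_R\vx_2 = \vb_2$, substitute $\vz = \vE\vx_2$ and solve $(\vA_R\vV_{\lambda_{[N/2:N]}}^T)\vz = \vb_2$ by a recursive call: this matrix has exactly the required form at half the size and is invertible since it equals $\vT_R\vE^{-1}$; then set $\vx_2 = \vE^{-1}\vz$, which is $\tO(N)$ since inverting a direct sum of $m_i\times m_i$ upper-triangular Toeplitz matrices costs $\tO(\sum_i m_i) = \tO(N)$. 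Next, compute $\vB\vx_2$ by padding $\vx_2$ with zeros to $\widetilde\vx_2 = (0,\vx_2)\in\F^N$ and taking the first half of $\vM\widetilde\vx_2 = (\vB\vx_2, \vT_R\vx_2)$; multiplication by $\vV_{\lambda_{[0:N]}}^T$ is $O(N\log^2 N)$ and multiplication by $\vA$, which satisfies the modular recurrence~(\ref{eqn:delta-recur2}) of width $(\Delta,1)$ and degree $(0,1)$, costs $O(\Delta^2 N\log^3 N)$ via the $\vA\vb$ algorithm of Appendix~\ref{sec:Ab} (using that the roots of $c_\vM(X)$ are the known diagonal entries of $\vM$, so reductions modulo $c_\vM(X)$ go through a Vandermonde and are $\tO(N)$). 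Finally, solve $\vT_L\vx_1 = \vb_1 - \vB\vx_2$ by a recursive call, since $\vT_L = \vA_L\vV_{\lambda_{[0:N/2]}}^T$ is once again of the same form, half the size, and invertible.

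For the complexity, I note that the ranged transition matrices $\vT_{[\ell:r]}$ needed both by the $\vA\vb$ subroutine and by Lemma~\ref{lmm:self-similarity} depend only on the recurrence coefficients $c_{i,j} = \vM[i,i+j]$ and not on the error coefficients $d_i$ — and the sub-recurrences for $\vA_L$ and $\vA_R$ use the same $c_{i,j}$ — so these can be precomputed once for the top-level recurrence in $O(\Delta^\omega \cM(N)\log N)$ operations and reused throughout. Given this shared preprocessing, each level of the recursion performs $O(\Delta^2 N\log^3 N)$ non-recursive work, dominated by the self-similarity reduction and the single multiplication by $\vA$. Hence $T(N) = 2T(N/2) + O(\Delta^2 N\log^3 N)$, which resolves to $T(N) = O(\Delta^2 N\log^4 N)$, as claimed.

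The step I expect to be the main obstacle is handling the $\vE$ factor so that the bottom-block solve genuinely reduces to a recursive instance of the same problem: this requires verifying that $\vA_R\vV_{\lambda_{[N/2:N]}}^T$ is invertible, that $\vE^{-1}$ is applied within budget, and — more subtly — confirming that the right-block sub-recurrence produced by Lemma~\ref{lmm:self-similarity} is genuinely "fully specified" in the sense the corollary's hypothesis requires, i.e.\ its error coefficients are inherited unchanged so that no further choices are needed. A secondary point is the preprocessing bookkeeping: arguing carefully that transition-matrix precomputation is shared across all sub-recurrences (because only the $d_i$, never the $c_{i,j}$, change) so it does not inflate the per-query bound, and checking that the $\vA\vb$ subroutine's query cost is indeed $\tO(\Delta^2 N)$ rather than $\tO(\Delta^\omega N)$.
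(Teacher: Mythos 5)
Your proposal is correct and follows essentially the same route as the paper: the back-substitution on the block-triangular system $\begin{bmatrix} \vT_L & \vB \\ 0 & \vT_R \end{bmatrix}$ is algebraically identical to the paper's Schur-complement inverse formula, with the same two half-size recursive solves (via $\vA_L\vV^T$ and $\vA_R\vV^T$ from Lemma~\ref{lmm:self-similarity}), the same cheap $\vE^{-1}$ application as a direct sum of triangular Toeplitz blocks, and the same $O(\Delta^2 N\log^3 N)$ forward multiplication to form $\vB\vx_2$ (your zero-padding trick is just a cosmetic variant of multiplying by the submatrix $\vB$ directly), yielding the identical recursion $T(N)=2T(N/2)+O(\Delta^2 N\log^3 N)=O(\Delta^2 N\log^4 N)$. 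Your extra bookkeeping — that the ranged transition matrices depend only on the $c_{i,j}$ and not the $d_i$, and that the sub-recurrences' error coefficients come fully specified out of Lemma~\ref{lmm:self-similarity} — is a valid clarification of details the paper leaves implicit, not a different argument.
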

\begin{proof}
  Note that inversion of a triangular block matrix can be expressed using the Schur complement as
  \[ \begin{bmatrix} \vA & \vB \\ \vzero & \vC \end{bmatrix}^{-1} = \begin{bmatrix} \vA^{-1} & -\vA^{-1}\vB\vC^{-1} \\ \vzero & \vD^{-1} \end{bmatrix} \]

  So by Lemma~\ref{lmm:self-similarity}, the desired product can be written as
  \[
	(\vA\vV_{\lambda_{[0:N/2]}}^T)^{-1}\vb =
	\begin{bmatrix}
	(\vA_L\vV_{\lambda_{[0:N/2]}}^T)^{-1} & -(\vA_L\vV_{\lambda_{[0:N/2]}}^T)^{-1}\vB(\vA_R\vV_{\lambda_{[N/2:N]}}^T\vE)^{-1} \\
	\vzero & (\vA_R\vV_{\lambda_{[N/2:N]}}^T\vE)^{-1}
	\end{bmatrix} \vb
	=
	\begin{bmatrix}
	(\vA_L\vV_{\lambda_{[0:N/2]}}^T)^{-1} \left( \vb_L - \vB\vE^{-1}(\vA_R\vV_{\lambda_{[N/2:N]}}^T)^{-1}\vb_R \right) \\
	\vE^{-1}(\vA_R\vV_{\lambda_{[N/2:N]}}^T)^{-1}\vb_R
	\end{bmatrix}
  \]
  where $\vb_L = \vb_{[0:N/2]}, \vb_R = \vb_{[N/2:N]}$.

  So $(\vA\vV_{\lambda_{[0:N/2]}}^T)^{-1}\vb$ can be computed with three matrix-vector multiplications: by $(\vA_R\vV_{\lambda_{[N/2:N]}}^T)^{-1}$, $\vE^{-1}$, $\vB$, and $(\vA_L\vV_{\lambda_{[0:N/2]}}^T)^{-1}$, in order. Note that multiplying by $\vE^{-1}$ is easy since it is a direct sum of upper-triangular Toeplitz matrices. $\vB$ is a submatrix of $\vA$ which we can multiply in $\Delta^2 (N \log^3 N)$ operations. Finally, the first and last multiplications are identical subproblems of half the size. 
\end{proof}

Thus we have shown:
\begin{thm}
  Let $\vM$ be an upper-triangular, $\Delta$-banded matrix whose minimal polynomial equals its characteristic polyomial. Then $M$ is $(\Delta^{\omega_\calR}, \Delta^2)$ Jordan efficient. More precisely, there exists a Jordan decomposition $\vM = \vA\vJ\vA^{-1}$ such that with $O(\Delta^\omega_\calR N\log^2 N + \Delta^2 N \log^4 N)$ pre-processing time, multiplication by $\vA,\vA^{-1}$ can be computed in $O(\Delta^2 N \log^4 N)$ operations.
\end{thm}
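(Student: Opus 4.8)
The plan is to assemble the ingredients of Appendix~\ref{subsec:jordan}. First I would invoke Corollary~\ref{cor:jordan-recurrence}: there is a $(\Delta-1)$-width matrix $\vA$ with $\vM\vA=\vA\vF_\vM$, where $\vF_\vM$ is the Frobenius companion matrix of $c_\vM(X)$, and whose rows $a_i(X)$ satisfy the recurrence~\eqref{eqn:delta-recur} for \emph{any} choice of the scalar error coefficients $d_i$; moreover by Lemma~\ref{lmm:triangular-equiv} the matrix $\vA\vV_\vM^T$ is upper triangular regardless of the $d_i$, where $\vV_\vM$ is the confluent Vandermonde matrix (Definition~\ref{defn:vandermonde}) built from the diagonal of $\vM$. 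Since $\vM$ is triangular, $c_\vM(X)=\prod_i(X-\vM[i,i])$ is computable in $O(\cM(N)\log N)$ operations.

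Next I would fix the $d_i$ so that $\vA$ (equivalently $\vA\vV_\vM^T$) is invertible. This is exactly where the hypothesis $\text{minpoly}(\vM)=\text{charpoly}(\vM)$ enters: it says $\vM$ is non-derogatory, hence similar to $\vF_\vM$, which guarantees that an invertible such $\vA$ exists. Concretely, the greedy procedure of Section~\ref{sec:error-coef}, justified by Theorem~\ref{thm:greedy-error-coef} and the self-similarity of $\vA\vV_\vM^T$ (Lemma~\ref{lmm:self-similarity}), selects $d_{N-1},\dots,d_0$ in turn so that each diagonal entry of $\vA\vV_\vM^T$ is nonzero (Corollary~\ref{lmm:invertible-equiv}). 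Running this, together with precomputing the ranged transition matrices $\vT_{[\ell:r]}$ over all dyadic intervals, costs $O(\Delta^{\omega_\calR}N\log^2 N+\Delta^2 N\log^4 N)$ operations; this constitutes the pre-processing. With the $d_i$ now fixed, $\vM=\vA\vF_\vM\vA^{-1}$.

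To reach a genuine Jordan decomposition, I would observe that non-derogatority also forces $\vF_\vM$ to have a single Jordan block per eigenvalue, so the standard companion-matrix identity gives $\vF_\vM=\vV_\vM^T\,\vJ\,(\vV_\vM^T)^{-1}$ with $\vJ$ in Jordan normal form; hence $\vM=\vA'\vJ\vA'^{-1}$ for $\vA':=\vA\vV_\vM^T$. It remains to multiply vectors by $\vA'$ and $\vA'^{-1}$ quickly. For $\vA'\vb=\vA(\vV_\vM^T\vb)$: the confluent Vandermonde product $\vV_\vM^T\vb$ takes $O(N\log^2 N)$ operations, and multiplication by $\vA$ is an instance of the recurrence-with-error framework, since~\eqref{eqn:delta-recur} is a recurrence of degree $(0,1)$ and width $(\Delta-1,1)$ (all error polynomials are scalar multiples of the single polynomial $c_\vM(X)$, so the error matrix has rank one) with $\vR=\vS$; Theorem~\ref{thm:recur-general} then gives cost $O(\Delta^2\cM(N)\log N)$ using the already-computed transition matrices. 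For $\vA'^{-1}\vb=(\vA\vV_\vM^T)^{-1}\vb$: this is precisely Corollary~\ref{cor:inverse-multiply}, whose Schur-complement recursion on the triangular blocks of $\vA\vV_\vM^T$, fed the precomputed transition matrices, runs in $O(\Delta^2 N\log^4 N)$ operations. Multiplication by $\vA'^T$ and $\vA'^{-T}$ follows from the transposition principle (Theorem~\ref{thm:transposition}), since everything above is a linear computation. Collecting bounds yields pre-processing $O(\Delta^{\omega_\calR}N\log^2 N+\Delta^2 N\log^4 N)$ and per-query cost $O(\Delta^2 N\log^4 N)$, dominated by the inverse multiplication.

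The real work is done by two previously established ingredients, which are therefore the main obstacles: guaranteeing invertibility of $\vA'$ via the correct greedy choice of the error coefficients (Theorem~\ref{thm:greedy-error-coef} together with Lemma~\ref{lmm:self-similarity}), and the self-similar divide-and-conquer for $(\vA\vV_\vM^T)^{-1}$ (Corollary~\ref{cor:inverse-multiply}). Beyond invoking these, the points that need careful checking are that the companion-matrix Jordan identity $\vF_\vM=\vV_\vM^T\vJ(\vV_\vM^T)^{-1}$ holds exactly because $\vM$ is non-derogatory, and that recurrence~\eqref{eqn:delta-recur} genuinely satisfies the recurrence-with-error hypotheses (rank-one error, degree $(0,1)$) so that Theorem~\ref{thm:recur-general} applies with the stated parameters.
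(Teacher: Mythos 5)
Your proposal is correct and takes essentially the same route as the paper: the recurrence of Corollary~\ref{cor:jordan-recurrence}, triangularity of $\vA\vV_\vM^T$ (Lemma~\ref{lmm:triangular-equiv}), the greedy selection of the error coefficients $d_i$ via Theorem~\ref{thm:greedy-error-coef} together with the self-similarity Lemma~\ref{lmm:self-similarity}, the Schur-complement inverse multiplication of Corollary~\ref{cor:inverse-multiply}, and the companion-to-Jordan change of basis through the confluent Vandermonde (the paper additionally inserts a permutation $\vP$ to group repeated eigenvalues, a bookkeeping point you omit). The only step phrased a bit differently is the forward multiplication by $\vA$, which the paper also delegates to its recurrence-with-error machinery, so the two arguments coincide in substance.
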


Such matrices satisfy the property that their evaluation at any analytic function $f$ also admits super-fast matrix-vector multiplication.
\begin{lmm}[\cite{higham2008}]
  \label{lmm:Jordan-matrix-function}
  Let $\vJ$ be a Jordan block with diagonal $\lambda$ and $f(z)$ be a function that is analytic at $\lambda$. Then
  \[
    f(\vJ)
    =
    \begin{pmatrix}f(\lambda) & f'(\lambda) & \cdots & \frac{f^{(n-1)}(\lambda)}{(n-1)!} \\ 0 & f(\lambda) & \cdots & \frac{f^{(n-2)}(\lambda)}{(n-2)!} \\ \vdots & \vdots & \ddots & \vdots \\ 0 & 0 & \cdots & f(\lambda)
    \end{pmatrix}
  \]
\end{lmm}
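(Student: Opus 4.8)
The statement to prove is Lemma~\ref{lmm:Jordan-matrix-function}, which gives the closed form for $f(\vJ)$ where $\vJ$ is a single Jordan block $\lambda\vI + \vN$ (with $\vN = \vS$ the nilpotent shift on the superdiagonal) and $f$ is analytic at $\lambda$. The plan is to write $f$ as a convergent power series around $\lambda$, substitute $\vJ$, and exploit the nilpotency of $\vN$ to reduce the infinite sum to a finite one whose entries match the claimed Toeplitz-triangular pattern.

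First I would set up notation: let $\vJ \in \F^{n \times n}$ be the Jordan block, so $\vJ = \lambda \vI + \vN$ where $\vN$ is the matrix with $1$s on the superdiagonal and $0$s elsewhere. The key elementary fact is that $\vN^k$ has $1$s on the $k$th superdiagonal (i.e.\ $\vN^k[i,j] = 1$ iff $j = i+k$) for $0 \le k < n$, and $\vN^k = \vzero$ for $k \ge n$; this follows by a one-line induction on $k$. Since $\vN$ is nilpotent of index $n$ and commutes with $\lambda\vI$, any power of $\vJ$ or polynomial in $\vJ$ can be expanded by the binomial theorem with only the first $n$ terms surviving.

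Next, since $f$ is analytic at $\lambda$, write $f(z) = \sum_{k=0}^{\infty} \frac{f^{(k)}(\lambda)}{k!}(z-\lambda)^k$ on a disk around $\lambda$. The standard definition of $f(\vJ)$ for a matrix function (see~\cite{higham2008}) is via this power series, or equivalently via the Jordan-form/Hermite-interpolation definition; I would use the power series definition, noting that it converges since all eigenvalues of $\vJ$ equal $\lambda$, which lies in the disk of convergence. Then
\[
  f(\vJ) = \sum_{k=0}^{\infty} \frac{f^{(k)}(\lambda)}{k!}(\vJ - \lambda\vI)^k = \sum_{k=0}^{\infty} \frac{f^{(k)}(\lambda)}{k!}\vN^k = \sum_{k=0}^{n-1} \frac{f^{(k)}(\lambda)}{k!}\vN^k,
\]
where the last equality uses $\vN^k = \vzero$ for $k \ge n$. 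Reading off entry $[i,j]$: the only contributing term is $k = j-i$ (when $j \ge i$), giving $f(\vJ)[i,j] = \frac{f^{(j-i)}(\lambda)}{(j-i)!}$ for $j \ge i$ and $0$ for $j < i$, which is exactly the claimed matrix.

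I do not expect a serious obstacle here; this is a routine verification. The one point requiring a little care is justifying that the power-series definition of $f(\vJ)$ agrees with whatever definition is in force (e.g.\ via Cauchy integral or via the action on generalized eigenvectors), but this is standard and can be cited to~\cite{higham2008}; alternatively one can simply take the displayed finite sum as the definition of $f(\vJ)$, since only finitely many derivatives of $f$ at $\lambda$ are needed, and observe that this is consistent because the finite sum depends only on the $n$-jet of $f$ at $\lambda$ and agrees with $p(\vJ)$ for any polynomial $p$ matching that jet. The bulk of the ``proof'' is the nilpotency computation for $\vN^k$, which is a one-line induction.
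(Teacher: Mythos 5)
Your proof is correct and follows essentially the same route as the paper, which simply notes that the result follows from the Taylor expansion $f(z)=\sum \frac{f^{(i)}(\lambda)}{i!}(z-\lambda)^i$ with $\vJ = \lambda\vI + \vS^T$ substituted; your write-up just spells out the nilpotency computation and the entry-matching in more detail. (One minor notational point: in this paper $\vS$ has its $1$s on the subdiagonal, so the superdiagonal nilpotent part you call $\vN$ is the paper's $\vS^T$.)
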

\begin{proof}
  The proof follows from considering the Taylor series expansion $f(z) = \sum \frac{f^{(i)}(\lambda)}{i!} (z-\lambda)^i$ and plugging in $\vJ = \lambda\vI + \vS^T$.
\end{proof}

\begin{lmm}
  \label{lmm:Jordan-mult}
  Let $\vM$ be $(\alpha,\beta)$-Jordan efficient and $f(z)$ be a function that is analytic at the eigenvalues of $\vM$. Then $f(\vM)$ admits super-fast matrix multiplication in $O(\beta)$ operations, with $O(\alpha)$ pre-processing.
\end{lmm}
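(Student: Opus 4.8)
Let $\vM = \vA\vJ\vA^{-1}$ be the Jordan decomposition witnessing $(\alpha,\beta)$-Jordan efficiency, so that $\vA$ and $\vA^{-1}$ each admit matrix-vector multiplication in $O(\beta)$ operations after $O(\alpha)$ pre-processing. Since $f$ is analytic at every eigenvalue of $\vM$, the matrix function $f(\vM)$ is well-defined and satisfies $f(\vM) = \vA f(\vJ) \vA^{-1}$ (this is the standard fact that matrix functions commute with similarity, which follows from applying the Taylor expansion of $f$ blockwise). Therefore, to multiply a vector by $f(\vM)$ it suffices to multiply successively by $\vA^{-1}$, then $f(\vJ)$, then $\vA$. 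The first and last each cost $O(\beta)$ with $O(\alpha)$ pre-processing, so the whole statement reduces to showing that $f(\vJ)$ admits matrix-vector multiplication in $O(\beta)$ operations (and that $f(\vJ)$ itself can be set up within the $O(\alpha)$ pre-processing budget, noting $\alpha = \Omega(N)$ in all cases of interest).

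First I would observe that $\vJ$ is a direct sum of Jordan blocks $\vJ_1, \dots, \vJ_m$ of sizes $s_1, \dots, s_m$ with $\sum_i s_i = N$, so $f(\vJ)$ is the direct sum of the $f(\vJ_i)$. By Lemma~\ref{lmm:Jordan-matrix-function}, each $f(\vJ_i)$ is an upper-triangular Toeplitz matrix whose first row consists of the scaled derivatives $f(\lambda_i), f'(\lambda_i), \dots, f^{(s_i-1)}(\lambda_i)/(s_i-1)!$. Multiplication of a vector by an $s_i \times s_i$ upper-triangular Toeplitz matrix is equivalent to a truncated polynomial product and hence costs $O(\cM(s_i)) = \tO(s_i)$ operations; summing over the blocks and using superadditivity in the wrong direction — more precisely, $\sum_i \cM(s_i) \le \cM(\sum_i s_i) = \cM(N)$ — bounds the total cost of multiplying by $f(\vJ)$ by $O(\cM(N)) = \tO(N)$, which is absorbed into $O(\beta)$ since $\beta = \Omega(N)$ in the relevant instantiations (e.g.\ Theorem~\ref{thm:ut-band-ke}).

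The pre-processing step of forming $f(\vJ)$ amounts to computing the first row of each Toeplitz block, i.e.\ the Taylor coefficients $f^{(k)}(\lambda_i)/k!$ for $0 \le k < s_i$. I would note that these are $O(N)$ scalars in total and are available either by assumption on the representation of $f$ or, when $f$ is specified as a polynomial (which is precisely the case relevant to Krylov efficiency in Section~\ref{sec:Krylov}, where $f$ is the polynomial whose coefficients are the entries of the input vector), by a Taylor-shift computation of cost $\tO(N)$, again within the $O(\alpha)$ budget. Combining these observations yields the claimed $O(\beta)$-operation multiplication algorithm for $f(\vM)$ with $O(\alpha)$ pre-processing.

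I expect the only genuine subtlety — rather than a real obstacle — to be bookkeeping around the pre-processing of $f(\vJ)$: making precise in what form $f$ is presented and confirming that extracting the needed Taylor data is no more expensive than the Jordan pre-processing itself. The structural core (similarity invariance plus the Toeplitz form of $f$ on a Jordan block) is immediate from the cited lemmas, and the complexity accounting is a routine sum over blocks.
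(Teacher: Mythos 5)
Your proposal is correct and follows essentially the same route as the paper: factor $f(\vM)\vb = \vA f(\vJ)\vA^{-1}\vb$, use Lemma~\ref{lmm:Jordan-matrix-function} to reduce $f(\vJ)\vb$ to blockwise upper-triangular Toeplitz (polynomial) multiplications costing $\tO(N)$ total, and note this is dominated by the $O(\beta)$ cost of multiplying by $\vA$ and $\vA^{-1}$, with the Hermite-type evaluations of $f$ at the eigenvalues handled in pre-processing just as the paper assumes. Your extra bookkeeping on how the Taylor data of $f$ is obtained is a harmless elaboration of the paper's ``assuming access to the multi-point Hermite-type evaluations'' caveat.
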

\begin{proof}
  We can write a matrix-vector product as
  \[ f(\vM)\vb = f(\vA\vJ\vA^{-1})\vb = \vA f(\vJ) \vA^{-1} \vb \]
  By Lemma~\ref{lmm:Jordan-matrix-function}, assuming access to the multi-point Hermite-type evaluations of $f$ at the eigenvalues of $\vJ$, the product $f(\vJ)\vb$ is a series of Toeplitz multiplications which can be computed in time $O(N\log N)$. Thus this product is bottlenecked by the time it takes to multiply by $\vA$ and $\vA^{-1}$, and the result follows.
\end{proof}

For certain classes of matrices, our techniques facilitate fast computation and succinct description of the Jordan decomposition. We list some cases of matrices that are Jordan efficient because the change of basis matrix $\vA$ has low recurrence width. Note that these matrices are even more structured than general matrices of recurrence width $\Delta$, allowing us to multiply by the inverse as well.
\agu{I'm no longer sure about how to present these claims\ldots I'm not sure if the $\vA$ that arise here are included in the matrices we can invert\ldots again I feel pretty iffy about the Jordan efficiency stuff without heavy cleanup}
\begin{enumerate}
  \item Suppose that $\vM$ is a Jacobi matrix. Then it is diagonalizable with $\vM = \vA\vD\vA^{-1}$, and $\vA$ is an orthogonal polynomial matrix. In this case, multiplication by $\vA$ is our standard result and multiplication by $\vA^{-1}$ can be done using some properties of orthogonal polynomials, i.e.\ detailed in~\cite{bostan2010op}.
  \item Suppose that $\vM$ is triangular $\Delta$-band, and $\vM$'s minimal polynomial equals its characteristic polynomial. In this case, it turns out that the change of basis matrix can be expressed as $\vA\vV^T\vP$, where $\vA$ is a $\Delta$-width recurrence, $\vV$ is a confluent Vandermonde matrix, and $\vP$ is a permutation matrix, and furthermore $\vA\vV^T$ is triangular. We can show how to describe $\vA$ with $\widetilde{O}(\Delta^\omega N)$ operations, as well as perform matrix-vector multiplication by $\vA$ and $\vA^{-1}$ in $\widetilde{O}(\Delta^2 N)$ operations. The full proof is in Appendix~\ref{subsec:jordan}.
  \item When $\vM$ is triangular $\Delta$-band and diagonal, we can also compute and multiply by $\vA$ efficiently using similar techniques to the above case.
\end{enumerate}

Now we show how Krylov efficiency can be reduced to Jordan efficiency. Suppose that $\vR$ is Jordan efficient. Observe that the Krylov multiplication can be expressed as
\begin{align*}
  \kry(\vR, \vy)\vx &= \sum_{i=0}^{N-1} \vx[i] (\vR^i \vy) \\
  &= \left( \sum_{i=0}^{N-1} \vx[i] \vR^i \right) \vy \\
  &= \vx(\vR) \vy
\end{align*}
where we define the function $\vx(X) = \sum \vx[i] X^i$.

Note that $\vx(X)$ is analytic and the multi-point Hermite-type evaluation problem on it is computable in $O(N\log^2{N})$ time~\cite{OS00,P01}. Thus the Krylov multiplication can be performed using Lemma~\ref{lmm:Jordan-mult}, and Krylov efficiency of $\vR$ reduces exactly to Jordan efficiency with the same complexity bounds. Therefore all Jordan-efficient matrices are also Krylov-efficient.

\subsection{Bernoulli Polynomials}
\label{sec:bernoulli}

We observe that the recurrences that we consider for bounded \width\ matrices actually have {\em holonomic} sequences as a special case, which are some of the very well studied sequences and many algorithms for such sequences have been implemented in algebra packages~\cite{holonomic}. Additionally, the computation of many well-known sequences of numbers, including Stirling numbers of the second kind and Bernoulli numbers, are also very well studied problems. 
There have been some recent ad-hoc algorithms to compute such numbers (e.g.\ the current best algorithm to compute the Bernoulli numbers appears in~\cite{bernoulli}). Despite these sequences not being holonomic, our general purpose algorithms can still be used to compute them. Next, we show how to compute the first $N$ Bernoulli numbers in $O(N\log^2 N)$ operations, which recovers the same algorithmic bit complexity (potentially with a log factor loss) as the algorithms that are specific to Bernoulli numbers.


Bernoulli polynomials appear in various topics of mathematics including the Euler-Maclaurin formulae relating sums to integrals, and formulations of the Riemann zeta function~\cite{apostol}. In this section we will demonstrate how our techniques are flexible enough to calculate quantities such as the Bernoulli numbers, even though they do not ostensibly fall into the framework of bounded-width linear recurrences.

Bernoulli polynomials are traditionally defined by the recursive formula
\[B_i(X) = X^i - \sum_{k=0}^{i-1} \binom{i}{k}\frac{B_k(X)}{i-k+1}\]
with $B_0(X) = 1$~\cite{bernoulli-facts}. This recurrence seemingly cannot be captured by our model of recurrences, since each polynomial depends on \emph{every previous polynomial}. However, with some additional work, a recurrence with bounded \width\ can also capture this larger recurrence, thereby facilitating superfast matrix-vector multiplication involving $B$. 

We start out by computing $B_i(0)$ for each $i$. For notational convenience, we may drop the $0$ and use $B_i$ to denote $B_i(0)$. In particular $$B_i = \sum_{k=0}^i (-1)^k \frac{k!}{k+1} {i \brace k}$$ where ${i \brace k}$ denotes the Stirling numbers of the second kind~\cite{bernoulli-facts}. To compute $B_i$ for $0 \le i \le N$, we define the Stirling matrix $\vW$ such that $\vW[i,j] = {i \brace j}$ and a vector  a diagonal matrix $\vD$ such that $\vD[i,i] = i!$, and a vector $\vx$ such that $\vx[k] = \frac{(-1)^k k!}{k+1}$; the vector $\vb = \vW \vx$ will contain $B_i(0)$ as $\vb[i]$.

\begin{lmm}
  \label{lmm:bernoulli-W}
If $\vW[i,j] = {i \brace j}$, we can multiply $\vW \vx$ or $\vW^T \vx$ for any vector $\vx$ with $O(N \log^2 N)$ operations.
\end{lmm}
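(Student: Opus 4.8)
The plan is to exhibit $\vW^T$ (hence also $\vW$) as a matrix of bounded recurrence width and then invoke the multiplication machinery of Theorem~\ref{thm:mainATb}/Theorem~\ref{thm:recur-general}. The key input is the classical ``horizontal'' generating function for Stirling numbers of the second kind: for each column index $j\ge 0$,
\[
  g_j(X) \;:=\; \sum_{i\ge 0} \vW[i,j]\,X^i \;=\; \sum_{i\ge 0}{i\brace j}X^i \;=\; \frac{X^j}{\prod_{k=1}^{j}(1-kX)}.
\]
Since $\vW$ is $N\times N$ and lower triangular (${i\brace j}=0$ for $j>i$), the polynomial $\sum_{i=0}^{N-1}\vW[i,j]X^i$ is exactly $g_j(X)\bmod X^N$. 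Multiplying the identity by $(1-jX)$ yields, for $j\ge 1$,
\[
  (1-jX)\,g_j(X) \;\equiv\; X\,g_{j-1}(X) \pmod{X^N},
\]
with base case $g_0(X)=1$. First I would verify this by matching the coefficient of $X^i$ on both sides, which recovers the standard recurrence ${i\brace j}=j{i-1\brace j}+{i-1\brace j-1}$; this both confirms the generating-function identity and shows the modular recurrence above is well defined.

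Next I would observe that this recurrence exhibits $\vW^T$ — whose $j$-th row is the $j$-th column of $\vW$, with associated polynomial $g_j(X)$ — as a modular recurrence in the sense of~\eqref{eq:recur-mod}, with $\vR$ the companion matrix of $c_\vR(X)=X^N$ (that is, $\vR=\vS$, the shift matrix), width $(t,r)=(1,1)$, and degree $(d,\bard)=(0,1)$. Concretely, $g_{j,0}(X)=1-jX$, which is coprime to $X^N$ so $g_{j,0}(\vR)$ is invertible; $g_{j,1}(X)=X$; and the error matrix $\vF$ has the single nonzero row $\vf_0=\ve_0$, hence rank $1$. Because $\vR=\vS$, the relevant Krylov matrix $\kry(\vS,\ve_0)$ is the identity, so the ``Krylov step'' in the general algorithm is free (equivalently $\vS$ is trivially $(0,O(N))$-Krylov efficient).

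Then I would apply Theorem~\ref{thm:recur-general} with $t=r=O(1)$ and $d+\bard=O(1)$: this gives $O(\cM(N)\log N)$ pre-processing and $O(\cM(N)\log N)$ operations to compute $\vW^T\vb$ for any $\vb$, which is $O(N\log^2 N)$ (up to the usual $\cM(N)$-versus-$N\log N$ slack). The product $\vW\vb=(\vW^T)^T\vb$ then follows with the same complexity by the transposition principle, proving the lemma. There is no serious obstacle here; the only real content is spotting the right recurrence. An equally short alternative route, worth mentioning, is to note that the rows $\va_i$ of $\vW$ satisfy $\va_i=(\vS+\vD)\va_{i-1}$ with $\vD=\diag{(0,1,\dots,N-1)}$ and $\va_0=\ve_0$, so $\vW^T=\kry(\vS+\vD,\ve_0)$; since $\vS+\vD$ is a triangular $2$-band matrix, it is $(O(\cM(N)\log N),O(\cM(N)\log N))$-Krylov efficient by Theorem~\ref{thm:ut-band-ke}, which again yields the claimed bound for both $\vW\vb$ and $\vW^T\vb$.
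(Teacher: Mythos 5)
Your proof is correct, and your closing ``alternative route'' is precisely the paper's own proof: the paper observes that the rows $\vf_i = \vW[i,:]^T$ satisfy $\vf_{i+1} = (\vD+\vS)\vf_i$ with $\vD = \diag(0,1,\dots,N-1)$ (so $\vW^T = \kry(\vD+\vS,\ve_0)$) and then cites Theorem~\ref{thm:ut-band-ke} together with the multiplication theorems. Your primary argument is a correct, mildly different packaging of the same content: instead of quoting Krylov efficiency of the triangular $2$-band matrix $\vD+\vS$ as a black box, you inline it for this instance --- your modular recurrence $(1-jX)\,g_j(X)\equiv X\,g_{j-1}(X)\pmod{X^N}$ on the columns of $\vW$ is exactly relation~\eqref{eq:krylov-mod} of Section~\ref{sec:Krylov}, $(\vI-\vM X)\vF\equiv\vy\pmod{X^N}$, specialized to $\vM=\vD+\vS$ and $\vy=\ve_0$. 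What your version buys is explicit, self-contained bookkeeping (width $(1,1)$, degree $(0,1)$, $\vR=\vS$ with $c_\vR(X)=X^N$, and the observation that $\kry(\vS,\ve_0)=\vI$ so the Krylov post-processing is free); what the paper's version buys is brevity, since all of that is delegated to Theorem~\ref{thm:ut-band-ke}. The closed-form generating function $g_j(X)=X^j/\prod_{k=1}^j(1-kX)$ is a pleasant but dispensable extra --- as you note, coefficient comparison against the standard Stirling recurrence already verifies the modular relation. Two immaterial nits: your parenthetical that $\vS$ is $(0,O(N))$-Krylov efficient should read $O(\cM(N))$ for a general generator (irrelevant here, since the generator is $\ve_0$), and both your bound and the paper's are really $O(\cM(N)\log N)$, with the same $\cM(N)$-versus-$N\log N$ slack that the paper itself tolerates when stating $O(N\log^2 N)$.
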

\begin{proof}
Note that the Stirling numbers satisfy the recurrence~\cite{bernoulli-facts}
\[ {i+1 \brace k}  = k {i \brace k} + {i \brace k-1}. \]
If we let $\vf_i = \vW[i, :]^T$, the recurrence gives us that $\vf_{i+1} = (\vD + \vS) \vf_i$. Note that by Theorem~\ref{thm:ut-band-ke}, $(\vD + \vS)$ is $(1,1)$-Krylov efficient. Theorem~\ref{thm:mainATb} completes the proof.
\end{proof}
\bcor
We can compute $B_i(0)$ for all $i$ with $O(N \log^2 N)$ operations.
\ecor

We note that the $B_i(0)$ are actually the Bernoulli numbers, and our algorithm facilitates the computation of the Bernoulli numbers in the same runtime as recent state-of-the-art ad hoc approaches~\cite{bernoulli}. (See Section~\ref{sec:numerical} for more.)

We now focus on a matrix $\vZ$ such that $\vZ[i,j] = B_i(\alpha_j)$ for $\alpha_0, \dots, \alpha_{N-1} \in \F$. Note that a superfast multiplication algorithm for $\vZ$ immediately implies one for $\vB$ (the matrix of coefficients of $B_i$) since $\vZ = \vB\vV_{\alpha}^T$, where $\vV_{\alpha}$ is the Vandermonde matrix defined by evaluation points $\alpha_0,\dots,\alpha_{N-1}$. We take advantage of the following identity relating $B_i(X)$ to $B_i$: 
\[B_{i+1}(X) = B_{i+1} + \sum_{k=0}^i \frac{i+1}{k+1} {i \brace k} (X)_{k+1}\]
where $(X)_{k+1} = X (X-1) \cdots (X-k)$ denotes the falling factorial~\cite{bernoulli-facts}.

\begin{lmm}
  \label{lmm:bernoulli-F}
Suppose a matrix $\vF$ is defined such that $f_i(X) = \sum_{j=0}^{N-1} \vF[i,j] X^j = (X)_i$. Then for any vector $\vx$, we can multiply $\vF \vx$ and $\vF^T \vx$ in $O(N \log^2 N)$.
\end{lmm}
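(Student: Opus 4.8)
The plan is to recognize that the falling factorial polynomials $f_i(X) = (X)_i = X(X-1)\cdots(X-i+1)$ satisfy a trivial width-$1$ recurrence, namely $f_{i+1}(X) = (X-i)f_i(X)$ with $f_0(X) = 1$. This is exactly the form of the basic polynomial recurrence~\eqref{eq:intro-recur-poly} with $t=1$, where $g_{i,1}(X) = X-i$ has degree $1$, so the degree condition $\deg(g_{i,j}) \leq j$ is met and $\vF$ has recurrence width $1$ (in the notation of Definition~\ref{defn:intro-width}).

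Given this, the result is an immediate application of the machinery already developed. First I would invoke Theorem~\ref{thm:mainATb} (equivalently the $t=1$ case of Theorem~\ref{thm:intro-width}) to conclude that $\vF^T\vb$ can be computed in $O(t^2\cM(N)\log N) = O(\cM(N)\log N) = O(N\log^2 N)$ operations after $O(t^\omega \cM(N)\log N) = O(N\log^2 N)$ pre-processing; since the pre-processing depends only on the fixed matrix $\vF$ (not on $\vb$), it can be folded into the stated bound. For the forward direction $\vF\vb$, I would invoke the transposition principle (or equivalently the explicit $\vA\vb$ algorithm of Theorem~\ref{thm:main-result-Ab-rat-new} / Appendix~\ref{sec:Ab} in the case $(d,\bard)=(1,0)$), which gives the same $O(N\log^2 N)$ bound. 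One small point to check is that $\deg(f_i) = i \le N-1$ for the relevant range, so $\vF$ is genuinely $N\times N$ (lower triangular) and no modular reduction is needed; this is clear from the product formula for $(X)_i$.

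The only real subtlety — and it is minor — is bookkeeping the $t=1$ specialization correctly: one must make sure the recurrence is written in the form $g_{i,0}(X)f_i(X) = g_{i,1}(X)f_{i-1}(X) + f_i$ with $g_{i,0} = 1$ and error matrix of rank $0$ (degenerate), matching Definition~\ref{defn:recurrence-width} with width $(1,0)$ and degree $(1,0)$. Once that identification is made, nothing new needs to be proved; the lemma follows verbatim from Theorem~\ref{thm:mainATb} plus the transposition principle. I do not anticipate any genuine obstacle here — this lemma is purely an instantiation of the general framework, included because the Bernoulli polynomial identity $B_{i+1}(X) = B_{i+1} + \sum_k \tfrac{i+1}{k+1}{i \brace k}(X)_{k+1}$ will let one factor the evaluation matrix $\vZ$ through $\vW$, $\vF$, and a diagonal scaling, each of which admits fast multiplication by Lemmas~\ref{lmm:bernoulli-W} and~\ref{lmm:bernoulli-F}.
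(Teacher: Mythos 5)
Your proposal is correct and follows exactly the paper's route: the paper's proof likewise observes $f_{i+1}(X) = (X-i)f_i(X)$, i.e.\ a width-$1$ (degree-$(1,0)$) recurrence, and then cites Theorem~\ref{thm:mainATb} (with the transposition principle / $\vA\vb$ algorithm for the forward product) to get the $O(N\log^2 N)$ bound. The extra bookkeeping you include (degree check, folding in pre-processing) is fine but not a departure from the paper's argument.
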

\begin{proof}
By definition $f_{i+1}(X) = (X-i) f_i(X)$, and hence, Theorem~\ref{thm:mainATb} imply that we can multiply $\vF$ and $\vF^T$ by vectors in $O(N \log^2 N)$.
\end{proof}
\agu{In two places above, add reference to $\vA\vb$ algorithm.}

\begin{thm}
For any vector $\vb$, we can compute $\vZ \vb$ or $\vZ^T \vb$ with $O(N \log^2 N)$ operations.
\end{thm}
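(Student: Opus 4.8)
The goal is to compute $\vZ\vb$ where $\vZ[i,j] = B_i(\alpha_j)$, and (by the transposition principle) also $\vZ^T\vb$. The key is to decompose $\vZ$ into a product of matrices each of which we already know admits $O(N\log^2 N)$-operation matrix-vector multiplication (and transpose multiplication). The identity relating $B_{i+1}(X)$ to the Bernoulli numbers $B_{i+1} = B_{i+1}(0)$,
\[
  B_{i+1}(X) = B_{i+1} + \sum_{k=0}^i \frac{i+1}{k+1}{i\brace k}(X)_{k+1},
\]
is the central tool: it expresses each Bernoulli polynomial as a Bernoulli number plus a combination of falling factorials weighted by Stirling numbers. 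The plan is to read this identity as a matrix factorization of $\vZ$ (up to a Vandermonde factor), where the Stirling part is handled by Lemma~\ref{lmm:bernoulli-W}, the falling-factorial part by Lemma~\ref{lmm:bernoulli-F}, the scalar weights $\frac{i+1}{k+1}$ by cheap diagonal matrices, and the additive Bernoulli-number term by a rank-one correction (which has $O(N)$ sparse product width).

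\textbf{Key steps.} First I would set up the matrix $\vZ'$ with $\vZ'[i,j]$ the coefficient of $X^j$ in $B_i(X)$, so that $\vZ = \vZ'\vV_\alpha^T$ where $\vV_\alpha$ is the Vandermonde matrix on $\alpha_0,\dots,\alpha_{N-1}$; since Vandermonde multiplication (and its transpose) costs $O(N\log^2 N)$, it suffices to handle $\vZ'$. Next, rewrite the displayed identity as a matrix equation: let $\vB^{\mathrm{num}}$ be the diagonal-ish contribution of the constant terms $B_i$ (more precisely, $\vZ'[i,0]$ picks up $B_i$), let $\vF$ be the matrix of falling-factorial coefficients from Lemma~\ref{lmm:bernoulli-F} (so $\vF[k,j]$ is the coefficient of $X^j$ in $(X)_k$), let $\vW$ be the Stirling matrix from Lemma~\ref{lmm:bernoulli-W}, and let $\vD_1,\vD_2$ be the diagonal matrices encoding the scalars $i+1$ and $(k+1)^{-1}$ respectively. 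Shifting indices and collecting terms, the identity becomes (schematically)
\[
  \vZ' = \vD_1\, \vW\, \vD_2\, \vF_{\mathrm{shift}} + \vR^{\mathrm{num}},
\]
where $\vF_{\mathrm{shift}}$ is $\vF$ with a row shift to align $(X)_{k+1}$, and $\vR^{\mathrm{num}}$ is the matrix carrying the $B_{i+1}$ additive terms, which is supported on a single column (plus the $B_0(X)=1$ base row) and hence has $\spw(\vR^{\mathrm{num}}) = O(N)$ (and the vector of $B_i$ needed here is computed in $O(N\log^2 N)$ by the corollary to Lemma~\ref{lmm:bernoulli-W}). Then $\vZ'\vb$ is computed by: one transpose-type multiplication by $\vF_{\mathrm{shift}}$ (Lemma~\ref{lmm:bernoulli-F}), scaling by the diagonal $\vD_2$, multiplication by $\vW$ (Lemma~\ref{lmm:bernoulli-W}), scaling by $\vD_1$, and adding $\vR^{\mathrm{num}}\vb$ — each step $O(N\log^2 N)$. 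For $\vZ^T\vb$, transpose the whole factorization and apply the transpose versions of each lemma, which are provided.

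\textbf{Main obstacle.} The routine part is the index bookkeeping in converting the Bernoulli-polynomial identity into a clean product of the named matrices; the one place needing a little care is making sure the additive $B_{i+1}$ term genuinely contributes only an $O(N)$-sparse (or rank-one plus a base case) matrix rather than something dense, and that the diagonal scalar matrices $\vD_1,\vD_2$ are well-defined (no division by zero, which holds since $k+1 \ge 1$). Once the factorization $\vZ = (\vD_1\vW\vD_2\vF_{\mathrm{shift}} + \vR^{\mathrm{num}})\vV_\alpha^T$ is established, the complexity bound follows immediately from Lemmas~\ref{lmm:bernoulli-W} and~\ref{lmm:bernoulli-F}, the known $O(N\log^2 N)$ Vandermonde multiplication, and subadditivity of sparse product width (Lemma~\ref{lmm:properties}), so I expect no serious difficulty beyond writing the factorization correctly.
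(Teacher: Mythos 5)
Your proposal is correct and follows essentially the same route as the paper: the paper also reads the identity $B_{i+1}(X) = B_{i+1} + \sum_{k}\frac{i+1}{k+1}{i\brace k}(X)_{k+1}$ as the factorization $\vZ = \vC\,\vV_{\alpha}\,\vF^T + \vM$ with $\vC$ the Stirling matrix scaled by two diagonals (Lemma~\ref{lmm:bernoulli-W}), $\vF$ the falling-factorial coefficient matrix (Lemma~\ref{lmm:bernoulli-F}), and $\vM[i,j]=B_i$ the additive Bernoulli-number term. The only difference is bookkeeping: you keep the additive term as a single column in the coefficient basis before applying $\vV_\alpha^T$, whereas the paper absorbs the Vandermonde first so that the correction $\vM$ is a rank-one matrix of constant rows — these are equivalent and both rely on precomputing the $B_i$ via the corollary to Lemma~\ref{lmm:bernoulli-W}.
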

\begin{proof}
  Suppose we define $\vC_{i+1,j} = \frac{i+1}{j+1} {i \brace j}$. Note that $\vC$ is just the Stirling matrix $\vW$ multiplied by two diagonal matrices, implying it supports $O(N \log^2 N)$ vector multiplication. Furthermore, as in our previous lemma, define $\vF$ to be a matrix corresponding to the falling factorials. Then $(\vC \vV_{\alpha} \vF^T)[i,j] = B_i(\alpha_j) - B_i$. So if we define a matrix $\vM$ such that $\vM[i,j] = B_i$, $\vZ = \vC \vV_{\alpha} \vF^T + \vM$. Thus multiplying by $\vZ$ or $\vZ^T$ reduces to multiplying by these components, which by Lemmas~\ref{lmm:bernoulli-W} and~\ref{lmm:bernoulli-F} can be done in $O(N\log^2 N)$ operations.
\end{proof}

As discussed previously, this theorem immediately implies that we can compute matrix-vector multiplications involving $\vB$ in $O(N \log^2 N)$ operations as well. 
  As such, we only need to analyze the bit complexity of the algorithm to understand its numerical properties (which we do in Section~\ref{sec:numerical}). 

\subsection{Bit Complexity}
\label{sec:numerical}

It turns out that it is fairly easy to analyze the bit-complexity of our algorithms. In particular, we note that all the basic operations in our algorithms reduce to operations on polynomials. In particular, consider a matrix with \width\ of $t$ over the set of integers $\Z$. (We note that in most of the problems of computing sequences the input is indeed over $\Z$.) Note that our results on the efficiency of our algorithms are stated in terms of the number of operations of $\Z$. When dealing with the bit complexity of our algorithms, we have to worry about the size of the integers. It can be verified that given the input recurrence $(\vG,\vF)$, all the integers are of size $\left(\max(\|\vF\|_{\infty},\|\vG\|_{\infty}\right)^{O(N)}$. Since two $n$-bits integers are can multiplied with $O(n\log{n}\log\log{n})$-bit operations, this implies to obtain the bit complexity of our algorithms, we just need to multiply our bounds on the number of operations by $\tO(N\cdot \max(\|\vF\|_{\infty},\|\vG\|_{\infty}))$.

In particular, the above implies that the Bernoulli numbers can be computed with $\tO(N^2)$-bit operations. This is within $\tO(1)$ factors of the best known algorithm developed specifically for this problem in~\cite{bernoulli}.


\subsection{Preliminary Experimental Results}
\label{sec:experiments}

We coded the basic Algorithm~\ref{algo:transpose-mult} in C++ and compared with a naive brute force algorithm. We would like to stress that in this section, we only present preliminary experimental results with the goal of showing that in principle the constants in our algorithms' runtimes are reasonable (at least to the extent that preliminary implementations of our algorithms beat the naive brute force algorithm).

We compare 3 facets of the algorithms: the preprocessing step, computing $\vA \vb$, and computing $\vA^T \vb$. We consider matrices $\vA$ whose rows are the coefficients of polynomials that follow our basic recurrence: 
\[ f_{i+1}(X) = \sum_{i=0}^t g_{i,j}(X) f_i(X) \]
where $\deg(g_{i,j} = j, \deg(f_i) = i$. We generate the coefficients of $f_i(X)$ for $i \le t$, the coefficients of $g_{i,j}(X)$ for $i > t$, and the elements of $\vb$ pseudo-randomly in the range [-1, 1] using the C++ \texttt{rand()} function. The input to the algorithms are the $f_i(X)$ for $i \le t$, $g_{i,j}(X)$ for $i > t$, and $\vb$. 

The brute force's preprocessing step is to explicitly compute the $\binom{N+1}{2}$ polynomial coefficients of $f_i(X)$ for all $i$, thereby computing the non-zero element of $\vA$. The vector multiplication is the straightforward $O(N^2)$ algorithm. Our approach's preprocessing step uses the naive cubic matrix multiplication algorithm. And it uses the open-source library FFTW to compute FFTs. 

The experiments below were run on a 2-year old laptop with an i7-4500U processor (up to 3 GHz, 4 MB cache) and 8 GB of RAM.  All of the numbers below express times in seconds.

\begin{tabular} {|c|cc|cc|cc|}
\hline
& \multicolumn{2}{|c|}{Preprocessing} & \multicolumn{2}{|c|}{$\vA \vb$} & \multicolumn{2}{|c|}{$\vA^T \vb$} \\
& Brute Force & Our Approach & Brute Force & Our Approach & Brute Force & Our Approach \\ \hline
N = 100, t = 1 & 0.02 & \textbf{0.01} & \textbf{0.0004} & 0.003 & \textbf{0.0006} & 0.003 \\
N = 100, t = 4 & \textbf{0.04} & 0.07 & \textbf{0.0005} & 0.009 & \textbf{0.0005} & 0.009 \\ \hline
N = 1000, t = 1 & 1.3 & \textbf{0.13} & 0.05 & \textbf{0.04} & 0.06 & \textbf{0.04} \\
N = 1000, t = 4 & 2.8 & \textbf{1.2} & \textbf{0.04} & 0.15 & \textbf{0.05} & 0.15 \\ \hline
N = 10000, t = 1 & 127 & \textbf{1.7} & 4.8 & \textbf{0.4} & 5.6 & \textbf{0.5} \\
N = 10000, t = 4 & 322.8 & \textbf{18.2} & 4.2 & \textbf{1.8} & 5.3 & \textbf{1.8} \\ \hline
\end{tabular}

At $N=100$, the brute force clearly outclasses ours, but notably our preprocessing isn't much slower than what brute force requires. Our approach starts to perform better at $N=1000$ where the multiplication algorithms are similar in runtime to the brute force multiplication algorithms and are significantly faster than the brute force preprocessing. And at $N=10000$, our approach universally outperforms the brute force approach.

\end{document}